\documentclass{article}
\usepackage{fullpage}
\usepackage{amsmath,amssymb,amsthm}
\usepackage[hidelinks]{hyperref}
\usepackage{tikz}
\usetikzlibrary{decorations.pathreplacing,patterns}
\usepackage{thmtools,thm-restate}
\usepackage{cleveref}
\crefname{lemma}{Lemma}{Lemmas}
\crefname{proposition}{Proposition}{Propositions}
\crefname{definition}{Definition}{Definitions}
\crefname{theorem}{Theorem}{Theorems}
\crefname{corollary}{Corollary}{Corollaries}
\crefname{claim}{Claim}{Claims}
\crefname{section}{Section}{Sections}
\crefname{appendix}{Appendix}{Appendices}
\crefname{figure}{Fig.}{Figs.}

\title{Span programs and quantum time complexity}
\author{
Arjan Cornelissen\thanks{QuSoft and University of Amsterdam, \texttt{arjan@cwi.nl}.}
\and
Stacey Jeffery\thanks{QuSoft and CWI, \texttt{jeffery@cwi.nl}. Supported by an NWO Veni Innovational Research Grant under project number 639.021.752, an NWO WISE Grant, and QuantERA project QuantAlgo 680-91-03. SJ is a CIFAR Fellow in the Quantum Information Science Program.}
\and
Maris Ozols\thanks{QuSoft and University of Amsterdam, \texttt{marozols@gmail.com}.
Supported by an NWO Vidi grant VI.Vidi.192.109.}
\and
Alvaro Piedrafita\thanks{QuSoft and CWI, \texttt{piedrafita@cwi.nl}.}
}
\date{\today}

\def\ket#1{{\lvert}#1\rangle}
\def\bra#1{{\langle}#1\rvert}
\def\braket#1#2{{{\langle}#1\vert}#2\rangle}
\newcommand{\norm}[1]{\left\|#1\right\|}
\newcommand{\Cal}[1]{\mathcal{#1}}
\renewcommand{\Re}{\text{Re}}
\newcommand{\A}{\mathcal{A}}
\newcommand{\B}{\mathcal{B}}
\newcommand{\C}{\mathbb{C}}
\renewcommand{\H}{\mathcal{H}}
\renewcommand{\L}{\mathcal{L}}
\newcommand{\N}{\mathbb{N}}
\renewcommand{\O}{\mathcal{O}}

\renewcommand{\S}{\mathcal{S}}
\newcommand{\U}{\mathcal{U}}
\newcommand{\V}{\mathcal{V}}
\newcommand{\W}{\mathcal{W}}

\newcommand{\I}{I}
\newcommand{\eps}{\varepsilon}
\newcommand{\Htrue}{\H_{\textnormal{true}}}
\newcommand{\Hfalse}{\H_{\textnormal{false}}}

\DeclareMathOperator{\Span}{span}
\DeclareMathOperator{\Ker}{ker}


\newtheorem{theorem}{Theorem}
\newtheorem{lemma}[theorem]{Lemma}

\newtheorem{definition}[theorem]{Definition}

\newtheorem*{conjecture*}{Conjecture}

\theoremstyle{definition}


\begin{document}

\maketitle

\begin{abstract}
Span programs are an important model of quantum computation due to their tight correspondence with quantum query complexity. For any decision problem $f$, the minimum complexity of a span program for $f$ is equal, up to a constant factor, to the quantum query complexity of $f$. Moreover, this correspondence is constructive. A span program for $f$ with complexity $C$ can be compiled into a bounded error quantum algorithm for $f$ with query complexity $\O(C)$, and vice versa.

In this work, we prove an analogous connection for quantum time complexity. In particular, we show how to convert a quantum algorithm for $f$ with time complexity $T$ into a span program for $f$ such that it compiles back into a quantum algorithm for $f$ with time complexity $\widetilde{\O}(T)$. While the query complexity of quantum algorithms obtained from span programs is well-understood, it is not generally clear how to implement certain query-independent operations in a time-efficient manner. We show that for span programs derived from algorithms with a time-efficient implementation, we can preserve the time efficiency when implementing the span program. This means in particular that span programs not only fully capture quantum query complexity, but also quantum time complexity.

One practical advantage of being able to convert quantum algorithms to span programs in a way that preserves time complexity is that span programs compose very nicely. We demonstrate this by improving Ambainis's variable-time quantum search result using our construction through a span program composition for the OR function.
\end{abstract}

\newpage

\tableofcontents

\newpage

\section{Introduction}
\label{sec:intro}

Span programs are a model of computation first introduced in the context of classical complexity theory~\cite{KW93}, and later introduced to the study of quantum algorithms by Reichardt and Sp\v{a}lek~\cite{RS12} who designed quantum algorithms for formula evaluation based on span programs. This connection to quantum algorithms proved to be particularly important when Reichardt showed that span programs are equivalent to dual solutions to the adversary bound, proving that the adversary bound is a tight lower bound on quantum query complexity, and span program complexity is a tight upper bound~\cite{Rei09}. In particular, this means that for any decision problem, it is possible to design a query-optimal quantum algorithm using the span program framework, although finding such an algorithm is generally hard in practice. Later work connecting quantum space complexity to span programs enriched this connection further, showing that span program algorithms can also have optimal space complexity~\cite{Jef20}, although again, there is no prescriptive recipe for finding such an optimal algorithm.

While finding optimal span programs is difficult in general, span programs have been used to design quantum algorithms for a variety of problems, including $st$-connectivity~\cite{BR12}, cycle detection and bipartiteness testing~\cite{Ari15,CMB16}, maximum bipartite matching~\cite{BT20}, graph connectivity~\cite{JJKP18},
$k$-distinctness~\cite{Bel12}, and formula evaluation~\cite{RS12,Rei09,JK2017}, some of which have optimal query complexity. Given a span program $P$ that decides a function $f$ and has complexity $C(P)$, there is a generic transformation that compiles the span program into a quantum query algorithm with query complexity $\O(C(P))$. Thus, analyzing the quantum query complexity of the algorithm is as simple as upper bounding the quantity $C(P)$.

This algorithm works by doing phase estimation to precision $C(P)^{-1}$ on a certain unitary $U$ associated with the span program (for details, see \cref{sec:span-program-alg}). While it is clear from the form of this unitary that it can be implemented using $\O(1)$ quantum queries to the input, it is not at all clear how one should time-efficiently implement the non-query parts of $U$. This will generally depend on the structure of the specific span program $P$ in question. While there has been some success in designing time-efficient implementations of $U$ for the case of span programs for certain graph problems, beginning with the work of~\cite{BR12}, for other algorithms, perhaps most notably the span program algorithm for $k$-distinctness~\cite{Bel12}, no time-efficient implementation is known. This is rather unsatisfying, since for a problem such as $k$-distinctness, where one wants to decide if the input string $x\in \{1,\dotsc,q\}^n$ has $k$ entries that are the same, it seems intuitive that queries to the input should be the dominating cost in any optimal algorithm, and so the time complexity should be within $\log n$ factors of the query complexity. And yet, while the best known upper bound on the query complexity is $\O(n^{\frac{3}{4}-\frac{1}{4(2^k-1)}})$, obtained via a span program in~\cite{Bel12}, a time analysis of this algorithm has proven illusive, and the best known upper bound on the time complexity of $k$-distinctness for $k>3$ is the significantly worse $\widetilde{\O}(n^{(k-1)/k})$~\cite{Jef14}. This difficulty in analyzing the time complexity of span program algorithms represents a major drawback to an otherwise powerful framework since query complexity does not fully represent the actual complexity of the algorithm.

In this work, we make progress in understanding the relationship between span programs and quantum time complexity by showing that for any decision problem, it is possible to design an almost time-optimal quantum algorithm (i.e., optimal up to polylogarithmic factors) using the span program framework. To do this, we give a construction that takes any quantum algorithm with time complexity $T$ and query complexity $S$, and maps it to a span program with complexity $\O(S)$, such that the unitary $U$ associated with the span program can be implemented in time $T/S$, up to polylog$(T)$ factors, meaning that the algorithm compiled from the span program has time complexity $\widetilde{\O}(T)$.

In the analysis of the time complexity of the newly-defined span program, we identify an input-dependent subspace of the state space which we are guaranteed to stay within throughout the execution of the span program algorithm. This allows us to drastically decrease the implementation cost of some of its subroutines. We refer to this subspace as the \textit{implementing subspace}, and we believe that this technique can be used to analyze the time complexity of a wider variety of algorithms than those considered in this text.

The problem of mapping an arbitrary quantum algorithm to a span program has been considered previously. In~\cite{Rei09}, Reichardt showed how to convert any quantum query algorithm with one-sided error\footnote{An algorithm with one-sided error must always output 1 on a 1-input, but may err with probability $1/3$ on a 0-input (or vice versa).} to a span program whose complexity matches the algorithm's query complexity. This was extended to the standard case of (two-sided) bounded error quantum query algorithms in \cite{Jef20}. We extend these results to time complexity, showing that a quantum algorithm with time complexity $T$ can be mapped to a span program that, if compiled back into an algorithm, can be implemented in time $\widetilde{\O}(T)$.

The major theoretical implication of this result is that for any decision problem, one can find a quantum algorithm that is optimal in not only space and query complexity, but also time complexity, using the span program framework. Moreover, using our construction we prove that these three flavors of optimality can be achieved simultaneously. Thus, we can definitively say that span programs \emph{are} quantum algorithms.

It is natural to ask if our result, and in particular our construction mapping quantum algorithms to span programs, is of practical relevance since normally quantum algorithms themselves are the end goal in designing span programs. One reason that it can be useful to convert a quantum algorithm into a span program is that span programs compose very nicely~\cite{Rei09} -- more so than quantum algorithms. It can thus be desirable to convert several quantum algorithms to span programs, compose them, and then convert the result back to a quantum algorithm.

To illustrate this, we improve a result of Ambainis~\cite{Amb06} for variable-time quantum search. Given~$n$ bounded-error quantum query algorithms evaluating Boolean functions $f_1, \dots, f_n$ with costs $C_1, \dots, C_n$, respectively, Ambainis provides a way to evaluate the function $f = \bigvee_{i=1}^n f_i$ with cost $\O(\sqrt{\sum_{i=1}^n C_i^2})$. We left the notion of \textit{cost} purposefully ambiguous here, as Ambainis's construction allows for defining any notion of cost associated with providing uniform access to the algorithms, i.e., the ability to apply the gate that is executed at any given time step in any of the algorithms. The resulting algorithm depends on the notion of cost selected, and from Ambainis's construction, it is not apparent how one would obtain the claimed scaling in multiple notions of cost simultaneously. Moreover, Ambainis's construction assumes that all instance-independent gates, i.e., all operations that are not part of the original algorithms, have cost zero, which means that a proper analysis of the time complexity of the resulting algorithm evaluating $f$ is lacking.

Our result improves on Ambainis's result in the following manner. If the $n$ original algorithms have query complexity $S_1, \dots, S_n$, time complexity $T_1, \dots, T_n$, and we have efficient uniform access to them, then we can evaluate $f$ with bounded error with $\widetilde{\O}(\sqrt{\sum_{i=1}^n S_i^2})$ queries and $\widetilde{\O}(\sqrt{\sum_{i=1}^n T_i^2})$ gates. Moreover, the number of auxiliary qubits introduced is at most polylogarithmic in $T_{\max} = \max_{i \in [n]} T_i$ and $n$. Thus, we achieve the desired scaling in the query and time complexities simultaneously, while also counting all instance-independent gates in our analysis of the time complexity of the resulting algorithm.

We achieve this result by converting the original algorithms into span programs, which we subsequently compose using techniques from \cite{Rei09}. We turn the resulting composed span program back into an algorithm, reusing some ideas from \cite{Amb06}, and using our technique of implementing subspaces.

Perhaps the most interesting future direction suggested by our work is to find new algorithm composition results by turning algorithms into span programs, taking advantage of the relative ease of span program composition, and then converting the result back into an algorithm.

\vspace{-1em}

\paragraph{Organization.} In \cref{sec:preliminaries}, we introduce the model of quantum query algorithms, our precise model for accessing individual gates of given subroutines, and the span program model. In \cref{sec:span-program-alg}, we describe how to compile a span program into a quantum algorithm, and reduce the problem of analyzing the time complexity of this algorithm to implementing and analyzing three subroutines. In \cref{sec:main_thm}, we describe our construction for turning an algorithm into a span program and show how to implement the resulting algorithm time-efficiently. Finally, in \cref{sec:variable_time_search}, we show how to combine the construction of \cref{sec:main_thm} with a span program composition to improve Ambainis's variable-time quantum search result.

\section{Preliminaries}
\label{sec:preliminaries}

\subsection{Quantum query algorithms}
\label{sec:quant_algorithm}

Let $n \in \N$, $X \subseteq \{0,1\}^n$ and $f : X \to \{0,1\}$ be a (partial) Boolean function. We study quantum algorithms that compute (or decide) the value of $f(x)$ given quantum query access to individual bits of the input $x \in X$.

Let $\A$ be a quantum algorithm that acts on a state space $\C^{[n] \times \W}$, where $[n] := \{1,\dotsc,n\}$ and $\W$ is a finite set that labels the workspace states. We denote the initial state of $\A$ by $\ket{\Psi_0} \in \C^{[n] \times \W}$ and the unitary transformations that $\A$ applies to the state space at the respective time steps by $U_1,\dots,U_T\in \U(\C^{[n] \times \W})$, where $T \in \N$ is the total number of time steps.

The algorithm $\A$ makes queries to an input string $x \in \{0,1\}^n$ by having a subset of the unitaries be (controlled) calls to an oracle $\O_x \in \U(\C^{[n] \times \W})$ defined by its action on the computational basis as
\begin{equation}
	\forall i \in [n], \forall j \in \mathcal{W}, \qquad \O_x : \ket{i,j} \mapsto (-1)^{x_i}\ket{i,j},
	\label{eq:Ox}
\end{equation}
where the two registers correspond to the input bit index and the workspace, respectively. The only dependence on $x$ of the unitaries that make up $\A$ is through some $U_t$'s being $\O_x$. We denote the set ${\cal S}\subset [T]$ to be the set that contains all $t\in {\cal S}$, such that $U_t = {\cal O}_x$. Then $S=|{\cal S}|$ is the query complexity of $\A$.

In the standard definition of a quantum query algorithm, every second unitary is a query, so ${\cal S}$ would be the set of odd indices. This is appropriate when we are only interested in the query complexity of the algorithm, since we can combine any consecutive non-query unitaries into a single unitary. However, since we are also interested in the time complexity, we want to restrict the non-query unitaries to some universal gate set. Thus, we do not assume that every other unitary is a query, and we explicitly allow for sequences of non-query unitaries between any two queries, as well as at the beginning and the end of the algorithm.

We take the initial state to be a computational basis state. We can assume that $U_1$ and $U_T$ are not queries without loss of generality. Indeed, if the first unitary is a query, then it only introduces a global phase and hence it is redundant. Similarly, we assume that any measurement at the end of the algorithm is a computational basis measurement, which implies that if $U_T$ is a query, then it is also redundant as it does not influence the measurement probabilities. Finally, we also assume without loss of generality that no two consecutive time steps are query time steps, as then the resulting operation on the state space would reduce to $\O_x^2 = \I$, rendering both queries redundant.

For every $x \in \{0,1\}^n$ we define the state of the system at time $t \in [T]_0:=\{0,\dots,T\}$ on input $x$ as
\begin{equation}
	\ket{\Psi_t(x)} := U_tU_{t-1} \cdots U_1\ket{\Psi_0},
	\label{eq:Psi}
\end{equation}
where $\ket{\Psi_0} \in \C^{[n] \times \W}$ is the initial state. Note that the right-hand side of \cref{eq:Psi} has an implicit dependence on $x$, since for some indices $t$, $U_t = \O_x$.

We can assume that there is a single-qubit answer register used to indicate the output of the computation. If $\Pi_b$ denotes the orthogonal projector onto states with $\ket{b}$ in the answer register, for $b\in\{0,1\}$, then $p_b(x):=\norm{\Pi_b\ket{\Psi_T(x)}}^2$ is the probability that the algorithm outputs $b$ on input $x$. We say that $\A$ computes a function $f:X\rightarrow\{0,1\}$, where $X\subseteq \{0,1\}^n$, with error probability $\varepsilon\in [0,1/2)$ if $p_{1-f(x)}(x)\leq \varepsilon$ for all $x\in X$.

In addition to the standard assumptions that we outlined above, we will also make some non-standard assumptions on the structure of quantum query algorithms. We refer to the query algorithms that satisfy both the standard and the non-standard assumptions as \textit{clean algorithms}. The formal definition is given in \cref{def:QA}, and we spend the rest of this section claiming that we can assume without loss of generality that every quantum query algorithm is clean.

\begin{definition}[Clean quantum algorithm]
\label{def:QA}
	Let $\A$ be a quantum query algorithm acting on $\C^{[n] \times \W} = \C^{[n] \times \W' \times \{0,1\}}$ with the last register being the answer register. Suppose that the time complexity of $\A$ is $T$, the query complexity is $S$, and the initial state has $\ket{0}$ in the answer register, so it can be expressed as $\ket{\Psi_0}=\ket{\psi_0}\ket{0}$ for some $\ket{\psi_0} \in \C^{[n] \times \W'}$. Define the \emph{final accepting state} as $\ket{\Psi_T} := \ket{\psi_0}\ket{1}$.	$\A$ is a \emph{clean quantum algorithm} if it satisfies the following properties.
	\begin{enumerate}
		\setlength\itemsep{-.3em}
		\item \emph{Consistency:} For all inputs $x \in \{0,1\}^n$,
		\begin{align*}
			\braket{\Psi_T}{\Psi_T(x)}&=p_1(x), \quad\mbox{and}\quad \bra{\Psi_T}(I\otimes X)\ket{\Psi_T(x)}=p_0(x),
		\end{align*}
		where $p_b(x)=\norm{(I\otimes\ket{b}\bra{b})\ket{\Psi_T(x)}}^2$ is the probability that $\A$ outputs $b$ on input $x$,
		and $X$ denotes the Pauli matrix implementing the logical $\mathrm{NOT}$.
		\item \emph{Commutation:} $(I\otimes X)$ commutes with every unitary $U_t$ of the algorithm, where $X$ acts on the answer register.
		\item \emph{Query-uniformity:} Two consecutive queries are not more than $\lfloor 3T/S \rfloor$ time steps apart, and the first and last queries are separated by at most $\lfloor 3T/S\rfloor$ time steps from the start and the finish of the algorithm, respectively.
	\end{enumerate}
\end{definition}

We proceed by showing that restricting our attention to clean algorithms only incurs a constant multiplicative overhead in the query and time complexities and constant additive overhead in the space complexity.

We prove this in two steps. First, we show that we can satisfy conditions 1 and 2 by modifying the algorithm in the following sense: we first run it once, then we copy out the answer register, and subsequently, we run it backwards. This constitutes \cref{lem:QA}. After that, we insert some queries and identity gates into the resulting algorithm, such that we in addition also satisfy condition 3, which is the objective of \cref{lem:Q_Uniformity}.

\begin{lemma}\label{lem:QA}
	Fix $f : X \subseteq \{0,1\}^n \to \{0,1\}$. Let $\A$ be a quantum query algorithm with initial state $\ket{\Psi_0} \in \C^{[n] \times \W}$, unitaries $U_1, \dots, U_T \in \U(\C^{[n] \times \W})$, time complexity $T$ and query complexity $S$ and suppose that it computes $f$ with error probability $\varepsilon > 0$. Now, let $\A'$ be a quantum algorithm acting on $\C^{[n] \times \W \times \{0,1\}}$, with initial state $\ket{\Psi_0'} = \ket{\Psi_0} \ket{0}$ and consisting of the following sequence of unitaries:
	\[
		(U_1^{\dagger} \otimes \I)	\cdots
		(U_T^{\dagger} \otimes \I)
		(\I \otimes \mathrm{CNOT})
		(U_T \otimes \I)	\cdots
		(U_1 \otimes \I),
	\]
	where the $\mathrm{CNOT}$ is a controlled-not gate with the answer qubit of $\A$ acting as control qubit and the last qubit of $\A'$ acting as the target. Then $\A'$ fulfills conditions 1 and 2 in \cref{def:QA} with final accepting state $\ket{\Psi'_{T'}} = \ket{\Psi_0} \ket{1}$, time complexity $T' = 2T+1 = \Theta(T)$, query complexity $S' = 2S = \Theta(S)$, uses one more qubit than $\A$ and evaluates $f$ with error probability $\varepsilon$.
\end{lemma}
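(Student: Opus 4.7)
The plan is to track the state of $\A'$ on a generic input $x$ through its three phases (forward pass, CNOT, reverse pass) and then verify each claim by direct inspection of the resulting expression. The identity that makes everything work is that the reverse pass uses the same oracle $\O_x$ as the forward pass, so the operator $V := U_1^\dagger \cdots U_T^\dagger$ satisfies $V\ket{\Psi_T(x)} = \ket{\Psi_0}$ and equivalently $V^\dagger\ket{\Psi_0} = \ket{\Psi_T(x)}$.

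First I would compute the final state. Let $\Pi_0, \Pi_1$ denote the projectors onto the answer register of $\A$ within $\C^{[n]\times\W}$, so that $\ket{\Psi_T(x)} = \Pi_0\ket{\Psi_T(x)} + \Pi_1\ket{\Psi_T(x)}$. After the forward pass the joint state is $\ket{\Psi_T(x)}\ket{0}$; the CNOT coherently copies the answer qubit into the ancilla, producing $\Pi_0\ket{\Psi_T(x)}\ket{0} + \Pi_1\ket{\Psi_T(x)}\ket{1}$; and applying $V\otimes\I$ yields
\[
	\ket{\Psi'_{T'}(x)} = \bigl(V\Pi_0\ket{\Psi_T(x)}\bigr)\ket{0} + \bigl(V\Pi_1\ket{\Psi_T(x)}\bigr)\ket{1}.
\]
From this, the claim that $\A'$ computes $f$ with error probability $\varepsilon$ is immediate, since $V$ is unitary and therefore the probability of measuring $b$ on the new answer qubit equals $\norm{\Pi_b\ket{\Psi_T(x)}}^2 = p_b(x)$.

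Next I would verify the two semantic conditions of \cref{def:QA}. Using $V^\dagger\ket{\Psi_0} = \ket{\Psi_T(x)}$, a direct computation yields
\[
	(\bra{\Psi_0}\otimes\bra{1})\ket{\Psi'_{T'}(x)} = \bra{\Psi_T(x)}\Pi_1\ket{\Psi_T(x)} = p_1(x),
\]
and inserting $\I\otimes X$ analogously gives $(\bra{\Psi_0}\otimes\bra{0})\ket{\Psi'_{T'}(x)} = p_0(x)$, establishing condition 1 with final accepting state $\ket{\Psi_0}\ket{1}$. For condition 2, every unitary of the form $U_t\otimes\I$ trivially commutes with $\I\otimes X$ on the new register, and a one-line expansion $\mathrm{CNOT} = \ket{0}\bra{0}\otimes\I + \ket{1}\bra{1}\otimes X$ together with $X^2=\I$ shows that $X$ on the target commutes with CNOT.

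Counting resources is then routine: the explicit gate list has $T$ forward gates, one CNOT, and $T$ reverse gates, giving $T' = 2T+1$; each of the $S$ queries of $\A$ contributes one query in each pass, giving $S' = 2S$; and the only added qubit is the ancilla. The sole conceptual subtlety, which I would be careful to spell out, is the identity $V\ket{\Psi_T(x)} = \ket{\Psi_0}$: it rests on the fact that the oracles appearing in both passes are literally the same copies of $\O_x$ for the same input $x$, which is precisely what licenses the uncomputation and makes $\A'$ simultaneously consistent and commuting with $\I\otimes X$.
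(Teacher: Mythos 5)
Your proposal is correct and follows essentially the same route as the paper: track the state through the forward pass, CNOT, and reverse pass, use unitarity of $U_\A^\dagger$ to preserve the output probabilities, and compute the overlaps with $\ket{\Psi_0}\ket{b}$ via $\bra{\Psi_0}U_\A^\dagger = \bra{\Psi_T(x)}$ to get consistency, with the commutation condition handled by the same CNOT/$X$ observation. The resource count is also identical, so no gaps.
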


\begin{proof}
	Since an $X$-gate on the target qubit of a $\mathrm{CNOT}$ gate commutes with the $\mathrm{CNOT}$-gate itself, we find that all operations in $\A'$ commute with $I \otimes X$, thus the commutation condition is fulfilled.

	Next we check the consistency condition. To that end, we let $\ket{\Psi_T(x)} = \ket{\Phi_0(x)} + \ket{\Phi_1(x)}$, where $\ket{\Phi_b(x)} = \Pi_b\ket{\Psi_T(x)}$ is the projection of $\ket{\Psi_T(x)}$ onto the part of the state with $\ket{b}$ in the answer register of $\A$. Then the state of $\A'$ after $T$ steps on input $x$ is
	\begin{align*}
		\ket{\Psi_T'(x)} &= \ket{\Psi_T(x)} \ket{0} = \ket{\Phi_0(x)}\ket{0}+\ket{\Phi_1(x)}\ket{0}
	\end{align*}
	and the state of $\A'$ after $T+1$ steps on input $x$ is
	\begin{align*}
		\ket{\Psi_{T+1}'(x)} &= \mathrm{CNOT}\ket{\Psi_T'(x)} = \ket{\Phi_0(x)}\ket{0} + \ket{\Phi_1(x)}\ket{1}.
	\end{align*}
	Let $U_{\A} = U_T \cdots U_1$, so that $\ket{\Psi_T(x)}=U_{\A}\ket{\Psi_0}$, and
	\begin{align}
		\ket{\Psi_{2T+1}'(x)}&=(U_{\A}^\dagger\otimes \I)\ket{\Psi_{T+1}'(x)} = (U_{\A}^{\dagger}\ket{\Phi_0(x)})\ket{0}+(U_{\A}^{\dagger}\ket{\Phi_1(x)})\ket{1}.\label{eq:QA2}
	\end{align}
	Since $U_{\A}^\dagger \ket{\Phi_b(x)} = U_{\A}^\dagger \Pi_b \ket{\Psi_T(x)}$, for $b \in \{0,1\}$,
	the success probability of $\A'$ is equal to the success probability of $\A$:
	\[\norm{(\I \otimes \ket{b}\bra{b})\ket{\Psi_{2T+1}'(x)}}^2 = \norm{U_\A^{\dagger} \Pi_b \ket{\Psi_T(x)}}^2 = \norm{\Pi_b\ket{\Psi_T(x)}}^2 = p_b(x).\]
	Moreover, from \cref{eq:QA2}, we have for all $b \in \{0,1\}$,
	\begin{align*}
		\braket{\Psi_0,b}{\Psi'_{2T+1}(x)} &
		= \bra{\Psi_0} U_{\A}^\dagger \Pi_b U_{\A}\ket{\Psi_0}
		= \norm{\Pi_b U_{\A}\ket{\Psi_0}}^2
		= \norm{\Pi_b \ket{\Psi_T(x)}}^2 = p_b(x).
	\end{align*}
	In particular that implies that
	\[\braket{\Psi'_{T'}}{\Psi_{2T+1}'(x)} = \braket{\Psi_0,1}{\Psi_{2T+1}'(x)} = p_1(x) \quad \text{and} \quad \bra{\Psi'_{T'}}(\I \otimes X)\ket{\Psi_{2T+1}'(x)} = \braket{\Psi_0,0}{\Psi_{2T+1}'(x)} = p_0(x).\]
	Hence, $\A'$ satisfies the consistency condition as well.
\end{proof}

\begin{lemma}\label{lem:Q_Uniformity}
	Fix $f : X \subseteq \{0,1\}^n \to \{0,1\}$. Let $\A$ be a quantum query algorithm with time complexity $T$ and query complexity $S$ that computes $f$ with error probability $\varepsilon > 0$. Then, there exists an algorithm $\A'$ with time complexity $T' = \Theta(T)$ and query complexity $S' \leq 3S$ such that two consecutive queries are no  more than $\lfloor 3T'/S' \rfloor$ times steps apart. In addition if $\A$ fulfills conditions 1 and 2 in \cref{def:QA}, then $\A'$ is a clean quantum algorithm evaluating $f$ with error $\varepsilon$.
\end{lemma}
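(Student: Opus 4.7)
The plan is to construct $\A'$ by inserting cancelling pairs of oracle queries into the over-long gaps of $\A$. The basic building block will be $\O_x I^{K-1}\O_x$ for a target spacing parameter $K$: this equals $I$ as a unitary (since $\O_x$ commutes with $I$ and $\O_x^2 = I$), so inserting it anywhere leaves the computation unchanged, but it introduces two new queries spaced exactly $K$ time steps apart.

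I would pick $K := \lceil 2T/S\rceil + 1$ and label the inter-query distances of $\A$ as $d_0, d_1, \dots, d_S$ (where $d_0$ is the distance from the start to the first query, $d_S$ from the last query to the end, and the rest are between consecutive queries, so $\sum_{i=0}^{S} d_i = T$). For each $d_i > K$, I insert $m_i := \lceil (d_i - K)/(K-1)\rceil$ blocks $\O_x I^{K-1}\O_x$ into that gap, placed so that every resulting sub-gap has length at most $K$. Feasibility follows from a short count: after $m_i$ insertions the gap decomposes into $m_i$ in-block sub-gaps of length exactly $K$ together with $m_i + 1$ variable sub-gaps summing to $d_i + m_i$, and the defining inequality of $m_i$ is precisely $d_i + m_i \leq (m_i + 1)K$, so the variable sub-gaps can be chosen $\leq K$.

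The heart of the argument is bounding $m := \sum_i m_i$. Using $\lceil x\rceil \leq x + 1$ together with the observation that there can be at most $T/K$ oversized gaps, I obtain $m \leq T/(K-1) + T/K \leq 2T/(K-1) \leq S$ by the choice of $K$. Therefore $S' = S + 2m \leq 3S$ and $T' = T + (K+1)m = \Theta(T)$. The query-uniformity condition $K \leq \lfloor 3T'/S'\rfloor$ reduces to checking $3T' \geq KS'$, i.e., $3T - KS + m(K+3) \geq 0$; since no two consecutive unitaries of $\A$ are queries we have $S \leq (T+1)/2$, and combined with $K \leq 2T/S + 2$ this yields $KS \leq 3T$, so the inequality holds. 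Conditions 1 and 2 of \cref{def:QA} are preserved automatically, because each inserted block equals the identity (preserving the final state and hence the output distribution) and both $\O_x$ and $I$ commute with $I \otimes X$.

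The main obstacle in executing the plan is a delicate balancing act in the choice of $K$: if $K$ is too small then $m$ balloons past $S$ and $S' \leq 3S$ fails, while if $K$ is too large then $K$ itself exceeds $\lfloor 3T'/S'\rfloor$ and query-uniformity fails. The value $K = \lceil 2T/S\rceil + 1$ is the sweet spot where both constraints hold simultaneously, and the boundary regime $S \approx (T+1)/2$ has to be checked by hand to confirm that $KS \leq 3T$ still goes through.
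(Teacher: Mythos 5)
Your proof is correct, but it takes a genuinely different route from the paper's. The paper inserts a fixed block $I\O_xI\O_xI$ \emph{obliviously} at the $S-1$ evenly spaced positions $\lceil kT/S\rceil$, $k\in[S-1]$ (handling $S\in\{1,2\}$ separately as trivial), so every gap is automatically chopped to roughly $T/S$ and the counting $S'\leq 3S$, $T'=T+5(S-1)$ is immediate; the only work is the final inequality chain. You instead insert \emph{adaptively}, only into oversized gaps, which forces you to do the feasibility count ($d_i+m_i\leq(m_i+1)K$) and the global bound $m\leq 2T/(K-1)\leq S$, but buys a construction that never touches gaps that are already fine and makes the tension in the choice of $K$ explicit. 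Both yield the same guarantees, and your preservation argument for conditions 1 and 2 (each block is the identity and commutes with $I\otimes X$) matches the paper's. One arithmetic point to tighten: the chain ``$K\leq 2T/S+2$ and $2S\leq T+1$ yields $KS\leq 3T$'' as written only gives $KS\leq 2T+2S\leq 3T+1$. This is harmless when $m\geq 1$, since then $3T'-KS'=3T-KS+m(K+3)\geq -1+(K+3)>0$, but in the $m=0$ case (where $T'=T$, $S'=S$ and you need $K\leq\lfloor 3T/S\rfloor$ exactly) you should use the sharper $S\lceil 2T/S\rceil\leq 2T+S-1$, which gives $KS\leq 2T+2S-1\leq 3T$ and closes the gap. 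Similar off-by-ones lurk in whether $\sum_i d_i$ equals $T$ or $T+1$ depending on how boundary distances are counted, but they are absorbed by integrality of $m$ and do not affect $m\leq S$.
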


\begin{proof}
	First, if $S \in \{1,2\}$, we note that $\lfloor 3T/S \rfloor > T$, and hence the third condition in \cref{def:QA} is trivially satisfied without any modifications to $\A$. Hence, we restrict to the case where $S \geq 3$. We insert a sequence of operations $I\O_xI\O_xI$ into $\A$ between time steps $\lceil kT/S\rceil$ and $\lceil kT/S\rceil + 1$ where $k \in [S-1]$. This increases the number of queries to $S' \leq 3S$ and the number of time steps to $T' = T + 5(S-1)$. The number of time steps between two consecutive queries is at most
	\[\left\lceil\frac{T}{S}\right\rceil + 2 \leq \frac{T}{S} + 3 = \frac{T' - 5(S-1)}{S} + 3 \leq 3\frac{T'}{S'} - 5 + \frac{5}{S} + 3 < 3\frac{T'}{S'}.\]
	As the left-hand side is an integer, we can just as well take the floor on the right-hand side. Similarly, the distance of the first query from the start is at most $\lceil T/S\rceil + 1 < 3T'/S'$, and the number of time steps between the last query and the end of the algorithm is at most $T - \lceil (S-1)T/S\rceil + 1 \leq T/S + 1 < 3T'/S'$. Thus, we have satisfied the query-uniformity condition from \cref{def:QA}.

	Furthermore, the second statement follows immediately from the fact that the unitaries that we are inserting amount to the identity, and hence if $\A$ evaluates $f$ with error probability $\varepsilon$, so does $\A'$. This completes the proof.
\end{proof}

By \cref{lem:QA} and \cref{lem:Q_Uniformity}, we can assume without loss of generality that any quantum algorithm is a clean quantum algorithm, namely, that it does a computation, copies out the answer, and then reverses the computation. The overhead of putting an algorithm into this form is only a constant factor in the query and time complexity, and a single auxiliary qubit in the space complexity.

For clarity, we emphasize that in a clean quantum algorithm with non-zero error, while in some sense the algorithm uncomputes everything but the answer, this uncomputation does not succeed fully -- we do not return the non-answer registers of the algorithm to the fixed state $\ket{\Psi_0}$. The weight of the final state $\ket{\Psi_T(x)}$ on $\ket{\Psi_0}$ in the non-answer registers is
\begin{align*}
	|\braket{\Psi_0,0}{\Psi_T(x)}|^2 + |\braket{\Psi_0,1}{\Psi_T(x)}|^2 = p_0(x)^2 + p_1(x)^2 = p_0(x)^2 + (1-p_0(x))^2,
\end{align*}
which is strictly less than 1 whenever $0 < p_0(x) < 1$.

\subsection{Accessing an algorithm as input}
\label{sec:model}

Throughout the rest of the paper, we will consider algorithms that, among other things, take other algorithms as input. This section concerns how we model this through several oracles. The model is essentially a generalization of the one used in \cite{Amb06}.

Let $m \in \N$ and let $\A = \{\A^{(1)}, \dotsc, \A^{(m)}\}$ be a set of quantum query algorithms. For every $j \in [m]$, let $T^{(j)}$ be the time complexity of $\A^{(j)}$, let $\S^{(j)} \subseteq [T^{(j)}]$ be the set of time steps at which $\A^{(j)}$ performs queries to the input, let $U_1^{(j)}, \dots, U_{T^{(j)}}^{(j)}$ be the sequence of unitaries in $\A^{(j)}$, and suppose that $\A^{(j)}$ evaluates a function $f_j : X^{(j)} \subseteq \{0,1\}^{n^{(j)}} \to \{0,1\}$ with bounded error. For convenience we define $T_{\max} = \max_{j \in [m]} T^{(j)}$ and $n_{\max} = \max_{j \in [m]} n^{(j)}$, and we assume that all unitaries $U_t^{(j)}$ act on some space $\C^{[n_{\max}] \times \W}$, where the first register is large enough to hold the input bit label for any of the Boolean functions $f_j$.

We define three different oracles associated with $\A$. First, the \textit{algorithm oracle}, sometimes referred to as \textbf{Select}, acts on $\C^{[m] \times [T_{\max}] \times [n_{\max}] \times \W}$ as
\[\forall j \in [m], t \in [T^{(j)}] \setminus \S^{(j)}, \ket{\psi} \in \C^{[n_{\max}] \times \W}, \qquad \O_\A : \ket{j}\ket{t}\ket{\psi} \mapsto \ket{j}\ket{t}U_t^{(j)}\ket{\psi}.\]
Second, the \textit{query time step oracle}, which allows us to determine whether a given algorithm $\A^{(j)}$ perform a query at a given time step $t$, acts on $\C^{[m] \times [T_{\max}]}$ as
\[\forall j \in [m], t \in [T^{(j)}], \qquad \O_\S : \ket{j}\ket{t} \mapsto \begin{cases}
	-\ket{j}\ket{t}, & \text{if } t \in \S^{(j)}, \\
	\ket{j}\ket{t}, & \text{otherwise}.
\end{cases}\]
Finally, given a list of inputs $x = (x^{(1)}, \dots, x^{(m)})$, where $x^{(j)} \in \{0,1\}^{n^{(j)}}$ is the input to function $f_j$, the \textit{input oracle} to $x$ acts on $\C^{[m] \times [n_{\max}]}$ as
\[\forall j \in [m], \qquad \O_x = \sum_{j=1}^m \ket{j}\bra{j} \otimes \O_{x^{(j)}}, \qquad \text{where} \qquad \forall i \in [n_j], \qquad \O_{x^{(j)}} : \ket{i} \mapsto (-1)^{x_i^{(j)}}\ket{i}.\]
On computational basis states that are not specified above, the behavior of the three oracles can be arbitrary.

By saying that we have \textit{uniform access} to the set of algorithms $\A$, we mean that we have access to these three oracles $\O_\A$, $\O_\S$ and $\O_x$. Moreover, if the time complexity of implementing the oracles $\O_\A$ and $\O_\S$ is polylogarithmic in $T_{\max} = \max_{j \in [m]} T^{(j)}$ and $m$, then we say that we have \textit{efficient uniform access} to $\A$.

Note that if $m = 1$, then the first register in all above oracles only contains one dimension and hence can be omitted. In that case, we drop all the superscripts and $\O_x$ reduces to the regular input oracle $\O_x$ that we defined in \cref{eq:Ox}.

These oracles fully capture the set $\A$ and provide an interface for the higher-level algorithms to execute the algorithms in $\A$ as subroutines. From a computer science point of view, one can think about these oracles as the endpoints for the user. To use our results for a particular set of algorithms $\A$, one has to provide implementations of these three oracles. The machinery we develop in the remainder of this text then takes care of the rest of the construction, and our analysis provides the number of calls made to these oracles, alongside with the number of extra gates used.

A natural question to ask is how difficult it is in general to implement these oracles. If the algorithms from $\A$ are very unstructured, then it is in general very time-consuming to implement these oracles. In that case, one could implement $\O_\A$ and $\O_\S$ by querying a quantum read-only random access memory (commonly referred to as QRAM) storing the algorithms $\A^{(j)}$ as lists of gates. A similar model, called \emph{quantum random access stored-program machines} was recently formalized in~\cite{WY20}.

However, quantum query algorithms that we encounter in practice can usually be described very succinctly, and we have some efficient constructive procedure to calculate what gate has to be applied in the $j$th algorithm at the $t$th time step and at what time steps the algorithms perform a query. These procedures can be used to implement the oracles $\O_\A$ and $\O_\S$ efficiently and provide us with efficient uniform access. Through a similar argument, one can usually provide an efficient implementation of $\O_x$ as well, based on the individual implementations of the $\O_{x^{(j)}}$'s. All of these constructions are always instance-dependent, though, and hence we cannot elaborate on them further without losing generality.

We conclude this section by remarking that this final argument is more generally applicable to oracular algorithms. The results about query complexity are in general most interesting and applicable in a setting where the oracles themselves can be substituted by efficient algorithms. The same goes for the uniform access model we consider here.

\subsection{Span programs}

Having discussed the general structure of quantum query algorithms, let us turn our attention to the other construct of interest in this work, namely, span programs. Following~\cite{IJ15}, we define a span program for evaluating a Boolean function as follows.

\begin{definition}[Span program]
A \emph{span program} $P=(\H,\V,A,\ket{\tau})$ on $\{0,1\}^n$ consists of
\begin{enumerate}
\setlength\itemsep{-.4em}
\item a finite-dimensional Hilbert space $\H$ that decomposes as $$\H=\H_1\oplus\cdots\oplus \H_n\oplus \Htrue \oplus \Hfalse$$ where each $\H_i$, $i\in[n]$, decomposes further as $\H_i = \H_{i,0} \oplus \H_{i,1}$,
\item a finite-dimensional Hilbert space $\V$,
\item a linear operator $A\in \L(\H,\V)$, and
\item a target vector $\ket{\tau} \in \V$.
\end{enumerate}
\end{definition}

With each string $x \in \{0,1\}^n$, we associate the subspace
\[\H(x) = \H_{1,x_1}\oplus\cdots\oplus \H_{n,x_n}\oplus \Htrue.\]
For any subspace $\H' \subseteq \H$, we write $\Pi_{\H'} \in \L(\H)$ to denote the projector onto $\H'$.

Intuitively, a span program encodes the decision problem ``Is $\ket{\tau}\in A\H(x)$?''. To answer this question in the affirmative, it is sufficient to provide a preimage of $\ket{\tau}$ under $A$ in $\H(x)$, called a \emph{positive witness}. In the negative case, one would like to find an object, called a \emph{negative witness}, that precludes the existence of such a positive witness. These concepts are defined rigorously as follows.

\begin{definition}[Positive and negative witnesses]
Fix a span program $P=(\H,\V,A,\ket{\tau})$ and an input $x \in \{0,1\}^n$. We call a vector $\ket{w} \in \H$ a \emph{positive witness} for $x$ if $\ket{w} \in \H(x)$, and $A\ket{w} = \ket{\tau}$. The \emph{positive witness size} of $x$ is
\[w_+(x,P)=w_+(x):=\min_{\ket{w}\in \H(x)}\{\|\ket{w}\|^2: A\ket{w}=\ket{\tau}\},\]
if there exists a positive witness for $x$, and $w_+(x) = \infty$ otherwise. We say that $\ket{\omega} \in \V$ is a \emph{negative witness} for $x$ if $\bra{\omega} A \Pi_{\H(x)} = 0$ and $\braket{\omega}{\tau} = 1$. The \emph{negative witness size} of $x$ is
\[w_-(x,P)=w_-(x):=\min_{\ket{\omega}\in\V}\{\|\bra{\omega}A\|^2: \bra{\omega} A\Pi_{\H(x)}=0, \braket{\omega}{\tau}=1\},\]
if there exists a negative witness, and $w_-(x)=\infty$ otherwise. We define the set of \emph{positive} and \emph{negative inputs} of $P$, respectively, as
\begin{align*}
  P_1&:=\{x \in X: w_+(x)<\infty\},&
  P_0&:=\{x \in X: w_-(x)<\infty\}.
\end{align*}
\end{definition}

One can think of a span program as a puzzle in which several pieces are supplied (the space $\H(x)$) together with assembly instructions (the map $A$) and a contour of a shape to be constructed (the target~$\ket{\tau}$). The larger the number of pieces required to construct the target, the harder it is to solve the puzzle. Alternatively, the larger the number of missing pieces required to declare the problem unsolvable, the harder it is to do so. This justifies the notion that larger witness sizes are indicative of harder span programs.

However useful these definitions are, requiring that the puzzle be solved exactly, without using any ``illegal'' pieces, can be too hard a constraint. As \cite{IJ15} illustrates, it can be advantageous to relax the constraints and simply require that the target be constructed with as few elements outside $\H(x)$ as possible, or that the negative witness $\bra{\omega} A$ overlaps with $\H(x)$ as little as possible. In this paper, we will use the second concept, in a relaxed form from \cite{Jef20}. For some $x\in \{0,1\}^n$, not necessarily in $P_0$, we will say that $\ket{\widetilde\omega}\in\V$ is an \emph{approximate negative witness} if  $\braket{\widetilde{\omega}}{\tau}=1$, and $\norm{\bra{\widetilde\omega}A\Pi_{\H(x)}}^2$, called the error of $\ket{\widetilde\omega}$, is not too large. If $x\in P_0$, then this quantity can be 0, but otherwise not. In particular, if $x\in P_1$, the minimum possible error of $\ket{\widetilde\omega}$ is $\frac{1}{w_+(x,P)}$ \cite[Theorem~9]{IJ15}. This is why inverse positive witness size is used as a point of reference for what constitutes ``small'' error for a negative witness in the following definition.

\begin{definition}[Span program complexity]
Let $P$ be a span program, $f : X \subseteq \{0,1\}^n \to \{0,1\}$ a Boolean function, and $\lambda \in [0,1)$. The \emph{positive} and \emph{approximate negative complexity} of $P$ (w.r.t. $f$) are, respectively,
\begin{align*}
	W_+(f,P) &= W_+(P) := \max_{x\in f^{-1}(1)} w_+(x,P), &
	\widetilde{W}_-(f,P) &= \widetilde{W}_-(P) := \max_{x\in f^{-1}(0)} \widetilde{w}_-(x,P),
\end{align*}
where $\widetilde{w}_-(x,P)$ is the following minimization over all \emph{approximate negative witnesses} $\ket{\widetilde\omega}\in\V$:
\begin{equation*}
  \widetilde{w}_-(x,P) := \min_{\ket{\widetilde\omega}\in\V}
	\left\{
	  \norm{\bra{\widetilde\omega}A}^2 :
		\braket{\widetilde\omega}{\tau} = 1,
		\norm{\bra{\widetilde\omega}A\Pi_{\H(x)}}^2 \leq \lambda/W_+(f,P)
	\right\}.
\end{equation*}
We say that $P$ \emph{positively $\lambda$-approximates} $f$ if $f^{-1}(1)\subseteq P_1$ and $\widetilde{w}_-(x,P) < \infty$ for all $x \in f^{-1}(0)$.
The \emph{complexity} of $P$ with respect to $f$ is
\begin{equation*}
	C(P) := \sqrt{W_+(f,P)\widetilde{W}_-(f,P)},
\end{equation*}
leaving $\lambda$ and $f$ implicit.
We say that \emph{$P$ decides $f$ exactly} if it $0$-approximates $f$.
\end{definition}

In other words, in a positively approximating span program, negative instances (w.r.t.\ $f$) are distinguished from positive instances by having approximate negative witnesses with smaller error. Since the minimum possible error of any approximate negative witness for $x\in f^{-1}(1)$ is $\frac{1}{w_+(x,P)}\geq \frac{1}{W_+(f,P)}$, we can use approximate negative witnesses with error strictly smaller than this as proof of membership in $f^{-1}(0)$. The ``gap'' in error between 1- and 0-inputs is characterized by the parameter $\lambda$ and will later be exploited to construct quantum algorithms that evaluate span programs.

\section{The time complexity of implementing a span program}
\label{sec:span-program-alg}

Span programs by themselves are not quantum objects. Nevertheless, the elements that define a span program can be combined to form a quantum algorithm. In this section, we describe such an algorithm and consider its implementation and time complexity in a general setting. Specifically, we describe its time complexity in terms of the time complexities of several operations. We do not show how to implement these in general -- such details will depend on the specific construction of the span program.

To turn a span program into a quantum algorithm, we first have to normalize it. This process is explained in \cref{sec:normalization}. From such a normalized span program, we can construct a span program algorithm that distinguishes between positive and negative instances, which is explained in \cref{sec:sp-alg}. Finally, in \cref{sec:implementing-subspace}, we analyze the time complexity of this span program algorithm, where we use the notion of implementing subspaces. The techniques from sections \cref{sec:normalization} and \cref{sec:sp-alg} are mainly taken from \cite{IJ15}, and our contribution starts in \cref{sec:implementing-subspace}.

\subsection{Span program normalization}
\label{sec:normalization}

Let $P = (\H,\V,A,\ket{\tau})$ be a span program. The \emph{minimal positive witness} of $P$ is defined as $\ket{w_0} = A^+\ket{\tau}$, where $A^+$ is the Moore-Penrose inverse of $A$. In other words, $\ket{w_0}$ is the shortest vector in $\H$ that is mapped to $\ket{\tau}$ by $A$. When $\norm{\ket{w_0}} = 1$, we say that the span program $P$ is \emph{normalized}.

In \cref{sec:sp-alg}, we will describe an algorithm that can be derived from a span program, first described in \cite{IJ15} (similar algorithms were given previously in \cite{Rei09}). This algorithm assumes that a span program is normalized -- in fact, $\ket{w_0}$ will be the algorithm's initial state. The process of converting any span program into a normalized span program is what we call normalization.

One very naive way of normalizing a span program is simply by rescaling $\ket{\tau}$. The minimal positive witness $A^+\ket{\tau}$ scales accordingly, and hence it is easy enough to obtain a normalized span program. However, scaling $\ket{\tau}$ with a factor $\alpha$ also scales the positive witness size $W_+$ by a factor of $\alpha$, and the negative witness size $W_-$ by a factor of $1/\alpha$. For reasons that will become apparent in the next section, the complexity of the span program scales most favorably if $W_+$ is $\O(1)$.

So, ideally, we would like to normalize our span program in such a way that we also modify our positive witnesses to have constant size. For this, we need a slightly more involved construction than simply scaling $\ket{\tau}$. This is the objective of the following theorem.

\begin{restatable}[Span program rescaling]{theorem}{thmscaling}
	\label{thm:scaling}
	Let $P=(\H,\V,A,\ket{\tau})$ be any span program on $\{0,1\}^n$ that positively $\lambda$-approximates $f$ for some $\lambda\in (0,1)$, and let $N=\|\ket{w_0}\|^2$ where $\ket{w_0}$ is the minimal positive witness of $P$. For $\beta > 0$, define the span program $P^\beta=(\H^\beta,\V^\beta,A^\beta,\ket{\tau^\beta})$ as follows:
	\begin{align}
		\forall i\in[n], b\in\{0,1\},\qquad \H^\beta_{i,b}=\H_{i,b}, \quad \Htrue^\beta=\Htrue\oplus\Span\{\ket{\hat1}\},& \quad \Hfalse^\beta=\Hfalse\oplus\Span\{\ket{\hat0}\},\nonumber \\
		\V^\beta=\V\oplus\Span\{\ket{\hat1}\}, \quad A^{\beta} = \beta A + \ket{\tau}\bra{\hat{0}} + \frac{\sqrt{\beta^2 + N}}{\beta}\ket{\hat{1}}\bra{\hat{1}},& \quad \ket{\tau^\beta} = \ket{\tau}+\ket{\hat{1}},\label{eq:VAt}
	\end{align}
	where $\ket{\hat0}$ and $\ket{\hat{1}}$ are unit vectors orthogonal to $\H$ and $\V$.
	Then the minimal witness of $P^\beta$ is
	\begin{equation}
	  \ket{w_0^\beta}=\frac{\beta}{\beta^2+N}\ket{w_0}+\frac{N}{\beta^2+N}\ket{\hat0}+\frac{\beta}{\sqrt{\beta^2+N}}\ket{\hat1},
		\label{eq:wb}
	\end{equation}
	and $\|\ket{w_0^\beta}\|=1$. Furthermore, if $\beta\geq \sqrt{W_+(P)}$ and for all $x\in f^{-1}(0)$ there is an approximate negative witness $\ket{\tilde\omega}\in\V$ with complexity $\norm{\bra{\tilde\omega}A}^2\leq \widetilde{W}_-(P)$ and error $\norm{\bra{\tilde\omega}A\Pi_{\H(x)}}^2\leq\lambda/\beta^2$,
	then $P^\beta$ positively $2\lambda$-approximates $f$ with $W_+(P^{\beta})\leq 2$, and $\widetilde{W}_-(P^{\beta})\leq \beta^2\widetilde{W}_-(P)+2$.
\end{restatable}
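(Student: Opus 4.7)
My plan is to verify the formula for $\ket{w_0^\beta}$ first, then independently bound $W_+(P^\beta)$ and $\widetilde{W}_-(P^\beta)$ by lifting optimal witnesses of $P$ to witnesses of $P^\beta$.

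For the minimal-witness claim, I would first check $A^\beta\ket{w_0^\beta}=\ket{\tau^\beta}$ by expanding $A^\beta$ term by term and using $A\ket{w_0}=\ket{\tau}$; this also makes the unit-norm calculation immediate since $(N\beta^2+N^2)/(\beta^2+N)^2+\beta^2/(\beta^2+N)=N/(\beta^2+N)+\beta^2/(\beta^2+N)=1$. What remains is to confirm $\ket{w_0^\beta}\in(\Ker A^\beta)^\perp$, so that $\ket{w_0^\beta}$ is truly the minimum-norm preimage $(A^\beta)^+\ket{\tau^\beta}$. Decomposing any $\ket{v}\in\Ker A^\beta$ as $\ket{v_\H}+\alpha\ket{\hat 0}+\gamma\ket{\hat 1}$ with $\ket{v_\H}\in\H$, the kernel condition forces $\gamma=0$ and $A\ket{v_\H}=-(\alpha/\beta)\ket{\tau}$, so $\ket{v_\H}=-(\alpha/\beta)\ket{w_0}+\ket{u}$ for some $\ket{u}\in\Ker A$. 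Using $\ket{w_0}\in(\Ker A)^\perp$ yields $\braket{w_0}{v_\H}=-\alpha N/\beta$, after which the two contributions to $\braket{w_0^\beta}{v}$ cancel. This orthogonality argument is the one nontrivial step; the rest is bookkeeping.

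For the bound $W_+(P^\beta)\leq 2$, take $x\in f^{-1}(1)$ and let $\ket{w_x}\in\H(x)$ be an optimal positive witness of $P$, so $\|\ket{w_x}\|^2\leq W_+(P)\leq\beta^2$. Define $\ket{w_x^\beta}:=(1/\beta)\ket{w_x}+(\beta/\sqrt{\beta^2+N})\ket{\hat 1}$, which lies in $\H^\beta(x)=\H(x)\oplus\Span\{\ket{\hat 1}\}$ because $\ket{\hat 1}\in\Htrue^\beta$. A direct check gives $A^\beta\ket{w_x^\beta}=\ket{\tau}+\ket{\hat 1}=\ket{\tau^\beta}$ and $\|\ket{w_x^\beta}\|^2=\|\ket{w_x}\|^2/\beta^2+\beta^2/(\beta^2+N)\leq 1+1=2$.

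For the negative bound, given $x\in f^{-1}(0)$ and the hypothesized approximate negative witness $\ket{\tilde\omega}\in\V$ of $P$, lift it trivially as $\ket{\tilde\omega^\beta}:=\ket{\tilde\omega}\in\V\subset\V^\beta$. Since $\ket{\hat 1}\perp\V$, the target constraint $\braket{\tilde\omega^\beta}{\tau^\beta}=\braket{\tilde\omega}{\tau}=1$ is preserved. Direct expansion gives $\bra{\tilde\omega^\beta}A^\beta=\beta\bra{\tilde\omega}A+\bra{\hat 0}$, whence $\|\bra{\tilde\omega^\beta}A^\beta\|^2=\beta^2\|\bra{\tilde\omega}A\|^2+1\leq\beta^2\widetilde{W}_-(P)+2$. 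Since $\ket{\hat 0}\notin\H^\beta(x)$, the projector $\Pi_{\H^\beta(x)}$ kills the $\bra{\hat 0}$ term, leaving $\|\bra{\tilde\omega^\beta}A^\beta\Pi_{\H^\beta(x)}\|^2=\beta^2\|\bra{\tilde\omega}A\Pi_{\H(x)}\|^2\leq\lambda$. Combined with $W_+(P^\beta)\leq 2$ this is at most $2\lambda/W_+(P^\beta)$, which is exactly the error budget for $2\lambda$-approximation.
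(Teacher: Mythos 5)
Your proposal is correct, and it differs from the paper's proof in two ways worth noting. First, the paper simply cites \cite[Lemmas~11 and~12]{IJ15} for the form of $\ket{w_0^\beta}$, its unit norm, and the positive witness size $w_+(x,P^\beta)=\frac{1}{\beta^2}w_+(x,P)+\frac{\beta^2}{N+\beta^2}$; you reprove these from scratch, and your kernel decomposition (forcing $\gamma=0$, writing $\ket{v_\H}=-(\alpha/\beta)\ket{w_0}+\ket{u}$ with $\ket{u}\in\Ker A$, and watching the two contributions to $\braket{w_0^\beta}{v}$ cancel) is exactly the right way to certify $\ket{w_0^\beta}=(A^\beta)^+\ket{\tau^\beta}$; your lifted positive witness $\frac{1}{\beta}\ket{w_x}+\frac{\beta}{\sqrt{\beta^2+N}}\ket{\hat 1}$ recovers the cited upper bound. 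Second, and more substantively, for the negative direction the paper uses a weighted combination $\ket{\widetilde\omega'}=\frac{\beta^2+N}{\beta^2\lambda+\beta^2+N}\ket{\widetilde\omega}+\frac{\beta^2\lambda}{\beta^2\lambda+\beta^2+N}\ket{\hat 1}$, which picks up a nonzero error contribution in the $\ket{\hat 1}$ direction that must then be absorbed into the algebra; you lift $\ket{\widetilde\omega}$ unchanged, observe that $\braket{\widetilde\omega}{\hat 1}=0$ already gives $\braket{\widetilde\omega}{\tau^\beta}=1$ and kills the $\ket{\hat 1}$ column, and that $\Pi_{\H^\beta(x)}$ annihilates the $\bra{\hat 0}$ term. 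This is simpler, gives the marginally better complexity bound $\beta^2\widetilde{W}_-(P)+1$, and yields error exactly $\beta^2\norm{\bra{\widetilde\omega}A\Pi_{\H(x)}}^2\leq\lambda\leq 2\lambda/W_+(P^\beta)$, which is all the $2\lambda$-approximation requires. Both routes establish the theorem; yours is self-contained and slightly cleaner, the paper's is shorter because it leans on the prior work.
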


This theorem is a slight generalization of the normalization procedure outlined in \cite{IJ15}. The proof of \cref{thm:scaling} appears in \cref{app:scaling}. Note that indeed $P^{\beta}$ is a normalized span program, and if $\beta$ is sufficiently large we also have $W_+(P^{\beta}) = \O(1)$.

For technical reasons, we assume in the remainder of this text that we can reflect through the states $\ket{\hat{0}}$ and $\ket{\hat{1}}$ in $\O(1)$ gates and with only $\O(1)$ auxiliary qubits. One way to implement this is to make the state space of the system equal to $\H \otimes \C^{\{0,1,2\}}$, identify $\H$ with $\H \otimes \ket{2}$, and define $\ket{\hat{0}} = \ket{0,0}$ and $\ket{\hat{1}} = \ket{0,1}$. Now, we leave it for the reader to check that the unitary $\I_{\H} \otimes (2\ket{0}\bra{0} - \I_{\C^{\{0,1,2\}}})$ acts as $2\ket{\hat{0}}\bra{\hat{0}} - \I_{\H^{\beta}}$ on $\H^{\beta}$, and similarly for $\ket{\hat{1}}$. Moreover, these unitaries can be implemented with $\O(1)$ gates, and $\O(1)$ extra qubits.

\subsection{The span program algorithm}
\label{sec:sp-alg}

From the previous section we know how to convert a span program $P$ that positively $\lambda$-approximates a function $f$ into a normalized span program $P^{\beta}$ that $2\lambda$-approximates $f$. We proceed with explaining how one can convert this normalized span program $P^{\beta}$ into a quantum algorithm that evaluates $f$ with bounded error.

To that end, we define its \emph{span program unitary}, dependent on the input $x \in \{0,1\}^n$, as
\begin{equation}
	U=U(P^\beta,x)=(2\Pi_{\H^\beta(x)}-\I)\Bigl(2\bigl(\Pi_{\ker A^\beta}+\ket{w_0^\beta}\bra{w_0^\beta}\bigr)-\I\Bigr).
	\label{eq:U}
\end{equation}
The span program algorithm of \cite{IJ15} works by doing phase estimation of $U$ on initial state $\ket{w_0^\beta}$ to precision $\Theta$ and error probability at most $\eps$, and then estimating the amplitude of this process on a 0 in the phase register, using precision $\Theta'$, where
\[\Theta = \Omega\left(\sqrt{\frac{1-2\lambda}{W_+(P^{\beta})\widetilde{W}_-(P^{\beta})}}\right), \qquad \varepsilon = \Omega\left(\frac{1-2\lambda}{W_+(P^{\beta})}\right), \qquad \text{and} \qquad \Theta' = \Omega\left(\frac{1-2\lambda}{\sqrt{W_+(P^{\beta})}}\right).\]
Moreover, if we only have access to upper bounds on $W_+(P^{\beta})$, $\widetilde{W}_-(P^{\beta})$ and $\lambda$, then it suffices to use those instead of the true values of these quantities and update the above expressions accordingly. Thus, this algorithm requires constructing the state $\ket{w_0^{\beta}}$, and then making $\O(\frac{1}{\Theta'}\frac{1}{\Theta}\log\frac{1}{\eps})$ controlled calls to $U$. If we choose $\beta = \sqrt{W_+(P)}$, we find that $W_+(P^\beta) \leq 2$, and $\widetilde{W}_-(P^\beta) \leq W_+(P)\widetilde{W}_-(P) + 2$, so that
\begin{equation}
	\frac{1}{\Theta'}\frac{1}{\Theta}\log\frac{1}{\eps}=\O\left(\frac{W_+(P^\beta)\sqrt{\widetilde{W}_-(P^\beta)}}{(1-2\lambda)^{3/2}}\log\frac{W_+(P^\beta)}{1-2\lambda}\right)=\O\left(\frac{\sqrt{W_+(P)\widetilde{W}_-(P)}}{(1-2\lambda)^{3/2}}\log\frac{1}{1-2\lambda}\right).
	\label{eq:span-query-complexity}
\end{equation}
Similarly as above, if we only know upper bounds on $W_+(P)$, $W_-(P)$ and $\lambda$, we can substitute those both in the choice for $\beta$ and in the above expression. Furthermore, it is important to remark that the relevant limits in the big-$\O$-notation are when $\lambda \uparrow \frac12$ and $W_+(P), \widetilde{W}_-(P) \to \infty$. For a more detailed description of this algorithm, see \cite{IJ15}.

We refer to this algorithm as \emph{the algorithm compiled from $P$}, or the \textit{span program algorithm for $P$}. Its query complexity is straightforward to analyze. The initial state $\ket{w_0^\beta}$ is independent of the input, so it can be generated in 0 queries. Similarly, since $A^\beta$ does not depend on the input, the reflection $2(\Pi_{\ker A^\beta}+\ket{w_0^\beta}\bra{w_0^\beta})-\I$ can be implemented in 0 queries. The reflection $2\Pi_{\H^\beta(x)}-\I$ does depend on the input, but it can be implemented in 2 queries, see \cite[Lemma~2]{IJ15}. Thus, the query complexity of the algorithm compiled from $P$ is given by \cref{eq:span-query-complexity}.

\subsection{Time complexity of the span program algorithm}
\label{sec:implementing-subspace}

We now turn our attention to the time complexity of the algorithm span program algorithm for $P$. We express this time complexity in terms of the number of calls we perform to some black-box operations that can be defined directly in terms of the span program $P$, see \cref{thm:span-program-time}. In principle, the time complexity of any span program algorithm can be analyzed using this theorem, and we expect that its relevance is not restricted to the application we present in the subsequent sections.

Before we analyze the time complexity, though, we first introduce the concept of an \textit{implementing subspace}. This subspace depends on the particular input $x \in \{0,1\}^n$, and has the properties that it is often much smaller than the ambient Hilbert space $\H$, and that throughout the execution of the span program algorithm the state vector remains in this subspace, and hence all operations in the span program algorithm need only be defined in this subspace to ensure successful computation of the span program.

\begin{definition}[Implementing subspace]
	\label{def:implementing_subspace}
	Let $\lambda \in [0,1)$ and let $P = (\H,\V,A,\ket{\tau})$ be a span program that positively $\lambda$-approximates a Boolean function $f : X \subseteq \{0,1\}^n \to \{0,1\}$. Let $x \in X$ and let $\H_x$ be a subspace of $\H$ such that:
	\begin{enumerate}
		\setlength\itemsep{-.4em}
		\item $\Pi_{\Ker(A)} \H_x \subseteq \H_x$.
		\item $\Pi_{\H(x)} \H_x \subseteq \H_x$.
		\item $\ket{0} \in \H_x$, where $\ket{0}$ is the all-zeros computational basis state.
		\item $\ket{w_0} \in \H_x$, where $\ket{w_0} = A^+\ket{\tau}$ is the minimal witness for $P$.
	\end{enumerate}
	Then we refer to $\H_x$ as the \emph{implementing subspace of $P$ for $x$}.
\end{definition}

For any $x \in X$, a valid implementing subspace $\H_x$ of $P$ for $x$ is $\H$ itself. For example, we can always implement $2\ket{0}\bra{0}-\I_{\H}$ in complexity $\O(\log\dim\H)$, by simply checking that every qubit is in the state $\ket{0}$. However, for algorithms with large space complexity, such as the element distinctness algorithm \cite{Amb07}, this is very costly, especially if we have to do it many times. In some cases, as in our main theorem in \cref{sec:main_thm}, we can show that the span program has an implementing subspace in which implementing $2\ket{0}\bra{0}-\I$ is easy, thus circumventing an undesired $\log\dim\H$ overhead in the time complexity of the span program algorithm.

The notion of an implementing subspace is not exclusive to span program algorithms. Indeed, any algorithm with high space complexity would run into the same problem if it contains a reflection around any state (i.e., a one-dimensional subspace), even a computational basis state. This includes most quantum walk based algorithms. However, even algorithms with low space complexity could benefit from this technique.

Now, we can state the main result of this section.

\begin{theorem}\label{thm:span-program-time}
	Fix $\lambda \in [0,1/2)$. Suppose $P = (\H,\V,A,\ket{\tau})$ is a span program that positively $\lambda$-approximates a function $f : X \subseteq \{0,1\}^n \to \{0,1\}$. For all $x \in X$, let $\H_x$ be an implementing subspace for $P$. Suppose that we have access to the following subroutines and their controlled versions:
	\begin{enumerate}
		\setlength\itemsep{-.4em}
		\item A subroutine $\Cal R_{\Ker(A)}$ that acts on $\H_x$ as $2\Pi_{\Ker(A)} - \I$.
		\item A subroutine $\Cal C_{\ket{w_0}}$ that leaves $\H_x$ invariant and maps $\ket{0}$ to $\ket{w_0} / \norm{\ket{w_0}}$.
		\item A subroutine $\Cal R_{\H(x)}$ that acts on $\H_x$ as $2\Pi_{\H(x)} - \I$.
		\item A subroutine $\Cal R_{\ket{0}}$ that acts on $\H_x$ as $2\ket{0}\bra{0} - \I$.
	\end{enumerate}
	Then we can implement the span program algorithm for $P$ using a number of calls to the previous subroutines that satisfies
	\[\O\left(\frac{\sqrt{W_+(P)\widetilde{W}_-(P)}}{(1-2\lambda)^{3/2}}\log\frac{1}{1-2\lambda}\right).\]
	Moreover, the number of extra gates and auxiliary qubits used is $\O(\mathrm{polylog}(C(P),1/(1-2\lambda)))$.\footnote{By $f(x,y) = \O(\mathrm{polylog}(x,y))$, we mean that there exist constants $C_1,C_2 > 0$ such that $f(x,y) = \O(\log^{C_1}(x)\log^{C_2}(y))$, in the limit where $x,y \to \infty$.} Finally, it suffices to merely use upper bounds on $W_+(P)$, $W_-(P)$ and $\lambda$, if one substitutes these upper bounds in the relevant complexities.
\end{theorem}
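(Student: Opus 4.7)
The plan is to reduce everything to the span program algorithm analyzed in \cref{sec:sp-alg}. By \cref{thm:scaling} applied with $\beta = \sqrt{W_+(P)}$, the span program $P$ normalizes to $P^\beta$ that positively $2\lambda$-approximates $f$, with $W_+(P^\beta) \leq 2$ and $\widetilde{W}_-(P^\beta) \leq W_+(P)\widetilde{W}_-(P) + 2$. The algorithm of \cref{sec:sp-alg} does phase estimation of $U = U(P^\beta,x)$ from \cref{eq:U} on the initial state $\ket{w_0^\beta}$, followed by amplitude estimation, using
\[
  \O\!\left(\frac{\sqrt{W_+(P)\widetilde{W}_-(P)}}{(1-2\lambda)^{3/2}}\log\frac{1}{1-2\lambda}\right)
\]
(controlled) calls to $U$ in total by \cref{eq:span-query-complexity}. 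It therefore suffices to show how to prepare $\ket{w_0^\beta}$ and apply $U$ using $\O(1)$ calls to each of the four given subroutines plus $\O(1)$ extra gates on a constant-size ancilla. Throughout, the state will remain in the extended implementing subspace $\H^\beta_x := \H_x \oplus \Span\{\ket{\hat 0},\ket{\hat 1}\}$; the four conditions of \cref{def:implementing_subspace} are easily checked for $\H^\beta_x$ with respect to $P^\beta$, so our reflections and state preparation need only be defined on $\H^\beta_x$.

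We first prepare $\ket{w_0^\beta}$. By \cref{eq:wb}, $\ket{w_0^\beta}$ is a fixed superposition of $\ket{w_0}/\sqrt N$, $\ket{\hat 0}$ and $\ket{\hat 1}$ whose amplitudes depend only on $\beta$ and $N = \|\ket{w_0}\|^2$. Starting from $\ket{0}$, we use $\O(1)$ gates on the three-level ancilla discussed after \cref{thm:scaling} to build a superposition $\alpha_0\ket{0} + \alpha_{\hat 0}\ket{\hat 0} + \alpha_{\hat 1}\ket{\hat 1}$ with the correct amplitudes, and then apply $\Cal C_{\ket{w_0}}$, which leaves $\H_x$ invariant and rotates the $\ket{0}$ branch to $\ket{w_0}/\sqrt N$, producing $\ket{w_0^\beta}$. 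The resulting unitary $\Cal C_{\ket{w_0^\beta}}$ uses one call to $\Cal C_{\ket{w_0}}$ and $\O(1)$ extra gates. Since only $\beta$ and $N$ enter the amplitudes, an upper bound on $W_+(P)\geq N$ suffices; substituting upper bounds for $\widetilde{W}_-(P)$ and $\lambda$ in the query-call bound analogously gives the last sentence of the theorem.

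To apply $U = (2\Pi_{\H^\beta(x)} - \I)\,\bigl(2(\Pi_{\ker A^\beta} + \ket{w_0^\beta}\bra{w_0^\beta}) - \I\bigr)$, we use the identity
\[
  2(P_1 + \cdots + P_k) - \I = \pm (2P_1 - \I)\cdots(2P_k - \I),
\]
valid for pairwise range-orthogonal projectors $P_i$; the global sign is absorbed by phase estimation. For the left factor, $\H^\beta(x) = \H(x)\oplus\Span\{\ket{\hat 1}\}$ is such a decomposition, so $2\Pi_{\H^\beta(x)}-\I$ equals (up to sign) $\Cal R_{\H(x)}$ composed with a reflection through $\ket{\hat 1}$, the latter costing $\O(1)$ ancilla gates. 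For the right factor, a direct computation from \cref{eq:VAt} gives $\ker A^\beta = \ker A \oplus \Span\{\ket n\}$ with $\ket n \propto \ket{\hat 0} - \ket{w_0}/\beta$; since $\ket{w_0^\beta}$ lies in the row space of $A^\beta$, the three subspaces $\ker A$, $\Span\{\ket n\}$ and $\Span\{\ket{w_0^\beta}\}$ are pairwise orthogonal, so the right factor decomposes into reflections through each. The reflection through $\ker A$ on $\H^\beta_x$ uses one call to $\Cal R_{\Ker(A)}$ together with $\O(1)$ ancilla gates enforcing the $-\I$ action on $\Span\{\ket{\hat 0},\ket{\hat 1}\}$; the reflections through $\ket n$ and $\ket{w_0^\beta}$ take the standard conjugated form $\Cal C_{\ket n}\,\Cal R_{\ket 0}\,\Cal C_{\ket n}^\dagger$ and $\Cal C_{\ket{w_0^\beta}}\,\Cal R_{\ket 0}\,\Cal C_{\ket{w_0^\beta}}^\dagger$, where $\Cal C_{\ket n}$ is built from one call to $\Cal C_{\ket{w_0}}$ and $\O(1)$ ancilla gates exactly as for $\Cal C_{\ket{w_0^\beta}}$.

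Combining the pieces, each application of $U$ and the single preparation of $\ket{w_0^\beta}$ cost $\O(1)$ calls to each of the four given subroutines plus $\O(1)$ extra gates on a constant-size ancilla; multiplying by the number of $U$ applications gives the claimed subroutine count. Phase and amplitude estimation introduce $\O(\log(1/\Theta)+\log(1/\Theta')+\log(1/\eps)) = \O(\mathrm{polylog}(C(P),1/(1-2\lambda)))$ additional auxiliary qubits together with the corresponding $\O(\mathrm{polylog})$-size QFTs and controls, matching the stated gate and qubit overhead. The main obstacle will be justifying that the given subroutines, which are only constrained on $\H_x$, can be coherently extended to act correctly on the ancilla directions $\ket{\hat 0},\ket{\hat 1}$ so that the factored reflections really equal the projector reflections of $P^\beta$ on all of $\H^\beta_x$; with the three-level ancilla register fixed as in the discussion after \cref{thm:scaling}, this reduces to controlled phase flips and elementary controlled rotations on that ancilla, contributing only $\O(1)$ gates per application of $U$.
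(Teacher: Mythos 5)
Your overall route is the same as the paper's: normalize via \cref{thm:scaling} with $\beta=\sqrt{W_+(P)}$, factor $U(P^\beta,x)$ into reflections through $\H^\beta(x)=\H(x)\oplus\Span\{\ket{\hat1}\}$, through $\Ker(A^\beta)=\Ker(A)\oplus\Span\{\ket{w_0}-\beta\ket{\hat0}\}$, and through $\ket{w_0^\beta}$, implement each from the four subroutines plus a constant-size ancilla, and inherit the call count from \cref{eq:span-query-complexity}. Your three-way orthogonal factorization of $2(\Pi_{\ker A^\beta}+\ket{w_0^\beta}\bra{w_0^\beta})-\I$ is just the paper's two nested factorizations written at once.

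There is, however, one place where your accounting is incomplete, and it is exactly the point you defer to at the end as ``the main obstacle.'' To prepare $\ket{w_0^\beta}$ (and likewise $\ket{n}\propto\ket{w_0}-\beta\ket{\hat0}$) you rotate the three-level ancilla from $\ket{2}$ into $\alpha_{\hat0}\ket{0}+\alpha_{\hat1}\ket{1}+\alpha_2\ket{2}$ and then call $\Cal C_{\ket{w_0}}$. That ancilla rotation must be applied \emph{only} when the $\H$-register is in the all-zeros state: applied unconditionally it sends $\ket{h}\otimes\ket{2}$ for $\ket{h}\in\H_x$, $\ket{h}\neq\ket{0}$, partly onto $\ket{h}\otimes\ket{0}$, which lies outside $\H^\beta$ altogether, so the map would not leave $\H_x^\beta$ invariant and the eigenanalysis of $U$ on the implementing subspace would no longer apply. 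Recognizing whether the $\H$-register is $\ket{0}$ is not an $\O(1)$-gate operation in general --- it costs $\Theta(\log\dim\H)$ if done qubit-by-qubit, which is precisely the overhead the implementing-subspace machinery is designed to avoid. The paper's \cref{lem:state-comb} resolves this by spending two (controlled) calls to $\Cal R_{\ket{0}}$ per state preparation, using phase kickback on a $\ket{+}$ ancilla to compute and uncompute a flag marking the $\ket{0}$ branch; your construction needs the same two calls, so the asymptotic subroutine count is unaffected, but as written your cost claim of ``one call to $\Cal C_{\ket{w_0}}$ and $\O(1)$ extra gates'' for $\Cal C_{\ket{w_0^\beta}}$ is not achievable. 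A second, smaller imprecision: the global sign in the factored $U$ is not harmless for \emph{controlled}-$U$ as used in phase estimation (it becomes a relative phase that shifts all eigenphases by $\pi$); the paper fixes this with an extra $Z$ on the control qubit, which you should do as well rather than saying the sign is absorbed.
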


The purpose of \cref{thm:span-program-time} is to enumerate the fundamental instance-dependent operations that have to be given by the user to compile a particular span program algorithm. In other words, if one wants to compile a time-efficient algorithm from a span program, it suffices to give time-efficient implementations of the four subroutines listed in \cref{thm:span-program-time}. Moreover, observe that there is no reference to the normalization of the span program $P$ in the above theorem statement. The proof of this theorem will take care of all the normalization in a very general way.

The remainder of this section is dedicated to proving \cref{thm:span-program-time}. The proof is divided into four lemmas which we prove first, followed by the proof of \cref{thm:span-program-time}. The first lemma deals with the preparation of states of a certain kind. As a corollary we obtain circuits to construct two states necessary for the unitary $U(P^\beta, x)$.

\begin{restatable}{lemma}{lemComb}
	\label{lem:state-comb}
	Let $\alpha_0, \alpha_1, \alpha_2 \in \mathbb{C}$ be such that $|\alpha_0|^2 + |\alpha_1|^2 + |\alpha_2|^2 = 1$, and let $\ket{w_0}$ be the minimal witness for the span program $P = (\H,\V,A,\ket{\tau})$. For all $x \in X$, let $\H_x$ be an implementing subspace. We define
	\[\ket{\eta} = \alpha_0\ket{\hat{0}} + \alpha_1\ket{\hat{1}} + \alpha_2\frac{\ket{w_0}}{\norm{\ket{w_0}}}.\]
	Assume that we have access to controlled versions of the following subroutines:
	\begin{enumerate}
		\setlength\itemsep{-.4em}
		\item A subroutine $\Cal C_{\ket{w_0}}$ that leaves $\H_x$ invariant and maps $\ket{0}$ to $\ket{w_0}/\norm{\ket{w_0}}$.
		\item A subroutine $\Cal R_{\ket{0}}$ that acts on $\H_x$ as $2\ket{0}\bra{0} - \I$.
	\end{enumerate}
	Let $\H_x^{\beta} = \H_x \oplus \Span\{\ket{\hat{0}},\ket{\hat{1}}\}$. Then we can implement a circuit $\Cal C_{\ket{\eta}}$ that leaves $\H_x^{\beta}$ invariant and maps $\ket{0}$ to $\ket{\eta}$, with one call to $\Cal C_{\ket{w_0}}$, two calls to $\Cal R_{\ket{0}}$, and $\O(1)$ extra gates and auxiliary qubits.
\end{restatable}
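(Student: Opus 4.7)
The plan is to use the concrete realisation of $\H^\beta$ described earlier in the paper, embedding into $\H \otimes \C^{\{0,1,2\}}$ so that $\ket{\hat{0}} = \ket{0,0}$, $\ket{\hat{1}} = \ket{0,1}$, and $\ket{0} = \ket{0,2}$ (the all-zeros of $\H$ sits in the $\ket{2}$-sector of the auxiliary register); consequently the target state is $\ket{\eta} = \alpha_0 \ket{0,0} + \alpha_1 \ket{0,1} + \alpha_2 \ket{w_0,2}/\norm{\ket{w_0}}$. The essential obstacle is that $\Cal C_{\ket{w_0}}$ and $\Cal R_{\ket{0}}$ both act only on the $\H$ factor, so any unconditional application mixes $\H_x^\beta$ with vectors outside it; for instance $\Cal C_{\ket{w_0}} \ket{\hat{0}} = \ket{w_0,0}/\norm{\ket{w_0}}$ is not in $\H_x^\beta$. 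Hence $\Cal C_{\ket{w_0}}$ must be restricted to the $\ket{\cdot,2}$ subspace, and any rotation of the auxiliary register must be restricted to the $\ket{0,\cdot}$ subspace.

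The construction introduces a single ancilla qubit $c$, initialised to $\ket{0}$, and proceeds in four steps. Step~1 uses one (controlled) call to $\Cal R_{\ket{0}}$ sandwiched between two Hadamards on $c$ to perform a phase-kickback writing into $c$ the predicate ``the $\H$ register is $\ket{0}$'': $c$ stays $\ket{0}$ when the $\H$ register is $\ket{0}$, and flips to $\ket{1}$ when the $\H$ register lies in the orthogonal complement of $\ket{0}$ inside $\H_x$. Step~2, controlled on $c = \ket{0}$, applies a $3\times 3$ unitary $V$ on the $\C^{\{0,1,2\}}$ register satisfying $V\ket{2} = \alpha_0 \ket{0} + \alpha_1 \ket{1} + \alpha_2 \ket{2}$, at a cost of $\O(1)$ gates. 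Step~3 reverses Step~1 with one more call to $\Cal R_{\ket{0}}$, restoring $c$ to $\ket{0}$. Step~4 applies $\Cal C_{\ket{w_0}}$ on the $\H$ register, controlled on the $\C^{\{0,1,2\}}$ register being $\ket{2}$, using one call to $\Cal C_{\ket{w_0}}$ together with $\O(1)$ basic gates to compute and uncompute the control flag.

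Tracing through on $\ket{0,2}\ket{0}_c$, Step~1 is trivial, Step~2 yields $(\alpha_0\ket{0,0} + \alpha_1\ket{0,1} + \alpha_2\ket{0,2})\ket{0}_c$, Step~3 is again trivial because the $\H$ register is still $\ket{0}$, and Step~4 rewrites the $\alpha_2\ket{0,2}$ summand as $\alpha_2\ket{w_0,2}/\norm{\ket{w_0}}$, giving exactly $\ket{\eta}\ket{0}_c$. The main remaining obstacle is to verify that $\Cal C_{\ket{\eta}}$ leaves all of $\H_x^\beta$ invariant, not merely the particular vector $\ket{0}$. The key observation is that Step~2 only touches the $\C^{\{0,1,2\}}$ register and the ancilla, so the $\H$ register is unchanged between the compute and uncompute of $c$; therefore on an arbitrary input $\ket{\phi,2}\in \H_x \otimes \ket{2}$ with $\ket{\phi} = \gamma'\ket{0}+\delta'\ket{\phi^\perp}$, the phase-kickback entanglement with $c$ is cleanly uncomputed by Step~3, and Step~4 then only acts on the $\ket{\cdot,2}$ component of the resulting state, keeping it inside $\H_x \otimes \ket{2}$ because $\Cal C_{\ket{w_0}}$ preserves $\H_x$. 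A similar trace on $\ket{\hat 0}$ and $\ket{\hat 1}$ shows that the image lies in $\H_x^\beta$. Summing up, the circuit uses one call to $\Cal C_{\ket{w_0}}$, two calls to $\Cal R_{\ket{0}}$, and $\O(1)$ additional gates and auxiliary qubits, as required.
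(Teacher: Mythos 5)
Your construction is correct and is essentially identical to the paper's proof: the same embedding $\ket{\hat 0}=\ket{0,0}$, $\ket{\hat 1}=\ket{0,1}$, $\H\cong\H\otimes\ket{2}$, the same phase-kickback compute/uncompute of the ``$\H$-register is $\ket{0}$'' flag using the two calls to $\Cal R_{\ket{0}}$, the same controlled $3\times 3$ rotation on the $\C^{\{0,1,2\}}$ register, and the same final $\Cal C_{\ket{w_0}}$ controlled on $\ket{2}$, with the same invariance argument. No gaps.
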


\begin{proof}
	Recall that we can encode $\ket{\hat 0}$ and $\ket{\hat 1}$ as $\ket{\hat{0}} = \ket{0,0}$, $\ket{\hat{1}} = \ket{0,1}$, and identify every $\ket{h} \in \H$ with $\ket{h} \otimes \ket{2}$. Our mapping $\Cal C_{\ket{\eta}}$ is supposed to map $\ket{0} \in \H \subseteq \H^{\beta}$ to $\ket{\eta} \in \H^{\beta}$, so it is supposed to implement $\ket{0,2} \mapsto \ket{\eta}$.

	First of all, we check if the first register is in state $\ket{0}$ by preparing an auxiliary qubit in the state $\ket{+}$, and then controlled on this auxiliary qubit calling the routine $\Cal R_{\ket{0}}$. If the first register was in the state $\ket{0}$, then we remain in $\ket{+}$, and if not we get a $\ket{-}$ in this qubit. Using a single Hadamard gate, we can now store in the auxiliary qubit whether the first register is in the $\ket{0}$-state.

	Next, controlled on the first register being in the $\ket{0}$-state, we apply the mapping $\ket{2} \mapsto \alpha_0\ket{0} + \alpha_1\ket{1} + \alpha_2\ket{2}$ to the second register. This can be implemented using $\O(1)$ gates, namely by implementing two controlled rotations, one in the plane $\Span\{\ket{1},\ket{2}\}$ and one in the plane $\Span\{\ket{0},\ket{1}\}$.

	Now, we uncompute the first part of our computation, i.e., we uncompute the auxiliary qubit that stored whether the first register was in state $\ket{0}$. This again takes one controlled call to $\Cal R_{\ket{0}}$ and $\O(1)$ extra gates. Observe that the total mapping has now only modified the second register when the first register was in the state $\ket{0}$. But as $\ket{0} \otimes \C^{\{0,1,2\}} = \Span\{\ket{0,2},\ket{\hat{0}},\ket{\hat{1}}\} \subseteq \H_x^{\beta}$, the mapping that we have implemented up to now leaves $\H_x^{\beta}$ invariant.

	Finally, controlled on the second register being in the state $\ket{2}$, we call the circuit $\Cal C_{\ket{w_0}}$. Checking whether the second register is in state $\ket{2}$ can be done in $\O(1)$ gates, and this takes one controlled call to $\Cal C_{\ket{w_0}}$. Moreover, as $\Cal C_{\ket{w_0}}$ leaves $\H_x$ invariant, we find that $\Cal C_{\ket{w_0}} \otimes \ket{2}\bra{2}$ also leaves $\H_x \otimes \ket{2} \subseteq \H_x^{\beta}$ invariant. This completes the proof.
\end{proof}

\begin{restatable}{corollary}{lemwb}
	\label{cor:TB-beta}
	Let $P$ be a span program, $\beta > 0$, and for all $x \in X$, let $\H_x$ be an implementing subspace. Suppose that we have access to the following controlled subroutines:
	\begin{enumerate}
		\setlength\itemsep{-.4em}
		\item A subroutine $\Cal C_{\ket{w_0}}$ that leaves $\H_x$ invariant and maps $\ket{0}$ to $\ket{w_0} / \norm{\ket{w_0}}$.
		\item A subroutine $\Cal R_{\ket{0}}$ that acts on $\H_x$ as $2\ket{0}\bra{0} - \I$.
	\end{enumerate}
	Let $\H_x^{\beta} = \H_x \oplus \Span\{\ket{\hat{0}},\ket{\hat{1}}\}$. Then, we can implement the mappings $\Cal C_{\ket{w_0^{\beta}}}$ and $\Cal C_{\ket{w_0} - \beta\ket{\hat{0}}}$ that leave $\H_x^{\beta}$ invariant while mapping the state $\ket{0}$ to $\ket{w_0^{\beta}}$ and $(\ket{w_0} - \beta\ket{\hat{0}}) / \norm{\ket{w_0} - \beta\ket{\hat{0}}}$, respectively, with one call to $\Cal C_{\ket{w_0}}$, two calls to $\Cal R_{\ket{0}}$ and $\O(1)$ extra gates and auxiliary qubits.
\end{restatable}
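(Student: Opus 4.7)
The plan is to obtain both mappings as direct instances of \cref{lem:state-comb}. By \cref{thm:scaling}, the minimal witness of $P^\beta$ has the explicit expansion
\[
\ket{w_0^\beta} = \frac{N}{\beta^2+N}\ket{\hat 0} + \frac{\beta}{\sqrt{\beta^2+N}}\ket{\hat 1} + \frac{\beta\sqrt{N}}{\beta^2+N}\,\frac{\ket{w_0}}{\norm{\ket{w_0}}},
\]
which is exactly of the form $\alpha_0\ket{\hat 0}+\alpha_1\ket{\hat 1}+\alpha_2\ket{w_0}/\norm{\ket{w_0}}$ that \cref{lem:state-comb} is designed to prepare; a short arithmetic check (using $N=\norm{\ket{w_0}}^2$) confirms $|\alpha_0|^2+|\alpha_1|^2+|\alpha_2|^2=1$. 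Similarly, after normalization,
\[
\frac{\ket{w_0}-\beta\ket{\hat 0}}{\norm{\ket{w_0}-\beta\ket{\hat 0}}} \;=\; -\frac{\beta}{\sqrt{N+\beta^2}}\ket{\hat 0} + 0\cdot\ket{\hat 1} + \frac{\sqrt{N}}{\sqrt{N+\beta^2}}\,\frac{\ket{w_0}}{\norm{\ket{w_0}}},
\]
and again the coefficients square-sum to $1$.

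Hence the plan is simply: first compute the two triples of coefficients above from $\beta$ and the constant $N$; then invoke \cref{lem:state-comb} once for each state, plugging in these coefficients. Each invocation produces a circuit that leaves $\H_x^\beta = \H_x\oplus\Span\{\ket{\hat 0},\ket{\hat 1}\}$ invariant and maps $\ket{0}$ to the desired target, at the cost of one call to $\Cal C_{\ket{w_0}}$, two calls to $\Cal R_{\ket{0}}$, and $\O(1)$ extra gates and auxiliary qubits, which is exactly the claimed cost.

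There is no real obstacle here; the only points requiring attention are that $N$ is a fixed scalar (independent of $x$) so the classical coefficients can be hard-coded into the $\O(1)$ single-qubit rotations appearing inside the proof of \cref{lem:state-comb}, and that in the second mapping we really want the normalized vector in $\H_x^\beta$, so the overall sign $-\beta/\sqrt{N+\beta^2}$ on $\ket{\hat 0}$ is harmless (any global phase on the coefficients can be absorbed into the controlled rotations). With these trivial remarks the corollary follows.
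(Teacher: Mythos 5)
Your proposal is correct and is essentially identical to the paper's own proof: both obtain the two circuits as instances of \cref{lem:state-comb} with exactly the coefficient triples $\bigl(\tfrac{N}{\beta^2+N},\tfrac{\beta}{\sqrt{\beta^2+N}},\tfrac{\beta\sqrt{N}}{\beta^2+N}\bigr)$ and $\bigl(-\tfrac{\beta}{\sqrt{\beta^2+N}},0,\tfrac{\sqrt{N}}{\sqrt{\beta^2+N}}\bigr)$ you wrote down. (Your closing remark about absorbing the sign is unnecessary, since \cref{lem:state-comb} already permits arbitrary complex $\alpha_i$ with $|\alpha_0|^2+|\alpha_1|^2+|\alpha_2|^2=1$.)
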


\begin{proof}
	From the form of $\ket{w_0^\beta}$, as presented in \cref{eq:wb}, we observe that the construction of $\Cal C_{\ket{w_0^{\beta}}}$ follows from \cref{lem:state-comb} by taking
	\[\alpha_0 = \frac{N}{\beta^2 + N}, \qquad \alpha_1 = \frac{\beta}{\sqrt{\beta^2 + N}}, \qquad \text{and} \qquad \alpha_2 = \frac{\beta\sqrt{N}}{\beta^2 + N}.\]
	Similarly, the construction of $\Cal C_{\ket{w_0} - \beta\ket{\hat{0}}}$ follows from \cref{lem:state-comb} by taking
	\[\alpha_0 = \frac{-\beta}{\sqrt{\beta^2 + N}}, \qquad \alpha_1 = 0, \qquad \text{and} \qquad \alpha_2 = \frac{\sqrt{N}}{\sqrt{\beta^2 + N}},\]
	completing the proof.
\end{proof}

The following lemma serves to construct the reflection around $\Ker(A^\beta)$ using the ability to reflect around $\Ker(A)$, $\ket{0}$ and generate $\ket{w_0}$.

\begin{lemma}\label{lem:TA-beta}
	Let $P = (\H,\V,A,\ket{\tau})$ be a span program, $\beta > 0$, and for all $x \in X$, let $\H_x$ be an implementing subspace. Suppose that we have access to the following subroutines and their controlled versions:
	\begin{enumerate}
		\setlength\itemsep{-.4em}
		\item A subroutine $\Cal R_{\Ker(A)}$ that acts on $\H_x$ as $2\Pi_{\Ker(A)} - \I$.
		\item A subroutine $\Cal C_{\ket{w_0}}$ that leaves $\H_x$ invariant and implements the mapping $\ket{0} \mapsto \ket{w_0} / \norm{\ket{w_0}}$.
		\item A subroutine $\Cal R_{\ket{0}}$ that acts on $\H_x$ as $2\ket{0}\bra{0} - \I$.
	\end{enumerate}
	Let $\H_x^{\beta} = \H_x \oplus \Span\{\ket{\hat{0}},\ket{\hat{1}}\}$. Then we can implement the circuit $\Cal R_{\Ker(A^{\beta})}$ that acts on $\H_x^{\beta}$ as $2\Pi_{\Ker(A^{\beta})} - \I$, using $\O(1)$ controlled calls to the subroutines, extra gates and auxiliary qubits.
\end{lemma}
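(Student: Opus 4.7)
The plan is to first give an explicit decomposition of $\Ker(A^\beta)$ as an orthogonal direct sum, then use a standard identity to reduce the reflection about this kernel to two separate reflections that can be implemented from the given subroutines. Writing a generic vector of $\H^\beta$ as $\ket{v}=\ket{h}+c_0\ket{\hat 0}+c_1\ket{\hat 1}$ with $\ket{h}\in\H$ and using the definition of $A^\beta$ from \cref{eq:VAt}, one finds that $A^\beta\ket{v}=\beta A\ket{h}+c_0\ket{\tau}+c_1\tfrac{\sqrt{\beta^2+N}}{\beta}\ket{\hat 1}$. Setting this to zero forces $c_1=0$ and $A\ket{h}=-(c_0/\beta)\ket{\tau}$, so $\ket{h}=-(c_0/\beta)\ket{w_0}+\ket{k}$ for some $\ket{k}\in\Ker(A)$. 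Hence $\Ker(A^\beta)=\Ker(A)\oplus\Span\{\ket{\phi}\}$, where $\ket{\phi}=(\beta\ket{\hat 0}-\ket{w_0})/\sqrt{\beta^2+N}$. Since $\ket{w_0}\in\Ker(A)^\perp$ and $\ket{\hat 0}\perp\H$, we have $\ket{\phi}\perp\Ker(A)$.

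Because the two summands are orthogonal, I would invoke the identity $2\Pi_{V_1\oplus V_2}-\I=-(2\Pi_{V_1}-\I)(2\Pi_{V_2}-\I)$ (a direct check on the $(+,+,-)$ vs.\ $(+,-,-)\cdot(-,+,-)$ eigenvalue patterns on $V_1\oplus V_2\oplus(V_1\oplus V_2)^\perp$), which reduces the task to implementing the reflections $2\Pi_{\Ker(A)}-\I$ and $2\ket{\phi}\bra{\phi}-\I$ on $\H_x^\beta$, followed by a global sign flip absorbed into a single $Z$ on the control qubit once the controlled version is built. The reflection about $\ket{\phi}$ is obtained by conjugation: by \cref{cor:TB-beta}, we can implement $\Cal C_{\ket{w_0}-\beta\ket{\hat 0}}$ (which sends $\ket{0}$ to $-\ket{\phi}$) using one call to $\Cal C_{\ket{w_0}}$, two calls to $\Cal R_{\ket{0}}$ and $\O(1)$ overhead, after which $\Cal C_{\ket{w_0}-\beta\ket{\hat 0}}\,(2\ket{0}\bra{0}-\I_{\H_x^\beta})\,\Cal C_{\ket{w_0}-\beta\ket{\hat 0}}^\dagger$ yields the desired reflection (the sign ambiguity in $\ket{\phi}$ is irrelevant).

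The main obstacle I anticipate is that both $\Cal R_{\Ker(A)}$ and $\Cal R_{\ket{0}}$ are only specified on $\H_x$, whereas I need them to act correctly as $2\Pi_{\Ker(A)}-\I$ and $2\ket{0}\bra{0}-\I$ on the larger space $\H_x^\beta=\H_x\oplus\Span\{\ket{\hat 0},\ket{\hat 1}\}$. Using the encoding from \cref{sec:normalization}, where $\H_x$ sits inside $\H\otimes\ket{2}$ and $\ket{\hat 0}=\ket{0,0}$, $\ket{\hat 1}=\ket{0,1}$, I would extend each given reflection by first computing a flag qubit indicating whether the second register is in the state $\ket{2}$, then applying the original subroutine controlled on the flag being set, and applying a global $-\I$ conditioned on the flag being unset (via a $Z$ on the flag qubit together with an overall phase), and finally uncomputing the flag. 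Since $\ket{\hat 0},\ket{\hat 1}$ are orthogonal to both $\Ker(A)$ and $\ket{0}$, the correct extended reflection negates them, which is exactly what the flag-conditioned $-\I$ branch does; the $\H_x$-branch handles everything else. Each extension therefore costs one (controlled) call to the corresponding subroutine and $\O(1)$ additional gates and auxiliary qubits, and composing the two extended reflections with the sign flip yields $\Cal R_{\Ker(A^\beta)}$ using $\O(1)$ calls to each of $\Cal R_{\Ker(A)}$, $\Cal C_{\ket{w_0}}$, and $\Cal R_{\ket{0}}$, together with $\O(1)$ extra gates and auxiliary qubits, as required.
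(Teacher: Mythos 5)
Your proposal is correct and follows essentially the same route as the paper: the same orthogonal decomposition $\Ker(A^\beta)=\Ker(A)\oplus\Span\{\ket{w_0}-\beta\ket{\hat0}\}$, the same product-of-reflections identity with the global sign absorbed into a $Z$ on the control qubit, the same use of \cref{cor:TB-beta} to conjugate $\Cal R_{\ket{0}}$ into the reflection about the one-dimensional summand, and the same extension of the given subroutines to $\H_x^\beta$ by conditioning on the second register being $\ket{2}$ and negating the $\ket{\hat0},\ket{\hat1}$ branch. Your explicit derivation of the kernel decomposition is slightly more detailed than the paper's, but the argument is the same.
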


\begin{proof}
	First, recall that $A\ket{w_0} = \ket{\tau}$, so from the definition of $A^{\beta}$ in \cref{eq:VAt} we find that $A^\beta \ket{w_0} = \beta \ket{\tau}$ and hence $\ket{w_0} - \beta\ket{\hat{0}} \in \ker A^\beta$.
	Since $\ket{w_0} - \beta\ket{\hat{0}}$ is orthogonal to $\Ker(A)$,
	\[\Ker(A^{\beta}) = \Ker(A) \oplus \Span\{\ket{w_0} - \beta\ket{\hat{0}}\}.\]
	Thus, we can implement the reflection through $\Ker(A^{\beta})$ up to a global phase as a product of the reflection through $\Ker(A)$ on the one hand, and $\Span\{\ket{w_0} - \beta\ket{\hat{0}}\}$ on the other, i.e.,
	\[2\Pi_{\Ker(A^{\beta})} - I = -\left(2\Pi_{\Ker(A)} - I\right)\left(2\Pi_{\Span\{\ket{w_0} - \beta\ket{\hat{0}}\}} - I\right).\]
	Thus, implementing the reflection through $\Ker(A^{\beta})$ comes down to implementing the reflection through $\Ker(A)$ and through $\ket{w_0} - \beta\ket{\hat{0}}$.

	Recall that we identify $\H$ with $\H \otimes \ket{2}$, $\ket{\hat{0}}$ with $\ket{0,0}$, and $\ket{\hat{1}}$ with $\ket{0,1}$. Thus, in order to implement the reflection around $\Ker(A)$ on $\H_x^{\beta}$, we apply $\Cal R_{\Ker(A)}$ on the first register, controlled on the second register being in the state $\ket{2}$, and we add a minus if the second register is not in the state $\ket{2}$. I.e., we apply the operation $\Cal R_{\Ker(A)} \otimes \ket{2}\bra{2} - I_{\H} \otimes \left(I - \ket{2}\bra{2}\right)$. As $\Cal R_{\Ker(A)}$ leaves $\H_x$ invariant, we easily check that this operation leaves $\H_x^{\beta}$ invariant. Moreover, we can recognize whether the second register is in state $\ket{2}$ using $\O(1)$ gates, so implementing this operation takes only $\O(1)$ gates and one call to $\Cal R_{\Ker(A)}$.

	Moreover, recall from \cref{cor:TB-beta} that we can implement the mapping $\Cal C = \Cal C_{\ket{w_0} - \beta\ket{\hat{0}}}$ with $\O(1)$ calls to the subroutines $\Cal C_{\ket{w_0}}$ and $\Cal R_{\ket{0}}$, extra gates, and auxiliary qubits. Moreover, observe that $\Cal R_{\ket{0}} \otimes \ket{2}\bra{2} - \I_{\H} \otimes \left(\I - \ket{2}\bra{2}\right)$ implements $2\ket{0}\bra{0} - \I$ on $\H_x^{\beta}$. As
	\[2\Pi_{\ket{w_0} - \beta\ket{\hat{0}}} - \I = C\left(2\ket{0}\bra{0} - \I\right)C^{\dagger},\]
	we can reflect through the state $\ket{w_0} - \beta\ket{\hat{0}}$ with $\O(1)$ calls to the subroutines, extra gates and auxiliary qubits.

	Thus, implementing the operations $2\Pi_{\ket{w_0} - \beta\ket{\hat{0}}} - \I$ and $2\Pi_{\Ker(A)} - \I$ consecutively allows for implementing the reflection around $\Ker(A^{\beta})$. As both individual reflections leave $\H_x^{\beta}$ invariant, so does their product, and the total number of calls to the subroutines, extra gates and auxiliary qubits are all $\O(1)$. Note that for the controlled implementation of the reflection through $\Ker(A^{\beta})$, we need to add an extra $Z$-gate to the control qubit to account for the global phase we neglected here. This completes the proof.
\end{proof}

Now that we know how to implement the reflection around $\Ker(A^{\beta})$, we proceed with analyzing the cost of reflecting around $\H^{\beta}(x)$. this is the objective of the following lemma.

\begin{lemma}\label{lem:TC-beta}
	Let $P$ be a span program, $\beta > 0$, and for all $x \in X$, let $\H_x$ be an implementing subspace. Suppose that we have controlled access to a subroutine $\Cal R_{\H(x)}$ that on $\H_x$ acts as $2\Pi_{\H(x)} - \I$. Then we can implement a circuit $\Cal R_{\H^{\beta}(x)}$ that on $\H_x^{\beta} = \H_x \oplus \Span\{\ket{\hat{0}}, \ket{\hat{1}}\}$ acts as $2\Pi_{\H^{\beta}(x)} - \I$, with one controlled call to $\Cal R_{\H(x)}$ and $\O(1)$ extra qubits and gates.
\end{lemma}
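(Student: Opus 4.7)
The plan is to exploit the fact that the modifications from $\H$ to $\H^{\beta}$ only touch the $\Htrue/\Hfalse$ components, so the reflection around $\H^{\beta}(x)$ differs from the one around $\H(x)$ by something that only acts on the two auxiliary dimensions $\Span\{\ket{\hat{0}},\ket{\hat{1}}\}$. Concretely, from the definition of $P^{\beta}$ in \cref{eq:VAt} we have $\Htrue^{\beta} = \Htrue \oplus \Span\{\ket{\hat{1}}\}$, while $\ket{\hat{0}} \in \Hfalse^{\beta}$ and the $\H_{i,b}$'s are unchanged. Therefore
\[\H^{\beta}(x) = \H(x) \oplus \Span\{\ket{\hat{1}}\}, \qquad \Pi_{\H^{\beta}(x)} = \Pi_{\H(x)} + \ket{\hat{1}}\bra{\hat{1}},\]
and $2\Pi_{\H^{\beta}(x)} - \I$ should act as $+1$ on $\H(x)$ and on $\ket{\hat{1}}$, and as $-1$ on the orthogonal complement in $\H_x^{\beta}$, which in particular contains $\ket{\hat{0}}$.

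Next, I would use the encoding fixed right after \cref{thm:scaling}: the state space is $\H \otimes \C^{\{0,1,2\}}$, with $\H$ identified with $\H \otimes \ket{2}$, $\ket{\hat{0}} = \ket{0,0}$ and $\ket{\hat{1}} = \ket{0,1}$. Under this identification, $\H_x^{\beta}$ is spanned by vectors of the form $\ket{h}\ket{2}$ with $\ket{h} \in \H_x$ together with $\ket{\hat{0}}$ and $\ket{\hat{1}}$. I would then construct $\Cal R_{\H^{\beta}(x)}$ as the product of two commuting operations:
\begin{enumerate}
\setlength\itemsep{-.3em}
    \item A call to $\Cal R_{\H(x)}$ on the first register, controlled on the second register being in state $\ket{2}$. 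Since checking whether a qutrit is in a fixed state costs $\O(1)$ gates (and at most one auxiliary qubit), this needs one controlled call to $\Cal R_{\H(x)}$ and $\O(1)$ extra gates and qubits.
    \item A single-qutrit diagonal phase $D$ on the second register that sends $\ket{0} \mapsto -\ket{0}$ and fixes $\ket{1}$ and $\ket{2}$; this is $\O(1)$ gates.
\end{enumerate}

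Then I would verify correctness by tracking each type of basis vector of $\H_x^{\beta}$: on $\ket{h}\ket{2}$ with $\ket{h} \in \H_x$, $D$ acts trivially and the controlled $\Cal R_{\H(x)}$ produces $(2\Pi_{\H(x)}-\I)\ket{h} \otimes \ket{2}$, which is the desired action since $\H(x) \subseteq \H^{\beta}(x)$ and $\H_x \cap \H(x)^{\perp} \subseteq \H^{\beta}(x)^{\perp}$; on $\ket{\hat{1}} = \ket{0,1}$, the controlled reflection is trivial and $D$ is also trivial, yielding $+\ket{\hat{1}}$; on $\ket{\hat{0}} = \ket{0,0}$, the controlled reflection is trivial and $D$ yields $-\ket{\hat{0}}$. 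This matches $2\Pi_{\H^{\beta}(x)} - \I$ on every basis element of $\H_x^{\beta}$, and the same computation shows $\H_x^{\beta}$ is invariant (both sub-operations map each listed basis vector back into $\H_x^{\beta}$, using that $\Cal R_{\H(x)}$ preserves $\H_x$ by hypothesis).

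There is no real obstacle here; the only thing to be careful about is the bookkeeping of the three-valued second register and the fact that $\Cal R_{\H(x)}$ is only guaranteed to behave as $2\Pi_{\H(x)} - \I$ on $\H_x$, which is exactly why we condition its application on being in the $\H \otimes \ket{2}$ sector. The controlled version of $\Cal R_{\H^{\beta}(x)}$ is obtained for free by replacing the controlled calls with doubly-controlled ones, adding only $\O(1)$ overhead.
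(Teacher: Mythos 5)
Your proof is correct and follows essentially the same route as the paper's: both use $\H^{\beta}(x)=\H(x)\oplus\Span\{\ket{\hat 1}\}$ together with the qutrit encoding, apply $\Cal R_{\H(x)}$ conditioned on the $\ket{2}$ sector, and fix up the $\ket{\hat0},\ket{\hat1}$ directions with $\O(1)$ phase gates. The only (cosmetic) difference is that the paper composes two genuine reflections and therefore picks up a global phase $-1$ that must be cancelled by a $Z$-gate in the controlled version, whereas your choice of a controlled call plus a phase only on $\ket{\hat 0}$ realizes $2\Pi_{\H^{\beta}(x)}-\I$ exactly, with no global phase to correct.
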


\begin{proof}
	From \cref{thm:scaling} we find that $\H^{\beta}(x) = \H(x) \oplus \Span\{\ket{\hat{1}}\}$. Since $\ket{\hat{1}}$ is orthogonal to $\H(x)$, the reflection through $\H^{\beta}(x)$ up to a global phase is merely the product of the reflections through $\H(x)$ and $\Span\{\ket{\hat{1}}\}$. Furthermore, the controlled implementation of $\Cal R_{\H^{\beta}(x)}$ has to have another $Z$-gate on the control qubit to account for the global phase that we neglect here.

	Recall that we identify $\H$ with $\H \otimes \ket{2}$, $\ket{\hat{0}}$ with $\ket{0,0}$ and $\ket{\hat{1}}$ with $\ket{0,1}$. Thus, we can implement the reflection through $\Span\{\ket{\hat{1}}\}$ in time $\O(1)$, because we can simply implement the operation $I_{\H} \otimes (2\ket{1}\bra{1} - \I)$ in $\O(1)$ gates.

	Similarly, we can apply the reflection through $\H(x)$ on $\H_x^{\beta}$ with one call to $\Cal R_{\H(x)}$, by implementing the operation $\Cal R_{\H(x)} \otimes \ket{2}\bra{2} - \I_{\H} \otimes \left(I - \ket{2}\bra{2}\right)$. This can be done with $\O(1)$ extra gates and auxiliary qubits, and one controlled call to $\Cal R_{\H(x)}$, completing the proof.
\end{proof}

Now we are ready to give the proof of the main theorem of this section.

\begin{proof}[Proof of \cref{thm:span-program-time}]
	As $\Pi_{\Ker(A)}$ and $\Pi_{\H_x}$ commute on $\H_x$, we have that $\Cal R_{\Ker(A)}$ leaves $\H_x$ invariant, as
	\[\left(2\Pi_{\Ker(A)} - \I\right)\Pi_{\H_x} = \Pi_{\H_x}\left(2\Pi_{\Ker(A)} - \I\right),\]
	and hence the image of $2\Pi_{\Ker(A)}-\I$ on $\H_x$ is also contained in $\H_x$. The same holds for $\Cal R_{\H(x)}$ and $\Cal R_{\ket{0}}$.

	By \cref{cor:TB-beta}, we can generate the initial state $\ket{w_0^{\beta}}$ with $\O(1)$ extra gates, auxiliary qubits and calls to $\Cal C_{\ket{w_0}}$ and $\Cal R_{\ket{0}}$, so it remains to implement the phase estimation part of the algorithm compiled from the span program. The span program unitary $U$, as defined in \cref{eq:U}, can be rewritten as
	\begin{equation}
		U = -\left(2\Pi_{\H^{\beta}(x)} - \I\right)\left(2\Pi_{\Ker(A^{\beta})} - \I\right)\left(2\ket{w_0^{\beta}}\bra{w_0^{\beta}} - \I\right).
		\label{eq:U2}
	\end{equation}
	Observe that all factors above leave $\H_x^{\beta}$ invariant, and hence $U$ as a whole leaves $\H_x^{\beta}$ invariant. As the initial state in the phase estimation algorithm is also an element of $\H_x^{\beta}$, the eigenanalysis of $U$ is unaltered.

	By \cref{lem:TC-beta,lem:TA-beta}, we can implement the first two factors in the above expression, respectively, with $\O(1)$ calls to the subroutines, extra gates and auxiliary qubits. Moreover, recall that
	\[2\ket{w_0^{\beta}}\bra{w_0^{\beta}} - \I = \Cal C_{\ket{w_0^{\beta}}}\left(2\ket{0}\bra{0} - \I\right)\Cal C_{\ket{w_0^{\beta}}}^{\dagger},\]
	and hence by virtue of \cref{cor:TB-beta} we can also implement the last term with $\O(1)$ calls to the subroutines, extra gates and auxiliary qubits. Thus, we conclude that we can implement $U$ with essentially the same cost, and remark that we can thus also implement a controlled-$U$ operation, where we have to add another $Z$-gate to the control qubit to account for the global phase in \cref{eq:U2}.

	Finally, recall that the total number of calls to controlled-$U$, and hence to the subroutines, in the algorithm compiled from the span program satisfies
	\[\O\left(\frac{\sqrt{W_+(P)\widetilde{W}_-(P)}}{(1-2\lambda)^{3/2}}\log\frac{1}{1-2\lambda}\right).\]
	Moreover, as the algorithm compiled from the span program implements phase estimation up to precision $\Theta$ with error probability at most $\varepsilon$, and amplitude estimation up to precision $\Theta'$, the number of extra gates and auxiliary qubits introduced by these algorithms satisfy
	\[\O\left(\mathrm{polylog}\left(\frac{1}{\Theta'}, \frac{1}{\Theta}\log\frac{1}{\varepsilon}\right)\right) = \O\left(\mathrm{polylog}\left(\frac{\sqrt{W_+(P)\widetilde{W}_-(P)}}{(1-2\lambda)^{3/2}}\right)\right) = \O\left(\mathrm{polylog}\left(C(P), \frac{1}{1-2\lambda}\right)\right).\]
	Finally, if we only know upper bounds to $W_+(P)$, $W_-(P)$ and $\lambda$, we are merely running the phase estimation and amplitude estimation routines with a better accuracy than strictly necessary. This completes the proof.
\end{proof}

\section{From algorithms to span programs}
\label{sec:main_thm}

Let $\A$ be a clean quantum algorithm that evaluates a function $f : X \subseteq \{0,1\}^n \to \{0,1\}$ with error probability $0 \leq \eps < 1/2$, as in \cref{def:QA}. Based on this algorithm, one can construct a span program that approximates the same function and whose complexity is equal to the query complexity of $\A$, up to a multiplicative constant. This construction was first introduced by Reichardt~\cite{Rei09} in the case where the algorithm has one-sided error, and extended to the case of bounded (two-sided) error in~\cite{Jef20}.

Our contribution is to extend this construction so that it not only preserves the query complexity of $\A$ but also the time complexity. Starting with a quantum algorithm $\A$ whose query complexity is $S$ and time complexity is $T$, we construct a corresponding span program $P_\A$ that accounts for individual gates of $\A$. If the span program is compiled back to a quantum algorithm, the resulting algorithm still solves the same problem, its query complexity is $\widetilde{\O}(S)$ and its time complexity remains $\widetilde{\O}(T)$. This requires modifications to the span program construction, but more importantly, an additional highly non-trivial analysis of the time complexity of the span program implementation.

\subsection{The span program of an algorithm}
\label{sec:span_program}

Recall from \cref{sec:model} that we can assume without loss of generality that there are no two consecutive queries in the algorithm $\A$, and that the first and last unitaries are not queries. We label the time steps where the algorithm queries the inputs by
\begin{equation}
	\S = \{q_1, \dots, q_S\} \subseteq [T],
	\label{eq:S}
\end{equation}
where $T$ is the total time complexity and $S$ denotes the total number of queries. For convenience, we also define $q_0 = 0$, $q_{S+1} = T+1$. We denote the $\ell$-th block of contiguous non-query time steps by $\B_{\ell} \subseteq [T]$, with $\ell \in [S+1]$. See \cref{fig:timesteps} for an overview of this notation.

\begin{figure}[ht!]
\centering
\begin{tikzpicture}[scale = .7]
	\footnotesize
	\foreach \x in {1,...,13} {
		\draw ({\x-.5},-.5) -- ({\x+.5},-.5) -- ({\x+.5},.5) -- ({\x-.5},.5) -- cycle;
	}
	\node[anchor=east] at (-1,1) {Time step};
	\node[anchor=east] at (-1,0) {Type};
	\node[anchor=east] at (-1,-1) {Label};
	\foreach \x in {0,...,7} {
		\node at (\x,1) {$\x$};
	}
	\node at (9,1) {$\cdots$};
	\node at (9,-1) {$\cdots$};
	\foreach \x in {1,2} {
		\node at ({13-\x},1) {$T\!\!-\!\x$};
	}
	\node at (14,1) {$T\!\!+\!1$};
	\node at (13,1) {$T$};
	\foreach \x in {3,7,11} {
		\draw[pattern=north west lines] ({\x-.5},-.5) -- ({\x+.5},-.5) -- ({\x+.5},.5) -- ({\x-.5},.5) -- cycle;
	}
	\foreach \x in {0,14} {
		\draw[dashed] ({\x-.5},-.5) -- ({\x+.5},-.5) -- ({\x+.5},.5) -- ({\x-.5},.5) -- cycle;
	}
	\node at (0,-1) {$q_0$};
	\node at (3,-1) {$q_1$};
	\node at (7,-1) {$q_2$};
	\node at (11,-1) {$q_S$};
	\node at (14,-1) {$q_{S\!+\!1}$};
	\draw[decoration={brace,mirror,raise=3pt},decorate] (.5,-.5) -- node[below=6pt] {$\B_1$} (2.5,-.5);
	\draw[decoration={brace,mirror,raise=3pt},decorate] (3.5,-.5) -- node[below=6pt] {$\B_2$} (6.5,-.5);
	\draw[decoration={brace,mirror,raise=3pt},decorate] (11.5,-.5) -- node[below=6pt] {$\B_{S\!+\!1}$} (13.5,-.5);
\end{tikzpicture}
\caption{Synopsis of the notation. The cells denote time steps of the algorithm $\A$ where time progresses to the right. They are indexed from $1$ to $T$. The hatched cells denote time steps in which a query to the input $x$ is performed. In all other time steps $t$ a unitary $U_t$ independent of $x$ is applied.}
\label{fig:timesteps}
\end{figure}
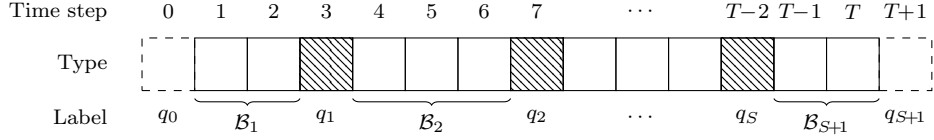

Recall that $\mathcal{W}$ is a finite set that labels the basis of the workspace of $\A$.
We define the following spaces:
\begin{align}
	\forall i \in [n], b \in \{0,1\}, \qquad \H_{i,b} &= \Span\{\ket{t,b,i,j} : t+1 \in \mathcal{S}, j \in \mathcal{W}\}, \nonumber\\
	\Htrue &= \Span\{\ket{t,0,i,j} : t+1 \in [T+1] \setminus \mathcal{S}, i \in [n], j \in \mathcal{W}\}, \label{eq:alg-span}\\
	\Hfalse &= \{0\}. \nonumber
\end{align}
As usual, the spaces $\H(x)$ and $\H$ are defined from these as:
\begin{equation}
	\forall x \in \{0,1\}^n, \qquad
	\H(x) = \biggl(\bigoplus_{i=1}^n \H_{i,x_i}\biggr) \oplus \Htrue \qquad \text{and} \qquad
	\H = \biggl(\bigoplus_{\substack{i \in [n]\\b \in \{0,1\}}} \H_{i,b}\biggr) \oplus \Htrue \oplus \Hfalse.
  \label{eq:H}
\end{equation}
For better intuition, we provide a graphical depiction of $\H$, $\Htrue$, $\H(x)$ and $\H_{i,b}$ in \cref{fig:Hilbert_spaces}.

\begin{figure}[ht!]\centering
\begin{tikzpicture}[thick, > = latex,
	 qbox/.style = {pattern = north east lines},
	 xbox/.style = {fill = gray},
	 circ/.style = {circle, draw, fill = white},
	 arrw/.style = {rounded corners, ->, > = to}
	]
	\newcommand{\gatebox}[1]{
	  \draw (#1) rectangle +(1,1);
	}
	\newcommand{\querybox}[2]{
	  \draw[qbox] (#1) rectangle +(1,1);
		\path (#1)+(0.5,0.5) node[fill = white] {$#2$};
	}
	\foreach \x in {0,2,3,5,7,8} {
	  \gatebox{\x,0}
		\draw (\x+.5,-.4) -- (\x+.5,-.6) node[below] {$\x$};
	}
	\foreach \x in {1,4,6} {
	  \querybox{\x,0}{0}
		\querybox{\x,1}{1}
		\draw (\x+.5,-.4) -- (\x+.5,-.6) node[below] {$\x$};
	}
	\foreach \x/\i in {2/1,5/2,7/3} {
	  \node at (\x+.5,-1.5) {$q_\i$};
	}
	\node at (8+.5,-1.5) {$T$};
	\draw[->] (-.75,-.5) -- (9.5,-.5) node[right] {$t$};
	\draw[->] (-.5,-.75) -- (-.5,2.5) node[above] {$b$};
	\draw (-.4,0.5) -- (-.6,0.5) node[left] {$0$};
	\draw (-.4,1.5) -- (-.6,1.5) node[left] {$1$};
  \def\r{0.2}
	\def\B{2.9}
	\node at (4.5,\B+0.5) {$\H$};
	\path[draw, radius = \r]
	  (0.0,\B-\r) arc [start angle = 180, end angle =  90] --
		(4.5-\r,\B) arc [start angle = -90, end angle =   0]
		            arc [start angle = 180, end angle = 270] --
    (9.0-\r,\B) arc [start angle =  90, end angle =   0];
	\node at (2.5,0.5) {$\Htrue$};
	\begin{scope}[xshift = 12.7cm, yshift = -1.3cm]
		\def\w{0.3}
		\def\h{0.2}
		\def\L{1.8}
		\querybox{-2.2,2*\L}{1}
		\querybox{-2.2,1*\L}{0}
		\gatebox {-2.2,0*\L}
		\fill[xbox] (0.05,0.1+2*\L+1*\h) rectangle +(3*\w,2*\h);
		\fill[xbox] (0.05,0.1+1*\L+3*\h) rectangle +(3*\w,1*\h);
		\fill[xbox] (0.05,0.1+1*\L+0*\h) rectangle +(3*\w,1*\h);
		\fill[xbox] (0.05,0.1+0*\L+0*\h) rectangle +(3*\w,4*\h);
		\foreach \k/\s in {1,2,3} {
			\begin{scope}[yshift = (\k-1)*\L cm]
				\foreach \i in {1,...,4} {
					\draw (0.05,-0.1+\h*\i) rectangle +(3*\w,\h);
					\draw (-0.1, 0.0+\h*\i) -- +(-0.1,0) node[left] {\tiny$\i$};
				}
				\foreach \j in {1,...,3} {
					\draw (-0.25+\w*\j,0.1) rectangle +(\w,4*\h);
					\draw (-0.10+\w*\j,-0.05) -- +(0,-0.1) node[below] {\tiny$\j$};
				}
				\draw[->] (-.30,-.10) -- (1.25,-.1) node[right] {$j$};
				\draw[->] (-.15,-.25) -- (-.15,1.2) node[left ] {$i$};
				\node at (-0.8,0.5) {$=$};
			\end{scope}
		}
		\node[inner sep = 2pt] (Hx) at (2.3,2.3) {$\H(x)$};
		\draw[arrw] (Hx) -- ++(0,1.8) -- ++(-1.2,0);
		\draw[arrw] (Hx) -- +(-1.2, 0.3);
		\draw[arrw] (Hx) -- +(-1.2,-0.3);
		\draw[arrw] (Hx) -- ++(0,-1.8) -- ++(-1.2,0);
	\end{scope}
\end{tikzpicture}
\caption{Graphical depiction of the relevant spaces when $T = 8$, $\mathcal{S} = \{2,5,7\}$, $n = 4$, $|\W| = 3$ and $x = 0110$. The total space $\H$ is a direct sum of all blocks on the left, where the block at position $(t,b) \in [T]_0 \times \{0,1\}$ denotes the subspace spanned by all computational basis states of the form $\ket{t,b,\cdot,\cdot}$. Every block is of one of three types, white, 0 or 1, shown on the right. The subspace $\Htrue$ is the direct sum of all white blocks. Each block further decomposes as a direct sum over computational basis states $\ket{i,j}$, $i \in [n]$, $j \in \W$. The gray cells of all blocks together span the space $\H(x)$.
Finally, for a given $i \in [n]$, the subspaces $\H_{i,0}$ and $\H_{i,1}$ consist of the $i$-th row within all 0 and 1 blocks, respectively.}
\label{fig:Hilbert_spaces}
\end{figure}

Let $[T]_0 := \{0,\dotsc,T\}$. We define the target space $\V$ and the target vector $\ket{\tau} \in \V$ as follows:
\begin{align}
	\V &= \Span\{\ket{t,i,j} : t \in [T]_0, i \in [n], j \in \mathcal{W}\}, &
	\ket{\tau} &= \ket{0}\ket{\Psi_0} - \ket{T}\ket{\Psi_T},
	\label{eq:tau}
\end{align}
where $\ket{\Psi_0}$ is the initial state of $\A$ (see \cref{eq:Psi}) and $\ket{\Psi_T}$ is the final accepting state (see \cref{def:QA}).

Recall that $S$ denotes the total number of queries and $\eps$ is the error probability of $\A$. Let
\begin{equation}
	a = \sqrt{\frac{\eps}{2S+1}} \qquad \text{and} \qquad M = \max_{\ell \in [S+1]} \sqrt{|\B_\ell|},
	\label{eq:aM}
\end{equation}
where $\B_\ell \subseteq [T]$ is the $\ell$-th contiguous block of non-query gates (see \cref{fig:timesteps}). By \cref{def:QA} and \cref{lem:QA}, we can assume that $M\leq \sqrt{3T/S}$. For all computational basis vectors $\ket{t,b,i,j}$ in $\H$, we define the action of the span program operator $A \in \L(\H,\V)$ as follows:
\begin{equation}
	A\ket{t,b,i,j}
=	\begin{cases}
		a\ket{T,i,j} & \text{if } t = T, \\
		M(\ket{t,i,j} - \ket{t+1}U_{t+1}\ket{i,j}) & \text{if } \exists \ell \in [S+1] : t + 1 \in \B_{\ell}, \\
		\ket{t,i,j} - (-1)^b\ket{t+1,i,j} & \text{if } \exists \ell \in [S] : t + 1 = q_{\ell}.
	\end{cases}
	\label{eq:A}
\end{equation}
The weights $a$ and $M$ are the main difference between our construction and that of \cite{Jef20}, and will enable the time-efficient implementation of the span program described in \cref{sec:implementation}.
The unitary $U_{t+1}$ is the $(t+1)$-th unitary of algorithm $\A$ as defined in \cref{sec:quant_algorithm}.

\begin{definition}[Span program of an algorithm]\label{def:PA}
	The \emph{span program of a quantum algorithm} $\A$ is $P_\A = (\H,\V,A,\ket{\tau})$, where $\H$ is defined in \cref{eq:alg-span,eq:H}, $\V$ and $\ket{\tau}$ in \cref{eq:tau}, and $A$ in \cref{eq:A}.
\end{definition}

We spend the remainder of this section proving various properties of span programs of this type. We start by analyzing the positive and negative witness sizes $W_+(P)$ and $W_-(P)$, and the approximation factor $\lambda$.

\begin{theorem}\label{thm:alg-span-complexity}
Let $\A$ be a clean  quantum algorithm for $f$ with error probability $0\leq \eps <1/5$, making $S$ queries, and let $P_{\A}$ the span program for $\A$ from \cref{def:PA}. Then $P_{\A}$ positively $5\eps$-approximates $f$ with complexities $W_+(P_{\A})= \O(S)$ and $\widetilde{W}_-(P_{\A})= \O(S)$.
\end{theorem}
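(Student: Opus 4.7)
The plan is to exhibit an explicit positive witness of squared norm $O(S)$ for every $x \in f^{-1}(1)$ and an approximate negative witness of squared norm $O(S)$ with error at most $5\varepsilon / W_+(P_{\A})$ for every $x \in f^{-1}(0)$. Both witnesses will be ``history'' vectors encoding the trajectory $\ket{\Psi_t(x)}$ of the algorithm $\A$ on input $x$.

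For the positive witness on a $1$-input $x$, my candidate is
\[
\ket{w} = \sum_{\substack{t<T\\ t+1 \notin \S}} \tfrac{1}{M}\ket{t,0}\ket{\Psi_t(x)} + \sum_{t+1 \in \S}\sum_{i,j}[\Psi_t(x)]_{i,j}\ket{t,x_i,i,j} + \tfrac{1}{a}\ket{T,0}\bigl(\ket{\Psi_T(x)} - \ket{\Psi_T}\bigr),
\]
which lies in $\H(x)$ by construction (the non-query and final summands sit in $\Htrue$, the query summands in $\bigoplus_i \H_{i,x_i}$). The three cases of $A$ in \cref{eq:A} are designed precisely so that each summand maps to the telescoping increment $\ket{t}\ket{\Psi_t(x)} - \ket{t+1}\ket{\Psi_{t+1}(x)}$: at non-query steps via $U_{t+1}\ket{\Psi_t(x)} = \ket{\Psi_{t+1}(x)}$, at query steps via $\O_x\ket{\Psi_t(x)} = \ket{\Psi_{t+1}(x)}$, while the final correction produces $\ket{T}(\ket{\Psi_T(x)} - \ket{\Psi_T})$. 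Summing, $A\ket{w} = \ket{\tau}$. The squared norm splits into three pieces: the non-query contribution is $(T-S)/M^2 \leq S+1$ by $T - S = \sum_\ell |\B_\ell| \leq (S+1) M^2$, the query contribution is $\sum_{t+1 \in \S} \|\ket{\Psi_t(x)}\|^2 = S$, and invoking the consistency condition of \cref{def:QA}, $\|\ket{\Psi_T(x)} - \ket{\Psi_T}\|^2 = 2(1 - p_1(x)) \leq 2\varepsilon$, so the final piece is at most $2\varepsilon / a^2 = 2(2S+1)$. Hence $\|\ket{w}\|^2 \leq 6S + 3 = O(S)$.

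For the approximate negative witness on a $0$-input $x$, I would set $\bra{\omega_T} = c\bra{\Psi_T}$ with $c = -1/(1 - p_1(x))$ and propagate backwards via the algorithm's own unitaries: $\bra{\omega_t} = \bra{\omega_{t+1}} U_{t+1}$ at non-query steps and $\bra{\omega_t} = \bra{\omega_{t+1}} \O_x$ at query steps. This choice kills $\bra{\omega} A \pi$ for every basis vector $\pi \in \H(x)$ outside the final block, so the error localizes to the $t = T$ block as $\|\bra{\omega} A \Pi_{\H(x)}\|^2 = a^2 c^2$. The consistency condition gives $\bra{\omega_0}\ket{\Psi_0} = c \, p_1(x)$ and $\bra{\omega_T}\ket{\Psi_T} = c$, so $\bra{\omega}\ket{\tau} = c(p_1(x) - 1) = 1$. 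Because the backward propagation is unitary, $\|\omega_t\|^2 = c^2$ for every $t$, whence $\|\bra{\omega} A\|^2 = a^2 c^2 + 4 S c^2 = O(S)$, the $4 S c^2$ coming from the vectors $\ket{t, 1-x_i, i, j} \notin \H(x)$ at the $S$ query steps (each such vector contributes $|2(-1)^{x_i}\bra{\omega_{t+1}}\ket{i,j}|^2$, summing to $4 c^2$ per step). Taking $\lambda = 5\varepsilon$ and using $a^2 = \varepsilon / (2S+1)$, $c^2 \leq 1/(1-\varepsilon)^2$, and $W_+(P_{\A}) \leq 3(2S+1)$, a short calculation shows $a^2 c^2 \leq \lambda / W_+(P_{\A})$ whenever $\varepsilon < 1/5$.

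The crux of the argument is the calibration of the two weights introduced in \cref{eq:aM}: $M = \max_\ell \sqrt{|\B_\ell|}$ is exactly what is needed to collapse the non-query contribution to the positive witness size from $T - S$ down to $O(S)$, and $a = \sqrt{\varepsilon / (2S+1)}$ is chosen to simultaneously control the final correction in $\ket{w}$ (at size $O(S)$) and the error $a^2 c^2$ of the approximate negative witness (at the right scale to fit under $\lambda / W_+(P_{\A})$ for $\lambda = 5\varepsilon$). Once these weights are in place, both witness constructions and their analyses follow by essentially mechanical telescoping.
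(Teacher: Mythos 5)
Your proposal is correct and follows essentially the same route as the paper: the same history-state positive witness with weights $1/M$, $L_x$, and $1/a$, the same telescoping cancellation, and the same localization of the negative-witness error to the $t=T$ block, yielding the identical bounds $W_+\leq 3(2S+1)$, $\widetilde W_-=\O(S)$, and $\lambda=5\eps$ via $25/16<5/3$. The only (harmless) deviation is that your approximate negative witness propagates $\bra{\Psi_T}$ backwards through the circuit, i.e.\ it is built from the states $\bra{\Psi_T}U_T\cdots U_{t+1}$, whereas the paper uses the forward states $\bra{\Psi_t(x)}$; both choices satisfy the same local cancellation identities and give the same error and norm up to constants.
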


\Cref{thm:alg-span-complexity} follows directly from \cref{lem:w+,lem:w-} below. The proofs are similar to those of \cite{Jef20}, which are themselves similar to \cite{Rei09}, with the difference that the operator $A$ of our span program now has slightly modified weights, see \cref{eq:A}.

\begin{restatable}{lemma}{lemwpos}
	\label{lem:w+}
	Let $\A$ be a clean quantum algorithm with query complexity $S$, time complexity $T$ and error probability $0 \leq \eps < 1/2$. Let $P_{\A}$ be the span program for $\A$ from \cref{def:PA}. Then,
	\[W_+(P_{\A}) \leq 3(2S + 1) = \O(S).\]
\end{restatable}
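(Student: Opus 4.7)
The plan is to construct an explicit positive witness $\ket{w}\in\H(x)$ with $A\ket{w}=\ket{\tau}$ for every $x\in f^{-1}(1)$, having squared norm at most $3(2S+1)$. The construction follows the trajectory $\{\ket{\Psi_t(x)}\}_{t=0}^T$ of $\A$ on $x$, exploiting the fact that the three pieces of the definition of $A$ in \cref{eq:A} are tuned precisely so that natural basis vectors contribute telescoping increments of the form $\ket{t,\Psi_t(x)}-\ket{t+1,\Psi_{t+1}(x)}$. Concretely, for each $t\in\{0,\dots,T-1\}$ I would set: if $t+1\in\B_\ell$ is a non-query step, $\ket{w_t}:=\frac{1}{M}\ket{t,0,\Psi_t(x)}\in\Htrue$, so that $A\ket{w_t}=\ket{t,\Psi_t(x)}-\ket{t+1,U_{t+1}\Psi_t(x)}=\ket{t,\Psi_t(x)}-\ket{t+1,\Psi_{t+1}(x)}$; if instead $t+1=q_\ell$ is a query step, $\ket{w_t}:=\sum_{i,j}\braket{i,j}{\Psi_t(x)}\ket{t,x_i,i,j}\in\bigoplus_i\H_{i,x_i}$, which yields the same increment under $A$ using that $U_{t+1}=\O_x$ acts as $(-1)^{x_i}$ on $\ket{i,j}$.

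Summing these pieces telescopes to $A\sum_{t=0}^{T-1}\ket{w_t}=\ket{0,\Psi_0}-\ket{T,\Psi_T(x)}$, which differs from $\ket{\tau}=\ket{0,\Psi_0}-\ket{T,\Psi_T}$ only in that the algorithm's actual output $\ket{\Psi_T(x)}$ appears in place of the designated final accepting state $\ket{\Psi_T}$. I would correct this discrepancy using the $t=T$ case $A\ket{T,0,i,j}=a\ket{T,i,j}$: adding the boundary term $\frac{1}{a}\ket{T,0}\otimes(\ket{\Psi_T(x)}-\ket{\Psi_T})\in\Htrue$ exactly cancels it, giving a valid positive witness $\ket{w}\in\H(x)$ with $A\ket{w}=\ket{\tau}$.

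It remains to bound $\|\ket{w}\|^2$, which splits into three pieces. The $S$ query-step summands each have squared norm $\|\ket{\Psi_t(x)}\|^2=1$, contributing $S$ in total. Each of the non-query summands has squared norm $1/M^2$; since $M^2=\max_\ell|\B_\ell|$ and the blocks $\B_\ell$ partition the non-query steps across at most $S+1$ indices, this piece is bounded by $\sum_{\ell=1}^{S+1}|\B_\ell|/M^2\leq S+1$. The main obstacle -- and the reason cleanness is needed -- is the boundary correction, which contributes $\frac{1}{a^2}\|\ket{\Psi_T(x)}-\ket{\Psi_T}\|^2$. Here I would invoke the consistency condition from \cref{def:QA}: for $x\in f^{-1}(1)$, $\braket{\Psi_T}{\Psi_T(x)}=p_1(x)\geq 1-\eps$, so $\|\ket{\Psi_T(x)}-\ket{\Psi_T}\|^2=2(1-p_1(x))\leq 2\eps$, and with the choice $a^2=\eps/(2S+1)$ this contributes at most $2(2S+1)$. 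Adding the three pieces yields $S+(S+1)+2(2S+1)=3(2S+1)=\O(S)$, as required.
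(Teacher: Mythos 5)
Your proposal is correct and is essentially the paper's own proof: the same witness (block pieces $\frac{1}{M}\ket{t,0,\Psi_t(x)}$, query pieces $L_x\ket{\Psi_t(x)}$ sitting in $\bigoplus_i\H_{i,x_i}$, and the boundary correction $\frac{1}{a}\ket{T,0}(\ket{\Psi_T(x)}-\ket{\Psi_T})$), the same telescoping computation, and the same three-part norm bound $S+(S+1)+2(2S+1)=3(2S+1)$ using the consistency condition and the choice $a^2=\eps/(2S+1)$.
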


\begin{proof}
	Let $\mathcal{Z} = [n] \times \W$, so the state space of the algorithm $\A$ is $\C^{\mathcal{Z}}$.
	Recall from \cref{eq:Psi} that $\ket{\Psi_t(x)} \in \C^{\mathcal{Z}}$ denotes the state of $\A$ on input $x$ at time $t$, i.e., immediately after the application of $U_t$.
	We will construct a positive witness for every positive input $x\in f^{-1}(1)$ and upper bound its norm.

	Keeping \cref{eq:A} in mind, for every $t \in [T]_0$ we define
	\[\ket{\widehat{\Psi}_t(x)} = \begin{cases}
	\frac{1}{a}\ket{0}\ket{\Psi_T(x)} & \text{if } t = T, \\
	\frac{1}{M}\ket{0}\ket{\Psi_t(x)} & \text{if } \exists \ell \in [S+1] : t + 1 \in \B_{\ell}, \\
	L_x\ket{\Psi_t(x)} & \text{if } \exists \ell \in [S] : t + 1 = q_{\ell},
	\end{cases}\]
	where $L_x \in \L(\C^{\mathcal{Z}}, \C^{2} \otimes \C^{\mathcal{Z}})$ is defined on the computational basis vectors as follows:
	\[\forall i \in [n], j \in \mathcal{W}, \qquad L_x\ket{i,j} = \ket{x_i,i,j}.\]
	For all $t \in [T]_0$, we easily verify that $\ket{t}\ket{\widehat{\Psi}_t(x)} \in \H(x)$ by referring to \cref{eq:alg-span,eq:H}. Next, we define
	\[\ket{w_x} = \sum_{t=0}^{T-1} \ket{t}\ket{\widehat{\Psi}_t(x)} + \frac{1}{a}\ket{T}\ket{0}\left(\ket{\Psi_T(x)} - \ket{\Psi_T}\right),\]
	where $\ket{\Psi_T}$ is the final accepting state from \cref{def:QA}.
	As $\ket{T,0,z} \in \H(x)$ for all $z \in \mathcal{Z}$, we find by linearity that $\ket{w_x} \in \H(x)$.
  By splitting the time steps into query and non-query steps we find that
	\begin{align*}
		\ket{w_x}
		 &= \sum_{\ell=1}^S \ket{q_{\ell}-1} L_x \ket{\Psi_{q_{\ell}-1}(x)}
			+ \sum_{\ell=1}^{S+1} \sum_{t=q_{\ell-1}}^{q_{\ell}-2} \ket{t} \frac{1}{M} \ket{0} \ket{\Psi_t(x)}
			+ \frac{1}{a}\ket{T}\ket{0}\left(\ket{\Psi_T(x)} - \ket{\Psi_T}\right).
	\end{align*}
  Applying $A$ we get
	\begin{align*}
		A\ket{w_x}
		&= \sum_{\ell=1}^S \bigl[\ket{q_{\ell}-1} \ket{\Psi_{q_{\ell}-1}(x)} - \ket{q_{\ell}} \O_x\ket{\Psi_{q_{\ell}-1}(x)}\bigr] \\
		&\qquad + \sum_{\ell = 1}^{S+1} \sum_{t = q_{\ell-1}}^{q_{\ell}-2} M\left[\ket{t} \frac{1}{M} \ket{\Psi_t(x)} - \ket{t+1} \frac{1}{M} U_{t+1} \ket{\Psi_t(x)}\right] + \ket{T}\ket{\Psi_T(x)} - \ket{T}\ket{\Psi_T} \\
		&= \sum_{\ell=1}^S \left[\ket{q_{\ell}-1}\ket{\Psi_{q_{\ell}-1}(x)} - \ket{q_{\ell}}\ket{\Psi_{q_{\ell}}(x)}\right] + \sum_{\ell=1}^{S+1} \sum_{t=q_{\ell-1}}^{q_{\ell}-2} \left[\ket{t}\ket{\Psi_t(x)} - \ket{t+1}\ket{\Psi_{t+1}(x)}\right] \\
		&\qquad + \ket{T}\ket{\Psi_T(x)} - \ket{T}\ket{\Psi_T} \\
		&= \sum_{t=0}^{T-1} \left[\ket{t}\ket{\Psi_t(x)} - \ket{t+1}\ket{\Psi_{t+1}(x)}\right] + \ket{T}\ket{\Psi_T(x)} - \ket{T}\ket{\Psi_T} \\
		&= \ket{0}\ket{\Psi_0} - \ket{T}\ket{\Psi_T}
		 = \ket{\tau},
	\end{align*}
	where most terms cancel since the final sum is telescopic. In particular, we find that $\ket{w_x}$ is indeed a positive witness for $x$.
	We can use its size to bound the size of the minimum positive witness for $x$:
	\begin{align*}
		w_+(x,P_{\A}) &= \min\{\norm{\ket{w}}^2 : \ket{w} \in \H(x), A\ket{w} = \ket{\tau}\} \leq \norm{\ket{w_x}}^2 = \sum_{t=0}^{T-1} \norm{\ket{\widehat{\Psi}_t(x)}}^2 + \frac{1}{a^2} \norm{\ket{\Psi_T(x)} - \ket{\Psi_T}}^2 \\
		&= \sum_{\ell=1}^S \norm{\ket{\Psi_{q_{\ell}-1}(x)}}^2 + \sum_{\ell=1}^{S+1} \sum_{t=q_{\ell-1}}^{q_{\ell}-2} \frac{1}{M^2} \norm{\ket{\Psi_t(x)}}^2 + \frac{1}{a^2} \norm{\ket{\Psi_T(x)} - \ket{\Psi_T}}^2 \\
		&\leq S + (S + 1) + \frac{1}{a^2} \cdot 2\eps \leq 2S + 1 + \frac{2S+1}{\eps} \cdot 2\eps = 3(2S+1),
	\end{align*}
	where we used $M^2 \geq |\B_\ell| = q_\ell - q_{\ell-1} - 1$ from \cref{eq:aM} to bound the second term.
	To bound the third term, we used $a = \sqrt{\eps/(2S+1)}$ from \cref{eq:aM} and the inequality
	\begin{align*}
		\norm{\ket{\Psi_T(x)}-\ket{\Psi_T}}^2 = 2(1-\mathrm{Re}\braket{\Psi_T(x)}{\Psi_T})=2(1-p_1(x))\leq 2\eps
	\end{align*}
	which holds for any $x \in f^{-1}(1)$ (see \cref{lem:QA,def:QA}). Thus,
	\[W_+(P_{\A}) = \max_{x \in f^{-1}(1)} w_+(x,P_{\A}) \leq 3(2S+1),\]
	which completes the proof.
\end{proof}

\begin{restatable}{lemma}{lemwneg} \label{lem:w-}
	Let $\A$ be a clean quantum algorithm with query complexity $S$, time complexity $T$ and error probability $0 \leq \eps < \frac15$. Let $P_{\A}$ be the span program for $\A$ from \cref{def:PA}. Then, for all $x\in f^{-1}(0)$, there exists an approximate negative witness $\ket{\widetilde\omega_x}$ such that $\norm{\bra{\widetilde{\omega}_x}A\Pi_{\H(x)}}^2\leq 5\eps/(3(2S+1))$ and $\norm{\bra{\widetilde{\omega}_x}A}^2\leq 2(4S+1)$. Thus:
	\begin{enumerate}
		\setlength\itemsep{-.4em}
		\item $P_{\A}$ positively $\lambda$-approximates $f$ for $\lambda = 5\eps$.
		\item The approximate negative witness complexity of $P_{\A}$ is $\widetilde{W}_-(P_{\A}) = \O(S)$.
	\end{enumerate}
\end{restatable}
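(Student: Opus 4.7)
For each $x\in f^{-1}(0)$, the plan is to produce the approximate negative witness in closed form:
\[
  \ket{\widetilde{\omega}_x} \;=\; \frac{1}{1-p_1(x)}\sum_{t=0}^{T}\ket{t}\ket{\Psi_t(x)} \;\in\; \V,
\]
which is well-defined since $p_1(x)\leq\eps<1/5$. The correct normalization is checked immediately: using $\ket{\tau}=\ket{0}\ket{\Psi_0}-\ket{T}\ket{\Psi_T}$, the fact that $\ket{\Psi_0(x)}=\ket{\Psi_0}$, and the cleanness consistency condition $\braket{\Psi_T}{\Psi_T(x)}=p_1(x)$, one gets $\braket{\widetilde{\omega}_x}{\tau}=(1-p_1(x))/(1-p_1(x))=1$.

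Next I would evaluate $\bra{\widetilde{\omega}_x}A\ket{t,b,i,j}$ on each type of computational basis vector of $\H$, organized by the three cases of \cref{eq:A}. For a non-query step $t+1\in\B_\ell$, only $b=0$ contributes (since $\ket{t,1,i,j}\in\Hfalse=\{0\}$), and the cancellation $\bra{\Psi_{t+1}(x)}U_{t+1}=\bra{\Psi_t(x)}$ makes the inner product vanish. For a query step $t+1=q_\ell$, using $\ket{\Psi_{t+1}(x)}=\O_x\ket{\Psi_t(x)}$ I get
\[
  \bra{\widetilde{\omega}_x}A\ket{t,b,i,j}
  \;=\;\frac{1-(-1)^{b+x_i}}{1-p_1(x)}\,\braket{\Psi_t(x)}{i,j},
\]
which vanishes on vectors with $b=x_i$ (i.e.\ on $\H(x)$) and equals $2\braket{\Psi_t(x)}{i,j}/(1-p_1(x))$ otherwise. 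Finally, at $t=T$ (where only $b=0$ survives), $A\ket{T,0,i,j}=a\ket{T,i,j}$ yields $a\,\braket{\Psi_T(x)}{i,j}/(1-p_1(x))$. This is the only contribution that falls inside $\H(x)$ but does not cancel.

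Summing the squares over $\H(x)$ then gives
\[
  \norm{\bra{\widetilde{\omega}_x}A\Pi_{\H(x)}}^2
  \;=\;\frac{a^2}{(1-p_1(x))^2}\norm{\ket{\Psi_T(x)}}^2
  \;\leq\;\frac{1}{(1-\eps)^2}\cdot\frac{\eps}{2S+1}
  \;\leq\;\frac{25}{16}\cdot\frac{\eps}{2S+1}
  \;\leq\;\frac{5\eps}{3(2S+1)},
\]
where the last inequality uses $\eps<1/5$ and $25/16\le 5/3$. Summing over all basis vectors of $\H$ adds only the contribution of the ``wrong-branch'' query vectors (those with $b=1-x_i$), which contribute $4/(1-p_1(x))^2$ per query, giving
\[
  \norm{\bra{\widetilde{\omega}_x}A}^2
  \;\leq\;\frac{4S+a^2}{(1-\eps)^2}
  \;\leq\;\frac{25}{16}(4S+1)\;\leq\;2(4S+1).
\]

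The main obstacle is really only bookkeeping: the case split on $t\in\B_\ell$, $t+1\in\S$, or $t=T$, together with the sign identity $\braket{\Psi_{t+1}(x)}{i,j}=(-1)^{x_i}\braket{\Psi_t(x)}{i,j}$ at query steps, must be handled carefully so that the vanishing on $\H(x)$ happens for the right reason in each case. Once the bounds above are established, the two stated consequences are immediate: with $\widetilde{W}_-(P_\A)\leq 2(4S+1)=\O(S)$ and the bound on the error at most $5\eps/W_+(P_\A)$ (using $W_+(P_\A)\leq 3(2S+1)$ from \cref{lem:w+}), $P_\A$ positively $5\eps$-approximates $f$ as claimed.
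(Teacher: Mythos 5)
Your proposal is correct and follows essentially the same route as the paper: the identical witness $\ket{\widetilde\omega_x}=\frac{1}{1-p_1(x)}\sum_t\ket{t}\ket{\Psi_t(x)}$, the same three-case evaluation of $\bra{\widetilde\omega_x}A$ on basis vectors, and the same final bounds $5\eps/(3(2S+1))$ and $2(4S+1)$. The only (immaterial) difference is that you use the exact sign identity at query steps to evaluate the wrong-branch contribution as exactly $4/(1-p_1(x))^2$ per query, whereas the paper reaches the same $4S$ via the bound $|u-v|^2\leq 2|u|^2+2|v|^2$.
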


\begin{proof}
	Given a negative input $x$, we define an approximate negative witness and bound the negative error and minimum approximate negative witness size using this witness. To that end, let $x \in f^{-1}(0)$. Define
	\[\bra{\widetilde\omega_x} = \frac{1}{1 - \braket{\Psi_T(x)}{\Psi_T}}\sum_{t=0}^T \bra{t}\bra{\Psi_t(x)}.\]
	Note that this is well-defined as $x$ is a negative instance, and hence $|\braket{\Psi_T(x)}{\Psi_T}| \leq \eps < 1$ by \cref{def:QA,lem:QA}.
  Recalling from \cref{eq:tau} that $\ket{\tau} = \ket{0}\ket{\Psi_0} - \ket{T}\ket{\Psi_T}$, observe that
	\begin{equation}
		\braket{\widetilde\omega_x}{\tau} =  \frac{\braket{\Psi_0}{\Psi_0} - \braket{\Psi_T(x)}{\Psi_T}}{1 - \braket{\Psi_T(x)}{\Psi_T}} = 1.
		\label{eq:omegatau}
	\end{equation}
	Next, let $\ket{t,b,i,j}$ be a computational basis vector in $\H(x)$ and let $A$ be the operator defined in \cref{eq:A}. If $t + 1 = q_{\ell}$ for some $\ell \in [S]$, then $b = x_i$ and
	\begin{align*}
		\bra{\widetilde\omega_x}A\ket{t,b,i,j} &= \bra{\widetilde\omega_x}\bigl[\ket{t,i,j} - (-1)^{x_i}\ket{t+1,i,j}\bigr] \\
		&= \frac{1}{1 - \braket{\Psi_T(x)}{\Psi_T}} \bigl[\braket{\Psi_t(x)}{i,j} - (-1)^{x_i}\braket{\Psi_{t+1}(x)}{i,j}\bigr] \\
		&= \frac{1}{1 - \braket{\Psi_T(x)}{\Psi_T}} \left[\braket{\Psi_t(x)}{i,j} - \bra{\Psi_t(x)}\O_x^{\dagger}(-1)^{x_i}\ket{i,j}\right] = 0.
	\end{align*}
On the other hand, if $t + 1 \in \B_{\ell}$ for some $\ell \in [S+1]$, then $b = 0$ and
	\begin{align*}
		\bra{\widetilde\omega_x}A\ket{t,b,i,j} &= M\bra{\widetilde\omega_x}\bigl[\ket{t,i,j} - \ket{t+1}U_{t+1}\ket{i,j}\bigr] \\
		&= \frac{M}{1 - \braket{\Psi_T(x)}{\Psi_T}} \bigl[\braket{\Psi_t(x)}{i,j} - \bra{\Psi_{t+1}(x)}U_{t+1}\ket{i,j}\bigr] \\
		&= \frac{M}{1 - \braket{\Psi_T(x)}{\Psi_T}} \left[\bra{\Psi_t(x)} - \bra{\Psi_t(x)}U_{t+1}^{\dagger}U_{t+1}\right]\ket{i,j} = 0.
	\end{align*}
	Finally, if $t = T$ then
	\[\bra{\widetilde\omega_x}A\ket{T,0,i,j} = a\braket{\widetilde\omega_x}{T,i,j} = \frac{a\braket{\Psi_T(x)}{i,j}}{1 - \braket{\Psi_T(x)}{\Psi_T}}\]
where $a$ is defined in \cref{eq:aM}.
This might not evaluate to 0, potentially contributing to the negative witness error of $\bra{\widetilde{\omega}_x}$ for $x$.
Using $\braket{\Psi_T(x)}{\Psi_T}=p_1(x)\leq \eps < 1/5$ (see \cref{def:QA} and \cref{lem:QA}) we find that
	\begin{align}
		\norm{\bra{\widetilde\omega_x}A\Pi_{\H(x)}}^2 &= \sum_{i \in [n], j \in \mathcal{W}} \left|\bra{\widetilde\omega_x}A\ket{T,0,i,j}\right|^2 = \frac{a^2\norm{\ket{\Psi_T(x)}}^2}{\left|1 - \braket{\Psi_T(x)}{\Psi_T}\right|^2} = \frac{a^2}{\left(1 - p_1(x)\right)^2} \leq \frac{a^2}{\left(1 - \eps\right)^2} \nonumber\\
		&\leq \frac{\eps}{2S+1} \cdot \frac{1}{\left(1 - \frac{1}{5}\right)^2} < \frac{\frac{25}{15}\eps}{2S+1} = \frac{5\eps}{3(2S+1)} \leq \frac{5\eps}{W_+(P)},\label{eq:lambda-beta}
	\end{align}
	where in the last inequality we used \cref{lem:w+}. We find that $P$ positively $\lambda$-approximates $f$ with $\lambda = 5\eps$, completing the proof of the first claim.

	To prove the second claim, recall from \cref{eq:omegatau} that $\braket{\widetilde\omega_x}{\tau} = 1$. Hence, for any $x \in f^{-1}(0)$, we obtain using that $\mathcal{Z} = [n] \times \W$:
	\begin{align*}
		\widetilde{w}_-(x,P) &= \min_{\ket{\widetilde\omega} \in \V}\left\{\norm{\bra{\widetilde\omega}A}^2 : \braket{\widetilde\omega}{\tau} = 1, \norm{\bra{\widetilde\omega}A\Pi_{\H(x)}}^2 \leq \frac{\lambda}{W_+(P)}\right\} \leq \norm{\bra{\widetilde\omega_x}A}^2 \\
		&= \sum_{\substack{\ell \in [S]\\ b \in \{0,1\}, z \in \mathcal{Z}}} \left|\bra{\widetilde\omega_x}\bigl[\ket{q_{\ell}-1,z} - (-1)^b\ket{q_{\ell},z}\bigr]\right|^2 \\
		& \qquad + \sum_{\substack{\ell \in [S+1]\\ t \in \{q_{\ell-1}, \dots, q_{\ell}-2\}, z \in \mathcal{Z}}} \left|\bra{\widetilde\omega_x}M\bigl(\ket{t,z} - \ket{t+1}U_{t+1}\ket{z}\bigr)\right|^2 + \sum_{z \in \mathcal{Z}} \left|\bra{\widetilde\omega_x} a \ket{T,z}\right|^2 \\
		&\leq \frac{1}{\left|1 - \braket{\Psi_T(x)}{\Psi_T}\right|^2} \cdot \left[\sum_{\ell \in [S], z \in \mathcal{Z}} 2\left\{\left|\braket{\Psi_{q_{\ell}-1}(x)}{z}\right|^2 + \left|\braket{\Psi_{q_{\ell}}(x)}{z}\right|^2\right\}\right. \\
		&\qquad \left. + \sum_{\substack{\ell \in [S+1]\\ t \in \{q_{\ell-1}, \dots, q_{\ell}-2\}, z \in \mathcal{Z}}} M^2 \left|\left\{\bra{\Psi_t(x)} - \bra{\Psi_{t+1}(x)}U_{t+1}\right\}\ket{z}\right|^2 + \sum_{z \in \mathcal{Z}} a^2\left|\braket{\Psi_T(x)}{z}\right|^2\right] \\
		&= \frac{1}{\left|1 - \braket{\Psi_T(x)}{\Psi_T}\right|^2} \cdot \left[\sum_{\ell \in [S]} 2 \left\{\norm{\bra{\Psi_{q_{\ell}-1}(x)}}^2 + \norm{\bra{\Psi_{q_{\ell}}(x)}}^2\right\} + a^2\norm{\bra{\Psi_T(x)}}^2\right] \\
		&= \frac{4S + a^2}{\left|1 - \braket{\Psi_T(x)}{\Psi_T}\right|^2}  \leq \frac{4S + a^2}{\left(1 - \eps\right)^2} \leq \frac{4S+a^2}{\left(1 - \frac{1}{5}\right)^2} \\
		&\leq \left[4S + \frac{\eps}{2S+1}\right] \cdot \frac{25}{16} \leq 2(4S+1) = \O(S),
	\end{align*}
	which completes the proof.
\end{proof}

Together, \cref{lem:w+,lem:w-} prove \cref{thm:alg-span-complexity}, which in turn implies an upper bound on the query complexity of implementing the span program $P_{\cal A}$.

We conclude this section by characterizing in \cref{lem:kernel} the kernel of the span program operator $A$ and subsequently finding in \cref{lem:w0} the minimal witness size. These will prove relevant in the analysis of the time complexity of the algorithm compiled from $P_{\A}$, which we turn our attention to in \cref{sec:main-proof}.

\begin{restatable}{lemma}{lemkernel}
	\label{lem:kernel}
	Let $\A$ be a clean quantum query algorithm with error probability $0 \leq \varepsilon < 1$. Let $P_{\A} = (\H,\V,A,\ket{\tau})$ be the span program for $\A$. Let $\mathcal{Z} = [n] \times \mathcal{W}$. For $\ell \in \{2,\dots,S\}$, we define the linear map $\Phi_\ell$ from $\C^{\mathcal{Z}}$ to $\H$ as
	\[\Phi_{\ell}\ket{\psi} = \ket{q_{\ell-1}-1}\frac{\ket{-}}{\sqrt{2}}\ket{\psi} + \ket{q_{\ell}-1}\frac{\ket{+}}{\sqrt{2}} U_{q_{\ell}-1} \cdots U_{q_{\ell-1}+1}\ket{\psi} + \frac{1}{M}\sum_{t=q_{\ell-1}}^{q_\ell-2}\ket{t}\ket{0}U_t \cdots U_{q_{\ell-1}+1}\ket{\psi},\]
	where $\ket{\pm} = (\ket{0}\pm\ket{1})/\sqrt{2}$ and $M$ was defined in \cref{eq:aM}.
	We also define the linear map $\Phi_{S+1}$ from $\C^{\mathcal{Z}}$ to $\H$ as
	\[\Phi_{S+1}\ket{\psi} = \ket{q_S-1}\frac{\ket{-}}{\sqrt{2}}\ket{\psi} + \frac1M\sum_{t=q_S}^{T-1}\ket{t}\ket{0}U_t \cdots U_{q_S+1}\ket{\psi} + \frac1a\ket{T}\ket{0} U_T \cdots U_{q_S+1} \ket{\psi}.\]
	Then
	\[\Ker(A) = \bigoplus_{\ell=2}^{S+1} \Phi_{\ell}\left(\C^{\mathcal{Z}}\right).\]
\end{restatable}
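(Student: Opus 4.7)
The plan is to prove the two inclusions separately: $\supseteq$ by direct computation of $A\Phi_\ell\ket{\psi}$, and $\subseteq$ by a dimension count combined with a direct-sum verification.

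For $\supseteq$, fix $\ell$ and split $\Phi_\ell\ket{\psi}$ into its three pieces: the left-endpoint at $t = q_{\ell-1}-1$, the interior sum over the non-query steps of block $\B_\ell$, and the right-endpoint at $t = q_\ell - 1$ (or $t = T$ with weight $1/a$ when $\ell = S+1$). Applying $A$ piecewise using \cref{eq:A}, the left-endpoint piece is hit by the query formula (since $q_{\ell-1} \in \mathcal{S}$) and produces $-\ket{q_{\ell-1}}\ket{\psi}$. The interior piece, written as $\frac{1}{M}\sum_t \ket{t,0}\, W_t\ket{\psi}$ with $W_t = U_t\cdots U_{q_{\ell-1}+1}$ (empty product at $t = q_{\ell-1}$), is hit by the non-query formula and, using $U_{t+1}W_t = W_{t+1}$, telescopes to $\ket{q_{\ell-1}}\ket{\psi} - \ket{q_\ell-1}V_\ell\ket{\psi}$ (with the analogous expression for $\ell = S+1$). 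The right-endpoint piece is rigged to cancel this leftover: for $\ell \leq S$ the query formula on $\ket{q_\ell-1}\frac{\ket{+}}{\sqrt{2}}V_\ell\ket{\psi}$ yields $\ket{q_\ell-1}V_\ell\ket{\psi}$, and for $\ell = S+1$ the $t = T$ rule on $\frac{1}{a}\ket{T,0}V_{S+1}\ket{\psi}$ yields $\ket{T}V_{S+1}\ket{\psi}$. The three pieces sum to zero.

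For $\subseteq$, I count dimensions. From the definition of $\H$ one has $\dim\H = 2nS|\W| + (T+1-S)n|\W| = (T+1+S)|\mathcal{Z}|$, and $\dim\V = (T+1)|\mathcal{Z}|$. The operator $A$ is surjective by a backward induction on $t$: the base case is $\ket{T,z} = \frac{1}{a}A\ket{T,0,z}$, and for $t < T$ either $\ket{t,z} = \frac{1}{M}A\ket{t,0,z} + \ket{t+1}U_{t+1}\ket{z}$ (if $t+1 \notin \mathcal{S}$) or $\ket{t,z} = \frac{1}{2}(A\ket{t,0,z} + A\ket{t,1,z})$ (if $t+1 \in \mathcal{S}$), both reducing to the inductive hypothesis. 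Hence $\dim\Ker(A) = S|\mathcal{Z}|$. Each $\Phi_\ell$ is manifestly injective (its leftmost summand alone determines $\ket{\psi}$), and the sum $\sum_{\ell}\Phi_\ell(\C^{\mathcal{Z}})$ is direct: the only pairwise overlap in $t$-support between $\Phi_\ell$ and $\Phi_{\ell+1}$ is at $t = q_\ell - 1$ (for $\ell \leq S$), where the two contributions lie in the orthogonal $\ket{+}$- and $\ket{-}$-subspaces respectively, while $\Phi_{S+1}$ is uniquely identified by its support at $t = T$. Thus $\dim\bigoplus_{\ell=2}^{S+1}\Phi_\ell(\C^{\mathcal{Z}}) = S|\mathcal{Z}| = \dim\Ker(A)$, which combined with the first inclusion gives equality.

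The main obstacle is the indexing bookkeeping in the forward computation. Several conventions must be kept straight: the empty-product convention for $W_t$ at the lower summation limit, the separate $\frac{1}{a}$-weighted treatment of the terminal term of $\Phi_{S+1}$, and the distinction between time steps interior to a non-query block and those adjacent to a query. The telescoping cancellations rely on the fractional weights $1/\sqrt{2}$, $1/M$, $1/a$ being precisely matched against the coefficients $1$, $M$, $a$ appearing in the definition of $A$; once this matching is set up correctly, the rest is routine linear algebra.
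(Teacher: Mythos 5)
Your proof is correct, and for the harder inclusion $\Ker(A)\subseteq\bigoplus_\ell\Phi_\ell(\C^{\mathcal Z})$ it takes a genuinely different route from the paper. The paper's proof is a direct solve of the linear system: it writes an arbitrary $\ket{\Psi}\in\Ker(A)$ in the natural block decomposition of $\H$ (components $\ket{\psi_{\ell,t}}$, with the $\ket{+}$ and $\ket{-}$ sectors at a query-adjacent time assigned to the two neighbouring blocks), applies $A$, and reads off from $A\ket{\Psi}=0$ the recursions $\ket{\psi_{\ell,t}}=U_t\cdots U_{q_{\ell-1}+1}\ket{\psi_\ell}$ that exhibit $\ket{\Psi}$ as $\sum_\ell\Phi_\ell\ket{\psi_\ell}$. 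You instead establish that $A$ is surjective (via a clean backward induction on $t$, using $\frac{1}{a}A\ket{T,0,z}=\ket{T,z}$ at the base and either the $1/M$-weighted non-query relation or the average of the $b=0$ and $b=1$ query columns in the inductive step), apply rank--nullity to get $\dim\Ker(A)=(T+1+S)|\mathcal Z|-(T+1)|\mathcal Z|=S|\mathcal Z|$, and match this against the dimension of the direct sum; the directness itself follows because each $\Phi_\ell$ owns the private sector $\ket{q_{\ell-1}-1}\ket{-}$ that no other $\Phi_{\ell'}$ touches. Your forward computation ($A\Phi_\ell\ket{\psi}=0$ by telescoping) is the same calculation the paper asserts without writing out. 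The trade-off: your dimension count avoids the somewhat fiddly constraint-chasing of the paper and yields surjectivity of $A$ as a useful by-product, but it is less constructive -- the paper's argument hands you, for any kernel vector, the explicit preimages $\ket{\psi_\ell}$ under the $\Phi_\ell$'s, which is closer in spirit to how the kernel is used later (e.g., in the formula for $\Pi_{\Ker(A)}$ in terms of $\Phi_\ell\Phi_\ell^\dagger$). One small point of care: pairwise disjointness of supports alone does not prove a multi-term sum is direct; your argument does go through, but only because of the private-sector observation, so you should state the projection onto $\ket{q_{\ell-1}-1}\ket{-}\otimes\C^{\mathcal Z}$ explicitly when concluding that $\sum_\ell\Phi_\ell\ket{\psi_\ell}=0$ forces every $\ket{\psi_\ell}=0$.
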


The proof can be found in \cref{app:lemwzero}. Also proved in \cref{app:lemwzero} is the following lemma that provides an expression for the minimal positive witness which will be useful in our later analyses.

\begin{restatable}{lemma}{lemwzero}
	\label{lem:w0}
	Let $\A$ be a clean quantum query algorithm with error probability $0 \leq \eps < 1$. Let $P_{\A}$ be the span program for $\A$ from \cref{def:PA}. Then the minimal witness $\ket{w_0}=A^+\ket{\tau}$ is
	\begin{align*}
		\ket{w_0} &= \frac{1}{M}\sum_{t=0}^{q_1-2} \ket{t}\ket{0}U_t \cdots U_1\ket{\Psi_0} + \ket{q_1-1}\left(\frac12\ket{0} + \frac12\ket{1}\right)U_{q_1-1} \cdots U_1\ket{\Psi_0} \\
		&+ \frac{1}{Ca^2+1}\left[\ket{q_S-1}\left(\frac12\ket{0}-\frac12\ket{1}\right)U_{q_S+1}^{\dagger} \cdots U_T^{\dagger}\ket{\Psi_T} + \frac{1}{M} \sum_{t=q_S}^{T-1} \ket{t}\ket{0}U_{t+1}^{\dagger} \cdots U_T^{\dagger}\ket{\Psi_T}\right] \\
		&- \frac{Ca}{Ca^2+1} \ket{T}\ket{0} \ket{\Psi_T}, \qquad \text{where} \qquad C = \frac{T-q_S}{M^2} + \frac12
	\end{align*}
	and $a$ and $M$ are defined in \cref{eq:aM}.
	The squared norm of $\ket{w_0}$ is
	$N = \frac{q_1-1}{M^2} + \frac12 + \frac{C}{Ca^2+1}$.
\end{restatable}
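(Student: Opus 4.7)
The plan is to leverage the characterization that $\ket{w_0} = A^+\ket{\tau}$ is the unique vector in $\H$ satisfying $A\ket{w_0} = \ket{\tau}$ and $\ket{w_0} \perp \Ker(A)$. I would verify both for the displayed expression and then sum squared norms to obtain $N$.

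Verification of $A\ket{w_0} = \ket{\tau}$ reduces to two telescoping calculations. In the first half, applying $A$ termwise to $\tfrac{1}{M}\sum_{t=0}^{q_1-2}\ket{t,0}\ket{\Psi_t}$, where $\ket{\Psi_t}:=U_t\cdots U_1\ket{\Psi_0}$, uses the middle branch of \cref{eq:A} and yields a telescope equal to $\ket{0}\ket{\Psi_0} - \ket{q_1-1}\ket{\Psi_{q_1-1}}$. The boundary term $\ket{q_1-1}\tfrac{\ket{0}+\ket{1}}{2}\ket{\Psi_{q_1-1}}$ sits on the query step $t+1=q_1$, and by the third branch of \cref{eq:A} maps to $\ket{q_1-1}\ket{\Psi_{q_1-1}}$, since the $\ket{q_1}\ket{\cdot}$ contributions cancel between the $\ket{0}$ and $\ket{1}$ parts. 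These combine to give $\ket{0}\ket{\Psi_0}$. A mirror calculation on the second half, using $A\ket{T,0}\ket{\cdot} = a\ket{T}\ket{\cdot}$ from the first branch of \cref{eq:A}, produces $-\tfrac{1+Ca^2}{Ca^2+1}\ket{T}\ket{\Psi_T} = -\ket{T}\ket{\Psi_T}$. Summing yields $\ket{\tau}$.

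For orthogonality to $\Ker(A)$, I would invoke \cref{lem:kernel} and check $\braket{w_0}{\Phi_\ell\psi} = 0$ for every $\ell \in \{2,\ldots,S+1\}$ and every $\ket{\psi} \in \C^{\mathcal{Z}}$. For $\ell \in \{3,\ldots,S-1\}$ (when such indices exist) the supports in the time register of $\Phi_\ell$ and $\ket{w_0}$ are disjoint, so the inner product vanishes trivially. For $\ell = 2$ and $\ell = S$ the supports meet only at $t = q_1-1$ and $t = q_S-1$ respectively, and at those times $\ket{w_0}$ and $\Phi_\ell$ carry opposite $\ket{\pm}$ factors in the answer qubit, again yielding zero. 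The substantive case is $\ell = S+1$, where the two vectors overlap at every $t \in \{q_S-1, q_S,\ldots, T\}$. Using the identity $U_T\cdots U_{t+1}\cdot U_t\cdots U_{q_S+1} = U_T\cdots U_{q_S+1}$, each overlapping inner product collapses to a multiple of the single scalar $\alpha := \bra{\Psi_T}U_T\cdots U_{q_S+1}\ket{\psi}$, and summing gives
\[\braket{w_0}{\Phi_{S+1}\psi} = \frac{\alpha}{Ca^2+1}\left[\tfrac{1}{2} + \tfrac{T-q_S}{M^2} - C\right] = 0,\]
where the bracket vanishes \emph{precisely} because $C = \tfrac{T-q_S}{M^2} + \tfrac{1}{2}$. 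This last case is the main obstacle, and it also explains why the coefficient $C$ is defined the way it is: it is exactly the value that orthogonalizes $\ket{w_0}$ against $\Phi_{S+1}(\C^{\mathcal{Z}})$.

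Finally, the five summands of $\ket{w_0}$ live on disjoint values of the time register and are unitary images of unit vectors, so I would sum squared norms directly: $\tfrac{q_1-1}{M^2} + \tfrac{1}{2}$ from the first half, $\tfrac{1}{(Ca^2+1)^2}\bigl(\tfrac{1}{2} + \tfrac{T-q_S}{M^2}\bigr) = \tfrac{C}{(Ca^2+1)^2}$ from the symmetric piece of the second half, and $\tfrac{C^2a^2}{(Ca^2+1)^2}$ from the $\ket{T}\ket{0}\ket{\Psi_T}$ term. The latter two combine as $\tfrac{C(1+Ca^2)}{(Ca^2+1)^2} = \tfrac{C}{Ca^2+1}$, yielding $N = \tfrac{q_1-1}{M^2} + \tfrac{1}{2} + \tfrac{C}{Ca^2+1}$ as claimed.
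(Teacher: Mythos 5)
Your proposal is correct and follows essentially the same route as the paper: verify $A\ket{w_0}=\ket{\tau}$ by telescoping, verify orthogonality to $\Ker(A)$ case-by-case via \cref{lem:kernel} with the $\Phi_{S+1}$ cancellation forced by the definition of $C$, and sum squared norms of the mutually orthogonal pieces. The only quibble is that the register carrying the $\ket{\pm}$ factors is the $b$-register of $\H_{i,b}$ (the query-bit label), not the algorithm's answer qubit, but this does not affect the argument.
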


\subsection{Time complexity of algorithms compiled from span programs of algorithms}
\label{sec:main-proof}

Now we analyze the time complexity of implementing $P_{\A}$. The main results are summarized in the following theorem and the remainder of this section is dedicated to proving it.

\begin{theorem}\label{thm:main-time}
	Let $\A$ be a clean quantum query algorithm that acts on $k$ qubits, has query complexity $S$, time complexity $T$, and evaluates a function $f : X \subseteq \{0,1\}^n \to \{0,1\}$ with bounded error. Let $P_{\A}$ be the span program for this algorithm, as in \cref{def:PA}. Then we can implement the algorithm compiled from $P_{\A}$ with:
	\begin{enumerate}
		\setlength\itemsep{-.4em}
		\item $\O(S\log(S))$ calls to $\O_x$.
		\item $\O(T\log(S))$ calls to $\O_\A$ and $\O_\S$, as defined in \cref{sec:model}.
		\item $\O(T\mathrm{polylog}(T))$ additional gates.
		\item $\O(\mathrm{polylog}(T) + k^{o(1)})$ auxiliary qubits.
	\end{enumerate}
	If we additionally require that the error probability of $\A$ is $o(1/S^2)$, then the $\log(S)$ factors and the $k^{o(1)}$ term can be removed. We can also drop the $k^{o(1)}$ term if we assume that $T = k^{1+\Omega(1)}$.
\end{theorem}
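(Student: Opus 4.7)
The plan is to invoke \cref{thm:span-program-time} with $P=P_\A$. By \cref{thm:alg-span-complexity}, after a constant-overhead amplification of $\A$ to ensure $\eps<\tfrac1{10}$, we have $\sqrt{W_+(P_\A)\widetilde W_-(P_\A)}=\O(S)$ and $1-2\lambda=\Omega(1)$. Hence the span program algorithm makes $\widetilde\O(S)$ calls to the four subroutines $\Cal R_{\Ker(A)}$, $\Cal C_{\ket{w_0}}$, $\Cal R_{\H(x)}$, and $\Cal R_{\ket{0}}$, and the proof reduces to exhibiting an implementing subspace $\H_x$ together with implementations of these subroutines whose per-call cost is $\O(1)$ queries to $\O_x$ (only for $\Cal R_{\H(x)}$), $\O(T/S)$ calls to $\O_\A,\O_\S$, and polylogarithmic extra gates.

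I would take the implementing subspace to be $\H_x=\Span\bigl(\{\ket{0}\}\cup\{\ket{t,b,i,j}\in\H:t\geq 1\}\bigr)$. Conditions 2 and 3 of \cref{def:implementing_subspace} are immediate; condition 4 follows by inspecting \cref{lem:w0} and noting that its $t=0$ component equals $\tfrac{1}{M}\ket{0,0}\ket{\Psi_0}=\tfrac{1}{M}\ket{0}$, since $\ket{\Psi_0}$ is a fixed computational basis state; condition 1 follows from \cref{lem:kernel}, because every $\Phi_\ell(\C^{\Cal{Z}})$ is supported on time steps $\geq q_1-1\geq 1$, so in fact $\Ker(A)\subseteq\H_x$ outright. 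The crucial feature of $\H_x$ is that its $t=0$ slice equals $\C\ket{0}$, so on $\H_x$ the identity $2\ket{0}\bra{0}-\I=2\Pi_{t=0}-\I$ holds, and $\Cal R_{\ket{0}}$ reduces to a sign flip controlled on the time register being zero, implementable in $\O(\log T)$ gates with no oracle calls.

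For the other subroutines I would proceed as follows. For $\Cal R_{\H(x)}$, extract a classical flag for $t+1\in\S$ using $\O(1)$ calls to $\O_\S$, then conditionally compose a call to $\O_x$ with a $Z$ on the $b$-register to realize the phase $(-1)^{b\oplus x_i}$. For $\Cal C_{\ket{w_0}}$, prepare a superposition over the $\O(T/S)$ nonzero time components using the weights appearing in \cref{lem:w0}, and then build the forward (resp.\ backward) evolved workspace states $U_t\cdots U_1\ket{\Psi_0}$ (resp.\ $U_{t+1}^\dagger\cdots U_T^\dagger\ket{\Psi_T}$) by iterating a counter $k=1,\dots,\lfloor 3T/S\rfloor$ and applying $\O_\A$ (resp.\ $\O_\A^\dagger$) to the workspace controlled on $k\leq t$ and on the appropriate tail indicator; the query-uniformity clause of \cref{def:QA} guarantees these iterations cover the full left and right tails. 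For $\Cal R_{\Ker(A)}$, I would use the direct-sum decomposition $\Ker(A)=\bigoplus_\ell\Phi_\ell(\C^{\Cal{Z}})$ of \cref{lem:kernel} and write the reflection as $\Phi\cdot(\text{reflection in canonical form})\cdot\Phi^\dagger$, where $\Phi^\dagger$ is realized by an unwinding loop that, for $\lfloor 3T/S\rfloor$ rounds, decrements $t$ and applies $U_t^\dagger$ via $\O_\A^\dagger$, freezing every branch as soon as $\O_\S$ signals that $t$ has reached a query index; a cheap basis change on the time and $b$ registers then absorbs the $\ket{\pm}$-weights at the two block boundaries, landing every branch of a common block at a canonical state $\ket{q_{\ell-1}}\ket{0}\ket{\psi}$ on which reflection is trivial.

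The main obstacle is the careful implementation of $\Cal R_{\Ker(A)}$: one must handle the boundary $\ket{\pm}$-components, synchronize the unwinding across branches whose blocks have different lengths, and verify that the resulting operation indeed equals $2\Pi_{\Ker(A)}-\I$ on $\H_x$. Once this is in place, multiplying the per-call costs of the four subroutines by the $\widetilde\O(S)$ outer loop of \cref{thm:span-program-time} yields $\O(S\log S)$ queries to $\O_x$, $\O(T\log S)$ calls to $\O_\A$ and $\O_\S$, and $\O(T\,\mathrm{polylog}(T))$ elementary gates, using $\O(\mathrm{polylog}(T))$ auxiliary qubits from the phase and amplitude estimation in \cref{thm:span-program-time}. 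The $k^{o(1)}$ auxiliary qubit term then arises from using space-efficient multi-controlled gate decompositions when the controlled $\O_\A/\O_\A^\dagger$ applications touch the $k$-qubit workspace; the sharper bounds in the final sentence of the theorem follow by amplifying $\A$ to error $\eps=o(1/S^2)$, which makes the span program's precision $\Theta^{-1}$ scale as $1/\mathrm{poly}(S)$ and absorbs the outer $\log S$ slack, or by observing that when $T\geq k^{1+\Omega(1)}$ a naive $\O(k)$ ancilla bill already fits within the $\O(T\,\mathrm{polylog}(T))$ gate budget.
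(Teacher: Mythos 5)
Your choice of implementing subspace, $\H_x=\Span\bigl(\{\ket{0}\}\cup\{\ket{t,b,i,j}:t\geq1\}\bigr)$, does satisfy the four conditions of \cref{def:implementing_subspace}, and it indeed makes $\Cal R_{\ket{0}}$ a cheap sign flip on the time register. But it shifts the real difficulty onto $\Cal C_{\ket{w_0}}$, and there the proposal breaks. \cref{thm:span-program-time} requires $\Cal C_{\ket{w_0}}$ to \emph{leave $\H_x$ invariant}, which by unitarity means mapping $\H_x^{\perp}=\Span\{\ket{0,b,i,j}\}\ominus\C\ket{0}$ into itself. Your implementation -- a rotation on the time register followed by controlled applications of $\O_\A$ -- acts on the time register independently of the workspace contents, so it sends $\ket{0}\ket{0}\ket{\phi}$ with $\ket{\phi}\perp\ket{\Psi_0}$ to a superposition with weight on $t\geq1$, i.e.\ out of $\H_x^{\perp}$ and into $\H_x$. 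Consequently the identity $2\ket{w_0^{\beta}}\bra{w_0^{\beta}}-\I=\Cal C_{\ket{w_0^{\beta}}}\Cal R_{\ket{0}}\Cal C_{\ket{w_0^{\beta}}}^{\dagger}$, on which \cref{thm:span-program-time} relies, fails on the states that actually occur: for a negative input, $\Cal C_{\ket{w_0}}^{\dagger}$ applied to the algorithm's state produces $t=0$ components such as $\ket{0}\ket{0}\ket{\widetilde{\Psi}_0(x)}$, with $\ket{\widetilde{\Psi}_0(x)}=U_1^{\dagger}\cdots U_T^{\dagger}\ket{\Psi_T}$ nearly orthogonal to $\ket{\Psi_0}$, which your $\Cal R_{\ket{0}}=2\Pi_{t=0}-\I$ fixes instead of negating. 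Any $\Cal C_{\ket{w_0}}$ that genuinely preserves your $\H_x$ must treat $\ket{0,0,\Psi_0}$ differently from every $\ket{0,0,\phi}$ with $\ket{\phi}\perp\ket{\Psi_0}$, i.e.\ it must perform an all-zeros test on the $k$-qubit workspace; that costs $\Omega(k)$ gates per call and $\Omega(Sk)$ in total, which exceeds the claimed $\O(T\,\mathrm{polylog}(T))$ whenever $k=\omega((T/S)\,\mathrm{polylog}(T))$ (e.g.\ element distinctness). This tension -- the $t=0$ slice of $\H_x$ must be tiny for $\Cal R_{\ket{0}}$ to be cheap, yet $\H_x$ must be closed under the natural state-preparation circuits -- is exactly what the paper's construction is designed to resolve.

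The paper instead takes $\H_x$ to be the trajectory space $\overline{\H}_x+\widetilde{\H}_x$ of \cref{def:consistent_subspace}, which is preserved by the splitting maps and (only approximately, with error $\O(\sqrt{\eps})$, see \cref{lem:approximate_implementing_subspace,lem:TB}) by the natural $\Cal C_{\ket{w_0}}$, and then uses the clean-algorithm consistency and commutation conditions via \cref{lem:almost-orthog} to show that the $t=0$ slice of this trajectory lies, up to $\O(\sqrt{\eps})$, in $\Span\{\ket{0,0,0},\ket{0,0,1}\}$, so that a reflection checking only $t$, $b$ and the answer bit is a $4\sqrt{2\eps}$-approximation of $\Cal R_{\ket{0}}$ (\cref{lem:consistent-refl}). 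You would need to carry out this approximate analysis, including the accumulation of error over the $\O(S)$ calls, which is what forces $\eps=o(1/S^2)$ or the final boosting step. Your account of the $\log S$ and $k^{o(1)}$ terms is also not right: they arise from boosting the success probability via amplitude estimation, which requires $\O(\log S)$ genuine reflections about the all-zeros state of all $k$ qubits implemented with few ancillas, not from multi-controlled decompositions of $\O_\A$.
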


The proof leans on the structure of \cref{thm:span-program-time}. We first define a suitable implementing subspace in \cref{sec:consistent-space}, and subsequently, in \cref{sec:implementation}, provide efficient implementations of the four subroutines that are required to use \cref{thm:span-program-time}.

\subsubsection{Implementing subspace}
\label{sec:consistent-space}

Our construction relies on the fact that at intermediate steps the state of an algorithm compiled from $P_{\A}$ is not completely arbitrary but is guaranteed to live within an implementing subspace.

The main reason for this is that we need to reflect around the state $\ket{0}\ket{0}\ket{\Psi_0}$. Doing this in principle is very simple because we can assume that $\ket{\Psi_0}=\ket{0}^{\otimes \log(n)+\log(|{\cal W}|)}$. However, in general, the time complexity of this reflection is $\Theta(\log(n)+\log(|{\cal W}|))$, since it is necessary to check that every qubit is in the $\ket{0}$ state.\footnote{This detail has been neglected in previous work. For example, efficient implementation of such a reflection is not discussed in \cite{Amb06}. While this is not inconsistent with the stated results, since the result only claims to count oracle calls to ${\cal O}_x$ and ${\cal O}_{\A}$, this is not true of subsequent work that uses the results of \cite{Amb06}. We suspect that an argument like ours could also be made in previous work, but feel it is sufficiently non-trivial that it should not be taken for granted.} That would make such a reflection very costly for algorithms with high space complexity, such as the element distinctness algorithm of~\cite{Amb07}. Specifically, if the time complexity of $\A$ is polynomially related to the number of qubits used in $\A$, then we find that $\log(n) + \log(|\W|) = \Theta(\mathrm{poly}(T))$. In this section, we explain how we circumvent this polynomial dependence using implementing subspaces.

Unfortunately, we are not able to provide an exact implementing subspace. Instead, we will use an \textit{approximate implementing subspace}, i.e., we define a subspace $\H_x \subseteq \H$ and we prove that all operations map states in $\H_x$ to states that have high overlap with $\H_x$. The way we handle the propagation of errors is similar to standard approximation arguments: if the overlap with $\H_x^{\perp}$ after one approximate operation is at most $\delta$, then the combined error after $N$ such approximation operations is at most $\O(N\delta)$. Hence, if we make sure that $\delta < o(1/N)$, then the total cumulative ``lost amplitude'' is $o(1)$, and the influence on the final success probability of the algorithm is at most $o(1)$ as well.

Now, we work towards the formal definition of $\H_x$. For each $x\in\{0,1\}^n$ and $t\in \{0,\dots, T-1\}$, define
\begin{align*}
	\ket{\widetilde{\Psi}_t(x)} = U_{t+1}^\dagger\cdots U_T^\dagger\ket{\Psi_T},
\end{align*}
where $\ket{\Psi_T}$ is the final accepting state, and let $\ket{\widetilde{\Psi}_T(x)} = \ket{\Psi_T}$.
Intuitively, if $f(x)=1$, then $\ket{\Psi_t(x)}$ and $\ket{\widetilde{\Psi}_t(x)}$ are close because the final state of the algorithm is close to $\ket{\Psi_T}$; and if $f(x)=0$, $\ket{\Psi_t(x)}$ and $\ket{\widetilde{\Psi}_t(x)}$ are nearly orthogonal, but $\ket{\Psi_t(x)}$ is close to $(I\otimes X)\ket{\widetilde{\Psi}_t(x)}$, where $X$ acts on the answer register, because the final state of the algorithm is close to $(I\otimes X)\ket{\Psi_T}$. We formalize this in \cref{lem:almost-orthog}. Crucially, the states $\ket{\widetilde{\Psi}_t(x)}$, like the $\ket{\Psi_t(x)}$, have the property that $U_{t+1}\ket{\widetilde{\Psi}_t(x)} = \ket{\widetilde{\Psi}_{t+1}(x)}$.

\begin{lemma}\label{lem:almost-orthog}
	For any clean quantum algorithm $\A$, all $x \in \{0,1\}^n$, and all $t\in [T]_0$:
	\begin{align*}
	\braket{\Psi_t(x)}{\widetilde{\Psi}_t(x)}&=p_1(x)\\
	\mbox{and}\quad \bra{\Psi_t(x)}(I\otimes X)\ket{\widetilde{\Psi}_t(x)}&=p_0(x),
	\end{align*}
	where $p_b(x)$ is the probability that $\A$ outputs $b$ on input $x$, and the Pauli $X$ in $I\otimes X$ acts on the answer register of $\A$.
\end{lemma}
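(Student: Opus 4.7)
The plan is to unfold both sides of each identity to reduce them directly to the consistency condition in the definition of clean algorithm (\cref{def:QA}), using the commutation condition to handle the $I\otimes X$. Concretely, I expect the proof to be a one-line computation for each identity, with essentially no obstacle, so the plan is really about presenting the telescoping clearly.

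For the first identity, I would simply compute
\begin{align*}
\braket{\Psi_t(x)}{\widetilde{\Psi}_t(x)}
&= \bra{\Psi_0} U_1^\dagger \cdots U_t^\dagger U_{t+1}^\dagger \cdots U_T^\dagger \ket{\Psi_T}
 = \braket{\Psi_T(x)}{\Psi_T},
\end{align*}
and then invoke the consistency condition, which gives $\braket{\Psi_T}{\Psi_T(x)} = p_1(x)$. Since $p_1(x) \in \R$, taking the complex conjugate yields $\braket{\Psi_T(x)}{\Psi_T} = p_1(x)$, which concludes this case.

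For the second identity, I would again unfold
\begin{align*}
\bra{\Psi_t(x)}(I\otimes X)\ket{\widetilde{\Psi}_t(x)}
&= \bra{\Psi_0} U_1^\dagger \cdots U_t^\dagger (I\otimes X) U_{t+1}^\dagger \cdots U_T^\dagger \ket{\Psi_T}.
\end{align*}
By the commutation property in \cref{def:QA}, $(I\otimes X)$ commutes with every $U_s$, hence also with every $U_s^\dagger$. Commuting it past $U_{t+1}^\dagger, \dots, U_T^\dagger$ (or alternatively past $U_t, \dots, U_1$) gives
\begin{align*}
\bra{\Psi_t(x)}(I\otimes X)\ket{\widetilde{\Psi}_t(x)}
&= \bra{\Psi_T(x)}(I\otimes X)\ket{\Psi_T}.
\end{align*}
The consistency condition yields $\bra{\Psi_T}(I\otimes X)\ket{\Psi_T(x)} = p_0(x)$, and since $p_0(x) \in \R$, taking the complex conjugate gives $\bra{\Psi_T(x)}(I\otimes X)\ket{\Psi_T} = p_0(x)$, completing the proof.

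The boundary case $t=T$ is covered by the same argument with no operators in between: $\braket{\Psi_T(x)}{\widetilde{\Psi}_T(x)} = \braket{\Psi_T(x)}{\Psi_T} = p_1(x)$ and similarly for the $X$-twisted version, so no separate treatment is needed. There is no obstacle here beyond ensuring both the commutation and consistency conditions from the definition of a clean algorithm are invoked correctly.
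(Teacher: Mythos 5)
Your proof is correct and follows essentially the same route as the paper: unfold $\ket{\widetilde{\Psi}_t(x)}$, reduce to $\braket{\Psi_T(x)}{\Psi_T}$ (resp.\ $\bra{\Psi_T(x)}(I\otimes X)\ket{\Psi_T}$ via the commutation property), and invoke consistency. Your explicit remark that one takes a complex conjugate using the reality of $p_b(x)$ is a detail the paper glosses over, but it is the same argument.
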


\begin{proof}
	The first statement follows from \cref{def:QA} since
	\begin{align*}
		p_1(x) &= \braket{\Psi_T(x)}{\Psi_T} = \bra{\Psi_t(x)}U_{t+1}^\dagger\dots U_T^\dagger \ket{\Psi_T} = \braket{\Psi_t(x)}{\widetilde{\Psi}_t(x)}.
	\end{align*}
	For the second statement, we use the facts, from \cref{def:QA}, that an $X$ gate on the answer register commutes with every $U_t$, and $\bra{\Psi_T}(I\otimes X)\ket{\Psi_T(x)}=p_0(x)$:
	\begin{align*}
		p_0(x) &= \bra{\Psi_T(x)}(I\otimes X)\ket{\Psi_T}=\bra{\Psi_t(x)}U_{t+1}^\dagger\dots U_T^\dagger (I\otimes X) U_T\dots U_{t+1}\ket{\widetilde{\Psi}_t(x)}\\
		&= \bra{\Psi_t(x)}(I\otimes X)\ket{\widetilde{\Psi}_t(x)},
	\end{align*}
	completing the second part of the proof.
\end{proof}

We bundle the vectors $\ket{\Psi_t(x)}$ and $\ket{\widetilde{\Psi}_t(x)}$ into a space $\H_x$ in the following definition.

\begin{definition}[Implemeting subspace for $\A$]
	\label{def:consistent_subspace}
	Let $x \in X$. We first define two subspaces of $\H$ as
	\begin{align*}
		\overline{\H}_x &= \Span\{\ket{t}\ket{0}\ket{\Psi_t(x)}:t\in [T]_0, t+1\not\in {\cal S}\}\oplus\Span\{\ket{t}\ket{+}\ket{\Psi_t(x)},\ket{t}\ket{-}\ket{\Psi_{t+1}(x)}: t\in [T]_0, t+1\in {\cal S}\}, \\
		\widetilde{\H}_x &= \Span\{\ket{t}\ket{0}\ket{\widetilde{\Psi}_t(x)}:t\in [T]_0, t+1\not\in {\cal S}\}\oplus\Span\{\ket{t}\ket{+}\ket{\widetilde{\Psi}_t(x)},\ket{t}\ket{-}\ket{\widetilde{\Psi}_{t+1}(x)}: t\in [T]_0, t+1\in {\cal S}\}.
	\end{align*}
	Next, we let
	\begin{equation}
		\H_x = \begin{cases}
			\overline{\H}_x, & \text{if } f(x) = 1, \\
			\overline{\H}_x + \widetilde{\H}_x, & \text{if } f(x) = 0.
		\end{cases}
		\label{eq:implementing_subspace}
	\end{equation}
\end{definition}

The intuition behind our use of implementing subspaces is that, for a given input $x \in X$, the state of the algorithm moves through the Hilbert space in a simple, one-dimensional path. So, given a time step $t \in [T]_0$ and an input $x \in X$, we can deduce what the corresponding state in algorithm $\A$ must be at that time step, and hence we can deduce the state in the last register of $\H$. The only difficulty arises at the query time steps, where we use the state before the query when the first two registers are in state $\ket{t}\ket{+}$, and the state after the query when the first two registers are in state $\ket{t}\ket{-}$.

Note that $\overline{\H}_x$ and $\widetilde{\H}_x$ are very close to one another when $x$ is a positive instance. This is because $\ket{\Psi_T(x)}$ is very close to $\ket{\Psi_T}$ when $f(x) = 1$. Hence, in this case it makes sense to only take one of the spaces as the implementing subspace in \cref{eq:implementing_subspace}. On the other hand, if $f(x) = 0$, the two spaces $\overline{\H}_x$ and $\widetilde{\H}_x$ are almost orthogonal, and hence we take both.

We now prove that the newly defined subspace $\H_x$ satisfies the first three properties of \cref{def:implementing_subspace}, and that it satisfies the fourth property approximately. We easily see that $\ket{0}\ket{0}\ket{\Psi_0} \in \overline{\H}_x \subseteq \H_x$, and hence the third condition is satisfied. Next, by the following lemma, we see that the reflection around $\Ker(A)$ leaves the implementing subspace invariant.

\begin{lemma}
	For all $x \in X$, we have $\Pi_{\Ker(A)} \H_x \subseteq \H_x$.
\end{lemma}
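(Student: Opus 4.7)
My plan is to exploit the explicit description of $\Ker(A)$ given in \cref{lem:kernel}, namely $\Ker(A) = \bigoplus_{\ell=2}^{S+1}\Phi_\ell(\C^{\mathcal{Z}})$, together with an orthogonal decomposition of $\H$ into ``time-block'' subspaces. Concretely, for each $\ell \in [S+1]$ let $\H^{\text{blk}}_\ell$ be the subspace spanned by $\ket{q_{\ell-1}-1}\ket{-}\C^{\mathcal{Z}}$ (for $\ell\geq 2$), by $\ket{t}\ket{0}\C^{\mathcal{Z}}$ for $t \in \B_\ell$ (together with $t=0$ for $\ell=1$ and $t=T$ for $\ell=S+1$), and by $\ket{q_\ell-1}\ket{+}\C^{\mathcal{Z}}$ (for $\ell\leq S$). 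These blocks are pairwise orthogonal because their only potential time-overlap occurs at transition times $q_\ell-1$, where consecutive blocks use orthogonal $\ket{+}$ and $\ket{-}$ components. Since $\Phi_\ell(\C^{\mathcal{Z}}) \subseteq \H^{\text{blk}}_\ell$ (this is visible from the formulas in \cref{lem:kernel}), the summands of $\Ker(A)$ are pairwise orthogonal, so $\Pi_{\Ker(A)} = \sum_{\ell=2}^{S+1}\Pi_{\Phi_\ell(\C^{\mathcal{Z}})}$. Accordingly, it suffices to show $\Pi_{\Phi_\ell(\C^{\mathcal{Z}})}(v)\in \H_x$ for every basis vector $v$ of $\H_x$ and every $\ell \in \{2,\dots,S+1\}$.

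The main computation is to evaluate $\Phi_\ell^\dagger v$ on the natural basis of $\overline{\H}_x$. For $v$ in block $\ell$ there are three (or four, if $\ell=S+1$) types: (i) $\ket{q_{\ell-1}-1}\ket{-}\ket{\Psi_{q_{\ell-1}}(x)}$; (ii) $\ket{t}\ket{0}\ket{\Psi_t(x)}$ for $t\in\{q_{\ell-1},\dots,q_\ell-2\}$; (iii) $\ket{q_\ell-1}\ket{+}\ket{\Psi_{q_\ell-1}(x)}$; and, when $\ell=S+1$, also (iv) $\ket{T}\ket{0}\ket{\Psi_T(x)}$. Using the identity $U_t\cdots U_{q_{\ell-1}+1}\ket{\Psi_{q_{\ell-1}}(x)} = \ket{\Psi_t(x)}$ together with unitarity of each $U_s$, a direct calculation shows that in every case $\Phi_\ell^\dagger v$ is a scalar multiple of $\ket{\Psi_{q_{\ell-1}}(x)}$ (for instance, case (ii) gives $\frac{1}{M}\ket{\Psi_{q_{\ell-1}}(x)}$). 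Consequently $\Pi_{\Phi_\ell(\C^{\mathcal{Z}})}(v) = c_v\,\Phi_\ell(\ket{\Psi_{q_{\ell-1}}(x)})$ for some scalar $c_v$. Unpacking the definition of $\Phi_\ell$ applied to $\ket{\Psi_{q_{\ell-1}}(x)}$ and using the same evolution identity shows that this vector equals $\tfrac{1}{\sqrt 2}\ket{q_{\ell-1}-1}\ket{-}\ket{\Psi_{q_{\ell-1}}(x)} + \tfrac{1}{M}\sum_{t=q_{\ell-1}}^{q_\ell-2}\ket{t}\ket{0}\ket{\Psi_t(x)} + \tfrac{1}{\sqrt 2}\ket{q_\ell-1}\ket{+}\ket{\Psi_{q_\ell-1}(x)}$, which is a linear combination of basis vectors of type (i)--(iii) and thus lies in $\overline{\H}_x \subseteq \H_x$. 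An identical argument with $\ket{\widetilde{\Psi}_t(x)}$ in place of $\ket{\Psi_t(x)}$, using $U_t\cdots U_{q_{\ell-1}+1}\ket{\widetilde{\Psi}_{q_{\ell-1}}(x)} = \ket{\widetilde{\Psi}_t(x)}$ (which follows immediately from the definition of $\ket{\widetilde{\Psi}_t(x)}$), handles basis vectors of $\widetilde{\H}_x$ when $f(x)=0$, showing they project into $\widetilde{\H}_x\subseteq\H_x$.

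It remains to address vectors from the first block. For $v$ in block $1$ (i.e.\ $v=\ket{0}\ket{0}\ket{\Psi_0}$, or $v=\ket{t}\ket{0}\ket{\Psi_t(x)}$ for $t\in\{1,\dots,q_1-2\}$, or $v=\ket{q_1-1}\ket{+}\ket{\Psi_{q_1-1}(x)}$, and their $\widetilde{\Psi}$ analogues), $v$ is orthogonal to every $\Phi_\ell(\C^{\mathcal{Z}})$ with $\ell\geq 2$ by the block orthogonality, hence $\Pi_{\Ker(A)}v=0\in\H_x$. This reflects the fact that $A$ is injective on block $1$, which is why \cref{lem:kernel} has no $\Phi_1$ term.

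The main obstacle will not be any single calculation but rather the bookkeeping of boundary conventions at query transitions and the asymmetry of block $S+1$, whose third term lives at the terminal time $T$ with coefficient $1/a$ (not $1/M$); one needs to verify separately that the extra basis vector $\ket{T}\ket{0}\ket{\Psi_T(x)}$ (respectively $\ket{T}\ket{0}\ket{\widetilde{\Psi}_T(x)}$) still projects to a scalar multiple of $\Phi_{S+1}(\ket{\Psi_{q_S}(x)})$ (respectively $\Phi_{S+1}(\ket{\widetilde{\Psi}_{q_S}(x)})$). Once this case analysis is laid out cleanly, the argument reduces to the routine unitarity computations sketched above.
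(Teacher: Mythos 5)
Your proposal is correct and follows essentially the same route as the paper: both use the decomposition $\Ker(A)=\bigoplus_{\ell=2}^{S+1}\Phi_\ell(\C^{\mathcal{Z}})$ from the kernel lemma, compute $\Phi_\ell^\dagger$ on basis vectors of $\overline{\H}_x$ (resp.\ $\widetilde{\H}_x$) to get a scalar multiple of $\ket{\Psi_{q_{\ell-1}}(x)}$ (resp.\ $\ket{\widetilde{\Psi}_{q_{\ell-1}}(x)}$), and observe that $\Phi_\ell$ applied to that state lands back in $\overline{\H}_x$ (resp.\ $\widetilde{\H}_x$). Your treatment is somewhat more explicit about the pairwise orthogonality of the blocks and the first-block and terminal-block edge cases, but the underlying argument is identical.
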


\begin{proof}
	Let $\ket{h} \in \H_x$. Observe from \cref{lem:kernel} that
	\[\Pi_{\Ker(A)} = \sum_{\ell=2}^S \frac{\Phi_{\ell}\Phi_{\ell}^{\dagger}}{1 + \frac{q_{\ell}-q_{\ell-1}-1}{M^2}} + \frac{\Phi_{S+1}\Phi_{S+1}^{\dagger}}{\frac12 + \frac{T-q_S}{M^2} + \frac{1}{a^2}}.\]
	Moreover, for any $\ell \in \{2,\dots,S+1\}$, we have that $\Phi_{\ell}\ket{\Psi_{q_{\ell-1}}(x)} \in \overline{\H}_x$, and $\Phi_{\ell}\ket{\widetilde{\Psi}_{q_{\ell-1}}(x)} \in \widetilde{\H}_x$. Thus, if $\ket{h} \in \overline{\H}_x$, we have for some constants $\alpha_t \in \C$ with $t \in [q_{\ell-1}-1, q_{\ell}-1]$,
	\begin{align*}
		\Phi_{\ell}^{\dagger}\ket{h} &= \Phi_{\ell}^{\dagger}\left[\alpha_{q_{\ell-1}-1}\ket{q_{\ell-1}-1}\frac{\ket{-}}{\sqrt{2}}\ket{\Psi_{q_{\ell-1}}(x)} + \sum_{t=q_{\ell-1}}^{q_{\ell}-2} \alpha_t \ket{t}\ket{0}\ket{\Psi_t(x)} + \alpha_{q_{\ell}-1}\ket{q_{\ell}-1}\frac{\ket{+}}{\sqrt{2}}\ket{\Psi_{q_{\ell}-1}(x)}\right] \\
		&= \left[\frac{1}{\sqrt{2}}\alpha_{q_{\ell-1}-1} + \frac{1}{M}\sum_{t=q_{\ell-1}}^{q_{\ell}-2} \alpha_t + \frac{1}{\sqrt{2}}\alpha_{q_{\ell}-1}\right]\ket{\Psi_{q_{\ell-1}}(x)},
	\end{align*}
	and similarly if $\ket{h} \in \widetilde{\H}_x$ then $\Phi_{\ell}^{\dagger}\ket{h}$ is a multiple of $\ket{\widetilde{\Psi}_{q_{\ell-1}}(x)}$. In particular, by linearity this implies that $\Pi_{\Ker(A)}\ket{h}$ is in $\overline{\H}_x$ if $\ket{h} \in \overline{\H}_x$ and in $\widetilde{\H}_x$ if $\ket{h} \in \widetilde{\H}_x$. Linearity for the case where $f(x) = 0$ completes the proof.
\end{proof}

Next, we show that the reflection around $\H(x)$ also leaves the implementing subspace invariant.

\begin{lemma}
	For all $x \in X$, we have $\Pi_{\H(x)}\H_x \subseteq \H_x$.
\end{lemma}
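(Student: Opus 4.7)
The plan is to verify that $\Pi_{\H(x)}$ preserves a spanning set of $\H_x$, and then invoke linearity. The spanning vectors of $\H_x$ from \cref{def:consistent_subspace} split by time step type. First, for non-query indices ($t+1 \notin \S$), the vectors $\ket{t}\ket{0}\ket{\Psi_t(x)}$ and $\ket{t}\ket{0}\ket{\widetilde{\Psi}_t(x)}$ have $b=0$ in the second register with $t+1 \notin \S$, so by the definition of $\Htrue$ in \cref{eq:alg-span} they already lie in $\Htrue \subseteq \H(x)$. Thus $\Pi_{\H(x)}$ fixes them, and they trivially remain in $\overline{\H}_x \subseteq \H_x$ and $\widetilde{\H}_x \subseteq \H_x$ respectively.

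The substantive case is $t+1 \in \S$, where I would use the key identity
\[
\Pi_{\H(x)}\bigl(\ket{t}\bra{t}\otimes \I \otimes \I\bigr) = \ket{t}\bra{t}\otimes \tfrac{1}{2}\bigl(\I\otimes \I + Z\otimes \O_x\bigr),
\]
with $Z$ acting on the $b$ register and $\O_x$ on the $(i,j)$ register. This follows from rewriting $\delta_{b,x_i} = \tfrac{1}{2}(1 + (-1)^{b+x_i})$ on each computational basis state $\ket{t,b,i,j}$. Since $U_{t+1} = \O_x$ at a query step, we have $\O_x\ket{\Psi_t(x)} = \ket{\Psi_{t+1}(x)}$, and $\O_x\ket{\widetilde{\Psi}_t(x)} = \ket{\widetilde{\Psi}_{t+1}(x)}$ because $\ket{\widetilde{\Psi}_{t+1}(x)} = U_{t+1}\ket{\widetilde{\Psi}_t(x)}$ by definition. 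Combined with $Z\ket{\pm} = \ket{\mp}$ and $\O_x^2 = \I$, the identity yields
\[
\Pi_{\H(x)}\ket{t}\ket{+}\ket{\Psi_t(x)} = \tfrac{1}{2}\ket{t}\ket{+}\ket{\Psi_t(x)} + \tfrac{1}{2}\ket{t}\ket{-}\ket{\Psi_{t+1}(x)},
\]
and the same expression for $\Pi_{\H(x)}\ket{t}\ket{-}\ket{\Psi_{t+1}(x)}$; both right-hand sides are manifestly in $\overline{\H}_x$. The analogous computation with $\ket{\widetilde{\Psi}}$ replacing $\ket{\Psi}$ yields outputs in $\widetilde{\H}_x$.

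Putting the cases together, $\Pi_{\H(x)}$ maps every spanning vector of $\overline{\H}_x$ into $\overline{\H}_x$ and every spanning vector of $\widetilde{\H}_x$ into $\widetilde{\H}_x$. By linearity, $\Pi_{\H(x)}\overline{\H}_x \subseteq \overline{\H}_x$ and $\Pi_{\H(x)}\widetilde{\H}_x \subseteq \widetilde{\H}_x$. Recalling from \cref{eq:implementing_subspace} that $\H_x$ is either $\overline{\H}_x$ (if $f(x)=1$) or $\overline{\H}_x + \widetilde{\H}_x$ (if $f(x)=0$), in both cases we conclude $\Pi_{\H(x)}\H_x \subseteq \H_x$. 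There is no real obstacle here; the only point that requires a moment's care is the factorization of $\Pi_{\H(x)}$ on query-step subspaces, which makes the closed-form action transparent and ensures the $\O_x$ that appears pairs exactly with $U_{t+1}$, closing $\overline{\H}_x$ and $\widetilde{\H}_x$ onto themselves.
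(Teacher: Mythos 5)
Your proof is correct and follows essentially the same route as the paper's: verify the claim on the spanning vectors of $\overline{\H}_x$ and $\widetilde{\H}_x$, noting that non-query vectors lie in $\Htrue \subseteq \H(x)$ and that at query steps the projector exchanges weight between $\ket{t}\ket{+}\ket{\Psi_t(x)}$ and $\ket{t}\ket{-}\ket{\Psi_{t+1}(x)}$ via $\O_x$, then conclude by linearity. Your factorization $\Pi_{\H(x)} = \tfrac12(\I\otimes\I + Z\otimes\O_x)$ on the query blocks is just the projector form of the paper's computation that $2\Pi_{\H(x)}-\I$ swaps these two vectors, so the two arguments are interchangeable.
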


\begin{proof}
	Let $\ket{h} \in \H_x$. We will treat the cases where $\ket{h} \in \overline{\H}_x$ and $\ket{h} \in \widetilde{\H}_x$ separately. The result then follows by linearity.

	Suppose that $\ket{h} \in \overline{\H}_x$. If $\ket{h} = \ket{t}\ket{0}\ket{\Psi_t(x)}$ where $t + 1 \not\in \S$, then $\ket{h} \in \H(x)$ and hence $\Pi_{\H(x)}\ket{h} = \ket{h} \in \overline{\H}_x$. On the other hand, if $\ket{h} = \ket{t}\ket{+}\ket{\Psi_t(x)}$, then
	\begin{align*}
		\left(2\Pi_{\H(x)} - \I\right)\ket{t}\ket{+}\ket{\Psi_t(x)} &= \left(2\Pi_{\H(x)} - \I\right)\ket{t}\ket{+}\sum_{i \in [n], j \in \mathcal{W}}\alpha_{i,j}\ket{i,j} = \ket{t}\sum_{i \in [n], j \in \mathcal{W}} \left(\sqrt{2}\ket{x_i} - \ket{+}\right)\ket{i,j} \\
		&= \ket{t}\sum_{i \in [n], j \in \mathcal{W}} (-1)^{x_i}\ket{-}\ket{i,j} = \ket{t}\ket{-} \sum_{i \in [n], j \in \mathcal{W}} \O_x \ket{i,j} = \ket{t}\ket{-}(\O_x \otimes \I)\ket{\Psi_t(x)} \\
		&= \ket{t}\ket{-}\ket{\Psi_{t+1}(x)} \in \overline{\H}_x.
	\end{align*}
	Similarly, $(2\Pi_{\H(x)}-\I)\ket{t}\ket{-}\ket{\Psi_{t+1}(x)} = \ket{t}\ket{+}\ket{\Psi_t(x)} \in \overline{\H}_x$. Hence, the image of $\overline{\H}_x$ under $2\Pi_{\H(x)} - \I$ is contained in $\overline{\H}_x$, from which we can deduce that this also holds for $\Pi_{\H(x)}$.
	With a similar argument, we can also prove that $\Pi_{\H(x)}$ maps $\widetilde{\H}_x$ to $\widetilde{\H}_x$. By linearity, we can conclude the proof.
\end{proof}

Now, we turn to the fourth property of implementing subspaces from \cref{def:implementing_subspace} and prove in \cref{lem:approximate_implementing_subspace} that we only satisfy it approximately. This approximation will be used in \cref{lem:TB} to construct a subroutine $\Cal C_{\ket{w_0}}$ that prepares a normalized version of the witness $\ket{w_0}$.

\begin{lemma}
	\label{lem:approximate_implementing_subspace}
	$\norm{\Pi_{\H_x^{\perp}}\ket{w_0}} \leq \sqrt{2\varepsilon}\norm{\ket{w_0}}$.
\end{lemma}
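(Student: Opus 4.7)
The plan is to split into the cases $f(x)=0$ and $f(x)=1$, inspecting the five terms of the explicit formula for $\ket{w_0}$ given in \cref{lem:w0}. In the first case I would argue $\ket{w_0}\in\H_x$ outright; in the second, I would replace each occurrence of $\ket{\widetilde\Psi_t(x)}$ by $\ket{\Psi_t(x)}$ and show the error introduced is controlled by $\norm{\ket{\Psi_T}-\ket{\Psi_T(x)}}^2\leq 2\varepsilon$.

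For $f(x)=0$, we have $\H_x=\overline{\H}_x+\widetilde{\H}_x$. Using $\tfrac12\ket0\pm\tfrac12\ket1=\tfrac1{\sqrt2}\ket{\pm}$ and the facts that $U_t\cdots U_1\ket{\Psi_0}=\ket{\Psi_t(x)}$ for $t<q_1$ and $U_{t+1}^\dagger\cdots U_T^\dagger\ket{\Psi_T}=\ket{\widetilde\Psi_t(x)}$ for $t\geq q_S$ (no queries occur in either range), the first two terms of $\ket{w_0}$ manifestly lie in $\overline{\H}_x$ and the last three in $\widetilde{\H}_x$, giving $\Pi_{\H_x^\perp}\ket{w_0}=0$.

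For $f(x)=1$, we have $\H_x=\overline{\H}_x$. Define $\ket{w_0^{\overline{\H}_x}}$ by replacing each $\ket{\widetilde\Psi_t(x)}$ above by $\ket{\Psi_t(x)}$, and let $\ket r=\ket{w_0}-\ket{w_0^{\overline{\H}_x}}$. The same inspection as in the previous paragraph, but applied to the modified last three terms, shows $\ket{w_0^{\overline{\H}_x}}\in\overline{\H}_x=\H_x$, so $\norm{\Pi_{\H_x^\perp}\ket{w_0}}\leq\norm{\ket r}$. The residual $\ket r$ is an orthogonal sum (since the $\ket{t}$ registers are distinct) of three groups of terms of the form $(\text{coeff})\cdot\ket t\ket{\cdot}\bigl(\ket{\widetilde\Psi_t(x)}-\ket{\Psi_t(x)}\bigr)$, with coefficients $\tfrac1{\sqrt2(Ca^2+1)}$ (one term at $t=q_S-1$), $\tfrac1{(Ca^2+1)M}$ (for $t\in\{q_S,\dots,T-1\}$), and $\tfrac{Ca}{Ca^2+1}$ (one term at $t=T$). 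Since $\ket{\widetilde\Psi_t(x)}-\ket{\Psi_t(x)}=(U_T\cdots U_{t+1})^\dagger(\ket{\Psi_T}-\ket{\Psi_T(x)})$, each such difference has norm $\norm{\ket{\Psi_T}-\ket{\Psi_T(x)}}$, and cleanness together with $p_1(x)\geq 1-\varepsilon$ gives $\norm{\ket{\Psi_T}-\ket{\Psi_T(x)}}^2=2(1-p_1(x))\leq 2\varepsilon$. Therefore
\[
\norm{\ket r}^2\leq 2\varepsilon\cdot\frac{\tfrac12+\tfrac{T-q_S}{M^2}+C^2a^2}{(Ca^2+1)^2}
=2\varepsilon\cdot\frac{C+C^2a^2}{(Ca^2+1)^2}=\frac{2\varepsilon C}{Ca^2+1},
\]
where I used $\tfrac12+\tfrac{T-q_S}{M^2}=C$ from \cref{lem:w0}. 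Since the same lemma gives $N=\norm{\ket{w_0}}^2=\tfrac{q_1-1}{M^2}+\tfrac12+\tfrac{C}{Ca^2+1}\geq\tfrac{C}{Ca^2+1}$, I conclude $\norm{\ket r}^2\leq 2\varepsilon N=2\varepsilon\norm{\ket{w_0}}^2$, which is the desired bound.

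The only real obstacle is the bookkeeping: correctly matching each block of $\ket{w_0}$ to the $\ket{0}$/$\ket{+}$/$\ket{-}$ basis states appearing in the definition of $\overline{\H}_x$ and $\widetilde{\H}_x$, and then recognising the algebraic identity $\tfrac12+\tfrac{T-q_S}{M^2}=C$ that collapses $C+C^2a^2=C(Ca^2+1)$ so that one factor of $Ca^2+1$ cancels and the final ratio reduces cleanly to $\tfrac{C}{Ca^2+1}\leq N$.
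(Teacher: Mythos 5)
Your proposal is correct and follows essentially the same route as the paper: in the $f(x)=0$ case one observes $\ket{w_0}\in\overline{\H}_x+\widetilde{\H}_x$ directly, and in the $f(x)=1$ case one compares $\ket{w_0}$ to the state obtained by replacing each $\ket{\widetilde\Psi_t(x)}$ with $\ket{\Psi_t(x)}$ (the paper's $\ket{w_0'}$), bounding the residual via $\norm{\ket{\Psi_T}-\ket{\Psi_T(x)}}^2=2(1-p_1(x))\leq 2\varepsilon$ and the identity $C+C^2a^2=C(Ca^2+1)$. Your explicit final step $\tfrac{C}{Ca^2+1}\leq N$ is exactly what the paper leaves implicit in its last inequality.
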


\begin{proof}
	If $f(x) = 0$, then $\ket{w_0}$ (as given by \cref{lem:w0}) is exactly contained in $\H_x$ and the left-hand side is zero.
	If $f(x) = 1$, we can find the following state $\ket{w_0'}  \in \overline{\H}_x$ that is sufficiently close to $\ket{w_0}$:
	\begin{align*}
		\ket{w_0'} &= \frac{1}{M}\sum_{t=0}^{q_1-2} \ket{t}\ket{0}\ket{\Psi_t(x)} + \ket{q_1-1}\left(\frac12\ket{0} + \frac12\ket{1}\right)\ket{\Psi_{q_1-1}(x)}, \\
		&+ \frac{1}{Ca^2+1}\left[\ket{q_S-1}\left(\frac12\ket{0}-\frac12\ket{1}\right)\ket{\Psi_{q_S}(x)} + \frac{1}{M} \sum_{t=q_S}^{T-1} \ket{t}\ket{0}\ket{\Psi_t(x)}\right] \\
		&- \frac{Ca}{Ca^2+1} \ket{T}\ket{0} \ket{\Psi_T(x)} \qquad \text{where} \qquad C = \frac{T-q_S}{M^2} + \frac12.
	\end{align*}
	We immediately see that $\ket{w_0'} \in \overline{\H}_x$, as every term is an element of $\overline{\H}_x$. Moreover,
	\begin{align*}
		\norm{\ket{w_0'} - \ket{w_0}}^2 &= \left(\frac{1}{Ca^2 + 1}\right)^2 \cdot \left[\frac12\norm{\ket{\Psi_{q_S}(x)} - \ket{\widetilde{\Psi}_{q_S}(x)}}^2 + \frac{1}{M^2}\sum_{t=q_S}^{T-1} \norm{\ket{\Psi_t(x)} - \ket{\widetilde{\Psi}_t(x)}}^2\right] \\
		&\;\;\;\; + \left(\frac{Ca}{Ca^2 + 1}\right)^2 \norm{\ket{\Psi_T(x)} - \ket{\widetilde{\Psi}_T(x)}}^2 \\
		&= \left\{\left(\frac{1}{Ca^2+1}\right)^2 \cdot \left[\frac12 + \frac{T-q_S}{M^2}\right] + \left(\frac{Ca}{Ca^2+1}\right)^2\right\}\left(2-2p_1(x)\right) \leq 2\varepsilon\norm{\ket{w_0}}^2,
	\end{align*}
	completing the proof.
\end{proof}

\subsubsection{Implementation of the subroutines}
\label{sec:implementation}

The following four lemmas give sufficiently precise and efficient implementations of the subroutines in \cref{thm:span-program-time}. The proof of \cref{thm:main-time} will follow thereafter.

\begin{lemma}
	\label{lem:TA}
	\begin{restatable}{Iwanttodisregardthisfield}{lemTA}
		Let $\A$ be a clean quantum query algorithm with query complexity $S$ and time complexity $T$. Let $P_{\A}= (\H,\V,A,\ket{\tau})$ be the span program for this algorithm, as in \cref{def:PA}. Then the reflection $2\Pi_{\ker A}-\I$ can be implemented to precision $\delta > 0$ with $\O(T/S)$ calls to $\O_\A$ and $\O_\S$, $\O(\mathrm{polylog}(T))$ auxiliary qubits and a number of extra gates that satisfies
		\[\O\left(\frac{T}{S}\mathrm{polylog}\left(T,\frac{1}{\delta}\right)\right).\]
	\end{restatable}
\end{lemma}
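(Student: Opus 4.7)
My plan is to exploit the orthogonal decomposition $\Ker(A)=\bigoplus_{\ell=2}^{S+1}\Phi_\ell(\C^{\mathcal{Z}})$ from \cref{lem:kernel}. The summands are mutually orthogonal and the norms $\|\Phi_\ell\|$ are simple closed-form expressions ($\|\Phi_\ell\|\leq\sqrt{2}$ for $\ell\leq S$, and $\|\Phi_{S+1}\|^2=\tfrac12+(T-q_S)/M^2+1/a^2$), so the rescaling $V\ket{\ell,\psi}:=\Phi_\ell\ket{\psi}/\|\Phi_\ell\|$ is an isometry whose image is exactly $\Ker(A)$. I will build $V$ as a unitary on $\H$ together with $O(\mathrm{polylog}(T))$ ancillas, and then realize the reflection as a sandwich $V\,R\,V^{\dagger}$, where $R$ is a cheap reflection around a ``canonical source'' subspace characterized by the time register holding an element of $\S$, the bit register being in $\ket{0}$, and every ancilla being in $\ket{0}$. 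The key feature of $R$ is that it does \emph{not} inspect the $\mathcal{Z}$ register, which is what keeps the construction compatible with the implementing-subspace framework of \cref{thm:span-program-time} and avoids a spurious $\log|\W|$ overhead.

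Indexing blocks by $t_0=q_{\ell-1}\in\S$, the circuit for $V$ on canonical input $\ket{t_0,0,\psi}_{\H}\ket{0}_{\mathrm{anc}}$ proceeds in four phases. (i) \emph{Block localization}: starting from $T=t_0$, walk forward one step at a time using $\O_\S$ to detect the next query time $q_\ell$ (or $T$ itself, for $\ell=S+1$), and record the block length $L$ in an $O(\log T)$-bit ancilla; by query-uniformity (\cref{def:QA}) this terminates within $\lfloor 3T/S\rfloor$ calls to $\O_\S$. (ii) \emph{Amplitude preparation}: using $L$, rotate $(T,B)$ into the superposition $\tfrac{1}{\|\Phi_\ell\|}\bigl(\tfrac{1}{\sqrt{2}}\ket{t_0-1}\ket{-}+\tfrac{1}{M}\sum_{s=t_0}^{q_\ell-2}\ket{s}\ket{0}+\tfrac{1}{\sqrt{2}}\ket{q_\ell-1}\ket{+}\bigr)$ (with the analogous variant involving $\tfrac{1}{a}\ket{T}\ket{0}$ for $\ell=S+1$); because the superposition has only $O(1)$ distinct amplitudes over $O(T/S)$ indices, it is preparable to precision $\delta$ in $O(\mathrm{polylog}(T,1/\delta))$ gates. (iii) \emph{Walk the circuit}: initialize a counter $C=t_0$ and iterate $i=1,\ldots,\lfloor 3T/S\rfloor$, at each step incrementing $C$ and applying $\O_\A$ to $(C,\mathcal{Z})$ conditioned on both $C\leq T$ (a comparison against the time register) and $i\leq L$. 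Because $\mathcal{Z}$ serves simultaneously as input and output---no separate workspace is introduced---the $T=s$ branch of the superposition sees exactly the unitaries $U_{t_0+1},\ldots,U_s$ applied, turning $\mathcal{Z}$ into $U_s\cdots U_{t_0+1}\ket{\psi}$. (iv) \emph{Ancilla cleanup}: subtract $L$ from $C$ via $O(\log T)$ arithmetic, and undo phase (i) to clear $L$ using another $O(T/S)$ calls to $\O_\S$.

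Summing costs across $V$ and $V^{\dagger}$ plus the middle reflection $R$, the total is $O(T/S)$ calls each to $\O_\A$ and $\O_\S$, $O((T/S)\mathrm{polylog}(T,1/\delta))$ extra gates (dominated by counter arithmetic, the $C\leq T$ comparisons, and amplitude preparation), and $O(\mathrm{polylog}(T))$ auxiliary qubits; $R$ itself contributes only one $\O_\S$ query and $O(\mathrm{polylog}(T))$ Toffoli-style gates for checking the canonical-source flag. A routine calculation then shows that $V^\dagger$ maps $\Ker(A)$ (with clean ancillas) into the canonical source subspace while mapping its orthogonal complement outside, so conjugating $R$ by $V$ implements $2\Pi_{\Ker(A)}-\I$ exactly where required.

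The main obstacle I anticipate is keeping $V$ a genuine unitary on the ancilla-zero subspace, given that the parameter $t_0$ must remain accessible across phases (ii)--(iv) while the time register $T$ is itself transformed into a superposition during phase (ii). My resolution is to CNOT-copy $t_0$ into a dedicated ancilla at the outset, use the copy to parametrize all subsequent phases, and clear it at the very end by first inverse-amplitude-preparing $(T,B)$ (using the still-live $L$-ancilla) and then XORing $T$ into the $t_0$-ancilla. Making the inverse amplitude preparation go through despite the entanglement of $(T,B)$ with $\mathcal{Z}$ generated in phase (iii) is the most delicate step; a suitable choice of amplitude-preparation circuit---e.g.\ one that first writes the offset $s-t_0\in\{-1,\ldots,L-1\}$ into a dedicated offset ancilla in superposition and only then performs the controlled addition $T\!\gets\! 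T+\text{offset}$---allows the inverse to be executed as two clean reversible subroutines, isolating the entanglement issue to the local offset ancilla that can then be cleared against $T-t_0\text{-anc}$.
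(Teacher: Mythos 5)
Your high-level architecture is the same as the paper's: both use the decomposition $\Ker(A)=\bigoplus_{\ell=2}^{S+1}\Phi_\ell(\C^{\mathcal Z})$ from \cref{lem:kernel}, build a unitary mapping a cheap ``canonical'' subspace (indexed by the query positions, not touching the $\mathcal Z$ register) onto $\Ker(A)$, and conjugate a one-$\O_\S$-query reflection by it. The cost accounting is also right. The difference is in how the isometry is realized: the paper interleaves the amplitude creation and the application of the $U_t$'s via ``splitting maps'' (one two-level rotation plus one controlled $\O_\A$ call per time step), after first computing the block endpoints with left/right block oracles $\O_L,\O_R$; you instead do a two-pass version (prepare the whole time superposition, then walk the circuit). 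That separation is fine in principle, but it is exactly what creates the problem you run into at the end.

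The genuine gap is in the ancilla cleanup, which you correctly identify as the delicate step but resolve incorrectly. At the end of your phase (iii) the time register is in a superposition over the block and is \emph{entangled with} $\mathcal Z$ (the branch $\ket{s}$ carries $U_s\cdots U_{t_0+1}\ket{\psi}$). ``Inverse-amplitude-preparing $(T,B)$ and then XORing $T$ into the $t_0$-ancilla'' cannot work: no unitary acting only on $(T,B)$ and the ancillas can collapse $T$ back to $t_0$ without destroying the output superposition, and without collapsing it you cannot clear the $t_0$-ancilla by XORing against $T$, since $T\neq t_0$ on most branches. The offset-ancilla trick clears the \emph{offset} register (because the offset equals $T-t_0$ branchwise), but leaves the $t_0$- and $L$-ancillas dirty. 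The correct fix---and what the paper does---is to uncompute these ancillas by recomputing the block endpoints $q_{\ell-1}-1$ and $q_\ell-1$ as functions of the \emph{current} (superposed) time register, using the facts that the state remains supported within a single query block throughout and that the $\ket{\pm}$ bit disambiguates which block a boundary time step belongs to; this is another $\O(T/S)$-cost walk with $\O_\S$ (the inverses of $\O_L$ and $\O_R$), and it also supplies the block-localization that your phase (i), as written, only defines when the time register sits exactly at $t_0\in\S$. With that replacement your construction goes through; without it, $V$ does not return the ancillas to $\ket{0}$ on $\Ker(A)$, so $VRV^\dagger$ is not the claimed reflection.
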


\cref{lem:TA} is proven in \cref{app:TA}.

\begin{lemma}
	\label{lem:TC}
	\begin{restatable}{Iwanttodisregardthisfield}{lemTC}
		Let $\A$ be a clean quantum query algorithm with query complexity $S$, time complexity $T$, and error probability $\varepsilon$. Let $P_{\A}= (\H,\V,A,\ket{\tau})$ be the span program for this algorithm, as in \cref{def:PA}.
		Then the reflection $2\Pi_{\H(x)}-\I$ can be implemented with $\O(1)$ calls to $\O_x$ and $\O_\S$ and auxiliary qubits, and $\O(\mathrm{polylog}(T))$ extra gates.
	\end{restatable}
\end{lemma}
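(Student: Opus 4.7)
The plan is to compute a one-qubit flag indicating membership in $\H(x)$, apply $Z$ to the flag, and uncompute. Reading off \cref{eq:alg-span,eq:H}, a computational basis state $\ket{t,b,i,j}\in\H$ lies in $\H(x)$ iff either $t+1\notin\S$ (in which case $b=0$ is forced and the state lies in $\Htrue$) or $t+1\in\S$ and $b=x_i$ (in which case it lies in $\H_{i,x_i}$). Equivalently, a state of $\H$ is \emph{outside} $\H(x)$ iff both $t+1\in\S$ and $b\neq x_i$: a simple Boolean predicate on $(t,b,i,x)$ computable from one bit of $\O_\S$ and one bit of $\O_x$.

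Concretely, I would allocate two flag qubits $c_1,c_2$, initially in $\ket{0}$, and execute: (i) write $c_1\leftarrow [t+1\in\S]$ by applying an in-place reversible increment to the $t$-register, then converting the phase oracle $\O_\S$ into a bit oracle targeting $c_1$ by the standard Hadamard-sandwich $(\I\otimes H)\,(\text{controlled-}\O_\S)\,(\I\otimes H)$, and finally performing the inverse increment to restore $t$; (ii) write $c_2\leftarrow x_i$ analogously with $\O_x$ on the $i$-register; (iii) apply a phase $-1$ conditioned on $c_1=1\wedge b\neq c_2$, which can be done by computing this predicate into a fresh bit with $\O(1)$ Toffoli/CNOT gates and applying $Z$; (iv) uncompute step (iii), then $c_2$, then $c_1$ by reversing steps (i)--(ii).

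Since the overall circuit deterministically multiplies each basis vector of $\H$ by $-1$ when it lies outside $\H(x)$ and by $+1$ when it lies inside, the implemented unitary equals $2\Pi_{\H(x)}-\I$ exactly. Counting resources: $\O(1)$ calls to each of $\O_x$ and $\O_\S$ (one controlled call per Hadamard-sandwich, together with its uncomputation), $\O(1)$ auxiliary qubits (the flag bits plus the constant workspace needed for an in-place increment), and $\O(\mathrm{polylog}(T))$ extra gates dominated by the increment/decrement of the $\lceil\log(T+2)\rceil$-bit $t$-register. The only mildly non-trivial ingredient is a reversible in-place incrementer with $\O(1)$ ancillas, which is standard; the rest is routine bookkeeping, and no approximation is incurred.
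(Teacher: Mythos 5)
Your proposal is correct and follows essentially the same route as the paper: characterize $\H(x)^{\perp}=\Span\{\ket{t,1-x_i,i,j}:t+1\in\S\}$, flag whether $t+1\in\S$ via an increment, a call to $\O_\S$, and a decrement, then impose the phase $-1$ exactly on that subspace using one query to $\O_x$, and uncompute. The only cosmetic difference is that you convert $\O_x$ into a bit oracle and compare with $b$, while the paper uses the phase $(-1)^{x_i}$ directly together with a $Z$ on the $b$-register; both cost $\O(1)$ queries and the same $\O(\mathrm{polylog}(T))$ gates for the time-register arithmetic.
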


\cref{lem:TC} is proven in \cref{app:TC}.

\begin{lemma}
	\label{lem:TB}
	\begin{restatable}{Iwanttodisregardthisfield}{lemTB}
		Let $\A$ be a clean quantum query algorithm with query complexity $S$, time complexity $T$, and error probability $\varepsilon$. Let $P_{\A}= (\H,\V,A,\ket{\tau})$ be the span program for this algorithm, as in \cref{def:PA}. Let $\Cal C_{\ket{w_0}}$ be a unitary that maps $\ket{0}\ket{0}\ket{\Psi_0}$ to $\ket{w_0}/\norm{\ket{w_0}}$ and approximately preserves the implementing subspace in the sense that
		\[\sup_{\substack{\ket{h} \in \H_x \\ \norm{\ket{h}} = 1}} \norm{\Pi_{\H_x^{\perp}} \Cal C_{\ket{w_0}} \ket{h}} \leq 2\sqrt{2\varepsilon}.\]
		We can implement such $\Cal C_{\ket{w_0}}$ up to error $\delta > 0$ in the operator norm with $\O(T/S)$ calls to $\O_{\A}$ and $\O_{\S}$, $\O(1)$ auxiliary qubits and a number of gates that satisfies
		\[\O\left(\frac{T}{S}\mathrm{polylog}\left(T,\frac{1}{\delta}\right)\right).\]
	\end{restatable}
\end{lemma}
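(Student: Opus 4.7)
The plan is to directly realize the explicit form of $\ket{w_0}$ given by \cref{lem:w0}, which decomposes into a \emph{forward} part supported on times $t \in \{0,\ldots,q_1-1\}$ and built from the fixed state $\ket{\Psi_0}$ by applying $U_1,\ldots,U_{q_1-1}$, a \emph{backward} part supported on times $t \in \{q_S-1,\ldots,T-1\}$ and built from $\ket{\Psi_T}$ by applying $U_T^\dagger,\ldots,U_{q_S+1}^\dagger$, and a terminal term at $t=T$ proportional to $\ket{T}\ket{0}\ket{\Psi_T}$. Each of the two blocks of non-query gates involved has length at most $\lfloor 3T/S\rfloor$ by the query-uniformity condition in \cref{def:QA}, which is what will yield the $\O(T/S)$ oracle-call bound.

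I would implement $\Cal C_{\ket{w_0}}$ with the outer ``guarding'' trick of \cref{lem:state-comb}: first use one call to $\Cal R_{\ket{0}}$ to record into an auxiliary qubit whether the input register is in the state $\ket{0}\ket{0}\ket{\Psi_0}$, perform the actual construction controlled on this flag, and then uncompute the flag with a second call to $\Cal R_{\ket{0}}$. Inside the conditional block, I would proceed in three steps. \textbf{Step 1:} prepare an internal flag distinguishing ``forward'' and ``backward'' branches with amplitudes dictated by \cref{lem:w0}, and conditionally apply $I \otimes X$ on the answer qubit to convert $\ket{\Psi_0}=\ket{\psi_0}\ket{0}$ into $\ket{\Psi_T}=\ket{\psi_0}\ket{1}$ on the backward branch. \textbf{Step 2:} prepare the time register in superposition over $\{0,\ldots,q_1-1\}$ on the forward branch and $\{q_S-1,\ldots,T\}$ on the backward branch, with the weights $\tfrac{1}{M}$, $\tfrac{1}{2}$, $\tfrac{1}{M(Ca^2+1)}$, $\tfrac{Ca}{Ca^2+1}$ prescribed by \cref{lem:w0}, and prepare the second register in $\ket{0}$, $\ket{+}$, or $\ket{-}$ controlled on the time register (only $t=q_1-1$ and $t=q_S-1$ contribute non-$\ket{0}$ values). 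Both preparations are standard amplitude-encoding circuits costing $\O(\mathrm{polylog}(T,1/\delta))$ gates for precision $\delta$. \textbf{Step 3:} apply the forward evolution by iterating $s=1,2,\ldots,q_1-1$ and, at each step, applying $U_s$ via $\O_\A$ controlled on both the forward flag and the comparator $[t\geq s]$; symmetrically, apply the backward evolution by iterating $s=T,T-1,\ldots,q_S+1$ with $U_s^\dagger$ controlled on the backward flag and $[t\leq s-1]$. Each iteration costs a single call to $\O_\A$ (plus $\O_\S$ to confirm the non-query status) and $\O(\log T)$ extra gates, and the total over both branches is $\O(T/S)$ calls. Finally, the internal forward/backward flag can be uncomputed by testing whether the time register lies in $\{0,\ldots,q_1-1\}$ or $\{q_S-1,\ldots,T\}$, which are disjoint ranges for $S\geq 2$.

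For the implementing-subspace clause, the outer $\Cal R_{\ket{0}}$-guard ensures that the circuit acts as the identity on the orthogonal complement of $\ket{0}\ket{0}\ket{\Psi_0}$ inside $\H_x$, so the only contribution to $\Pi_{\H_x^\perp}\Cal C_{\ket{w_0}}\ket{h}$ comes from the $\ket{0}\ket{0}\ket{\Psi_0}$-component of $\ket{h}$, whose image is $\ket{w_0}/\norm{\ket{w_0}}$. By \cref{lem:approximate_implementing_subspace} the orthogonal component of $\ket{w_0}/\norm{\ket{w_0}}$ is at most $\sqrt{2\eps}$, and a triangle inequality combining this with the $\O(\delta)$ error from the amplitude-preparation step yields the required $2\sqrt{2\eps}$ bound once $\delta$ is chosen sufficiently small.

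The main obstacle will be the implementing-subspace claim: without the $\Cal R_{\ket{0}}$-guard, a naive controlled-$\O_\A$ construction would, on an input such as $\ket{t}\ket{0}\ket{\Psi_t(x)}$ with $t$ in the ``middle'' range where the workspace states of $\H_x$ genuinely depend on $x$, apply sequences $U_{t'}\cdots U_1$ to $\ket{\Psi_t(x)}$ for many values of $t'$, producing vectors that are typically outside $\H_x$. The guarding trick from \cref{lem:state-comb} is what rescues the argument, at the cost of two extra reflections around $\ket{0}$ but without affecting the leading-order oracle-call count or auxiliary-qubit count.
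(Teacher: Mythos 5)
Your overall plan---realize the explicit $\ket{w_0}$ from \cref{lem:w0} by preparing a weighted time superposition and then applying the non-query unitaries forward from $\ket{\Psi_0}$ and backward from $\ket{\Psi_T}$ via controlled calls to $\O_\A$, with the $\O(T/S)$ bound coming from query-uniformity---is the same as the paper's. But the mechanism you use to secure the implementing-subspace property, the outer $\Cal R_{\ket{0}}$-guard, has two genuine problems. First, at this level of the construction $\Cal R_{\ket{0}}$ is not an available exact primitive: an exact reflection around the single state $\ket{0}\ket{0}\ket{\Psi_0}$ costs $\Theta(k)$ gates ($k$ the number of qubits of $\A$), which is precisely the cost the implementing-subspace machinery exists to avoid and which does not fit in the $\O((T/S)\,\mathrm{polylog}(T,1/\delta))$ gate budget when $S$ is large. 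The implementable substitute $G$ from \cref{lem:consistent-refl} only tests $t=0$, $b=0$ and the answer bit, so for $f(x)=0$ it also flags the part of $\ket{0}\ket{0}\ket{\widetilde\Psi_0(x)}$ lying in $\ket{\perp}$ (orthogonal to $\ket{\Psi_0}$ and $(I\otimes X)\ket{\Psi_0}$ but with answer bit $0$), and your claim that the guarded circuit ``acts as the identity on the orthogonal complement of $\ket{0}\ket{0}\ket{\Psi_0}$ inside $\H_x$'' fails. Second, and independently fatal: you compute the guard flag, run the entire forward/backward construction conditioned on it, and then ``uncompute the flag with a second call to $\Cal R_{\ket{0}}$.'' After the construction the register holds $\ket{w_0}/\norm{\ket{w_0}}$, which is not supported on $t=0$, so the second reflection does not reset the flag; the auxiliary qubit stays entangled with the data and the map on the main register is not the claimed unitary. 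In \cref{lem:state-comb} this trick works only because the operations sandwiched between the two $\Cal R_{\ket{0}}$ calls keep the guarded register fixed, and the actual witness-preparation call happens \emph{after} the flag is uncomputed; your adaptation inverts that order.

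The paper avoids guarding altogether. It writes $\ket{w_0}/\norm{\ket{w_0}}$ as $(\ket{\psi}+\ket{\chi}+\ket{\phi})/\sqrt{N}$ and builds it as $\Cal C_{\ket{\psi}}\Cal C_{\ket{\chi}}\Cal S_{T-1,\beta}^\dagger\Cal C_1$, where $\Cal C_1$ is a two-level rotation between $\ket{0}\ket{0}\ket{\Psi_0}$ and $\ket{T}\ket{0}\ket{\Psi_T}$ (conjugating a rotation on the time register by a controlled $X$ on the answer qubit) and $\Cal C_{\ket{\psi}},\Cal C_{\ket{\chi}}$ are products of $\O(T/S)$ splitting maps $\Cal S_{t,\alpha}$. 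Each of these is a globally defined unitary that leaves $\overline{\H}_x$ and $\widetilde{\H}_x$ \emph{exactly} invariant; the sole source of leakage out of $\H_x$ is that $\Cal C_1$ uses $\ket{\Psi_T}$ where $\overline{\H}_x$ contains $\ket{\Psi_T(x)}$, and bounding that single discrepancy (together with \cref{lem:approximate_implementing_subspace}) is what yields $2\sqrt{2\varepsilon}$. To repair your proof you would need to replace the guard by unitaries of this kind---defined and controlled on information (the time register, the $\pm$ register, the answer bit) that is cheap to test---rather than on membership in the one-dimensional space $\Span\{\ket{0}\ket{0}\ket{\Psi_0}\}$.
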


\cref{lem:TB} is proven in \cref{app:TB}.

\begin{restatable}{lemma}{lemConRef}\label{lem:consistent-refl}
	Let $\A$ be a clean quantum query algorithm with query complexity $S$, time complexity $T$, and error probability $\varepsilon\in [0,1/2)$. Let $P_{\A}= (\H,\V,A,\ket{\tau})$ be the span program for this algorithm, as in \cref{def:PA}. We can implement a map $G$ that, when restricted to $\H_x$, is a $4\sqrt{2\varepsilon}$-approximation of $\Cal R_{\ket{0}} = (2\ket{0}\bra{0}-\I)$ in operator norm, with $\O(1)$ auxiliary qubits and $\O\left(\mathrm{polylog}(T)\right)$ gates.
\end{restatable}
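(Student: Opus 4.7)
The plan is to define $G = 2\Pi - \I$ where
\[\Pi = \ket{0}\bra{0}_{\mathrm{time}} \otimes \ket{0}\bra{0}_b \otimes \I_{\W'} \otimes \ket{0}\bra{0}_{\mathrm{ans}}\]
projects onto the subspace in which the time register, the $b$ register, and the answer qubit are all $\ket{0}$, while the rest of the workspace $\W'$ is unconstrained. Since $\Pi$ is determined by only $O(\log T)$ qubits, $G$ can be realised as a multiply-controlled phase with $O(\log T)$ gates and $O(1)$ ancillas by the standard Barenco-style construction, well within the $O(\mathrm{polylog}(T))$ budget. Note that $\ket{0}$ lies in the range of $\Pi$, so $G$ fixes $\ket{0}$ exactly, just like $\Cal R_{\ket{0}}$ does.

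The heart of the proof is the error analysis on $\H_x$. I would first establish the key approximation: for negative instances $x \in f^{-1}(0)$, the backward state $\ket{\widetilde{\Psi}_0(x)} = U_1^\dagger\cdots U_T^\dagger \ket{\Psi_T}$ places almost all its weight on the $\ket{1}$-sector of the answer register. From \cref{def:QA} we have $\ket{\Psi_0} = (\I \otimes X)\ket{\Psi_T}$, and $\|\ket{\Psi_T(x)} - (\I \otimes X)\ket{\Psi_T}\|^2 = 2(1 - p_0(x)) \leq 2\varepsilon$, so $\|\ket{\Psi_T(x)} - \ket{\Psi_0}\| \leq \sqrt{2\varepsilon}$; applying the unitary $U_1^\dagger \cdots U_T^\dagger$ (which sends $\ket{\Psi_T(x)}$ back to $\ket{\Psi_0}$) yields $\|U_1^\dagger\cdots U_T^\dagger\ket{\Psi_0} - \ket{\Psi_0}\| \leq \sqrt{2\varepsilon}$. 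Writing $\ket{\widetilde{\Psi}_0(x)} = \ket{\chi_1}\ket{0}_{\mathrm{ans}} + \ket{\chi_0}\ket{1}_{\mathrm{ans}}$, this bound translates directly into $\|\ket{\chi_1}\|^2 \leq 2\varepsilon$.

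Because $U_1$ is not a query, the only $\H_x$ basis vectors at $t=0$ are $\ket{0}\ket{0}\ket{\Psi_0}$ and, when $f(x) = 0$, also $\ket{0}\ket{0}\ket{\widetilde{\Psi}_0(x)}$. For positive $x$ the single relevant basis vector lies in both the range of $\Pi$ and in $\Span\{\ket{0}\}$, so $G$ and $\Cal R_{\ket{0}}$ agree on $\H_x$ exactly. For negative $x$, I would take a unit $\ket{h} \in \H_x$ and write $\Pi_{t=0,b=0}\ket{h} = \alpha_1\ket{0}\ket{0}\ket{\Psi_0} + \alpha_2\ket{0}\ket{0}\ket{\widetilde{\Psi}_0(x)}$; using $\braket{\Psi_0}{\widetilde{\Psi}_0(x)} = p_1(x) \leq \varepsilon$ together with $\|\Pi_{t=0,b=0}\ket{h}\| \leq 1$ and a short Gram-matrix argument yields $|\alpha_2|^2 \leq 1/(1-\varepsilon) < 2$. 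A direct expansion using $G + \I = 2\Pi$ shows that all other contributions cancel, leaving
\[(G - \Cal R_{\ket{0}})\ket{h} = 2\alpha_2\,\ket{0}\ket{0}\bigl(\Pi_{\{\ket{\psi_0}\}^\perp}\ket{\chi_1}\bigr)\ket{0}_{\mathrm{ans}},\]
whose norm is bounded by $2|\alpha_2|\sqrt{\|\ket{\chi_1}\|^2 - p_1(x)^2} \leq 2\sqrt{2}\cdot\sqrt{2\varepsilon} \leq 4\sqrt{2\varepsilon}$.

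The main obstacle is recognising that the simpler reflection that checks only the time and $b$ registers \emph{fails}: on the unit vector $\ket{0}\ket{0}\ket{\widetilde{\Psi}_0(x)} \in \widetilde{\H}_x$ the difference $G - \Cal R_{\ket{0}}$ has norm close to $2$, independent of $\varepsilon$. What rescues the argument is the clean-algorithm commutation property, which forces $\ket{\widetilde{\Psi}_0(x)}$ to concentrate on the $\ket{1}$-sector of the answer register for negative instances; adding an $O(1)$-gate check on the answer qubit is precisely what exploits this to shrink the error to $O(\sqrt{\varepsilon})$, while still avoiding the prohibitive $\Theta(k)$ cost of reflecting around the full all-zeros state.
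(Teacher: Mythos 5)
Your proposal is correct and follows essentially the same route as the paper: the same map $G$ (reflecting about the subspace where the time register, the $b$ register, \emph{and} the answer qubit are all zero), the same $\O(\mathrm{polylog}(T))$ implementation, and the same key fact that the commutation/consistency properties of a clean algorithm force $\ket{\widetilde{\Psi}_0(x)}$ to concentrate on the $\ket{1}$-sector of the answer register for negative instances (the paper invokes this via the decomposition of \cref{lem:almost-orthog}). The only difference is bookkeeping in the final estimate — you bound the coefficient $|\alpha_2|$ by a Gram-matrix eigenvalue argument, whereas the paper normalizes the quotient with denominator $\geq\sqrt{1-\varepsilon^2}$ — and both yield the stated $4\sqrt{2\varepsilon}$ bound.
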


\cref{lem:consistent-refl} is proven in \cref{app:ConRef}. Now, we are ready to prove \cref{thm:main-time}.

\begin{proof}[Proof of \cref{thm:main-time}]
	Assume, for now, that $\varepsilon = o(1/S^2)$. In the final paragraph of this proof, we will lift this restriction.

	From \cref{thm:alg-span-complexity}, we know that $P_{\A}$ positively $5\varepsilon$-approximates $f$ with complexity $C(P_{\A}) = \O(S)$. Hence, we deduce from \cref{thm:span-program-time} that we can implement the algorithm compiled from $P_{\A}$ with a number of calls to the subroutines $\Cal R_{\Ker(A)}$, $\Cal C_{\ket{w_0}}$, $\Cal R_{\H(x)}$ and $\Cal R_{\ket{0}}$ that goes like
	\[\O\left(\frac{C(P_{\A})}{(1-2\lambda)^{3/2}} \log\frac{1}{1-2\lambda}\right) = \O(S).\]
	For $\Cal R_{\Ker(A)}$ and $\Cal C_{\ket{w_0}}$ we choose precision parameter $\delta = \Theta(\sqrt{\varepsilon})$, which implies that $\log(1/\delta) = \O(\log(S))$. Given their respective query, space and time complexities in \cref{lem:TA,lem:TB,lem:TC,lem:consistent-refl}, we can implement the span program algorithm with $\O(S)$ calls to $\O_x$, $\O(T)$ calls $\O_\A$ and $\O_\S$, $\O(\mathrm{polylog}(T))$ extra qubits, and $\O\left(T\mathrm{polylog}(T)\right)$ extra gates.

	We proceed by analyzing the error introduced by our approximate implementation of the subroutines. First, an error is introduced due to the mapping $\Cal C_{\ket{w_0}}$ not leaving $\H_x$ exactly invariant. Observe that whenever we call $\Cal C_{\ket{w_0}}$, we are moving a part of the state outside of $\H_x$ that has amplitude at most $2\sqrt{2\varepsilon}$. So, there exists a state in $\H_x$ that is $2\sqrt{2\varepsilon}$-close to the state that we used in the analysis of the algorithm in \cref{thm:span-program-time}, and we map to a state that is in turn $2\sqrt{2\varepsilon}$-close to this state. Hence, the total error introduced per call of $\Cal C_{\ket{w_0}}$ is $4\sqrt{2\varepsilon}$. Thus, the total error introduced is $\O(S \cdot 4\sqrt{2\varepsilon}) = o(1)$.

	Additional error is introduced by the approximate implementations of $\Cal R_{\ket{0}}$ and $\Cal R_{\Ker(A)}$. Both are implemented up to precision $\O(\sqrt{\varepsilon})$ in the operator norm, which means that the total cumulative error is at most $\O(S \cdot \sqrt{\varepsilon}) = o(1)$ as well. This completes the proof for the case where $\varepsilon = o(1/S^2)$.

	Finally, if the initial algorithm does not have error probability $\varepsilon = o(1/S^2)$, then we can boost the success probability first. One possible way to do this is to run amplitude estimation to determine with probability at least $1/S^2$ whether $p_0(x)$ is bigger than $2/3$ or smaller than $1/3$. This can be done with $\O(\log(S))$ calls to the original algorithm and reflections through the all-zeros state. For the reflection around the all-zeros state, which needs to be implemented on $\O(k)$ qubits, we now cannot make use of the implementing subspace, as it is not precise enough. But we only need to do this a total of $\O(\log(S))$ times, so we can afford to spend $\O(T\log^C(T))$ gates for some constant $C > 0$. If $T = k^{1+\Omega(1)}$, then we can construct such a circuit with $\O(\mathrm{polylog}(T))$ auxiliary qubits using a divide-and-conquer approach. In the worst case, i.e., when $T = \Theta(k)$, such a circuit can be constructed with $\O(k^{\xi(k)})$ auxiliary qubits, where $\xi(k) = \log\log\log^C(k)/\log(k) + 1/\log\log^C(k) = o(1)$. This completes the proof.
\end{proof}

\section{Application to variable time search}
\label{sec:variable_time_search}

One reason for converting quantum algorithms to span programs is that span programs compose very nicely (see \cite{Rei09} for a number of examples). We illustrate this by describing a construction that, given $n$ span programs for $n$ functions $\{f_j:\{0,1\}^{m_j}\rightarrow\{0,1\}\}_{j=1}^n$, outputs a span program for the logical OR of their output: $f(x^{(1)},\dotsc,x^{(n)}) = \bigvee_{j=1}^n f_j(x^{(j)})$. In short, we show that given query-, time- and space-efficient quantum implementations for each $f_j$, the resulting span program can also be implemented query-, time- and space-efficiently. The full theorem statement is provided below.
Note that throughout this section for the sake of simplicity we write $f_j$ as functions on $\{0,1\}^{m_j}$ even though the results also hold for partial Boolean functions with arbitrary domains $X_j \subseteq \{0,1\}^{m_j}$.

\begin{theorem}[Variable-time quantum search]\label{thm:variable-time}
	Let $\A = \{{\cal A}^{(j)}\}_{j=1}^n$ be a finite set of quantum algorithms, where ${\cal A}^{(j)}$ acts on $k_j \leq k_{\max}$ qubits and decides $f_j : \{0,1\}^{m_j} \rightarrow \{0,1\}$ with bounded error with query complexity $S_j$ and time complexity $T_j \leq T_{\max}$. Suppose that we have uniform access to the algorithms in $\A$ through the oracles $\O_\A$, $\O_\S$ and $\O_x$, as elaborated upon in \cref{sec:model}. Then we can implement a quantum algorithm that decides $f = \bigvee_{j=1}^n f_j$ with bounded error, with the following properties:
	\begin{enumerate}
		\setlength\itemsep{-.4em}
		\item The number of calls to $\O_x$ is $\O\left(\sqrt{\sum_{j=1}^n S_j^2} \cdot \log\left(\sum_{j=1}^n S_j^2\right)\right)$.
		\item The number of calls to $\O_{\A}$ and $\O_{\S}$ is $\O\left(\sqrt{\sum_{j=1}^n T_j^2} \cdot \log\left(\sum_{j=1}^n S_j^2\right)\right)$.
		\item The number of extra gates is $\O\left(\sqrt{\sum_{j=1}^n T_j^2} \cdot \mathrm{polylog}(T_{\max},n)\right)$.
		\item The number of auxiliary qubits is $\O\left(\mathrm{polylog}(T_{\max},n) + k_{\max}^{o(1)}\right)$.
	\end{enumerate}
	If we additionally require that the error probabilities of the $\A^{(j)}$'s are all $o(1/\sum_{j=1}^n S_j^2)$, then the $\log(\sum_{j=1}^n S_j^2)$ factors and the $k_{\max}^{o(1)}$ term can be dropped. We can also drop the term $k_{\max}^{o(1)}$ if $T_j = k_j^{1+\Omega(1)}$ for all $j \in [n]$.
\end{theorem}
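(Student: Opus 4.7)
The plan is to convert each algorithm $\A^{(j)}$ into a span program $P^{(j)}=P_{\A^{(j)}}$ via the construction of \cref{sec:main_thm}, combine the $P^{(j)}$'s into a single span program $P$ for $f=\bigvee_j f_j$ by a weighted OR composition in the style of \cite{Rei09}, and then compile $P$ back into a quantum algorithm using \cref{thm:span-program-time}. As preparation, I first boost each $\A^{(j)}$ so that its error $\eps_j$ satisfies $\sum_j \eps_j<1/10$ (at the cost of a $\mathrm{polylog}(n)$ multiplicative factor on $S_j$ and $T_j$) and apply \cref{lem:QA,lem:Q_Uniformity} to turn it into a clean algorithm; this lifts uniformly to the oracles $\O_\A,\O_\S,\O_x$ with only constant overhead. \Cref{thm:alg-span-complexity} then provides $W_+(P^{(j)})=\O(S_j)$ and $\widetilde W_-(P^{(j)})=\O(S_j)$.

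Next I form $P=(\H,\V,A,\ket{\tau})$ with $\H=\bigoplus_j \H^{(j)}$, a common target space where the per-clause targets $\ket{\tau^{(j)}}$ are identified with a single $\ket{\tau}$, and $A$ acting blockwise as $\alpha_j A^{(j)}$ with weights $\alpha_j=\sqrt{S_j}$. A positive witness for $P$ at positive $x$ embeds a rescaled positive witness for $P^{(j)}$ in any clause $j$ with $f_j(x^{(j)})=1$, giving $W_+(P)=\max_j W_+(P^{(j)})/\alpha_j^2=\O(1)$; an approximate negative witness for $P$ restricts to an approximate negative witness for every $P^{(j)}$ simultaneously, giving $\widetilde W_-(P)=\sum_j\alpha_j^2\widetilde W_-(P^{(j)})=\O(\sum_j S_j^2)$. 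The composed approximation parameter $\lambda$ then stays strictly below $1/2$ by the choice of the $\eps_j$, so $C(P)=\O(\sqrt{\sum_j S_j^2})$.

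Plugging $P$ into \cref{thm:span-program-time} yields a quantum algorithm that makes $\widetilde\O(\sqrt{\sum_j S_j^2})$ calls to each of the four subroutines $\Cal R_{\H(x)},\Cal R_{\Ker(A)},\Cal C_{\ket{w_0}},\Cal R_{\ket{0}}$. The input reflection $\Cal R_{\H(x)}$ is implemented with a single call to $\O_x$ plus $\O(1)$ bookkeeping on the clause register, which already gives the stated query bound. The subroutines $\Cal R_{\Ker(A)}$ and $\Cal C_{\ket{w_0}}$ are built by running the per-clause constructions of \cref{lem:TA,lem:TB} controlled on the clause register $\ket{j}$ using $\O_\A,\O_\S$, with cost $\widetilde\O(T_j/S_j)$ inside clause $j$. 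Finally, $\Cal R_{\ket{0}}$ is implemented via an implementing subspace for $P$ assembled as the direct sum of the per-clause implementing subspaces of \cref{def:consistent_subspace}, which keeps the auxiliary qubit overhead polylog in $T_{\max}$ and $n$; the $k_{\max}^{o(1)}$ term only appears when an initial success-probability boost forces an explicit reflection around $\ket{0}$ on $k_j$ qubits, exactly as at the end of the proof of \cref{thm:main-time}.

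The main obstacle I expect is the time accounting for $\Cal R_{\Ker(A)}$ and $\Cal C_{\ket{w_0}}$: a naive controlled implementation gives a worst-case cost of $\max_j T_j/S_j$ per call, producing only the weaker bound $\sqrt{\sum_j S_j^2}\cdot\max_j T_j/S_j$ on oracle calls to $\O_\A,\O_\S$, which is generically larger than the target $\sqrt{\sum_j T_j^2}$. Closing this gap is the key technical step: the weights $\alpha_j=\sqrt{S_j}$ calibrate the amplitude with which the span-program state visits each clause so that costly clauses are visited with proportionally smaller weight, and a Cauchy--Schwarz style amortization (analogous to the amplitude balancing behind variable-time search in \cite{Amb06}) converts the per-clause costs $T_j/S_j$ into an overall total of $\widetilde\O(\sqrt{\sum_j T_j^2})$. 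Once this tight analysis is in place, summing over all calls to $\O_\A,\O_\S,\O_x$ and extra gates yields all four bounds of the theorem, and the boosting/log-removal argument from the proof of \cref{thm:main-time} handles the final sentence about strengthening the error and space assumptions.
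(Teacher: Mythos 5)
Your overall architecture matches the paper's: clean up and error-reduce each $\A^{(j)}$, convert to span programs via \cref{def:PA}, form a weighted OR span program, and compile back with \cref{thm:span-program-time}, implementing the subroutines concurrently over the clause register. You also correctly isolate the critical difficulty: a controlled subroutine $\sum_j\ket{j}\bra{j}\otimes\Cal C^{(j)}$ costs $\max_j T_j/S_j$ oracle calls per invocation, giving only $\sqrt{\sum_j S_j^2}\cdot\max_j T_j/S_j$, which is generically worse than $\sqrt{\sum_j T_j^2}$.

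However, your proposed resolution of that difficulty does not work as stated, and it is not how the paper closes the gap. You claim the weights $\alpha_j=\sqrt{S_j}$ let a Cauchy--Schwarz amortization convert per-clause costs $T_j/S_j$ into $\widetilde\O(\sqrt{\sum_j T_j^2})$ overall. But in the circuit model there is no mechanism by which small amplitude on branch $j$ reduces the gate count of a unitary that must act correctly on all branches: every application of the span program unitary $U$ invokes the concurrent subroutines once, and each such invocation is a fixed circuit whose length is the worst case over $j$ (the loops in \cref{lem:constrkerA} run for $\max_j\lfloor 3T_j/S_j\rfloor$ iterations regardless of the amplitudes). Amplitude-weighted time accounting is a feature of Ambainis's staged construction, not of this compilation. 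The paper's actual fix is combinatorial, not analytic: before building the span programs, it pads each $\A^{(j)}$ with dummy query pairs $\O_x \I\, \O_x$ at spacing $B=\bigl\lceil\sqrt{\sum_j T_j^2/\sum_j S_j^2}\bigr\rceil$, so that every modified algorithm satisfies $\overline T_j/\overline S_j=\O(B)$ \emph{uniformly}, while $\overline S_j=\Theta(S_j+T_j/B)$ keeps the composed complexity at $\sqrt{\sum_j\overline S_j^2}=\O(\sqrt{\sum_j S_j^2})$. Then the worst-case per-call cost $\O(B)$ times the number of calls $\O(\sqrt{\sum_j S_j^2})$ gives exactly $\O(\sqrt{\sum_j T_j^2})$. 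Without this rebalancing step (or an equivalent restructuring), your argument stalls at the weaker bound you yourself identified. A secondary omission: preparing the weighted superposition $\frac{1}{\norm{\alpha}}\sum_j\alpha_j\ket{j}$ with arbitrary weights can cost $\Omega(n)$ gates; the paper needs the binning technique of \cref{lem:binning-technique,lem:Calpha} to round the weights so that $\Cal C_\alpha$ costs only $\O(\log(S_{\max})\log^2(n))$ gates.
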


A similar result was reached by Ambainis in \cite{Amb06}. Let us discuss how our result compares to that of Ambainis.

First, we assume the uniform access model described in \cref{sec:model}. This is a slight generalization of the model considered by Ambainis, as explained in \cite[Appendix A]{Amb06}, because we differentiate between query and non-query time steps in the algorithms $\A^{(j)}$, whereas Ambainis does not. Therefore, Ambainis only considers the algorithm oracle $\O_\A$ and includes the queries to $x$ as part of this oracle, whereas we also assume to have explicit access to the oracles $\O_\S$ and $\O_x$.

One can obtain some of our results using Ambainis's construction and subsequently converting the resulting algorithm back to our setting. For instance, if one counts every query in the original algorithms as having unit cost, then Ambainis's construction yields an algorithm that evaluates $f = \bigvee_{j=1}^n f_j$ with $\O(\sqrt{\sum_{j=1}^n S_j^2})$ queries to $\O_x$. This is a logarithmic factor better than our result, but the number of calls to $\O_\A$ and $\O_\S$ is unclear, and the time and space complexities are not analyzed.

Alternatively, if one assigns a unit cost to every gate in the original algorithms, then the algorithm that follows from Ambainis's construction performs $\O(\sqrt{\sum_{j=1}^n T_j^2})$ calls to $\O_\A$, $\O_\S$ and $\O_x$. Similarly as before, this is a logarithmic factor better in the scaling of the query complexity to $\O_\A$, but worse in the query complexity to $\O_x$ and again the time and space complexities are not analyzed.

Our improvement over Ambainis's work consists of the following elements. First, we show that one can attain both desired scalings in the number of calls to $\O_x$, $\O_\S$ and $\O_\A$ simultaneously, up to a single logarithmic factor. Second, our construction is also efficient with respect to the time and space complexities, as we show that we only suffer from polylogarithmic overhead in the number of extra gates and auxiliary qubits.

There are, however, some aspects to Ambainis's work that we did not reproduce. Ambainis proved a version of his theorem for the search problem: find $j$ such that $f_j(x^{(j)})=1$, whereas we only consider a decision version. By a standard reduction from the search version to the decision version, we also recover the analogous search result, but with an extra factor of $\log(n)$ overhead in the query and time complexities.

Ambainis also gives a result for the case where the costs of the original algorithms are unknown. It would be interesting to figure out whether our results can be similarly modified in the case where we do not know $\{T_j\}_{j=1}^n$ and/or $\{S_j\}_{j=1}^n$, but it is not immediately clear to us how one would go about this. We leave this for future research.

From \cref{thm:variable-time} we easily deduce that if we have efficient uniform access to a set of algorithms, i.e., the oracles $\O_{\A}$ and $\O_{\S}$ can be implemented in time logarithmic in $T_{\max}$ and $n$, then the algorithm compiled from $P$ has query complexity $\widetilde{\O}(\sqrt{\sum_{j=1}^n S_j^2})$ and time complexity $\widetilde{\O}(\sqrt{\sum_{j=1}^n T_j^2})$.

The remainder of this section is dedicated to proving \cref{thm:variable-time}. In \cref{sec:or-span} we describe how we can merge $n$ span programs $P^{(1)}, \dots, P^{(n)}$ evaluating functions $f_1, \dots, f_n$, respectively, into one span program $P$ evaluating the OR of these functions, $f_1 \lor \cdots \lor f_n$. Subsequently, in \cref{sec:or-span-implementation}, we relate the implementation of the algorithm compiled from $P$ to the implementation of the algorithms compiled from the individual $P^{(j)}$'s. Finally, in \cref{sec:var-time-search}, we specialize the span programs $P^{(j)}$ to be span programs of algorithms, and we relate the implementation of the required subroutines to the constructions in \cref{sec:main_thm}, completing the proof of \cref{thm:variable-time}.

\subsection{The OR of span programs}
\label{sec:or-span}

Fix $\lambda\in (0,1/n)$. For $j\in [n]$, let $P^{(j)} = (\H^{(j)},\V^{(j)},A^{(j)},\ket{\tau^{(j)}})$ be a span program on $\{0,1\}^{m_j}$ that positively $\lambda$-approximates $f_j : \{0,1\}^{m_j} \rightarrow \{0,1\}$.\footnote{We require $\lambda$ to be quite small here. One way to achieve this from an arbitrary span program is to convert it to an algorithm, reduce the error to $\O(1/n)$ at the expense of a $\O(\log n)$ multiplicative factor, and then convert that back to a span program using the construction in \cref{sec:span_program}. Furthermore, we can just as well use partial functions here, but we don't for notational simplicity.} Let $W_+^{(j)}$ and $W_-^{(j)}$ be some upper bounds on $W_+(P^{(j)})$ and $\widetilde{W}_-(P^{(j)})$ respectively, and assume that every $x \in f_j^{-1}(0)$ has an approximate negative witness $\ket{\tilde\omega^{(j)}} \in \V^{(j)}$ with $\norm{\bra{\tilde\omega^{(j)}} A^{(j)} \Pi_{\H^{(j)}(x)}}^2 \leq \lambda/W_+^{(j)}$ and $\norm{\bra{\tilde\omega^{(j)}} A^{(j)}}^2 \leq {W}_-^{(j)}$. Let $C_j = \sqrt{W_+^{(j)}W_-^{(j)}}$.

Assume, by applying an appropriate basis change, that $\ket{\tau^{(j)}}=\ket{0}$ for every $j \in [n]$. For each $j$, extend
$\ket{\tau^{(j)}}=\ket{0}$ to an orthonormal basis $\{\ket{0},\ket{j,1},\dots,\ket{j,\dim(\V^{(j)})-1}\}$ for $\V^{(j)}$ so that, aside from the single overlapping dimension $\ket{0}$, the subspaces $\V^{(j)}$ are orthogonal to one another. Let $\overline{\V}^{(j)}=\mathrm{span}\{\ket{j,1},\dots,\ket{j,\dim(\V^{(j)})-1}\}$, so that $\V^{(j)}=\mathrm{span}\{\ket{0}\}\oplus \overline{\V}^{(j)}$.

Let $f:\{0,1\}^{m_1+\dots+m_n}\rightarrow \{0,1\}$ be the function defined by $f(x^{(1)},\dots,x^{(n)})=\bigvee_{j=1}^nf_j(x^{(j)})$. We can define a span program $P$ on $\{0,1\}^{m_1+\dots+m_n}$ that decides $f$ as follows:
\begin{align}
	\forall j\in[n], \ell\in [m_j], b\in\{0,1\}, \quad \H_{j,\ell,b} =\Span\{\ket{j}\}\otimes \H_{\ell,b}^{(j)}, \quad \H_{\text{true}} = \bigoplus_{j=1}^n \H_{\text{true}}^{(j)}, \quad \H_{\text{false}} &= \Span\{\ket{0,0}\} \nonumber \\
	\V =\mathrm{span}\{\ket{0}\}\oplus \bigoplus_{j=1}^n\overline{\V}^{(j)}, \qquad \ket{\tau} = \ket{0}, \qquad A = \sum_{j=1}^n\sqrt{W_+^{(j)}}\bra{j}\otimes A^{(j)}.& \label{eq:or-span}
\end{align}
Above, we are indexing into an input $x\in \{0,1\}^{m_1+\dots+m_n}$ by using a pair of indices, $j\in [n]$ and $\ell\in [m_j]$, in the obvious way.
From this definition of $P$, we get:
\begin{align}
	\H(x) &= \bigoplus_{j \in [n]} \Span\{\ket{j}\}\otimes \H^{(j)}(x^{(j)}), \qquad \mbox{where} \qquad \forall j \in [n], \quad \H^{(j)}(x^{(j)}) = \bigoplus_{\ell\in[m_j]} \H^{(j)}_{\ell,x^{(j)}_{\ell}}.\label{eq:Hx-form}
\end{align}

\begin{definition}\label{def:or-span-program}
	Let $\{P^{(j)}\}_{j=1}^n$ be a set of span programs, where $P^{(j)} = (\H^{(j)},\V^{(j)},A^{(j)},\ket{\tau^{(j)}})$. Then we let $P = (\H,\V,A,\ket{\tau})$ be the \emph{OR span program} of these span programs, where $\H$, $\V$, $A$ and $\ket{\tau}$ are defined in \cref{eq:or-span,eq:Hx-form}.
\end{definition}

We proceed by proving various properties of the newly-defined OR span program. First, we prove that it indeed evaluates $f$ in the following theorem.

\begin{theorem}\label{thm:or-span-program}
The span program $P$ positively $n\lambda$-approximates $f$ with complexity $C(P)\leq\sqrt{\sum_{j=1}^nC_j^2}$.
\end{theorem}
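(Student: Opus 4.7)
The plan is to bound $W_+(P)$ and $\widetilde W_-(P)$ separately and then combine them via $C(P)=\sqrt{W_+(P)\,\widetilde W_-(P)}$. Concretely, I will show $W_+(P)\le 1$ and $\widetilde W_-(P)\le\sum_{j=1}^n C_j^2$. The weights $\sqrt{W_+^{(j)}}$ attached to each $\bra{j}\otimes A^{(j)}$ in the definition of $A$ are calibrated precisely so that positive witnesses shrink to unit norm while negative witnesses can be stitched together with their sizes adding controllably.

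For the positive bound, fix $x\in f^{-1}(1)$ and choose some $j^*$ with $f_{j^*}(x^{(j^*)})=1$. I will take a minimum positive witness $\ket{w^{(j^*)}}\in\H^{(j^*)}(x^{(j^*)})$ for $P^{(j^*)}$ and lift it to
$$\ket{w}=\frac{1}{\sqrt{W_+^{(j^*)}}}\,\ket{j^*}\otimes\ket{w^{(j^*)}}\in\H(x),$$
where the $\ket{j^*}$-tensor is interpreted via \cref{eq:Hx-form} (so the $\Htrue^{(j^*)}$-component of $\ket{w^{(j^*)}}$ is placed inside $\Htrue=\bigoplus_j\Htrue^{(j)}$). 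Then $A\ket{w}=A^{(j^*)}\ket{w^{(j^*)}}=\ket{\tau^{(j^*)}}=\ket{0}=\ket{\tau}$ and $\|\ket{w}\|^2\le W_+(P^{(j^*)})/W_+^{(j^*)}\le 1$.

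For the negative bound, fix $x\in f^{-1}(0)$, so every $x^{(j)}$ is a 0-input for its $f_j$. I will decompose each given approximate negative witness as $\ket{\tilde\omega^{(j)}}=\ket{0}+\ket{\bar\omega^{(j)}}$ with $\ket{\bar\omega^{(j)}}\in\overline\V^{(j)}$ and stitch them into the candidate
$$\ket{\tilde\omega}=\ket{0}+\sum_{j=1}^n\ket{\bar\omega^{(j)}}\in\V,$$
which clearly satisfies $\braket{\tilde\omega}{\tau}=1$. The crucial computation is that of $\bra{\tilde\omega}A$: since $A^{(j)}$ maps into $\V^{(j)}=\Span\{\ket{0}\}\oplus\overline\V^{(j)}$ and $\overline\V^{(j)}\perp\overline\V^{(k)}$ for $k\neq j$, every cross term vanishes and
$$\bra{\tilde\omega}A=\sum_{j=1}^n\sqrt{W_+^{(j)}}\,\bra{j}\otimes\bra{\tilde\omega^{(j)}}A^{(j)}.$$
The pairwise orthogonality of the $\bra{j}$ factors makes the squared norms split into direct sums, giving $\|\bra{\tilde\omega}A\|^2=\sum_j W_+^{(j)}\|\bra{\tilde\omega^{(j)}}A^{(j)}\|^2\le\sum_j W_+^{(j)}W_-^{(j)}=\sum_j C_j^2$, and, using \cref{eq:Hx-form} to decompose $\Pi_{\H(x)}$ block by block, $\|\bra{\tilde\omega}A\,\Pi_{\H(x)}\|^2\le\sum_j W_+^{(j)}\cdot\lambda/W_+^{(j)}=n\lambda$.

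To conclude: since $W_+(P)\le 1$ we have $n\lambda\le n\lambda/W_+(P)$, so $\ket{\tilde\omega}$ meets the error threshold for positive $n\lambda$-approximation and certifies $\widetilde w_-(x,P)\le\sum_j C_j^2$, hence $\widetilde W_-(P)\le\sum_j C_j^2$ and $C(P)\le\sqrt{\sum_j C_j^2}$. The most delicate step is the orthogonality argument that collapses $\bra{\tilde\omega}A$ into a direct sum; after that, the proof reduces to bookkeeping with the decompositions set up in \cref{eq:or-span,eq:Hx-form}, where some care is needed because $\Htrue=\bigoplus_j\Htrue^{(j)}$ is summed without an explicit $\ket{j}$ label while all the other sectors carry one.
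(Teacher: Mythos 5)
Your proposal is correct and follows essentially the same route as the paper: the same lifted positive witness $\frac{1}{\sqrt{W_+^{(j^*)}}}\ket{j^*}\otimes\ket{w^{(j^*)}}$ giving $W_+(P)\le 1$, and the same stitched negative witness $\bra{0}+\sum_j\bra{\overline\omega^{(j)}}$ whose action on $A$ collapses block by block via the orthogonality of the $\overline{\V}^{(j)}$'s and the $\bra{j}$ factors, yielding $\widetilde W_-(P)\le\sum_j C_j^2$ and error at most $n\lambda$. No gaps.
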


The proof will follow from \cref{lem:or-positive,lem:or-negative}. First, we show that if $f(x)=1$, $P$ accepts $x$, and give an upper bound on the positive witness complexity.

\begin{lemma}\label{lem:or-positive}
	If $f(x)=1$, then the span program $P$ accepts $x$, with positive witness complexity $w_+(x)\leq 1$. Thus $W_+(P)\leq 1$.
\end{lemma}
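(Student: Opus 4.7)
The plan is to construct an explicit positive witness for any $x$ with $f(x) = 1$ and bound its squared norm by $1$. Since $f(x) = 1$, there exists some $j^* \in [n]$ with $f_{j^*}(x^{(j^*)}) = 1$. Since $P^{(j^*)}$ positively $\lambda$-approximates $f_{j^*}$, the input $x^{(j^*)}$ lies in $P_1^{(j^*)}$, so there is a positive witness $\ket{w^{(j^*)}} \in \H^{(j^*)}(x^{(j^*)})$ satisfying $A^{(j^*)}\ket{w^{(j^*)}} = \ket{\tau^{(j^*)}} = \ket{0}$ and $\norm{\ket{w^{(j^*)}}}^2 \leq W_+(P^{(j^*)}) \leq W_+^{(j^*)}$.

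The candidate witness for $P$ is then
\[
\ket{w} := \frac{1}{\sqrt{W_+^{(j^*)}}} \ket{j^*} \otimes \ket{w^{(j^*)}}.
\]
I would then verify the three required conditions in turn. First, membership in $\H(x)$: since $\ket{w^{(j^*)}} \in \H^{(j^*)}(x^{(j^*)})$, the vector $\ket{j^*}\otimes \ket{w^{(j^*)}}$ lies in $\Span\{\ket{j^*}\} \otimes \H^{(j^*)}(x^{(j^*)})$, which by \cref{eq:Hx-form} is a summand of $\H(x)$. Second, the target equation: using the definition $A = \sum_j \sqrt{W_+^{(j)}}\bra{j} \otimes A^{(j)}$, the $\bra{j}$ factor selects only the $j=j^*$ term, and the scaling factors $\sqrt{W_+^{(j^*)}}$ and $1/\sqrt{W_+^{(j^*)}}$ cancel, giving $A\ket{w} = A^{(j^*)}\ket{w^{(j^*)}} = \ket{0} = \ket{\tau}$. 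Third, the norm bound:
\[
\norm{\ket{w}}^2 = \frac{1}{W_+^{(j^*)}} \norm{\ket{w^{(j^*)}}}^2 \leq 1,
\]
where we chose $\ket{w^{(j^*)}}$ to be the minimal-norm positive witness for $P^{(j^*)}$ on $x^{(j^*)}$.

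Taking the minimum over witnesses, $w_+(x, P) \leq \norm{\ket{w}}^2 \leq 1$. Taking the maximum over $x \in f^{-1}(1)$ gives $W_+(P) \leq 1$. There is no real obstacle in this lemma; the whole point is that the rescaling factor $\sqrt{W_+^{(j)}}$ built into the definition of $A$ was chosen precisely to make the positive witness complexities uniform at $1$ across the different branches of the OR, so that no single branch dominates. The analogous (and more delicate) argument for negative witnesses, where the sizes $W_-^{(j)}$ will add up rather than cancel, is presumably handled in the companion \cref{lem:or-negative}.
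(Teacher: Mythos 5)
Your proposal is correct and follows essentially the same route as the paper: the same witness $\ket{w} = \frac{1}{\sqrt{W_+^{(j)}}}\ket{j}\otimes\ket{w^{(j)}}$, the same verification of membership in $\H(x)$ via \cref{eq:Hx-form}, the same cancellation of the $\sqrt{W_+^{(j)}}$ scaling in $A$, and the same norm bound. No issues.
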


\begin{proof}
	If $f(x)=1$, then there exists $j\in [n]$ such that $f_j(x^{(j)})=1$, so let $\ket{w^{(j)}} \in \H^{(j)}(x^{(j)})$ be a positive witness for $x^{(j)}$ in $P^{(j)}$ with $\norm{\ket{w^{(j)}}}^2 \leq W_+^{(j)}$. Then let $\ket{w}=\frac{1}{\sqrt{W_+^{(j)}}}\ket{j}\ket{w^{(j)}} \in \V$. Then $A\ket{w} = A^{(j)}\ket{w^{(j)}} = \ket{0}$. Furthermore, since $\H(x) = \bigoplus_{j=1}^n \Span\{\ket{j}\} \otimes \H^{(j)}(x^{(j)})$ by \cref{eq:Hx-form}, and $\ket{w^{(j)}} \in \H^{(j)}(x^{(j)})$, we have $\ket{w} \in \H(x)$, so $\ket{w}$ is a positive witness for $x$. Since $\norm{\ket{w^{(j)}}}^2 \leq W_+^{(j)}$, $\ket{w}$ has complexity $\norm{\ket{w}}^2 \leq 1$.
\end{proof}

We complete the proof of \cref{thm:or-span-program} by exhibiting approximate negative witnesses.

\begin{lemma}\label{lem:or-negative}
	If $f(x) = 0$, then there is an approximate negative witness $\ket{\widetilde\omega}$ with $\norm{\bra{\widetilde\omega}A\Pi_{\H(x)}}^2\leq n\lambda/W_+(P)$ and $\norm{\bra{\widetilde\omega}A}^2\leq \sum_{j=1}^n C_j^2$, so $P$ positively $n\lambda$-approximates $f$, and $\widetilde{W}_-(P)\leq \sum_{j=1}^n C_j^2$.
\end{lemma}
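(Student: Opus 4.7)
The plan is to glue the individual approximate negative witnesses of the $P^{(j)}$'s into a single one in $\V$. Since $f(x)=0$, every coordinate satisfies $f_j(x^{(j)})=0$, so each $P^{(j)}$ supplies a witness $\ket{\tilde\omega^{(j)}}\in\V^{(j)}$ with $\braket{\tilde\omega^{(j)}}{0}=1$, $\norm{\bra{\tilde\omega^{(j)}}A^{(j)}}^2\leq W_-^{(j)}$, and $\norm{\bra{\tilde\omega^{(j)}}A^{(j)}\Pi_{\H^{(j)}(x^{(j)})}}^2\leq\lambda/W_+^{(j)}$. Using $\ket{\tau^{(j)}}=\ket{0}$, I would decompose $\ket{\tilde\omega^{(j)}}=\ket{0}+\ket{\tilde\omega^{(j)}_\perp}$ with $\ket{\tilde\omega^{(j)}_\perp}\in\overline{\V}^{(j)}$, and take as my candidate
\[
\ket{\widetilde\omega}:=\ket{0}+\sum_{j=1}^n\ket{\tilde\omega^{(j)}_\perp}\in\V.
\]
The condition $\braket{\widetilde\omega}{\tau}=1$ is then immediate from the orthogonality of $\ket{0}$ and the $\overline{\V}^{(j)}$'s.

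The key step will be to evaluate $\bra{\widetilde\omega}A$ on basis vectors of $\H$. By \cref{eq:or-span}, $A$ annihilates $\Hfalse=\Span\{\ket{0,0}\}$, and on a state $\ket{j}\ket{h^{(j)}}$ with $\ket{h^{(j)}}\in\H^{(j)}$ we get $A\ket{j}\ket{h^{(j)}}=\sqrt{W_+^{(j)}}A^{(j)}\ket{h^{(j)}}\in\V^{(j)}$. For $j'\neq j$ the bra $\bra{\tilde\omega^{(j')}_\perp}$ vanishes on $\V^{(j)}$ because $\overline{\V}^{(j')}\perp\Span\{\ket{0}\}\oplus\overline{\V}^{(j)}=\V^{(j)}$, so only the $j'=j$ summand and the $\bra{0}$ piece of $\bra{\widetilde\omega}$ contribute, giving the identity
\[
\bra{\widetilde\omega}A\ket{j}\ket{h^{(j)}}=\sqrt{W_+^{(j)}}\bigl(\bra{0}+\bra{\tilde\omega^{(j)}_\perp}\bigr)A^{(j)}\ket{h^{(j)}}=\sqrt{W_+^{(j)}}\bra{\tilde\omega^{(j)}}A^{(j)}\ket{h^{(j)}}.
\]
This identity is where the construction really pays off; the main subtlety is to avoid double-counting the shared direction $\ket{0}$, and using only the $\overline{\V}^{(j)}$ components in the sum is precisely what makes the $j'\neq j$ cross terms vanish.

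With this identity in hand, summing squared moduli over orthonormal bases of the $\H^{(j)}$'s and using $\Pi_{\H(x)}=\sum_{j=1}^n\ket{j}\bra{j}\otimes\Pi_{\H^{(j)}(x^{(j)})}$ from \cref{eq:Hx-form} will yield
\begin{align*}
\norm{\bra{\widetilde\omega}A}^2 &= \sum_{j=1}^n W_+^{(j)}\norm{\bra{\tilde\omega^{(j)}}A^{(j)}}^2 \leq \sum_{j=1}^n W_+^{(j)}W_-^{(j)}=\sum_{j=1}^n C_j^2, \\
\norm{\bra{\widetilde\omega}A\Pi_{\H(x)}}^2 &= \sum_{j=1}^n W_+^{(j)}\norm{\bra{\tilde\omega^{(j)}}A^{(j)}\Pi_{\H^{(j)}(x^{(j)})}}^2 \leq \sum_{j=1}^n W_+^{(j)}\cdot\frac{\lambda}{W_+^{(j)}}=n\lambda.
\end{align*}
Combined with $W_+(P)\leq 1$ from \cref{lem:or-positive}, the second bound reads $n\lambda\leq n\lambda/W_+(P)$, which is exactly the approximate negative witness condition required. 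This gives the claimed $\widetilde{W}_-(P)\leq\sum_{j=1}^n C_j^2$ and shows that $P$ positively $n\lambda$-approximates $f$.
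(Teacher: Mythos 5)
Your proposal is correct and follows essentially the same route as the paper: decompose each $\bra{\tilde\omega^{(j)}}$ as $\bra{0}+\bra{\overline\omega^{(j)}}$, sum the $\overline{\V}^{(j)}$ parts onto a single copy of $\bra{0}$, use the orthogonality of the $\overline{\V}^{(j)}$'s to kill cross terms, and sum the resulting squared norms. Your explicit invocation of $W_+(P)\leq 1$ from \cref{lem:or-positive} to pass from $n\lambda$ to $n\lambda/W_+(P)$ is a small point the paper leaves implicit, but otherwise the two arguments coincide.
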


\begin{proof}
	If $f(x) = 0$, then it must be the case that for all $j \in [n]$, $f_j(x^{(j)}) = 0$, so for each $j$, let $\ket{\widetilde\omega^{(j)}}$ be an approximate negative witness for $x^{(j)}$ in $P^{(j)}$ with $\norm{\bra{\widetilde\omega^{(j)}} A^{(j)} \Pi_{\H^{(j)}(x^{(j)})}}^2 \leq \lambda/W_+^{(j)}$, and $\norm{\bra{\widetilde\omega^{(j)}}A^{(j)}}^2 \leq W_-^{(j)}$. For each $j$, we can write $\bra{\widetilde\omega^{(j)}} = \bra{0} + \bra{\overline\omega^{(j)}}$ for some $\ket{\overline\omega^{(j)}} \in \overline{\V}^{(j)}$. We define $\bra{\widetilde\omega} = \bra{0} + \sum_{j=1}^n \bra{\overline\omega^{(j)}}$. Then $\braket{\widetilde\omega}{\tau} = \braket{\widetilde\omega}{0} = 1$. Furthermore, for each $j$, since the column space of $A^{(j)}$ is in $\V^{(j)} = \mathrm{span}\{\ket{0}\} \oplus \overline{\V}^{(j)}$, we have $\bra{\widetilde\omega} A^{(j)} = (\bra{0} + \bra{\overline\omega^{(j)}}) A^{(j)} = \bra{\widetilde\omega^{(j)}} A^{(j)}$. Since $\H(x) = \bigoplus_{j=1}^n \Span\{\ket{j}\} \otimes \H^{(j)}(x^{(j)})$ by \cref{eq:Hx-form},
	\[\norm{\bra{\widetilde\omega} A\Pi_{\H(x)}}^2=\norm{\sum_{j=1}^n \sqrt{W_+^{(j)}}\bra{j}\otimes (\bra{\widetilde\omega^{(j)}} A^{(j)} \Pi_{\H^{(j)}(x^{(j)})})}^2 \leq \sum_{j=1}^n W_+^{(j)} \frac{\lambda}{W_+^{(j)}} = n\lambda,\]
	so $P$ positively $n\lambda$-approximates $f$. Finally, we conclude $\widetilde{W}_-(P) \leq \sum_{j=1}^n C_j^2$ by observing:
	\begin{equation*}
		\norm{\bra{\widetilde\omega}A}^2 = \sum_{j=1}^n W_+^{(j)} \norm{\bra{\widetilde\omega^{(j)}} A^{(j)}}^2 \leq \sum_{j=1}^n W_+^{(j)}W_-^{(j)} = \sum_{j=1}^n C_j^2.\qedhere
	\end{equation*}
\end{proof}

We conclude this section by characterizing the minimal positive witness $\ket{w_0}$ and the kernel of $A$ in the following two lemmas. The proofs are straightforward and can be found in Appendix \ref{app:OR}.

\begin{restatable}{lemma}{ORw}
\label{lem:or-w0}
	The minimal positive witness of $P$ is given by
	\[\ket{w_0} = \frac{1}{\norm{\alpha}^2} \sum_{j=1}^n \alpha_j \ket{j} \otimes \frac{\ket{w_0^{(j)}}}{\norm{\ket{w_0^{(j)}}}}, \qquad \text{where} \qquad \alpha_j = \frac{\sqrt{W_+^{(j)}}}{\norm{\ket{w_0^{(j)}}}}\]
	and $\ket{w_0^{(j)}}$ are the minimal witnesses of $P^{(j)}$. Moreover, the minimal witness size is $N = 1/\norm{\alpha}^2$.
\end{restatable}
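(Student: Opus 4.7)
The plan is to verify that the proposed vector satisfies the two characterizing properties of the pseudoinverse $A^+\ket{\tau}$: namely $A\ket{w_0} = \ket{\tau}$ and $\ket{w_0} \in (\Ker A)^\perp$. Uniqueness of the minimum-norm preimage then identifies $\ket{w_0}$ as the minimal positive witness. The norm $\norm{\ket{w_0}}^2 = N$ is immediate: the $n$ summands $\ket{j}\otimes \ket{w_0^{(j)}}/\norm{\ket{w_0^{(j)}}}$ are mutually orthogonal unit vectors, so $\norm{\ket{w_0}}^2 = \norm{\alpha}^{-4} \sum_j |\alpha_j|^2 = 1/\norm{\alpha}^2$, as claimed.

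The identity $A\ket{w_0} = \ket{\tau}$ is a quick calculation. Using $A(\ket{j}\otimes\ket{h}) = \sqrt{W_+^{(j)}} A^{(j)}\ket{h}$, the property $A^{(j)}\ket{w_0^{(j)}} = \ket{\tau^{(j)}} = \ket{0}$, and the definition $\alpha_j = \sqrt{W_+^{(j)}}/\norm{\ket{w_0^{(j)}}}$, each summand contributes $\alpha_j^2 \ket{0}/\norm{\alpha}^2$, and these sum to $\ket{0} = \ket{\tau}$. This also confirms that $\ket{\tau}$ is in the range of $A$, so the pseudoinverse identification will be meaningful.

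The main obstacle is establishing $\ket{w_0} \perp \Ker A$. For an arbitrary $\ket{v} = \sum_j \ket{j}\ket{v^{(j)}} \in \Ker A$, the kernel condition reads $\sum_j \sqrt{W_+^{(j)}} A^{(j)}\ket{v^{(j)}} = 0$ inside $\V = \Span\{\ket{0}\} \oplus \bigoplus_k \overline{\V}^{(k)}$. The key structural observation is that the range of $A^{(j)}$ sits inside $\V^{(j)} = \Span\{\ket{0}\} \oplus \overline{\V}^{(j)}$, so distinct components $A^{(j)}\ket{v^{(j)}}$ can only interfere along the single shared direction $\ket{0}$. Projecting the kernel equation onto each $\overline{\V}^{(k)}$ separately forces $A^{(j)}\ket{v^{(j)}} = u_j\ket{0}$ for scalars $u_j$ subject to the single constraint $\sum_j \sqrt{W_+^{(j)}} u_j = 0$. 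From here the computation is mechanical: since $\ket{w_0^{(j)}} \in (\Ker A^{(j)})^\perp$, I may replace $\ket{v^{(j)}}$ by its projection onto $(\Ker A^{(j)})^\perp$, which equals $(A^{(j)})^+ A^{(j)} \ket{v^{(j)}} = u_j\, (A^{(j)})^+\ket{0} = u_j \ket{w_0^{(j)}}$, giving $\braket{w_0^{(j)}}{v^{(j)}} = u_j \norm{\ket{w_0^{(j)}}}^2$. Substituting back yields $\braket{w_0}{v} = \norm{\alpha}^{-2} \sum_j \sqrt{W_+^{(j)}} u_j = 0$, which completes the orthogonality check and hence the proof.
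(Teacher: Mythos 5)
Your proof is correct, but it takes a different route from the paper's. You verify the two properties characterizing $A^+\ket{\tau}$ directly: that $A\ket{w_0}=\ket{\tau}$ and that $\ket{w_0}\perp\Ker(A)$, the latter via the structural observation that the ranges of the $A^{(j)}$ overlap only along $\ket{0}$, so a kernel vector must satisfy $A^{(j)}\ket{v^{(j)}}=u_j\ket{0}$ with $\sum_j\sqrt{W_+^{(j)}}\,u_j=0$. The paper instead argues variationally: it asserts that any minimal witness must lie in $\Span\{\ket{j}\ket{w_0^{(j)}}:j\in[n]\}$, parametrizes preimages of $\ket{\tau}$ in that span by coefficients $\beta_j$ summing to $1$, and minimizes the norm by Cauchy--Schwarz, with the minimum attained at $\beta_j=\alpha_j^2/\norm{\alpha}^2$. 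Your verification approach is arguably more self-contained --- the paper's claim that the minimizer ``has to be of this form'' is left terse and implicitly rests on the same kernel structure you make explicit (and which the paper only works out separately in the subsequent lemma characterizing $\Ker(A)$, whose proof your orthogonality argument essentially reproduces). The paper's approach, on the other hand, exhibits the optimal coefficients as the output of an optimization rather than pulling them out of a hat. One small point: a general element of $\Ker(A)$ also has a component along $\ket{0,0}\in\Hfalse$, which your decomposition $\ket{v}=\sum_j\ket{j}\ket{v^{(j)}}$ omits; this is harmless since $\ket{w_0}$ is supported on the labels $j\in[n]$ and is therefore trivially orthogonal to that direction, but it deserves a sentence.
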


The $\ker(A)$ is not just the union of kernels of each $A^{(j)}$ because just as we can combine the minimal witnesses $\ket{w_0^{(j)}}$ to map to $\ket{0}$, we can make a combination that maps to $0$. The following lemma characterizes such combinations of individual minimal witnesses and finds that they are orthogonal to the minimal witness for $P$.

\begin{restatable}{lemma}{ORkernel}
\label{lem:or-kernel}
	Let $K = \Span\{\ket{j}\ket{w_0^{(j)}}:j\in[n]\}\cap \Span\{\ket{w_0}\}^{\bot}$.
	The kernel of $A$ is given by
	\[\Ker(A) = \Span\{\ket{0,0}\} \oplus K \oplus \bigoplus_{j=1}^n \Span\{\ket{j}\} \otimes \Ker(A^{(j)})\]
\end{restatable}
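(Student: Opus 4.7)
The plan is a two-sided subspace containment argument. First, I show each of the three summands $\Span\{\ket{0,0}\}$, $K$, and $\bigoplus_j \Span\{\ket{j}\}\otimes\Ker(A^{(j)})$ is contained in $\Ker(A)$. Then I check that any $\ket{v}\in\Ker(A)$ decomposes into a sum of elements from these summands. Mutual orthogonality---immediate from $\ket{0,0}\in\Hfalse$ and $\ket{w_0^{(j)}}\perp\Ker(A^{(j)})$ (the latter by definition of the Moore--Penrose pseudoinverse)---upgrades the sum to a direct sum.

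For the containments, the cases $\Span\{\ket{0,0}\}\subseteq\Ker(A)$ and $\Span\{\ket{j}\}\otimes\Ker(A^{(j)})\subseteq\Ker(A)$ are immediate from the formula $A=\sum_j\sqrt{W_+^{(j)}}\bra{j}\otimes A^{(j)}$ together with $\braket{j}{0}=0$ for $j\in[n]$. The $K$ case is the one non-trivial calculation: using $A^{(j)}\ket{w_0^{(j)}}=\ket{\tau^{(j)}}=\ket{0}$, any $\ket{v}=\sum_j \beta_j\ket{j}\ket{w_0^{(j)}}$ satisfies
\[
  A\ket{v}=\Bigl(\sum_{j=1}^n\sqrt{W_+^{(j)}}\beta_j\Bigr)\ket{0},
\]
so $\ket{v}\in\Ker(A)$ iff $\sum_j\sqrt{W_+^{(j)}}\beta_j=0$. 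A short computation using the explicit form of $\ket{w_0}$ from \cref{lem:or-w0}---in which the norms $\|\ket{w_0^{(j)}}\|$ conveniently cancel---shows that $\braket{w_0}{v}$ is proportional to the same sum, so the kernel condition on $\Span\{\ket{j}\ket{w_0^{(j)}}\}$ coincides exactly with orthogonality to $\ket{w_0}$. This equivalence is the conceptual heart of the lemma.

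For the reverse containment, I would write an arbitrary $\ket{v}\in\Ker(A)$ as $c_0\ket{0,0}+\sum_j \ket{j}\otimes\ket{v^{(j)}}$ and split $A\ket{v}$ along the decomposition $\V=\Span\{\ket{0}\}\oplus\bigoplus_j\overline{\V}^{(j)}$. Mutual orthogonality of the $\overline{\V}^{(j)}$'s and $\ket{0}$ forces each ``off-diagonal'' component $A^{(j)}\ket{v^{(j)}}-c^{(j)}\ket{\tau^{(j)}}$ (with $c^{(j)}$ the $\ket{0}$-coefficient of $A^{(j)}\ket{v^{(j)}}$) to vanish individually, leaving the single global constraint $\sum_j\sqrt{W_+^{(j)}}c^{(j)}=0$. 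Each $\ket{v^{(j)}}$ then admits a unique splitting $\ket{v^{(j)}}=c^{(j)}\ket{w_0^{(j)}}+\ket{k^{(j)}}$ with $\ket{k^{(j)}}\in\Ker(A^{(j)})$ (using $\ket{w_0^{(j)}}\perp\Ker(A^{(j)})$), which exhibits $\ket{v}$ as a sum of elements from the three claimed summands. The main obstacle is really just the bookkeeping identifying the weighted linear constraint on the $c^{(j)}$ with $\ket{w_0}$-orthogonality; no fundamentally hard step arises beyond that.
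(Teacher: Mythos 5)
Your proof is correct and follows essentially the same route as the paper's: both arguments reduce, via the mutual orthogonality of $\ket{0}$ and the $\overline{\V}^{(j)}$'s, to showing each $A^{(j)}\ket{v^{(j)}}$ lies in $\Span\{\ket{0}\}$, split off the $\Ker(A^{(j)})$ part so that what remains is a multiple of $\ket{w_0^{(j)}}$, and identify the residual constraint $\sum_j\sqrt{W_+^{(j)}}c^{(j)}=0$ with orthogonality to $\ket{w_0}$ using the cancellation of the norms $\|\ket{w_0^{(j)}}\|$. The only difference is bookkeeping (you subtract $c^{(j)}\ket{w_0^{(j)}}$ where the paper decomposes $\ket{h_j}$ into $\mathrm{row}(A^{(j)})\oplus\Ker(A^{(j)})$ upfront), and your explicit verification of the forward containment of $K$ and of mutual orthogonality is if anything slightly more complete than the paper's.
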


\subsection{Implementation of the OR span program}
\label{sec:or-span-implementation}

Now that we have formally defined the OR span program in \cref{def:or-span-program}, we proceed by analyzing the implementation cost of the algorithm compiled from it. To that end, we first of all assume that all of the spaces $\H^{(j)}$ correspond to $M$ qubits, i.e., $\H^{(j)} = \C^{2^M}$ for all $j \in [n]$. This is not much of a restriction, as we can always simply pad the smaller $\H^{(j)}$'s with extra qubits that we don't touch until our space is as big as the largest state space of the individual span programs.

The main idea of this section will be to use \cref{thm:span-program-time}, and give implementations to the required subroutines in terms of the individual span programs $P^{(j)}$. This sometimes requires running several subroutines associated with the individual $P^{(j)}$'s concurrently. We first formalize this idea in the definition below.

\begin{definition}
	\label{def:concurrent-access}
	Let $\Cal{C}^{(1)}, \dots, \Cal{C}^{(n)}$ be quantum subroutines, all acting on the same Hilbert space $\H$. We say that a subroutine $\mathcal{C}$ provides \emph{concurrent access} to $\{\Cal{C}^{(j)}\}_{j=1}^n$ if it performs the following action on $\C^{[n]} \otimes \H$:
	\[\mathcal{C} = \sum_{j=1}^n \ket{j}\bra{j} \otimes \Cal C^{(j)}.\]
\end{definition}

Next, we present the main theorem relating the cost of implementing the span program compiled from $P$ to the cost of implementing the subroutines that are associated with the individual $P^{(j)}$'s.

\begin{lemma}\label{lem:or-span-time}
	Let $\lambda \in [0,1/(2n))$ and let $\{P^{(j)}\}_{j=1}^n$ be a set of positively $\lambda$-approximating span programs. For all $j \in [n]$, let $\ket{w_0^{(j)}}$ be a minimal positive witness for $P^{(j)}$ and let $W_+^{(j)} \geq W_+(P^{(j)})$ and $W_-^{(j)} \geq \widetilde{W}_-(P^{(j)})$ be upper bounds on the positive and negative complexities. Furthermore, for each $j \in [n]$ and $x^{(j)} \in \{0,1\}^{m_j}$, let $\H_{x^{(j)}}$ be an implementing subspace of $P^{(j)}$ for $x^{(j)}$. Let $P$ be the span program described in \cref{eq:or-span} and suppose that we have concurrent access to the following four sets of subroutines (as defined in \cref{def:concurrent-access}):
	\begin{enumerate}
		\setlength\itemsep{-.2em}
		\item A circuit $\mathcal{R}_A$, providing concurrent access to the subroutines $\{\Cal{R}_{\Ker(A^{(j)})}\}_{j=1}^n$, where $\Cal{R}_{\Ker(A^{(j)})}$ acts on $\H_{x^{(j)}}$ as $2\Pi_{\Ker(A^{(j)})} - \I$.
		\item A circuit $\mathcal{C}$, providing concurrent access to the subroutines $\{\Cal C_{\ket{w_0^{(j)}}}\}_{j=1}^n$, where $\Cal C_{\ket{w_0^{(j)}}}$ leaves $\H_{x^{(j)}}$ invariant and maps $\ket{0}$ to $\ket{w_0^{(j)}} / \|\ket{w_0^{(j)}}\|$.
		\item A circuit $\mathcal{R}_\H$, providing concurrent access to the subroutines $\{\Cal R_{\H(x^{(j)})}\}_{j=1}^n$, where $\Cal R_{\H(x^{(j)})}$ acts on $\H_{x^{(j)}}$ as $2\Pi_{\H(x^{(j)})} - \I$.
		\item A circuit $\mathcal{R}_0$, providing concurrent access to the subroutines $\{\Cal R_{\ket{0}}^{(j)}\}_{j=1}^n$, where $\Cal R_{\ket{0}}^{(j)}$ acts on $\H_{x^{(j)}}$ as $2\ket{0}\bra{0} - \I$.
		\item A circuit $\Cal{C}_{\alpha}$ that prepares the superposition
		\[\Cal{C}_{\alpha} : \ket{0} \mapsto \frac{1}{\norm{\alpha}}\sum_{j=1}^n \alpha_j \ket{j} \qquad \text{where} \qquad \alpha_j = \frac{\sqrt{W_+^{(j)}}}{\norm{\ket{w_0^{(j)}}}}.\]
	\end{enumerate}
	Then, we can implement the span program algorithm for $P$ with a number of calls to the aforementioned circuits that satisfies
	\[\O\left(\frac{\sqrt{\sum_{j=1}^n C_j^2}}{(1-2n\lambda)^{3/2}} \log \frac{1}{1-2n\lambda}\right) \qquad \text{where} \qquad C_j = \sqrt{W_+^{(j)}W_-^{(j)}},\]
	and a number of extra gates and auxiliary qubits that satisfies $\O(\mathrm{polylog}(\sqrt{\sum_{j=1}^n C_j^2},1/(1-2n\lambda),n))$.
\end{lemma}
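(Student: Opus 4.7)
The plan is to apply \cref{thm:span-program-time} to the OR span program $P$ from \cref{def:or-span-program}, using the bounds $W_+(P)\leq 1$ and $\widetilde{W}_-(P)\leq \sum_{j=1}^n C_j^2$ from \cref{thm:or-span-program} so that $C(P)\leq\sqrt{\sum_j C_j^2}$ and the approximation parameter is $n\lambda$. As the implementing subspace I would take
\[
  \H_x \;:=\; \Span\{\ket{0,0}\}\;\oplus\;\bigoplus_{j=1}^n \Span\{\ket{j}\}\otimes \H_{x^{(j)}},
\]
and verify the four conditions of \cref{def:implementing_subspace} directly from the block structure of $A$ in \cref{eq:or-span}, \cref{lem:or-w0} (which puts $\ket{w_0}$ inside $\H_x$), and \cref{lem:or-kernel}: writing $Q^{(j)}:=\Ker(A^{(j)})\oplus\Span\{\ket{w_0^{(j)}}\}$ and $W:=\Span\{\ket{0,0}\}\oplus\bigoplus_j\Span\{\ket{j}\}\otimes Q^{(j)}$, one has $\Ker(A)=W\ominus\Span\{\ket{w_0}\}$, so $\Pi_{\Ker(A)}$ preserves $\H_x$ because each $\Pi_{Q^{(j)}}$ does. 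The projector $\Pi_{\H(x)}$ preserves $\H_x$ directly from \cref{eq:Hx-form}.

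The core of the proof is building the four subroutines of \cref{thm:span-program-time} from the concurrent oracles with only $\O(1)$ calls each. Three of them are essentially direct. (i) On $\H_x$, $\ket{0,0}$ is the unique vector whose first register is $\ket{0}$; hence $\Cal R_{\ket{0}}$ agrees with the first-register reflection $2\ket{0}\bra{0}_1-I_1$, costing $\O(\log n)$ gates. (ii) The reflection $\Cal R_{\H(x)}$ equals $-\mathcal{R}_\H\cdot(2\ket{0}\bra{0}_1-I_1)$ on $\H_x$: the first-register factor supplies the correct $-1$ phase on the $j=0$ summand, and $\mathcal{R}_\H$ provides the blockwise reflection on the $j\in[n]$ summands. (iii) For $\Cal R_{\Ker(A)}$ I would apply twice the elementary identity $R_{B\ominus C}=-R_B R_C$ (valid when $C\subseteq B$, following from $\Pi_B\Pi_C=\Pi_C$): at the top level it gives $\Cal R_{\Ker(A)}=-R_W\cdot R_{\Span\{\ket{w_0}\}}$, and blockwise it gives $R_{Q^{(j)}}=-\Cal R_{\Ker(A^{(j)})}\cdot R_{\Span\{\ket{w_0^{(j)}}\}}$, so that, up to a phase correction on the $j=0$ block,
\[
  R_W \;=\; -\mathcal{R}_A\cdot\mathcal{C}\mathcal{R}_0\mathcal{C}^{\dagger}, \qquad R_{\Span\{\ket{w_0}\}} \;=\; \Cal C_{\ket{w_0}}\,\Cal R_{\ket{0}}\,\Cal C_{\ket{w_0}}^{\dagger},
\]
once $\Cal C_{\ket{w_0}}$ is built.

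The main obstacle is the preparation subroutine $\Cal C_{\ket{w_0}}$, which must both send $\ket{0,0}\mapsto\ket{w_0}/\norm{\ket{w_0}}$ and leave $\H_x$ invariant. The formula from \cref{lem:or-w0} suggests first using $\Cal C_\alpha$ on the first register and then a single concurrent call to $\mathcal{C}$ on the second, but this naive composition fails to preserve $\H_x$: for $j\neq 0$, the operator $\Cal C_\alpha$ spreads $\ket{j}$ across all labels and carries a typical vector $\ket{j}\ket{h^{(j)}}\in\H_x$ into superpositions $\ket{j'}\ket{h^{(j)}}$ with $\ket{h^{(j)}}\notin\H_{x^{(j')}}$. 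To fix this I would mimic the flagging trick of \cref{lem:state-comb}: write into an ancilla the predicate ``first register is $\ket{0}$'' using one controlled call to $2\ket{0}\bra{0}_1-I_1$; controlled on that flag, apply $\Cal C_\alpha$ to the first register followed by $\mathcal{C}$ to the second; then uncompute the flag. Since the flagged operation is nontrivial only on the one-dimensional subspace $\Span\{\ket{0,0}\}\subseteq \H_x$, the combined circuit acts as $\ket{0,0}\mapsto \ket{w_0}/\norm{\ket{w_0}}$ and as the identity on its orthogonal complement in $\H_x$, so $\H_x$ is preserved by construction. Assembling these implementations, each of the four subroutines uses $\O(1)$ calls to $\mathcal{R}_A,\mathcal{R}_\H,\mathcal{R}_0,\mathcal{C},\Cal C_\alpha$ together with $\O(\log n)$ extra gates and $\O(1)$ auxiliary qubits; plugging these counts into \cref{thm:span-program-time} with $C(P)\leq\sqrt{\sum_j C_j^2}$ and approximation parameter $n\lambda$ then gives the stated call count and polylogarithmic gate and qubit overhead.
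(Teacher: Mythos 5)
Your overall route is the same as the paper's: apply \cref{thm:span-program-time} to $P$ with the block implementing subspace $\H_x = \Span\{\ket{0,0}\}\oplus\bigoplus_{j=1}^n\Span\{\ket{j}\}\otimes\H_{x^{(j)}}$, build the four required subroutines from the concurrent oracles with $\O(1)$ calls each, and read off the complexity from $C(P)\le\sqrt{\sum_j C_j^2}$ and approximation parameter $n\lambda$. Your implementations of $\Cal R_{\ket{0}}$ and $\Cal R_{\H(x)}$ and your factorization of $\Cal R_{\Ker(A)}$ via the identity $R_{B\ominus C}=-R_BR_C$ are correct and equivalent to the paper's (which groups the same orthogonal decomposition of $\Ker(A)$ as $\Cal R_{\ket{0}}\,\Cal C_{\ket{w_0}}\Cal R_{\ket{0}}\Cal C_{\ket{w_0}}^{\dagger}\,\mathcal{C}\mathcal{R}_0\mathcal{C}^{\dagger}\,\mathcal{R}_A$). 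You also correctly diagnose why the naive composition $\mathcal{C}\cdot(\Cal C_\alpha\otimes\I)$ fails to preserve $\H_x$.

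However, your construction of $\Cal C_{\ket{w_0}}$ has a genuine bug. You compute a flag recording the predicate ``the first (label) register is $\ket{0}$,'' then, controlled on that flag, apply $\Cal C_\alpha$ to the first register, and finally ``uncompute the flag.'' But $\Cal C_\alpha$ maps $\ket{0}$ to $\frac{1}{\norm{\alpha}}\sum_j\alpha_j\ket{j}$, which has no component on $\ket{0}$ in the label register, so recomputing the predicate after the controlled operation yields $0$ rather than $1$: the flag is left in $\ket{1}$ precisely on the image of $\Span\{\ket{0,0}\}$ and in $\ket{0}$ elsewhere. The ancilla therefore stays entangled with the data, and the circuit does not restrict to a unitary on $\H_x\otimes\ket{0}_{\mathrm{flag}}$; repeated $\Theta(\sqrt{\sum_j C_j^2})$ times inside phase estimation this breaks the algorithm. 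The uncomputation trick of \cref{lem:state-comb}, which you invoke, only works because there the flagged operation acts on a register \emph{disjoint} from the one the predicate depends on. The repair is the paper's ordering: set the flag according to whether the \emph{data} register is in the all-zeros state (one controlled call to $\mathcal{R}_0$), apply $\Cal C_\alpha$ to the \emph{label} register controlled on that flag --- the data register is untouched, so the predicate still holds and the flag uncomputes cleanly --- and only then apply $\mathcal{C}$ \emph{unconditionally}. This last step still leaves $\H_x$ invariant because each $\Cal C_{\ket{w_0^{(j)}}}$ leaves $\H_{x^{(j)}}$ invariant; one does not need $\Cal C_{\ket{w_0}}$ to act as the identity on $\H_x\ominus\Span\{\ket{0,0}\}$, only to preserve $\H_x$. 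With this repair the rest of your argument goes through.
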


\begin{proof}
	We apply \cref{thm:span-program-time} to $P$. As $P$ is positively $\lambda'$-approximating with $\lambda' = n\lambda < 1/2$, the first requirement is satisfied.

	Next, we define the implementing subspace that we use. We take $\H_x$ to be
	\[\H_x = \Span\{\ket{0,0}\} \oplus \bigoplus_{j=1}^n \Span\{\ket{j}\} \otimes \H_{x^{(j)}}^{(j)},\]
	i.e., we have one orthogonal direction that contains all scalar multiples of the all-zeros state, and all the implementing subspaces associated with the individual $P^{(j)}$'s labeled by $j$. We refer to the first and second registers as the \textit{label register} and \textit{data register}, respectively.

	Now, we turn our attention to the implementation of the four subroutines listed in \cref{thm:span-program-time}. First, we implement the reflection through the $\ket{0}$-state, $\Cal R_{\ket{0}}$. Observe that the all-zeros state in $\H_x$ is the state $\ket{0,0}$. But the only state in $\H_x$ that has zero in the label register is exactly the all-zeros state. Hence, we can simply reflect through $\ket{0}$ on the first register, which has only $\O(\log(n))$ qubits. Thus, we can implement $\Cal R_{\ket{0}}$ in $\O(\log(n))$ gates.

	Next, we turn our attention to the implementation of $\Cal C_{\ket{w_0}}$. From \cref{lem:or-w0} we find that
	\[\frac{\ket{w_0}}{\norm{\ket{w_0}}} = \frac{1}{\norm{\alpha}}\sum_{j=1}^n \alpha_j \ket{j} \otimes \frac{\ket{w_0^{(j)}}}{\norm{\ket{w_0^{(j)}}}}.\]
	This allows for defining $\Cal C_{\ket{w_0}}$ as the following procedure.
	\begin{enumerate}
		\setlength\itemsep{-.4em}
		\item First, we prepare an auxiliary qubit in the state $\ket{1}$ whenever the data register is in the state $\ket{0}$, and $\ket{0}$ otherwise. This requires one controlled call to $\mathcal{R}_0$ alongside with $\O(1)$ auxiliary gates.
		\item Next, conditioned on this auxiliary qubit, we apply $\Cal C_{\alpha}$ to the label register.
		\item Now, we uncompute the auxiliary qubit with the gates from step 1 applied in reverse. This uncomputation succeeds with certainty as the all-zeros states in all the $\H_{x^{(j)}}^{(j)}$'s are the same, and hence permuting the labels effectively permutes between different all-zeros states in the $\H_{x^{(j)}}^{(j)}$'s.
		\item Finally, we call $\mathcal{C}$.
	\end{enumerate}
	We observe that the first three steps perform some unitary on all the $n+1$ states that have the all-zeros state in the data register. As all these states are part of $\H_x$, they leave $\H_x$ invariant. Similarly, the fourth step leaves $\H_x$ invariant as all of the individual subroutines that make up $\mathcal{C}$ leave their respective implementing subspace $\H_{x^{(j)}}^{(j)}$ invariant. Hence, the entire procedure $\Cal C_{\ket{w_0}}$ leaves $\H_x$ invariant.

	Furthermore, observe that if we start in the state $\ket{0,0}$, the mapping that is implemented is the following
	\[\ket{0,0} \overset{\text{steps }1-3}{\mapsto} \frac{1}{\norm{\alpha}}\sum_{j=1}^n \alpha_j \ket{j,0} \overset{\mathcal{C}}{\mapsto} \frac{1}{\norm{\alpha}} \sum_{j=1}^n \alpha_j \ket{j} \otimes \frac{\ket{w_0^{(j)}}}{\norm{\ket{w_0^{(j)}}}}.\]
	Thus, we conclude that we can implement $\Cal C_{\ket{w_0}}$ using $\O(1)$ calls to $\mathcal{R}_0$, $\Cal C_{\alpha}$ and $\mathcal{C}$ and $\O(1)$ extra gates and auxiliary qubits.

	We proceed by providing an implementation of the reflection through the kernel of $A$. To that end, remember from \cref{lem:or-kernel} that
	\[\Ker(A) = \Span\{\ket{0,0}\} \oplus \left[\Span\{\ket{w_0}\}^{\perp} \cap \underset{W_0}{\underbrace{\Span\{\ket{j}\ket{w_0^{(j)}} : j \in [n]\}}}\right] \oplus \bigoplus_{j=1}^n \Span\{\ket{j}\} \otimes \Ker(A^{(j)}).\]
	As $\Span\{\ket{w_0}\} \subseteq W_0$, we observe that
	\[2\Pi_{\Ker(A)} - \I = \left(2\ket{0,0}\bra{0,0} - \I\right)\left(\frac{2\ket{w_0}\bra{w_0}}{\norm{\ket{w_0}}^2} - \I\right)\left(2\Pi_{W_0} - \I\right)\left(\sum_{j=1}^n \ket{j}\bra{j} \otimes \left(2\Pi_{\Ker(A^{(j)})} - \I\right)\right).\]
	The first factor is simply $\Cal R_{\ket{0}}$ on $\H_x$. Similarly, the last factor is exactly the action of $\mathcal{R}_A$ on $\H_x$. The second factor can easily be implemented by the sequence $\Cal C_{\ket{w_0}}\Cal R_{\ket{0}}\Cal C_{\ket{w_0}}^{\dagger}$. So, it remains to implement the third factor, which we can achieve by observing that on $\H_x$ we have
	\[2\Pi_{W_0} - \I = \sum_{j=1}^n \ket{j}\bra{j} \otimes \left(\frac{2\ket{w_0^{(j)}}\bra{w_0^{(j)}}}{\norm{\ket{w_0^{(j)}}}^2} - \I\right) = \sum_{j=1}^n \ket{j}\bra{j} \otimes \left(\Cal C_{\ket{w_0^{(j)}}} \Cal R_{\ket{0}}^{(j)} \Cal C_{\ket{w_0^{(j)}}}^{\dagger}\right) = \mathcal{C}\mathcal{R}_0\mathcal{C}^{\dagger}.\]
	Thus, we have
	\[\Cal R_{\Ker(A)} = \Cal R_{\ket{0}} \Cal C_{\ket{w_0}} \Cal R_{\ket{0}} \Cal C_{\ket{w_0}}^{\dagger}\mathcal{C}\mathcal{R}_0\mathcal{C}^{\dagger}\mathcal{R}_A.\]
	As all the individual factors leave $\H_x$ invariant, so does their product. Hence, we can implement $\Cal R_{\Ker(A)}$ with $\O(1)$ calls to the subroutines mentioned in the statement of the lemma.

	It remains to implement the routine $\Cal R_{\H(x)}$. To that end, observe that
	\[2\Pi_{\H(x)} - \I = \sum_{j=1}^n \ket{j}\bra{j} \otimes \left(2\Pi_{\H^{(j)}(x^{(j)})} - \I\right),\]
	which implies that we can simply implement the reflection through $\H(x)$ with one call to $\mathcal{R}_\H$.

	We have implemented all routines in the statement of \cref{thm:span-program-time} with $\O(1)$ calls to the routines listed in the statement of this lemma. That means that the total number of calls to these routines is equal up to constants to the expression in \cref{thm:span-program-time}, which reduces to
	\[\O\left(\frac{\sqrt{W_+(P)\widetilde{W}_-(P)}}{(1-2n\lambda)^{3/2}}\log\frac{1}{1-2n\lambda}\right) = \O\left(\frac{\sqrt{\sum_{j=1}^n C_j^2}}{(1-2n\lambda)^{3/2}}\log\frac{1}{1-2n\lambda}\right).\]
	Moreover, it follows directly from the statement of \cref{thm:span-program-time} that the total number of extra gates is $\O(\mathrm{polylog}(\sqrt{\sum_{j=1}^n C_j^2},1/(1-2n\lambda)))$. This completes the proof.
\end{proof}

\subsection{Implementation of variable time quantum search}
\label{sec:var-time-search}

In this section, we prove \cref{thm:variable-time}. The core idea is to first convert the algorithms into span programs using the construction from \cref{sec:main_thm}, next merge them into an OR span program as in \cref{def:or-span-program}, and finally convert that back into a quantum algorithm using \cref{lem:or-span-time}.

There is one caveat though. If we naively use the span programs of the algorithms $\A^{(j)}$ from \cref{def:PA} with the upper bounds on the positive witness sizes that follow from \cref{lem:w+}, then we might end up with completely arbitrary coefficients $\alpha_j$ in \cref{lem:or-span-time}, making it too time-consuming to implement $\Cal C_{\alpha}$. We circumvent this using a technique that was already present in Ambainis's original paper \cite{Amb06}, which we dub the \textit{binning technique}. The next two lemmas formalize this idea and their proofs can be found in \cref{app:OR}.

\begin{restatable}{lemma}{binningTech}
	\label{lem:binning-technique}
	Let $0 < \gamma_{\min} = \gamma_1 \leq \dots \leq \gamma_n = \gamma_{\max}$. Then, we can efficiently find a sequence of integers $0 = j_0 \leq \dots \leq j_k = n$ such that $k \leq \lceil\log(\gamma_{\max}/\gamma_{\min})\rceil \cdot \lceil \log(n)\rceil$ and the following two properties hold:
	\begin{enumerate}
		\setlength\itemsep{-.4em}
		\item For all $\ell \in [k]$, $j_{\ell} - j_{\ell-1}$ is a power of $2$.
		\item For all $\ell \in [k]$ and $j \in [j_{\ell-1}+1,j_{\ell}]$,
		\[\frac{\gamma_{j_{\ell}}}{2} \leq \gamma_j \leq \gamma_{j_{\ell}}.\]
	\end{enumerate}
\end{restatable}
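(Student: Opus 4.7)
The plan is to partition the sorted index set $[n]$ into contiguous \emph{geometric bins} according to the magnitude of the $\gamma_j$'s, and then subdivide each bin into sub-blocks whose sizes are powers of two.

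For every integer $i \geq 0$, define
\[
	B_i = \bigl\{j \in [n] : \gamma_{\max}/2^{i+1} < \gamma_j \leq \gamma_{\max}/2^i\bigr\}.
\]
Since the sequence $(\gamma_j)_{j=1}^n$ is non-decreasing, each $B_i$ is a contiguous interval of indices, and the non-empty $B_i$'s form a partition of $[n]$ into consecutive intervals $B_0, B_1, \dots, B_{i_{\max}}$ ordered from right to left, where $i_{\max}$ is determined by the condition $\gamma_1 = \gamma_{\min} \in B_{i_{\max}}$. A direct calculation gives $i_{\max} = \lfloor \log_2(\gamma_{\max}/\gamma_{\min})\rfloor$, so the number of non-empty bins is at most $\lceil \log(\gamma_{\max}/\gamma_{\min})\rceil + O(1)$ (and one may absorb the $O(1)$ into the claimed bound by replacing the last bin with its neighbour if needed, or just observing the stated inequality still holds up to constants).

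Next, within each bin $B_i$ of size $b_i := |B_i|$, write the binary expansion $b_i = \sum_{s=1}^{r_i} 2^{a_{i,s}}$ with distinct $a_{i,s}$ and $r_i \leq \lfloor \log_2 n \rfloor + 1 \leq \lceil \log n \rceil$. This gives a decomposition of the contiguous interval $B_i$ into $r_i$ sub-intervals whose lengths are powers of two. Concatenating these decompositions bin-by-bin from $B_{i_{\max}}$ (which contains index $1$) up to $B_0$ (which contains index $n$) yields the desired sequence $0 = j_0 < j_1 < \cdots < j_k = n$. The number of pieces satisfies $k = \sum_i r_i \leq \lceil \log(\gamma_{\max}/\gamma_{\min})\rceil \cdot \lceil \log n\rceil$, which is exactly the bound claimed.

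It remains to check the two properties. Property 1 holds by construction, since each $j_\ell - j_{\ell-1}$ is one of the powers $2^{a_{i,s}}$ in some binary decomposition. For Property 2, fix $\ell \in [k]$ and let $B_i$ be the bin containing the sub-interval $(j_{\ell-1}, j_\ell]$. Then $j_\ell \in B_i$ gives $\gamma_{j_\ell} \leq \gamma_{\max}/2^i$, while for every $j \in (j_{\ell-1}, j_\ell]$ we have $j \in B_i$ and hence $\gamma_j > \gamma_{\max}/2^{i+1} \geq \gamma_{j_\ell}/2$; the monotonicity of $\gamma$ gives $\gamma_j \leq \gamma_{j_\ell}$. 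Both inequalities of Property 2 therefore hold. Finally, the construction is efficient: scanning the sorted sequence once to locate the bin boundaries takes $O(n)$ time, and computing the binary decompositions is also straightforward.

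The argument is essentially bookkeeping, so there is no real obstacle; the only subtle point is matching the exact constants in $\lceil \log(\gamma_{\max}/\gamma_{\min}) \rceil \cdot \lceil \log n \rceil$, which may require either merging adjacent empty bins or noting that the bound $k \leq (i_{\max}+1)\cdot \lceil \log n \rceil$ already absorbs the $+1$ comfortably for the intended applications.
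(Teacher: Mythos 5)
Your proposal is correct and follows essentially the same route as the paper: geometric binning of the $\gamma_j$ by powers of two followed by a binary decomposition of each bin's size, the only cosmetic difference being that you anchor the bins at $\gamma_{\max}$ rather than at $\gamma_{\min}$. The off-by-one in the bin count that you flag (arising when $\gamma_{\max}/\gamma_{\min}$ is an exact power of two) is in fact also present, unacknowledged, in the paper's own argument, so your suggested merging of the boundary bin is a reasonable fix for both.
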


The above lemma is nothing more than a statement about how we can put a sequence of positive reals into several bins. We now use it to modify the upper bounds $W_+^{(j)}$, so as to make the cost of implementing $\Cal C_{\alpha}$ scale more favorably.

\begin{restatable}{lemma}{Calpha}
	\label{lem:Calpha}
	Let $\A = \{{\cal A}^{(j)}\}_{j=1}^n$ be a finite set of quantum algorithms, where $\A^{(j)}$ has query complexity $1 \leq S_j \leq S_{\max}$. Let $P^{(j)}$ be the span program of $\A^{(j)}$. Then, we can define positive reals $\{W_+^{(j)}\}_{j=1}^n$ such that $W_+(P^{(j)}) \leq W_+^{(j)} \leq 12(2S_j+1)$, and a sequence of integers $0 = j_0 \leq \cdots \leq j_k = n$ with $k \leq \lceil\frac12\log(6S_{\max})\rceil \cdot \lceil\log(n)\rceil$, such that for every $\ell \in [k]$, $j_{\ell} - j_{\ell-1}$ is a power of $2$ and for every $j \in [j_{\ell-1}+1,j_{\ell}]$,
	\[\alpha_j = \frac{\sqrt{W_+^{(j)}}}{\norm{\ket{w_0^{(j)}}}} = \frac{\sqrt{W_+^{(j_\ell)}}}{\norm{\ket{w_0^{(j_\ell)}}}} = \alpha_{j_{\ell}}.\]
	With this choice of upper bounds $\{W_+^{(j)}\}_{j=1}^n$, we can implement the circuit $\Cal C_{\alpha}$, as defined in \cref{lem:or-span-time}, with $\O(\log(S_{\max})\log^2(n))$ gates and $\O(\log(n))$ auxiliary qubits.
\end{restatable}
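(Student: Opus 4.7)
The plan is to exploit the factor-of-$2$ slack in the admissible range $[W_+(P^{(j)}), 12(2S_j+1)]$ for $W_+^{(j)}$, which by \cref{lem:w+} is nontrivial since $W_+(P^{(j)}) \leq 3(2S_j+1)$. Combining this slack with \cref{lem:binning-technique} will force $\alpha_j$ to take at most $k = \O(\log S_{\max}\log n)$ distinct values, and I will then prepare the target state of $\Cal C_\alpha$ bin by bin.

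I first set $N_j := \norm{\ket{w_0^{(j)}}}^2$ and $\gamma_j := \sqrt{12(2S_j+1)/N_j}$, so that $\alpha_j = \sqrt{W_+^{(j)}/N_j}$. \Cref{lem:w0} gives $N_j \geq 1/2$, and $N_j \leq W_+(P^{(j)}) \leq 3(2S_j+1)$ holds because $\ket{w_0^{(j)}}$ is the shortest preimage of $\ket{\tau^{(j)}}$ under $A^{(j)}$ while every positive witness is such a preimage. These bounds force $\gamma_j^2 \in [4,\O(S_{\max})]$. Sorting the indices in non-decreasing order of $\gamma_j$ and invoking \cref{lem:binning-technique} yields cut points $0 = j_0 \leq \cdots \leq j_k = n$ with $k \leq \lceil\tfrac12\log(6S_{\max})\rceil\cdot\lceil\log n\rceil$, each difference $j_\ell - j_{\ell-1}$ a power of $2$, and $\gamma_{j_\ell}/2 \leq \gamma_j \leq \gamma_{j_\ell}$ for every $j$ in the $\ell$-th bin. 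I then set $\alpha^{(\ell)} := \gamma_{j_\ell}/2$ and $W_+^{(j)} := (\alpha^{(\ell)})^2 N_j$. The inclusion $\alpha^{(\ell)} \in [\gamma_j/2,\gamma_j]$, together with $\sqrt{W_+(P^{(j)})/N_j} \leq \gamma_j/2$ (a rewriting of \cref{lem:w+}), guarantees $W_+^{(j)} \in [W_+(P^{(j)}), 12(2S_j+1)]$ and ensures $\alpha_j = \alpha^{(\ell)}$ for every $j$ in the bin.

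With this binning in hand, the target state of $\Cal C_\alpha$ factorizes as
\[
\sum_{\ell=1}^k \frac{\alpha^{(\ell)}\sqrt{j_\ell - j_{\ell-1}}}{\norm{\alpha}}\ket{\ell}\otimes\Bigl(\frac{1}{\sqrt{j_\ell - j_{\ell-1}}}\sum_{i=0}^{j_\ell-j_{\ell-1}-1}\ket{i}\Bigr),
\]
followed by the classical map $(\ell,i) \mapsto j = j_{\ell-1}+1+i$. I would realize this in three stages using a bin register of $\O(\log k)$ qubits and an offset register of $\lceil\log n\rceil$ qubits. First, prepare the $k$-amplitude superposition on the bin register via standard controlled rotations in $\O(k)$ gates. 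Second, controlled on the bin label $\ket{\ell}$, apply Hadamards to the lower $\log_2(j_\ell - j_{\ell-1})$ qubits of the offset register; this is efficient precisely because the bin size is a power of $2$, and summed over the $k$ bins it costs $\O(k\log n)$ gates. Third, use an $\O(\log n)$-gate adder to write $j = j_{\ell-1}+1+i$ into a clean $\lceil\log n\rceil$-qubit output register, and uncompute the bin and offset registers (reversible since every $j$ uniquely determines its bin). The totals are $\O(k\log n) = \O(\log S_{\max}\log^2 n)$ gates and $\O(\log n)$ auxiliary qubits.

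The main obstacle is getting the constants to line up: the factor-of-$2$ spread of $\gamma_j$ within a bin produced by \cref{lem:binning-technique} must exactly match the factor-of-$2$ slack in the admissible range of $\alpha_j$, which is what pins down the coefficient $12$ in the upper bound $W_+^{(j)} \leq 12(2S_j+1)$ as four times the bound $3(2S_j+1)$ from \cref{lem:w+}, and the bound $N_j \geq 1/2$ from \cref{lem:w0} is what produces the constant $6S_{\max}$ inside the logarithm in the bound on $k$. The rest is essentially bin-by-bin state preparation and routine reversible arithmetic.
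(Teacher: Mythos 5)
Your proposal is correct and follows essentially the same route as the paper: the same quantities $\gamma_j$ (up to a factor-of-$2$ normalization), the same ordering-plus-binning via \cref{lem:binning-technique}, an algebraically identical choice $W_+^{(j)} = \gamma_{j_\ell}^2 N_j/4$, and the same three-stage bin/offset state preparation with uncomputation. The only loose end is constant tracking in the bound on $k$: bounding $\gamma_{\max}/\gamma_{\min}$ from the separate interval $\gamma_j^2\in[4,24(2S_j+1)]$ gives roughly $\sqrt{18S_{\max}}$ rather than the stated $\sqrt{6S_{\max}}$, so to recover the exact constant you should bound the ratio directly using $1/2\le N_j\le 3$ (from \cref{lem:w0}) and $2S_j+1\ge 3$, as the paper does.
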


Now that we can implement $\Cal C_{\alpha}$ in a number of gates that scales polylogarithmically in both $S_{\max}$ and $n$, we turn our attention to the proof of \cref{thm:variable-time}.

\begin{proof}[Proof of \cref{thm:variable-time}]
	First of all, we consider the case where the $n$ algorithms $\{\A^{(j)}\}_{j=1}^n$ are clean quantum algorithms with error probabilities satisfying $\varepsilon_j < 1/(80n)$ and $\varepsilon_j = o(1/\sum_{j=1}^n S_j^2)$. In the final paragraph of this proof, we will lift this restriction.

	We modify the algorithms $\A^{(1)}, \dots, \A^{(n)}$ slightly. Similar to the proof of \cref{lem:Q_Uniformity}, we insert a sequence $\O_x, \I, \O_x$ into all $\A^{(j)}$'s at a spacing $B$ defined by
	\[B = \left\lceil\sqrt{\frac{\sum_{j=1}^n T_j^2}{\sum_{j=0}^n S_j^2}}\right\rceil.\]
	We denote the algorithm that we obtain after this modification by $\overline{\A}^{(j)}$, and its query and time complexity by $\overline{S}_j$ and $\overline{T}_j$, respectively. Using a similar analysis as in the proof of \cref{lem:Q_Uniformity}, we obtain
	\[\overline{S}_j = \Theta\left(S_j + \frac{T_j}{B}\right), \qquad \overline{T}_j = \Theta(T_j), \qquad \text{and} \qquad \frac{\overline{T}_j}{\overline{S}_j} = \O(B).\]

	Next, we turn these algorithms $\{\overline{\A}^{(j)}\}_{j=1}^n$ into span programs $\{P^{(j)}\}_{j=1}^n$ using \cref{def:PA}. According to \cref{lem:Calpha}, we can define the upper bounds $\{W_+^{(j)}\}_{j=1}^n$ such that
	\[W_+(P^{(j)}) \leq W_+^{(j)} \leq 12(2\overline{S}_j+1) = \O(\overline{S}_j),\]
	and such that we can implement $\Cal{C}_{\alpha}$ in a number of gates and auxiliary qubits that scales polylogarithmically in $S_{\max}$ and $n$. In addition, for all $j \in [n]$ we can take $W_-^{(j)} = \O(\overline{S}_j)$ by virtue of \cref{lem:w-}, which implies that $C_j = \O(\overline{S}_j)$.  From \cref{lem:w-} we have a negative witness $\ket{\tilde{\omega}}$ that satisfies
	\[\norm{\bra{\tilde{\omega}}A^{(j)}\Pi_{\H^{(j)}(x)}}^2 \leq \frac{5\varepsilon_j}{3(2\overline{S}_j+1)} \leq \frac{20\varepsilon_j}{W_+^{(j)}},\]
	which implies that all $P^{(j)}$'s are positive $\lambda$-approximating with $\lambda \leq 20\varepsilon_j < 1/(4n)$.

	We have now shown that we satisfy the requirements for constructing the OR span program $P$, as defined in \cref{def:or-span-program}. According to \cref{thm:or-span-program}, the complexity of this span program is now upper bounded by
	\begin{align*}
		C(P) &\leq \sqrt{\sum_{j=1}^n C_j^2} = \O\left(\sqrt{\sum_{j=1}^n \overline{S}_j^2}\right) = \O\left(\sqrt{\sum_{j=1}^n \left[S_j^2 + \frac{T_j^2}{B^2}\right]}\right) = \O\left(\sum_{j=1}^n S_j^2 + \frac{1}{B^2}\sum_{j=1}^n T_j^2\right) \\
		&= \O\left(\sum_{j=1}^n S_j^2 + \frac{\sum_{j=1}^n S_j^2}{\sum_{j=1}^n T_j^2} \cdot \sum_{j=1}^n T_j^2\right) = \O\left(\sum_{j=1}^n S_j^2\right).
	\end{align*}

	According to \cref{lem:or-span-time}, implementing the algorithm compiled from $P$ takes a number of calls to the subroutines $\mathcal{R}_\A$, $\mathcal{C}$, $\mathcal{R}_\H$ and $\mathcal{R}_0$ that satisfies
	\[\O\left(\frac{\sqrt{\sum_{j=1}^n C_j^2}}{(1-2n\lambda)^{3/2}} \cdot \log\frac{1}{1-2n\lambda}\right) = \O\left(\sqrt{\sum_{j=1}^n S_j^2}\right),\]
	and a number of extra gates and auxiliary qubits that satisfies
	\[\O\left(\mathrm{polylog}\left(\sqrt{\sum_{j=1}^n C_j^2}, \frac{1}{1-2n\lambda}\right)\right) = \O\left(\mathrm{polylog}(S_{\max},n)\right).\]

	According to \cref{lem:TAconcurrent,lem:TBconcurrent,lem:TCconcurrent}, we can construct $\Cal R_\A$, $\Cal C$ and $\Cal R_\H$ with $\O(1)$ calls to $\O_x$, a number of calls to $\O_{\A}$, $\O_{\S}$ that satisfies
	\[\O\left(\max_{j \in [n]} \frac{\overline{T}_j}{\overline{S}_j}\right) = \O\left(B\right),\]
	a number of auxiliary gates that satisfies
	\[\O\left(\max_{j \in [n]} \frac{\overline{T}_j}{\overline{S}_j} \cdot \mathrm{polylog}(\overline{T}_j)\right) = \O\left(B \cdot \mathrm{polylog}(T_{\max})\right),\]
	and a number of auxiliary qubits that is polylogarithmic in $T_{\max}$. If we ensure that the answer register is located on the same qubit for all the algorithms $\A^{(j)}$'s, we can implement $\mathcal{R}_0$ with $\O(\mathrm{polylog}(T_{\max}))$ gates. This implies that the total number of calls to $\O_{\A}$ and $\O_{\S}$ is
	\[\O\left(\sqrt{\sum_{j=1}^n S_j^2} \cdot B\right) = \O\left(\sqrt{\sum_{j=1}^n S_j^2} \cdot \sqrt{\frac{\sum_{j=1}^n T_j^2}{\sum_{j=1}^n S_j^2}}\right) = \O\left(\sqrt{\sum_{j=1}^n T_j^2}\right),\]
	and the total number of auxiliary gates is
	\[\O\left(\sqrt{\sum_{j=1}^n S_j^2} \cdot B \cdot \mathrm{polylog}(T_{\max},n)\right) = \O\left(\sqrt{\sum_{j=1}^n T_j^2} \cdot \mathrm{polylog}(T_{\max},n)\right).\]
	This completes the proof of the claimed complexities.

	It remains to check that the success probability of our algorithm compiled from $P$ is sufficiently high. We have $\O(\sqrt{\varepsilon_j})$-precise implementations of $\Cal R_{\Ker(A^{(j)})}$ and $\Cal R_{\ket{0}}^{(j)}$ w.r.t.\ operator norm. Thus, our resulting implementations of $\mathcal{R}_{\Ker(A)}$ and $\mathcal{R}_0$ are accurate in the operator norm up to error
	\[\max_{j \in [n]} 4\sqrt{2\varepsilon_j} = o\left(\frac{1}{\sqrt{\sum_{j=1}^n S_j^2}}\right).\]

	Similarly, the subroutines $\Cal C_{\ket{w_0^{(j)}}}$ only approximately stay within $\H_{x^{(j)}}$. Thus,
	\[\sup_{\substack{\ket{h} \in \H_x \\ \norm{\ket{h}} = 1}} \norm{\Pi_{\H_x^{\perp}}\Cal C_{\ket{w_0}}\ket{h}} \leq \max_{j \in [n]} 2\sqrt{2\varepsilon_j} = o\left(\frac{1}{\sqrt{\sum_{j=1}^n S_j^2}}\right).\]
	As we call these two subroutines a total of $\sqrt{\sum_{j=1}^n S_j^2}$ times, these errors influence the final success probability at most by $o(1)$, using a similar argument as in the proof of \cref{thm:main-time}. Thus, our implementation of the algorithm compiled from $P$ succeeds with bounded error.

	Finally, we remove the restriction that we imposed on the algorithms $\{\A^{(j)}\}_{j=1}^n$ at the beginning of this proof. We can always reduce the error probability of our algorithms to $o(1/\sum_{j=1}^n S_j^2)$ using standard techniques. This conversion incurs a multiplicative factor of $\O(\log(\sum_{j=1}^n S_j^2))$ in the query and time complexities, and in the worst case an additive term of $k_{\max}^{o(1)}$ in the number of auxiliary qubits. Accounting for them in the relevant complexities completes the proof.
\end{proof}

\section{Discussion and outlook}

In this paper, we reached two main results. First, we prove in \cref{sec:main_thm} that every quantum query algorithm can be converted into a span program and back into a quantum algorithm while keeping the query and time complexity unaffected up to polylogarithmic factors. This implies that span programs fully capture both query and time complexity up to polylogarithmic factors, which strengthens the motivation for considering span programs as an important formalism from which to derive quantum algorithms.

There remain some interesting follow-up questions in this direction. First, it feels like the error analysis of our algorithm compiled from the span program in \cref{sec:main_thm} is not yet optimal. At this point, we implement the span program unitary up to precision $\O(\sqrt{\varepsilon})$. As we call this unitary a total of $\O(S)$ times, we have to require that $\varepsilon = o(1/S^2)$ to make sure the total error in the operator norm throughout the execution of the algorithm scales as $o(1)$. Within the construction of the span program unitary, the subroutine $\Cal C_{\ket{w_0}}$ that constructs the minimal positive witness and the subroutine $\Cal R_{\ket{0}}$ that reflects around the all-zeros state both have an error that scales with $\O(\sqrt{\varepsilon})$. To attain an improvement, one would have to deal with both these error dependencies.

As a first step towards improvement, with a more clever choice of the implementing subspace it might be possible to move the error in the implementation of $\Cal C_{\ket{w_0}}$ to the implementation of $\Cal R_{\Ker(A)}$, and subsequently analyze the error of the implementation of $\Cal R_{\Ker(A)}$ more favorably. A candidate for the more clever choice of the implementing subspace would be a space that gradually transitions from the ideal initial state to the ideal final state with equally sized steps at all query time steps. The hope in this direction would be to show that the reflection through the kernel of $A$ is actually implemented with error $\O(\sqrt{\varepsilon}/S)$ in operator norm. An improvement in the error of $\Cal R_{\ket{0}}$ seems harder to achieve with this method.

Alternatively, one could dig deeper into the proofs of the success probabilities of the algorithms compiled from positively $\lambda$-approximating span programs to see if in our particular setting the error propagates more favorably than we argue at this point. The hope in this direction would be to prove that even though we implement the span program unitary up to error $\O(\sqrt{\varepsilon})$ in operator norm, the actual state vector does not drift away from the ideal state as quickly as $\O(\sqrt{\varepsilon})$ in norm error per step.

If both above ideas work, they would render redundant the boosting of the success probability at the end of the proof of \cref{thm:main-time}. The effect would be that the $\log(S)$-factors in the query complexities in the statement of that theorem could be removed, alongside with the term $k^{o(1)}$, making the theorem statement a lot cleaner and more elegant. Moreover, it would also remove the same factors from \cref{thm:variable-time}, which would allow us to recover the query complexity of Ambainis's result without multiplicative logarithmic overhead. Hence, we think that these ideas are very much worthwhile investigating.

Our second result, in \cref{sec:variable_time_search}, is an improvement on Ambainis's variable time search result -- we can obtain a Grover-like speed-up in both query and time complexity simultaneously, where the query complexity is measured in the number of calls to $\O_x$ providing access to the input $x$ and the time complexity is measured in the number of calls to $O_\A$ and $\O_\S$ providing access to the descriptions of the algorithms. Our construction goes via a composition of span programs. Even though the analysis of the time complexity of the algorithm compiled from this composed span program is quite involved, and in some way not yet complete, the actual composition is rather simple. This exemplifies the power of the span program framework.

This section leaves several open ends for further research as well. First, we do not rederive all of the results that Ambainis obtains in his work. For instance, we do not consider the case where the query and time complexities of the original algorithms are not known in advance, so it would be interesting to investigate whether we could match Ambainis's result in this setting as well. This would probably require somewhat modifying the input model that we describe in \cref{sec:model}.

Similarly, we handle the decision version of the search problem whereas Ambainis handles the full search version. It would be interesting to see if one can recover the full search algorithm as well. One possible direction would be to investigate whether one could use span programs with non-binary outputs for that, as described for instance in \cite{BT20}.

The most interesting direction of further research that we foresee, though, is whether the relative ease with which span programs can be composed can be exploited to obtain more composition results. The variable time search result composes a set of arbitrary functions with the OR function and obtains a Grover-like speed-up in the query and time complexity of the resulting algorithm. A natural next step would be to investigate if similar types of speed-ups can be obtained when one composes some arbitrary functions with threshold functions.

\section{Acknowledgements}

Stacey Jeffery thanks Tsuyoshi Ito for illuminating discussions on the topic of span programs and time complexity.

\bibliographystyle{alphaurl}
\bibliography{sptimebib}

\appendix

\section{Proof of \cref{thm:scaling}: Span program rescaling}\label{app:scaling}

\thmscaling*

\begin{proof}
	By \cite[Lemma~11]{IJ15}, $\ket{w_0^\beta}=\frac{\beta}{\beta^2+N}\ket{w_0}+\frac{N}{\beta^2+N}\ket{\hat0}+\frac{\beta}{\sqrt{\beta^2+N}}\ket{\hat1}$ and $\|\ket{w_0^\beta}\|=1$.

	By \cite[Lemma~12]{IJ15}, for any $x\in f^{-1}(1)$, $w_+(x,P^\beta)=\frac{1}{\beta^2}w_+(x,P)+\frac{\beta^2}{N+\beta^2}$, so
	$$W_+(P^\beta)\leq \frac{1}{\beta^2}W_+(P)+1\leq 2,$$
	under our assumption $\beta\geq \sqrt{W_+(P)}$.

	Next, for any $x\in f^{-1}(0)$, let $\ket{\widetilde\omega}\in\V$ be an approximate negative witness for $x$. Define $\ket{\widetilde\omega'}\in\V^\beta$ as
	$$\ket{\widetilde\omega'} = \frac{\beta^2+N}{\beta^2\lambda+\beta^2+N}\ket{\widetilde\omega} + \frac{\beta^2\lambda}{\beta^2\lambda+\beta^2+N}\ket{\hat 1}.$$
	Since $\ket{\widetilde\omega}$ is a negative witness, $\braket{\widetilde\omega}{\tau} = 1$ and so $\braket{\widetilde\omega'}{\tau^\beta}=1$.
	By substituting $A^\beta$ from \cref{eq:VAt},
	\begin{align*}
		\norm{\bra{\widetilde\omega'}A^{\beta}\Pi_{\H^{\beta}(x)}}^2 &= \norm{\frac{\beta^2+N}{\beta^2\lambda+\beta^2+N}\beta\bra{\widetilde\omega} A\Pi_{\H(x)}}^2+\norm{\frac{\beta^2\lambda}{\beta^2\lambda+\beta^2+N}\frac{\sqrt{\beta^2+N}}{\beta}\bra{\hat 1}}^2\\
		&= \frac{\beta^2(\beta^2+N)^2}{(\beta^2\lambda+\beta^2+N)^2}\norm{\bra{\tilde\omega} A\Pi_{\H(x)}}^2+\frac{\beta^2\lambda^2(\beta^2+N)}{(\beta^2\lambda+\beta^2+N)^2}\\
		&\leq \frac{\beta^2(\beta^2+N)((\beta^2+N)\lambda/\beta^2+\lambda^2)}{(\beta^2\lambda+\beta^2+N)^2}
		= \lambda \frac{(\beta^2+N)(\beta^2+N+\beta^2\lambda)}{(\beta^2\lambda+\beta^2+N)^2}\\
		&=\lambda\frac{\beta^2+N}{\beta^2\lambda+\beta^2+N}\leq \lambda \leq \frac{2\lambda}{W_+(P^\beta)}.
	\end{align*}
	Thus $P^{\beta}$ positively $2\lambda$-approximates $f$.

	To compute the negative complexity, note that $\braket{\widetilde\omega'}{\tau} \leq \braket{\widetilde\omega}{\tau} = 1$ so
	\begin{align*}
		\norm{\bra{\widetilde\omega'} A^{\beta}}^2 &= \norm{\frac{\beta^2+N}{\beta^2\lambda+\beta^2+N}\beta\bra{\widetilde\omega} A}^2+\norm{\braket{\widetilde\omega'}{\tau}\bra{\hat 0}}^2+\norm{\frac{\beta^2\lambda}{\beta^2\lambda+\beta^2+N}\frac{\sqrt{\beta^2+N}}{\beta}\bra{\hat 1}}^2\\
		&\leq \frac{(\beta^2+N)^2\beta^2}{(\beta^2\lambda+\beta^2+N)^2}\widetilde{W}_-(P)+1+\frac{\beta^2\lambda^2(\beta^2+N)}{(\beta^2\lambda+\beta^2+N)^2}
		\leq \beta^2\widetilde{W}_-(P)+1+1,
	\end{align*}
	which completes the proof.
\end{proof}

\section{Witness anatomy of the span program of an algorithm}\label{app:lemwzero}

\lemkernel*

\begin{proof}
	We can simply check by direct calculation that all vectors in the image of the linear maps $\Phi_{\ell}$ are elements in the kernel of $A$. Thus, it remains to show that any vector in the kernel of $A$ can be written as a linear combination of vectors in the image of the $\Phi_{\ell}$'s. To that end, let $\ket{\Psi} \in \Ker(A) \subseteq \H$. We first of all split this state in several disjointly supported parts, i.e.,
	\begin{align*}
		\ket{\Psi} = \;&\frac1M\sum_{t=0}^{q_1-2} \ket{t}\ket{0}\ket{\psi_{1,t}} + \ket{q_1-1}\frac{\ket{+}}{\sqrt{2}}\ket{\psi_{1,q_1-1}} \\
		&+ \sum_{\ell=2}^S \left(\ket{q_{\ell-1}-1}\frac{\ket{-}}{\sqrt{2}}\ket{\psi_{\ell,q_{\ell-1}-1}} + \frac1M\sum_{t=q_{\ell-1}}^{q_{\ell}-2} \ket{t}\ket{0}\ket{\psi_{\ell,t}} + \ket{q_{\ell}-1}\frac{\ket{+}}{\sqrt{2}}\ket{\psi_{\ell,q_{\ell}-1}}\right) \\
		&+ \ket{q_S-1}\frac{\ket{-}}{\sqrt{2}}\ket{\psi_{S+1,q_S-1}} + \frac1M \sum_{t=q_S}^{T-1} \ket{t}\ket{0}\ket{\psi_{S+1,t}} + \frac1a\ket{T}\ket{0}\ket{\psi_{S+1,T}},
	\end{align*}
	where all the amplitudes are absorbed in the unnormalized $\ket{\psi_{\ell,t}}$-vectors. Now, we apply $A$ to this vector to obtain
	\begin{align*}
		A\ket{\Psi} = \;& \sum_{t=0}^{q_1-2} \left(\ket{t}\ket{\psi_{1,t}} - \ket{t+1}U_{t+1}\ket{\psi_{1,t}}\right) + \ket{q_1-1}\ket{\psi_{1,q_1-1}} \\
		&+ \sum_{\ell=2}^S \left(-\ket{q_{\ell-1}}\ket{\psi_{\ell,q_{\ell-1}-1}} + \sum_{t=q_{\ell-1}}^{q_{\ell}-2} \left(\ket{t}\ket{\psi_{\ell,t}} - \ket{t+1}U_{t+1}\ket{\psi_{\ell,t}}\right) + \ket{q_{\ell}-1}\ket{\psi_{\ell,q_{\ell}-1}}\right) \\
		&- \ket{q_S}\ket{\psi_{S+1,q_S-1}} + \sum_{t=q_S}^{T-1} \left(\ket{t}\ket{\psi_{S+1,t}} - \ket{t+1}U_{t+1}\ket{\psi_{S+1,t}}\right) + \ket{T}\ket{\psi_{S+1,T}}\\
		=\; &\ket{0}\ket{\psi_{1,0}} + \sum_{t=1}^{q_1-1}\ket{t}(\ket{\psi_{1,t}} - U_t\ket{\psi_{1,t-1}})\\
		&+ \sum_{\ell=2}^S\left(\ket{q_{\ell-1}}(\ket{\psi_\ell,q_{\ell-1}}-\ket{\psi_{\ell,q_{\ell-1}-1}})+\sum_{t=q_{\ell-1}+1}^{q_\ell-1}\ket{t}(\ket{\psi_{\ell,t}}-U_t\ket{\psi_{\ell,t-1}}) \right)\\
		& + \ket{q_S}(\ket{\psi_{S+1,q_S}}-\ket{\psi_{S+1,q_S-1}})+\sum_{t=q_S+1}^{T}\ket{t}(\ket{\psi_{S+1,t}}-U_t\ket{\psi_{S+1,t-1}}).
	\end{align*}
	As $\ket{\Psi} \in \Ker(A)$, the above expression has to equal $0$. We learn by inspection that this happens if and only if the following conditions are satisfied:
\begin{align*}
\ket{\psi_{1,0}}&=0\\
\forall t\in\{1,\dots,q_1-1\},\quad \ket{\psi_{1,t}} &= U_t\ket{\psi_{1,t-1}}\\
\forall \ell\in\{2,\dots,S+1\},\quad \ket{\psi_{\ell,q_{\ell-1}}}&=\ket{\psi_{\ell,q_{\ell-1}-1}}\\
\forall \ell\in \{1,\dots,S\},t\in \{q_{\ell-1}+1,\dots,q_{\ell}-1\},\quad \ket{\psi_{\ell,t}}&=U_t\ket{\psi_{\ell,t-1}}\\
\forall t\in\{q_S+1,\dots,T\},\quad \ket{\psi_{S+1,t}} &= U_t\ket{\psi_{S+1,t-1}}.
\end{align*}
Using the abbreviation $\ket{\psi_\ell}=\ket{\psi_{\ell,q_{\ell-1}-1}}$ for $\ell\in\{2,\dots,S+1\}$, these conditions simplify to:
\begin{align*}
\forall t\in\{0,\dots,q_1-1\},\quad \ket{\psi_{1,t}} &= 0\\
\forall \ell\in\{2,\dots,S+1\},\quad \ket{\psi_{\ell,q_{\ell-1}}}&=\ket{\psi_{\ell,q_{\ell-1}-1}}=\ket{\psi_\ell}\\
\forall \ell\in \{1,\dots,S\},t\in \{q_{\ell-1}+1,\dots,q_{\ell}-1\},\quad \ket{\psi_{\ell,t}}&=U_t\dots U_{q_{\ell-1}+1}\ket{\psi_{\ell,q_{\ell-1}}}=U_t\dots U_{q_{\ell-1}+1}\ket{\psi_{\ell}}\\
\forall t\in\{q_S+1,\dots,T\},\quad \ket{\psi_{S+1,t}} &= U_t\dots U_{q_S+1}\ket{\psi_{S+1,q_S}}=U_t\dots U_{q_S+1}\ket{\psi_{S+1}}.
\end{align*}
Using these constraints, we can rewrite $\ket{\Psi}$ as
	\begin{align*}
		\ket{\Psi} &= \sum_{\ell=2}^S \left(\ket{q_{\ell-1}-1}\frac{\ket{-}}{\sqrt{2}}\ket{\psi_{\ell}} + \frac1M\sum_{t=q_{\ell-1}}^{q_{\ell}-2} \ket{t}\ket{0}U_t \cdots U_{q_{\ell-1}+1}\ket{\psi_{\ell}} + \ket{q_{\ell}-1}\frac{\ket{+}}{\sqrt{2}}U_{q_{\ell}-1} \cdots U_{q_{\ell-1}+1}\ket{\psi_{\ell}}\right) \\
		&\;\;\;\; + \ket{q_S-1}\frac{\ket{-}}{\sqrt{2}}\ket{\psi_{S+1}} + \frac1M \sum_{t=q_S}^{T-1} \ket{t}\ket{0}U_t \cdots U_{q_S+1}\ket{\psi_{S+1}} + \frac1a\ket{T}\ket{0}U_T \cdots U_{q_S+1}\ket{\psi_{S+1}} \\
		&= \sum_{\ell=2}^{S+1} \Phi_{\ell}(\ket{\psi_{\ell}}),
	\end{align*}
	completing the proof.
\end{proof}

\lemwzero*

\begin{proof}
	We first prove that $\ket{w_0}$ is orthogonal to all vectors in the kernel of $A$. By \cref{lem:kernel}, it suffices to take $\ket{\psi} \in \C^{[n] \times \mathcal{W}}$ arbitrarily and check that for all $\ell \in \{2, \dots, S+1\}$, $\bra{\psi}\Phi_{\ell}^{\dagger}\ket{w_0} = 0$. Observe that $\ket{w_0}$ does not have support in states with time $t\in \{q_1,\dots q_S-2\}$, hence, for all $\ell \in \{3,\dots,S-1\}$, we easily obtain that the vectors $\ket{w_0}$ and $\Phi_{\ell}\ket{\psi}$ are orthogonal. For $\ell = 2$, we find that the supports only overlap at $t = q_1 - 1$ with the term $\ket{q_1-1}\frac{\ket{-}}{\sqrt{2}}\ket{\psi}$ of $\Phi_1\ket{\psi}$, so
	\[\bra{\psi}\Phi_1^{\dagger}\ket{w_0} = \frac{1}{4} \bra{\psi} U_{q_1-1} \cdots U_1 \ket{\Psi_0} - \frac{1}{4} \bra{\psi} U_{q_1-1} \cdots U_1\ket{\Psi_0} = 0.\]
	A similar computation shows that $\ket{w_0}$ and $\Phi_{S}\ket{\psi}$ are orthogonal, as their supports only overlap at $t = q_S-1$ with the term $\ket{q_S-1}\frac{\ket{+}}{\sqrt{2}}U_{q_S-1}\dots U_{q_{S-1}+1}\ket{\psi}$ of $\Phi_S\ket{\psi}$. Finally,
	\begin{align*}
		\bra{\psi}\Phi_{S+1}^{\dagger}\ket{w_0} &= \frac{1}{Ca^2+1} \left[\frac14 \bra{\psi} U_{q_S+1}^{\dagger} \cdots U_T^{\dagger} \ket{\Psi_T} + \frac14 \bra{\psi} U_{q_S+1}^{\dagger} \cdots U_T^{\dagger} \ket{\Psi_T} + \frac{1}{M^2} \sum_{t = q_S}^{T-1} \bra{\psi} U_{q_S+1}^{\dagger} \cdots U_T^{\dagger} \ket{\Psi_T}\right] \\
		&\;\;\;\;\; - \frac{C}{Ca^2+1} \bra{\psi} U_{q_S+1}^{\dagger} \cdots U_T^{\dagger} \ket{\Psi_T} \\
		&= \left(\frac{1}{Ca^2+1} \cdot \left[\frac12 + \frac{T-q_S}{M^2}\right] - \frac{C}{Ca^2+1}\right) \bra{\psi} U_{q_S+1}^{\dagger} \cdots U_T^{\dagger} \ket{\Psi_T} \\
		&= \left[\frac{C}{Ca^2+1} - \frac{C}{Ca^2+1}\right]\bra{\psi} U_{q_S+1}^{\dagger} \cdots U_T^{\dagger} \ket{\Psi_T} = 0.
	\end{align*}
	Thus $\ket{w_0}$ is orthogonal to the kernel of $A$. It is also mapped to $\ket{\tau}$ by $A$, as
	\begin{align*}
		A\ket{w_0} &= \sum_{t=0}^{q_1-2} \left(\ket{t}U_t \cdots U_1\ket{\Psi_0} - \ket{t+1}U_{t+1} \cdots U_1\ket{\Psi_0}\right) + \left(\frac12 + \frac12\right)\ket{q_1-1}U_{q_1-1} \cdots U_1\ket{\Psi_0} \\
		&\;\;\;\;\; - \left(\frac12 - \frac12\right)\ket{q_1}U_{q_1-1} \cdots U_1\ket{\Psi_0} + \frac{1}{Ca^2+1} \left[ \left(\frac12-\frac12\right)\ket{q_S-1}U_{q_S+1}^{\dagger} \cdots U_T^{\dagger}\ket{\Psi_T} \right. \\
		&\;\;\;\;\;\;\;\;\;\; \left. - \left(\frac12 + \frac12\right)\ket{q_S}U_{q_S+1}^{\dagger} \cdots U_T^{\dagger}\ket{\Psi_T} + \sum_{t=q_S}^{T-1} \left(\ket{t}U_{t+1}^{\dagger} \cdots U_T^{\dagger}\ket{\Psi_T} - \ket{t+1} U_{t+2}^{\dagger} \cdots U_T^{\dagger}\ket{\Psi_T}\right) \right] \\
		&\;\;\;\;\; - \frac{Ca^2}{Ca^2+1} \ket{T}\ket{\Psi_T} = \ket{0}\ket{\Psi_0} - \frac{1}{Ca^2+1} \ket{T}\ket{\Psi_T} - \frac{Ca^2}{Ca^2+1} \ket{T}\ket{\Psi_T} = \ket{\tau}.
	\end{align*}
	This proves that $\ket{w_0}$ is the minimal witness of $A$. Calculating its squared norm, we obtain
	\begin{align*}
		N &= \norm{\ket{w_0}}^2 = \sum_{t=0}^{q_1-2}\frac{1}{M^2} + \frac14 + \frac14 + \frac{1}{(Ca^2+1)^2}\left[\frac14 + \frac14 + \sum_{t=q_S}^{T-1} \frac{1}{M^2}\right] + \frac{C^2a^2}{(Ca^2+1)^2} \\
		&= \frac{q_1-1}{M^2} + \frac12 + \frac{C}{(Ca^2+1)^2} + \frac{C^2a^2}{(Ca^2+1)^2} = \frac{q_1-1}{M^2} + \frac12 + \frac{C}{Ca^2+1},
	\end{align*}
	which completes the proof.
\end{proof}

\section{Time complexity analysis}
\label{app:time-complexity}

\subsection{Splitting maps}

First, we implement a subroutine that we will use throughout this section.

\begin{definition}\label{def:splitting-map}
	Let $\A$ be a clean quantum algorithm with time complexity $T$ and query complexity $S$ and let $P_\A = (\H,\V,A,\ket{\tau})$ be its span program. We define the map $\Cal S_{t,\alpha}$ for all $t \in [T-1]_0$ and $\alpha \in [0,1]$ as the map acting on $\H$ whose non-trivial action is described, for $\ket{\psi}\in \mathbb{C}^{[n]\times\W}$, as
	\begin{equation}
	\label{eq:splitting_map}
	\begin{array}{rl}
	\text{if } t + 1 \in \mathcal{S}, \qquad & \Cal S_{t,\alpha} : \left\{\begin{array}{rcl}
	\ket{t}\ket{-}\ket{\psi} &\mapsto& \alpha\ket{t}\ket{-}\ket{\psi} + \sqrt{1-\alpha^2}\ket{t+1}\ket{0}\ket{\psi} \\
	\ket{t+1}\ket{0}\ket{\psi} &\mapsto& -\sqrt{1-\alpha^2}\ket{t}\ket{-}\ket{\psi} + \alpha\ket{t+1}\ket{0}\ket{\psi}
	\end{array}\right. \\
	\text{if } t + 2 \in \mathcal{S}, \qquad & \Cal S_{t,\alpha} : \left\{\begin{array}{rcl}
	\ket{t}\ket{0}\ket{\psi} &\mapsto& \alpha\ket{t}\ket{0}\ket{\psi} + \sqrt{1-\alpha^2}\ket{t+1}\ket{+}U_{t+1}\ket{\psi} \\
	\ket{t+1}\ket{+}\ket{\psi} &\mapsto& -\sqrt{1-\alpha^2}\ket{t}\ket{0}U_{t+1}^\dagger\ket{\psi} + \alpha\ket{t+1}\ket{+}\ket{\psi}
	\end{array}\right. \\
	\text{otherwise}, \qquad & \Cal S_{t,\alpha} : \left\{\begin{array}{rcl}
	\ket{t}\ket{0}\ket{\psi} &\mapsto& \alpha\ket{t}\ket{0}\ket{\psi} + \sqrt{1-\alpha^2}\ket{t+1}\ket{0}U_{t+1}\ket{\psi} \\
	\ket{t+1}\ket{0}\ket{\psi} &\mapsto& -\sqrt{1-\alpha^2}\ket{t}\ket{0}U_{t+1}^\dagger\ket{\psi} + \alpha\ket{t+1}\ket{0}\ket{\psi}.
	\end{array}\right.
	\end{array}
	\end{equation}
	In all three cases we have $\Cal S_{t,\alpha}:\ket{t'}\ket{\phi}\mapsto\ket{t'}\ket{\phi}$ for $t'\notin\{t,t+1\}$. We refer to these maps as \emph{splitting maps}.
\end{definition}

Note that for all choices of $t$ and $\alpha$, $\Cal S_{t,\alpha}$ leaves both $\overline{\H}_x$ and $\widetilde{\H}_x$ invariant. We leave this to the reader to check. In the lemma below, we elaborate on how we can implement this splitting map efficiently.

\begin{lemma}
	\label{lem:splitting_map}
	Let $\A$ be a clean quantum algorithm with time complexity $T$ and query complexity $S$, and let $P_\A = (\H,\V,A,\ket{\tau})$ be its span program. Let $t \in [T-1]_0$ and $\alpha \in [0,1]$. We can implement $\Cal S_{t,a}$ with two controlled calls to (the inverse of) $\O_\A$ and $\O(\mathrm{polylog}(T))$ additional gates.

	Furthermore, if we have a binary description of $t$ and $\alpha$ in auxiliary registers, where the description of $\alpha$ is $\delta$-precise, we can implement $\Cal S_{t,\alpha}$ up to error $\delta > 0$ with two controlled calls to (the inverse of) $\O_\A$ and $\O(\mathrm{polylog}(T,1/\delta))$ additional gates.
\end{lemma}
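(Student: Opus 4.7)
The plan is to realize $\Cal S_{t,\alpha}$ as a conjugation $VRV^{\dagger}$, where $R$ is a simple rotation acting only on the time register and the second (bit) register (and leaving the workspace untouched), while $V$ is a unitary that applies $U_{t+1}$ to the workspace at the $t{+}1$ endpoint so that the two endpoints of the rotation become aligned in the workspace register.

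First, since $t$ is a fixed classical value and the query-time-step set $\S$ is known at compile time, I would hardcode into the circuit which of the three cases of \cref{def:splitting-map} applies; by the no-consecutive-queries assumption, cases 1, 2, 3 (corresponding to $t+1 \in \S$, $t+2 \in \S$, or neither) are mutually exclusive. This requires no quantum resources. I would then define $R$ as the rotation of angle $\theta = \arccos\alpha$ on the two-dimensional subspace of the time-plus-second-register space spanned by the pair $(\ket{t}\ket{s_0},\ket{t+1}\ket{s_1})$, where $(s_0,s_1) = (\ket{-},\ket{0})$, $(\ket{0},\ket{+})$, or $(\ket{0},\ket{0})$ in cases 1, 2, 3 respectively. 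Since these registers together have only $\lceil\log(T+1)\rceil + 1$ qubits, $R$ can be implemented with $\O(\mathrm{polylog}(T))$ standard gates and no calls to $\O_\A$. Next, $V$ is defined to apply $U_{t+1}$ to the workspace conditional on the time register holding $t+1$, in cases 2 and 3; in case 1 I set $V = \I$. Since $t+1 \notin \S$ in cases 2 and 3, the oracle $\O_\A$ applies $U_{t+1}$ correctly, so $V$ uses exactly one controlled call to $\O_\A$.

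The full construction is $VRV^{\dagger}$, using two controlled calls (one forward, one inverse) to $\O_\A$ and $\O(\mathrm{polylog}(T))$ additional gates. Correctness follows from a direct computation in each of the three cases. For example, in case 3, $V^{\dagger}$ sends $\ket{t+1}\ket{0}\ket{\psi}$ to $\ket{t+1}\ket{0}U_{t+1}^{\dagger}\ket{\psi}$ and fixes $\ket{t}\ket{0}\ket{\psi}$; $R$ then mixes the time and second registers in the prescribed two-dimensional subspace, and the subsequent $V$ multiplies out the $U_{t+1}^{\dagger}$ and $U_{t+1}$ factors so as to reproduce exactly the action prescribed by the definition. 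The other two cases are verified analogously.

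For the second claim, where $t$ and $\alpha$ are given in auxiliary registers, the same construction applies with minor modifications: the test $t+1 \in \S$ is performed via a single call to $\O_\S$ combined with $\O(\mathrm{polylog}(T))$ gates to extract the case flag into an ancilla; comparisons such as ``does the time register equal $t$?'' are replaced by controlled arithmetic on the $t$-register using $\O(\mathrm{polylog}(T))$ gates; and $R$ becomes a rotation whose angle is controlled by the $\alpha$-register, which can be synthesized to error $\delta$ using $\O(\mathrm{polylog}(1/\delta))$ gates by standard controlled-rotation circuitry. The oracle count remains two controlled calls to $\O_\A$, and the total extra gate count becomes $\O(\mathrm{polylog}(T,1/\delta))$. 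I expect the main obstacle to be the case-by-case verification that the single construction $VRV^{\dagger}$ reproduces all three branches of the piecewise definition of $\Cal S_{t,\alpha}$; the bookkeeping between the time, second, and workspace registers is the most delicate point.
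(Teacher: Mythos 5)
Your construction $VRV^{\dagger}$ is exactly the paper's proof: the paper's three steps (controlled $\O_\A^{\dagger}$ at time $t+1$, a two-level rotation between $\ket{t}$ and $\ket{t+1}$, controlled $\O_\A$ at time $t+1$) are precisely your $V^{\dagger}$, $R$, $V$, and your handling of the register-in-superposition case matches as well. The only cosmetic difference is that you hardcode the case distinction for fixed $t$ where the paper spends a call to $\O_\S$; this is fine and, if anything, aligns better with the lemma's stated resource count.
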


\begin{proof}
	First of all, we check which of the three cases that are listed in \cref{eq:splitting_map} applies. This we can do with one call to $\O_\S$ and polylogarithmically many extra gates in $T$. Each of these three cases we treat separately and consecutively. We only give the explicit description of the last case here, as the others come down to the same circuit with some minor adjustments.

	We implement the bottom mapping in \cref{eq:splitting_map} in three steps.
	\begin{enumerate}
		\item First, controlled on the first register being in time $t + 1$, we call the inverse of $\O_\A$. This will map $\ket{t}\ket{0}\ket{\psi}$ to itself, and it will map $\ket{t+1}\ket{0}\ket{\psi}$ to $\ket{t+1}\ket{0}U_{t+1}^\dagger\ket{\psi}$. This takes $1$ call to $\O_\A$, and $\O(\mathrm{polylog}(T))$ other gates.
		\item Next, we apply the following mapping to the first register:
		\[\ket{t} \mapsto \alpha\ket{t} + \sqrt{1-\alpha^2}\ket{t+1} \qquad \text{and} \qquad \ket{t+1} \mapsto -\sqrt{1-\alpha^2}\ket{t} + \alpha\ket{t+1}\]
		As this is a two-level rotation, we can implement it with $\O(\mathrm{polylog}(T))$ single qubit gates and $\mathrm{CNOT}$s.
		\item Finally, we apply $\O_\A$, controlled on the first register being in time $t+1$. This will add the $U_{t+1}$'s to the description of the state vector wherever this is required in the statement of the lemma. Just as in step 1, this takes $1$ call to $\O_\A$ and $\O(\mathrm{polylog}(T))$ additional gates.
	\end{enumerate}
	One can easily check that this implements the third mapping in \cref{eq:splitting_map}.

	Furthermore, if we have a binary description of $t$ and $\alpha$ stored in an extra register, we can implement the desired mapping in a similar number of gates. While we cannot hardcode $t$ in steps 1 and 3, we can control on its value. Similarly, in step 2, as $\alpha$ is not hardcoded, we have to substitute the rotation with $\O(\mathrm{polylog}(1/\delta))$ rotations controlled on the qubits storing $\alpha$. All of the necessary computations are efficiently implementable classically, and hence only add (additive) polylogarithmic overhead in the error parameter to the time complexity.
\end{proof}

\subsection{Proof of \cref{lem:TA}: Implementation of reflection around $\ker A$}
\label{app:TA}

In this section, we prove \cref{lem:TA}, i.e., we provide an implementation of the routine that reflects around the kernel of the span program operator $A$ as defined in \cref{eq:A}. In addition, we also elaborate on how one would obtain concurrent access to these routines when considering multiple such span program operators, because we need that in the proof of \cref{thm:variable-time}. The result is summarized in the following lemma, of which \cref{lem:TA} is a special case.

\begin{lemma}
	\label{lem:TAconcurrent}
	\lemTA*

	Similarly, let $\{\A^{(j)}\}_{j=1}^n$ be a set of clean quantum query algorithms. For all $j \in [n]$, let $S_j$ and $T_j$ be the query and time complexity of $\A^{(j)}$, respectively. Let $P^{(j)} = (\H^{(j)}, \V^{(j)}, A^{(j)}, \ket{\tau^{(j)}})$ be the span program of $\A^{(j)}$. We can provide concurrent access to $\{2\Pi_{\Ker(A^{(j)})} - \I\}_{j=1}^n$ up to precision $\delta > 0$ with $\O(\max_{j \in [n]} T_j/S_j)$ calls to $\O_\A$ and $\O_\S$, $\O(\mathrm{polylog}(T_{\max}))$ auxiliary qubits and a number of extra gates that satisfies
	\[\O\left(\max_{j \in [n]} \frac{T_j}{S_j} \mathrm{polylog}\left(T_{\max},\frac{1}{\delta}\right)\right).\]
\end{lemma}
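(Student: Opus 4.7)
The plan is to exploit the characterization $\Ker(A)=\bigoplus_{\ell=2}^{S+1}\Phi_\ell(\C^{\mathcal{Z}})$ from \cref{lem:kernel}. A direct calculation shows $\Phi_\ell^\dagger\Phi_\ell=(1+|\B_\ell|/M^2)\,\I_{\C^{\mathcal{Z}}}$, so $V_\ell:=\Phi_\ell/\sqrt{1+|\B_\ell|/M^2}$ is an isometry with image $\Phi_\ell(\C^{\mathcal{Z}})$. I would construct a single unitary $V$ on $\H$ (with polylogarithmic ancilla) mapping a canonical \emph{anchor subspace} $\mathcal{E}:=\bigoplus_{\ell=2}^{S+1}\Span\{\ket{q_{\ell-1}-1}\tfrac{\ket{-}}{\sqrt{2}}\ket{\psi}:\ket{\psi}\in\C^{\mathcal{Z}}\}$ onto $\Ker(A)$ via $\ket{q_{\ell-1}-1}\tfrac{\ket{-}}{\sqrt{2}}\ket{\psi}\mapsto V_\ell\ket{\psi}$, so that $\Cal R_{\Ker(A)}=V(2\Pi_{\mathcal{E}}-\I)V^\dagger$. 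The anchor reflection $2\Pi_{\mathcal{E}}-\I$ is a sign flip controlled on the time register pointing to a query-preceding position (decidable with a single call to $\O_\S$) and on the second register being $\ket{-}$, costing $O(1)$ oracle calls and $O(\mathrm{polylog}(T))$ gates.

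For a single block of length $L=|\B_\ell|$, I first assemble $V_\ell\ket{\psi}$ from the anchor by chaining $L+1$ splitting maps $\Cal S_{q_{\ell-1}-1,\alpha_0},\Cal S_{q_{\ell-1},\alpha_1},\dots,\Cal S_{q_\ell-2,\alpha_L}$ from \cref{def:splitting-map}. The angles $\alpha_k$ are determined by an explicit recurrence so that the residual ``traveling'' amplitude after step $k$ equals the interior weight $1/\sqrt{M^2+L}$, and so that the first and last splittings deposit weight $1/\sqrt{2(1+L/M^2)}$ on the left and right endpoints of the block. Crucially, each $\Cal S_{t,\alpha}$ invokes the correct $U_{t+1}$ through $\O_\A$ by \cref{lem:splitting_map}, so this preparation produces $V_\ell\ket{\psi}$ in $O(L)=O(T/S)$ queries, using the query-uniformity condition of \cref{def:QA} to bound $L$. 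The last block $\ell=S+1$, which terminates at $t=T$ rather than a query step, is handled by the same recurrence with a minor adjustment to the final angle accounting for the factor $1/a$.

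To weave the $S$ block preparations into a single sweep of length $O(T/S)$, I introduce an ancilla register $\ket{k}$ tracking the \emph{offset within the current block}, initialized so that $k=0$ exactly on the anchor states. For $k=0,1,\dots,\lfloor 3T/S\rfloor$, perform a parallel step that (i) applies $\Cal S_{t,\alpha_k}$ coherently with $t$ read from the time register, controlled on the offset ancilla equaling $k$, and (ii) updates the offset by one $\O_\S$-call on $t+1$, incrementing $k$ if $t+1$ is not a query step and resetting $k\to 0$ otherwise. Because the supports of the distinct $\Phi_\ell$ are disjoint intervals in the time register, only one block is ever in the active offset $k$ at any given time, so the parallel step non-trivially advances exactly the in-progress block while leaving the others untouched. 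By \cref{lem:splitting_map}, each parallel step costs $O(1)$ calls to $\O_\A,\O_\S$ and $O(\mathrm{polylog}(T,1/\delta))$ additional gates; storing each $\alpha_k$ to precision $\delta/\Theta(T/S)$ keeps the total operator-norm error below~$\delta$.

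The main technical obstacle will be verifying that the shared offset register correctly realizes the direct sum $\bigoplus_\ell V_\ell$ rather than desynchronizing the blocks—concretely, checking that in each parallel step both branches of the splitting (the ``stayed'' branch at $t$ and the ``moved'' branch at $t+1$) are coherently updated, that shorter blocks cease to evolve once their right endpoint is reached (enforced by setting $\alpha_k=1$ for $k>L$ so the splitting acts as identity), and that the offset ancilla disentangles cleanly on the image of $V$ so conjugation by $V$ yields the desired reflection. Once the single-algorithm construction is in hand, the concurrent version of \cref{lem:TAconcurrent} follows by prepending a control register $\ket{j}$ and routing every oracle call through the $j$-indexed oracles $\O_\A,\O_\S$ of \cref{sec:model}; since a splitting for $\A^{(j)}$ only touches times up to $T_j$, the sweep length becomes $\lfloor 3\max_{j} T_j/S_j\rfloor$ and the ancilla width is $O(\log T_{\max})$, giving the advertised bounds.
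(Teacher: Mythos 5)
Your overall strategy is the same as the paper's: use the decomposition $\Ker(A)=\bigoplus_{\ell=2}^{S+1}\Phi_\ell(\C^{\mathcal Z})$ from \cref{lem:kernel}, build a unitary that carries the anchor states $\ket{q_{\ell-1}-1}\ket{-}\ket{\psi}$ isometrically onto the kernel by chaining the splitting maps of \cref{def:splitting-map}, and conjugate the (cheap) reflection about the anchor subspace. The place where your write-up genuinely breaks down is the synchronization mechanism. You propose to read off the splitting angle $\alpha_k$ from an offset counter that tracks the distance from the \emph{left} endpoint of the current block. But the angles cannot be a function of that offset alone: the fraction of amplitude that must remain at the anchor after the very first splitting is $\frac{1/\sqrt2}{\sqrt{1+|\B_\ell|/M^2}}$, and more generally the stay-amplitude at step $i$ is governed by the squared weight still to be distributed, $\frac12+(q_\ell-q_{\ell-1}-i)/M^2$, i.e., by the distance to the \emph{right} endpoint of the block. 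Since the target weights (anchor $\propto 1/\sqrt2$, each interior slot $\propto 1/M$, right endpoint $\propto 1/\sqrt2$) must be normalized over a block whose length varies with $\ell$, no $L$-independent sequence $\alpha_0,\alpha_1,\dots$ can produce the correct isometry; you must know $q_\ell$ before applying $\Cal S_{q_{\ell-1}-1,\alpha_0}$. This is exactly what the paper's left/right block oracles $\O_L,\O_R$ (\cref{lem:block_oracles}) supply: they precompute $q_{\ell-1}-1$ and $q_\ell-1$ into ancilla registers with $\O(T/S)$ calls to $\O_\S$ (using the query-uniformity bound from \cref{def:QA}), after which each $\alpha_i$ is computed coherently from those registers and the sweep of \cref{lem:splitting_map} proceeds as you describe.

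Relatedly, the issue you flag as ``the main technical obstacle''---clean disentanglement of the bookkeeping register---is resolved in the paper precisely because the stored data ($q_{\ell-1}-1$ and $q_\ell-1$) is constant across all branches within a block and is a function of the final content of the time register, so it can be erased by calling $\O_L^\dagger,\O_R^\dagger$ at the end. Your incrementally updated offset $k$ is branch-dependent after the splittings spread the time register over the block, and you give no argument that it can be uncomputed. Once you replace the running counter by precomputed block endpoints, both problems disappear and the rest of your argument (the anchor reflection via one $\O_\S$ call, the $\Theta(\delta S/T)$ precision per angle, and the concurrent version via a $j$-indexed sweep of length $\max_j\lfloor 3T_j/S_j\rfloor$) matches the paper's proof.
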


The main idea of the proof is to use the characterization of the kernel of $A$ given in Lemma \ref{lem:w0}, and to map this space isometrically to another space around which we can reflect more easily. The formal proof of \cref{lem:TA} is given at the end of this section.

First of all, we define what we call the left and right block oracles, $\O_L$ and $\O_R$. Intuitively, we can break the time indices into query blocks, $q_{\ell-1},\dots,q_{\ell}$, beginning with the index of a query, and going up to the index of the next query. Two consecutive blocks overlap in a single index, which is always the index of a query. When $t$ is the index of a non-query, it belongs to a unique query block, and so the left-endpoint of its query block, $q_{\ell-1}$, is uniquely defined, as is the right-endpoint, $q_{\ell}$. Thus, we can define operations $\O_L$ and $\O_R$ that, for any such $t$, compute these values, or, rather, for technical reasons, given $\ket{t}$ such that $q_{\ell-1}<t+1<q_{\ell}$, $\O_L$ and $\O_R$ return $q_{\ell-1}-1$ and $q_{\ell}-1$ respectively.

When $t+1$ is the index of a query, there is ambiguity, because it is part of two blocks -- it is the left-endpoint of one, and the right-endpoint of another. We use an auxiliary qubit to resolve this ambiguity: for a state $\ket{t}\ket{+}$, we interpret $t+1=q_\ell$ as the right-endpoint of a block, so $\O_L$ and $\O_R$ return $q_{\ell-1}-1$ and $q_{\ell}-1$ respectively; and for a state $\ket{t}\ket{-}$, we interpret $t+1=q_\ell$ as the left-endpoint of a block, so $\O_L$ and $\O_R$ return $q_{\ell}-1$ and $q_{\ell+1}-1$ respectively. In other words, blocks start with a query and finish immediately before the next one. The precise actions of $\O_L$ and $\O_R$ are defined as follows.

\begin{definition}
	Let $\A$ be a clean quantum algorithm and let $P_{\A} = (\H,\V,A,\ket{\tau})$ be its span program. We define the left and right block oracles as unitaries on $\H \otimes \C^{\{-1,\dots,T\}}$, acting as
	\[\O_L : \left\{\begin{array}{rcll}
		\ket{t}\ket{0}\ket{0} &\mapsto& \ket{t}\ket{0}\ket{q_{\ell-1}-1}, & \text{if $t+1\not\in\S$ and } q_{\ell-1} - 1 < t < q_{\ell}-1, \\
		\ket{t}\ket{+}\ket{0} &\mapsto& \ket{t}\ket{+}\ket{q_{\ell-1}-1}, & \text{if $t+1\in \S$ and } t=q_{\ell} - 1, \\
		\ket{t}\ket{-}\ket{0} &\mapsto& \ket{t}\ket{-}\ket{q_{\ell-1}-1}, & \text{if $t+1\in\S$ and } t=q_{\ell-1} - 1,
	\end{array}\right.\]
	and
	\[\O_R : \left\{\begin{array}{rcll}
		\ket{t}\ket{0}\ket{0} &\mapsto& \ket{t}\ket{0}\ket{q_{\ell}-1}, & \text{if $t+1\not\in\S$ and } q_{\ell-1} - 1 < t < q_{\ell}-1, \\
		\ket{t}\ket{+}\ket{0} &\mapsto& \ket{t}\ket{+}\ket{q_{\ell}-1}, & \text{if $t+1\in\S$ and } t=q_{\ell} - 1, \\
		\ket{t}\ket{-}\ket{0} &\mapsto& \ket{t}\ket{-}\ket{q_{\ell}-1}, & \text{if $t+1\in\S$ and } t=q_{\ell-1} - 1.
	\end{array}\right.\]
\end{definition}

Next, we show how to implement these block oracles efficiently.

\begin{lemma}
	\label{lem:block_oracles}
	Let $\A$ be a clean quantum algorithm with query complexity $S$ and time complexity $T$, and let $P_{\A} = (\H,\V,A,\ket{\tau})$ be its span program. Then we can implement $\O_L$ and $\O_R$ with $\O(T/S)$ queries to $\O_\S$, $\O(\mathrm{polylog}(T))$ ancillary qubits and a number of additional gates that scales as
	\[\O\left(\frac{T}{S}\mathrm{polylog}(T)\right).\]

	Similarly, suppose $\{\A^{(j)}\}_{j=1}^n$ is a set of clean quantum algorithms. Let $j \in [n]$ and let $S_j$ and $T_j$ be the query and time complexity of $\A^{(j)}$, respectively. Similarly, let $\O_L^{(j)}$ and $\O_R^{(j)}$ be the left and right block oracles of $\A^{(j)}$, respectively. We can provide concurrent access to $\{\O_L^{(j)}\}_{j=1}^n$ and $\{\O_R^{(j)}\}_{j=1}^n$ with $\O(\max_{j \in [n]} T_j/S_j)$ queries to $\O_\S$, $\O(\mathrm{polylog}(T_{\max}))$ ancillary qubits and a number of additional gates that scales as
	\[\O\left(\max_{j \in [n]} \frac{T_j}{S_j}\mathrm{polylog}(T_{\max})\right).\]
\end{lemma}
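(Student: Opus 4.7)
The plan is to exploit the query-uniformity condition from \cref{def:QA}, which guarantees that consecutive queries are no more than $\lfloor 3T/S\rfloor$ time steps apart and that the first and last queries are within the same distance of the endpoints of the algorithm. Consequently, for any valid input $\ket{t}\ket{b}\ket{0}$ the target endpoint $q_{\ell-1}-1$ for $\O_L$ is within $\lfloor 3T/S\rfloor$ steps to the left of an appropriately chosen reference time $s$, and a linear scan over this window suffices. The implementation of $\O_R$ is completely symmetric by scanning to the right.

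For $\O_L$, I would first reversibly compute a reference time $s$ into an ancilla register---roughly, $s = t$ when the second register is $\ket{0}$ or $\ket{+}$, and $s = t+1$ when it is $\ket{-}$. This can be arranged with one call to $\O_\S$, a controlled Hadamard on the second register, and a few arithmetic gates, since the $\ket{\pm}$ branches are exactly the ones where $t+1\in\S$. I then enter the main loop, iterating $k = 0, 1, \dots, K = \lfloor 3T/S\rfloor + \O(1)$: in each iteration I use the standard Hadamard--controlled-$\O_\S$--Hadamard sandwich to compute the bit $b_k = [s - k \in \S]$ into a single work qubit, then conditionally write $s - k - 1$ into the output register and flip a one-qubit ``found'' flag $f$ (both controlled on $b_k = 1 \wedge f = 0$, with the output register initialized to the sentinel $-1$ via a trivial fix-up of the stated initial state $\ket{0}$), and finally uncompute $b_k$ with a second controlled call to $\O_\S$. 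After the loop, I uncompute $s$ by reversing the initial reference-time computation.

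The observation that makes this reversible is that the predicate $b_k = 1 \wedge f = 0$ fires at most once across the entire loop, so once $f$ has flipped to $1$ all subsequent iterations are inert, and no information about the $b_k$'s has to survive between iterations. The per-iteration cost is two controlled calls to $\O_\S$ and $\O(\mathrm{polylog}(T))$ gates of arithmetic (counter update, conditional add of $s-k-1$, flag toggle), yielding the claimed totals of $\O(T/S)$ queries, $\O((T/S)\mathrm{polylog}(T))$ extra gates, and $\O(\mathrm{polylog}(T))$ ancillas, namely a counter, the flag, the bit workspace, and a copy of $s$. For the concurrent version over $\{\A^{(j)}\}_{j=1}^n$, the algorithm index $j$ is threaded through every call to $\O_\S$ (which already takes $(j,t)$ as input) and the loop bound is fixed at $K_{\max} = \max_{j}\lfloor 3T_j/S_j\rfloor$; the once-and-done semantics of the $f$-flag ensure that branches with smaller $T_j/S_j$ simply idle for the remaining iterations once they have found their query. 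The main technical obstacle I anticipate is handling the boundary cases cleanly: the sentinel convention when no query exists in the scanned interval, and the behaviour of $\O_\S$ on out-of-range $(j,t)$ pairs, which I would pin down to ``not a query'' using the fact that the oracle is unconstrained there, so that branches with small $T_j/S_j$ and iterations past the valid range do not produce spurious hits.
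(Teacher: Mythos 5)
Your overall strategy coincides with the paper's: both proofs perform a linear scan over a window of $\lfloor 3T/S\rfloor$ time steps, justified by the query-uniformity condition of \cref{def:QA}, preceded by a controlled Hadamard to disambiguate the $\ket{\pm}$ branches. The gap is in the one step that has to make the scan reversible. The gate you describe --- ``flip a one-qubit found flag $f$ \dots controlled on $b_k=1\wedge f=0$'' --- is not a unitary: a gate whose control qubit is also one of its targets is ill-defined, and the induced map on $f$ alone (in the $b_k=1$ sector) sends both $\ket{0}$ and $\ket{1}$ to $\ket{1}$. Your accompanying justification (the predicate fires at most once, so later iterations are inert) shows that the intended \emph{classical} map is injective on the states that actually occur, but that is not sufficient: every gate in the circuit must be unitary on the whole space, and this one is not. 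The same objection applies to ``conditionally write $s-k-1$ into the output register,'' since the write is conditioned on data ($f$, or equivalently the sentinel) that the same step modifies.

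The fix is standard and cheap. Drop the flag entirely and let the sentinel play its role: in iteration $k$, controlled only on $b_k=1$, apply to the output register the transposition $\ket{-1}\leftrightarrow\ket{s-k-1}$, which is a permutation of basis states and hence unitary. On reachable states this fires exactly at the first hit and leaves an already-written output untouched, because the values $s-k'-1$ at which hits can occur are pairwise distinct and none of them equals $-1$ (a hit at $k'$ forces $s-k'\in\S\subseteq[T]$). The paper avoids the issue by a different but equally valid mechanism: it decrements a copy of the time register in place, controlled on a counter being zero, and increments the counter once the query is found, so that every gate has disjoint control and target; it then copies out the answer and reverses the scan. Two further loose ends in your write-up: if you do keep a flag, its final value must be uncomputed, which is possible since it equals the indicator of the output differing from $-1$; and ``$\O_R$ is completely symmetric'' is not quite right on the $\ket{-}$ branch, where the rightward scan must start at $s+1$ (equivalently $k=1$) so as not to re-detect the query at $t+1$ itself.
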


\begin{proof}
	We first focus on the case where we have just one algorithm and leave the case where we have multiple algorithms for the final paragraph. We only show how to implement $\O_L$, as the implementation of $\O_R$ is similar. First of all, we check if $t+1\in\S$ using one call to $\O_\S$ and $\O(\mathrm{polylog}(T))$ other gates and store the result in an auxiliary qubit. If this flag qubit is $\ket{1}$, we apply a Hadamard to the second register. If the second register is now $1$, we copy the time $t$ to the last register. Observe that if the input was $\ket{t}\ket{-}\ket{0}$, where $t+1=q_{\ell-1}$ then we are done (up to reapplying the Hadamard to the second register and uncomputing the flag qubit). The only interesting case now is when the second bit is $0$, so we apply all the following operations controlled on this bit being $0$.

	In this last case, what we would like to do is write in a new register the index of the last query before our timestep $t$. For that purpose we initialize a new counter register having $\lceil \log(3T/S) \rceil + 1$ qubits, in the state $\ket{0}$, and iteratively decrement the time register until we reach a time step that is one less than a query time step, and after that the counter register is incremented. This means that after these iterations, we have the correct query time step stored in the time register, while the counter will contain a function of $t$, $q_{\ell-1}$, and $\lfloor 3T/S \rfloor$. This task can be done by repeating the following operation $\lfloor 3T/S \rfloor$ times. First, we check whether the time register is one less than a query time step and if it is we increment the counter register. This can be done using $2$ queries to $\O_\S$ and a number of extra gates that is polylogarithmic in $T$. After that, we decrement the time register controlled on the counter being in the $\ket{0}$ state. This also takes a number of gates that is polylogarithmic in $T$.  We can now copy the time into the last register, and then uncompute all of these iterations, returning the time register to the state $\ket{t}$ and the counter to the state $\ket{0}$.

	At last, we undo the computations we did in the beginning, i.e., we apply the controlled Hadamard again and reset the flag that indicated whether $t + 1\in\S$ using one more query to $\O_\S$ and $\O(\mathrm{polylog}(T))$ extra gates. We easily check that the total cost of this construction matches the claim in the statement of the lemma.

	Finally, in order to provide concurrent access to $\{\O_L^{(j)}\}_{j=1}^n$, we can simply run the loop in the second paragraph for $\max_{j \in [n]} \lfloor 3T_j/S_j \rfloor$ iterations. The size of the time register now has to be $T_{\max}$ and so the arithmetic operations on this register take a number of gates that is polylogarithmic in $T_{\max}$. This completes the proof.
\end{proof}

Next, we define a mapping that generates the vectors of the kernel of $A$.

\begin{definition}
	\label{def:construction_kernel}
	Let $\A$ be a clean quantum algorithm with time complexity $T$ and query complexity $S$, and let $P_{\A} = (\H,\V,A,\ket{\tau})$ be its span program. Define $\Cal C$ as a unitary on $\H$, which, for all $\ell \in \{2,\dots,S+1\}$ and $\ket{\psi} \in \C^{[n] \times \mathcal{W}}$, acts as
	\[\Cal{C} : \ket{q_{\ell-1}-1}\ket{-}\ket{\psi} \mapsto \frac{\Phi_{\ell}\ket{\psi}}{\norm{\Phi_{\ell}\ket{\psi}}},\]
	and otherwise, $\Cal C$ acts arbitrarily, but leaves $\overline{\H}_x$ and $\widetilde{\H}_x$ invariant.
\end{definition}

\begin{lemma}
	\label{lem:constrkerA}
	Let $\A$ be a clean quantum algorithm with time complexity $T$ and query complexity $S$, and let $P_{\A} = (\H,\V,A,\ket{\tau})$ be its span program. We can implement a mapping $\Cal C$ that satisfies the conditions in \cref{def:construction_kernel}, up to error $\delta > 0$ in operator norm with $\O(T/S)$ queries to $\O_\A$ and $\O_\S$, $\O(\mathrm{polylog}(T))$ ancillary qubits, and with a number of additional gates that scales as
	\[\O\left(\frac{T}{S} \mathrm{polylog}\left(T,\frac{1}{\delta}\right)\right).\]

	Similarly, let $\{\A^{(j)}\}_{j=1}^n$ be a set of clean quantum algorithms. For all $j \in [n]$, let $S_j$ and $T_j$ denote the query and time complexity of $\A^{(j)}$, respectively. Let $\Cal{C}^{(j)}$ be the routine defined in \cref{def:construction_kernel} for $\A^{(j)}$. We can provide concurrent access to $\{\Cal{C}^{(j)}\}_{j=1}^n$ up to error $\delta > 0$ in operator norm with $\O(\max_{j \in [n]}T_j/S_j)$ queries to $\O_\A$ and $\O_\S$, with $\O(\mathrm{polylog}(T_{\max}))$ ancillary qubits, and with a number of additional gates that scales as
	\[\O\left(\max_{j \in [n]} \frac{T_j}{S_j} \mathrm{polylog}\left(T_{\max},\frac{1}{\delta}\right)\right).\]
\end{lemma}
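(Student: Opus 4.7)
The plan is to construct $\Cal C$ by first loading the endpoints $q_{\ell-1}-1$ and $q_\ell-1$ of the relevant query block into auxiliary registers via $\O_L$ and $\O_R$ from Lemma \ref{lem:block_oracles}, and then iteratively applying splitting maps from Lemma \ref{lem:splitting_map} to spread amplitude across the block in exactly the pattern $\Phi_\ell\ket{\psi}/\|\Phi_\ell\ket{\psi}\|$. Starting from $\ket{q_{\ell-1}-1}\ket{-}\ket{\psi}$, I apply the sequence $\Cal S_{q_{\ell-1}-1,\alpha_0}, \Cal S_{q_{\ell-1},\alpha_1}, \dots, \Cal S_{q_\ell-2,\alpha_{|\B_\ell|}}$. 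The first map falls into the $t+1\in\S$ case of Definition \ref{def:splitting-map}, transferring some amplitude from $\ket{q_{\ell-1}-1}\ket{-}\ket{\psi}$ onto $\ket{q_{\ell-1}}\ket{0}\ket{\psi}$; the intermediate maps are of the ``otherwise'' type and each propagates amplitude one time step forward while multiplying the workspace by the appropriate $U_{q_{\ell-1}+i}$; and the final map lies in the $t+2\in\S$ case (for $\ell\leq S$), producing the $\ket{q_\ell-1}\ket{+}$ component, or in the ``otherwise'' case (for $\ell=S+1$), producing $\ket{T}\ket{0}$.

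The angles $\alpha_i$ are chosen inductively so that, at each step, the fraction of amplitude deposited at the current time step matches the corresponding target coefficient $c_i$ in $\Phi_\ell\ket{\psi}/\|\Phi_\ell\ket{\psi}\|$. Writing $L_\ell$ for the squared norm of $\Phi_\ell$ applied to a unit vector (a simple arithmetic function of $q_\ell-q_{\ell-1}$, and additionally of $a$ when $\ell=S+1$), the targets are $c_0 = 1/\sqrt{2L_\ell}$ at the left endpoint, $c_i = 1/(M\sqrt{L_\ell})$ at interior time steps, and $1/\sqrt{2L_\ell}$ or $1/(a\sqrt{L_{S+1}})$ at the right endpoint. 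Setting $\alpha_i = c_i / r_i$ with $r_i = \sqrt{1 - \sum_{j<i} c_j^2}$, a direct induction shows that after all $q_\ell-q_{\ell-1}$ splitting maps the state is exactly $\Phi_\ell\ket{\psi}/\|\Phi_\ell\ket{\psi}\|$. All these angles are computable reversibly from the block length loaded by $\O_L, \O_R$ using $\O(\mathrm{polylog}(T,1/\delta))$ gates per iteration.

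By the query-uniformity clause of Definition \ref{def:QA}, $q_\ell - q_{\ell-1} \leq \lfloor 3T/S\rfloor$, so I run the loop $\lfloor 3T/S\rfloor$ times, using the stored right endpoint to gate the splitting-map application in iterations that would fall past $q_\ell-1$. Each iteration costs $\O(1)$ calls to $\O_\A$ and $\O_\S$ and $\O(\mathrm{polylog}(T,1/\delta))$ extra gates; the block oracles contribute $\O(T/S)$ queries and $\O(T/S \cdot \mathrm{polylog}(T))$ gates; and representing each $\alpha_i$ to $\O(\log(T/(S\delta)))$ bits bounds the total operator-norm error by $\delta$, since per-iteration errors compose additively over $\O(T/S)$ iterations. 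Each individual splitting map leaves $\overline{\H}_x$ and $\widetilde{\H}_x$ invariant, so the composed $\Cal C$ does too, matching Definition \ref{def:construction_kernel}.

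The main subtlety is ensuring the iterated loop acts as identity on basis states whose time index falls outside the active block, and that no spurious cross-talk between blocks occurs; this is handled by gating every splitting-map application on the comparison between the iteration counter and the stored right endpoint, together with the fact that each $\Cal S_{t,\alpha}$ is already the identity on time indices other than $t$ and $t+1$. The concurrent version then follows with essentially no additional conceptual effort: the concurrent block oracles of Lemma \ref{lem:block_oracles} supply the endpoints for the active $\A^{(j)}$, the splitting maps of Lemma \ref{lem:splitting_map} are already built from controlled calls to $\O_\A$ and thus satisfy the concurrent-access semantics of Definition \ref{def:concurrent-access}, and all per-iteration arithmetic remains polylogarithmic in $T_{\max}$. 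The iteration count becomes $\max_j \lfloor 3T_j/S_j\rfloor$, yielding the claimed concurrent resource bounds.
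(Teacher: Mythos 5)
Your construction matches the paper's proof essentially step for step: load the block endpoints with $\O_L,\O_R$, run $\lfloor 3T/S\rfloor$ endpoint-gated splitting maps whose angles (after unwinding your inductive rule $\alpha_i=c_i/r_i$) coincide exactly with the closed-form coefficients the paper writes down, then uncompute; the error accounting, subspace-invariance argument, and concurrent extension are likewise the same. The only detail you leave implicit is the trivial case of the block before the first query (where $\O_L$ returns $-1$ and one simply acts as the identity, which suffices since \cref{def:construction_kernel} only demands invariance of $\overline{\H}_x$ and $\widetilde{\H}_x$ there).
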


\begin{proof}
	While the behavior of $\Cal C$ is only fully specified on states with $t$ such that $t+1\in\S$ in the first register, more generally, we must ensure that $\Cal C$ leaves $\overline{\H}_x$ and $\widetilde{\H}_x$ invariant, which leads to a more involved construction. We first consider the case where we have just one algorithm $\A$, and leave the case where we have multiple algorithms for the final paragraph of this proof.

	First of all, we call $\O_L$ and $\O_R$ and store the results in some auxiliary registers. According to \cref{lem:block_oracles}, this takes $\O(T/S)$ calls to $\O_\S$ and $\O(T/S \cdot \mathrm{polylog}(T))$ additional gates.

	Next, we distinguish between three cases. First, if the application of $\O_L$ amounts to $\ket{-1}$ in the last register, then necessarily $t$ belongs to the block before the first query, in which case we simply do nothing, i.e.,~we act as the identity. Second, if the result of $\O_R$ is $T$, then we started out in a state in which the time step $t$ was higher than the last query, which requires separate treatment. The third case is when neither of these happened. These cases can be distinguished with $\O(\mathrm{polylog}(T))$ gates, and they can be handled separately and consecutively. We only explain how we handle the final case, as the first one is trivial and the other is similar to the third.

	Hence, we assume that the auxiliary registers are in the states $\ket{q_{\ell-1}-1}$ and $\ket{q_\ell-1}$ for some $\ell \in \{2,\dots,S\}$. Now, we repeat the following procedure $\lfloor 3T/S \rceil$ times. In the $i$th iteration, where $i = 0,1, \dots, \lfloor 3T/S \rfloor$, we perform the following steps.
	\begin{enumerate}
		\setlength\itemsep{-.4em}
		\item First, we initialize an auxiliary qubit and set it to $\ket{1}$ if and only if $q_{\ell-1}+i < q_{\ell}$. This takes time $\O(\mathrm{polylog}(T))$. Steps 2 -- 4 we do controlled on the auxiliary qubit being $\ket{1}$.
		\item From the auxiliary registers, we calculate
		\[\alpha_i = \begin{cases}
			\frac{1}{\sqrt{2} \cdot \sqrt{1 + \frac{q_{\ell}-q_{\ell-1}-1}{M^2}}}, & \text{if } i = 0, \\
			\frac{1}{M \cdot \sqrt{\frac12 + \frac{q_{\ell}-q_{\ell-1}-i}{M^2}}}, & \text{otherwise},
		\end{cases}\]
		and store it in a binary representation in another auxiliary register. This calculation can be done up to precision $\Theta(\delta S/T)$ in a number of gates polylogarithmic in $T$ and $1/\delta$ using standard classical methods.
		\item Next, we apply the splitting map $\Cal S_{q_{\ell-1}+i-1,\alpha_i}$, where $q_{\ell-1}+i-1$ and $\alpha_i$ are stored in separate registers, with $\alpha_i$ up to error $\Theta(\delta S/T)$, which by \cref{lem:splitting_map} incurs $2$ controlled calls to (the inverse of) $\O_\A$ and a number of extra gates that is polylogarithmic in $T$ and $1/\delta$.
		\item We uncompute the parameter $\alpha_i$ from step 2.
		\item We uncompute the check that $q_{\ell - 1}+i < q_{\ell}$.
	\end{enumerate}
	These steps have the effect of applying
	\begin{align*}
		\Cal S_{q_{\ell}-2,\alpha_{q_{\ell}-q_{\ell-1}-1}} \Cal S_{q_{\ell}-3,\alpha_{q_{\ell}-q_{\ell-1}-2}} \dots \Cal S_{q_{\ell-1},\alpha_1} \Cal S_{q_{\ell-1}-1,\alpha_0},
	\end{align*}
	where each factor is implemented up to error $\Theta(S\delta/T)$. As there are at most $\O(T/S)$ factors, the total error is at most $\delta$.

	First of all, recall that $\Cal S_{t,\alpha}$ leaves $\overline{\H}_x$ and $\widetilde{\H}_x$ invariant.
	Moreover, observe that the only values of $t$ for which we execute $\Cal S_{t,\alpha}$ are the values $\{q_{\ell-1}-1, \dots, q_{\ell}-2\}$. By \cref{def:splitting-map}, for any $\alpha$:
	\begin{itemize}
		\setlength\itemsep{-.4em}
		\item $\Cal S_{q_{\ell-1}-1,\alpha}$ only acts non-trivially on $\Span\{\ket{q_{\ell-1}-1}\ket{-}, \ket{q_{\ell-1}}\ket{0}\}\otimes\mathbb{C}^{[n]\times\W}$, which it also leaves invariant (since $q_{\ell-1}\in \S$);
		\item $\Cal S_{q_{\ell}-2,\alpha}$ only acts non-trivially on $\Span\{\ket{q_{\ell}-2}\ket{0},\ket{q_{\ell}-1}\ket{+}\}\otimes\mathbb{C}^{[n]\times\W}$, which it also leaves invariant (since $q_{\ell}\in \S$);
		\item for all $t\in \{q_{\ell-1},\dots,q_{\ell}-3\}$, $\Cal S_{t,\alpha}$ only acts non-trivially on $\Span\{\ket{t}\ket{0},\ket{t+1}\ket{0}\}\otimes\mathbb{C}^{[n]\times\W}$, which it also leaves invariant (since $t+1,t+2\not\in \S$).
	\end{itemize}
	This means that we only act non-trivially on the vectors $\ket{q_{\ell-1}-1}\ket{-}\ket{\psi}$, $\ket{q_{\ell}-1}\ket{+}\ket{\psi}$ and $\ket{t}\ket{0}\ket{\psi}$ where $q_{\ell-1}-1 < t < q_{\ell}-1$ and $\ket{\psi} \in \C^{[n] \times \mathcal{W}}$,
	and we leave the space
	\[\left(\Span\{\ket{q_{\ell-1}-1}\ket{-}\} \oplus \Span\{\ket{t}\ket{0} : q_{\ell-1} - 1 < t < q_{\ell} - 1\} \oplus \Span\{\ket{q_{\ell}-1}\ket{+}\}\right) \otimes \C^{[n] \times \mathcal{W}}\]
	invariant. This implies that the time register always contains a value $t$ such that $t+1$ is within the query block bounded by $q_{\ell-1}$ from the left, and $q_{\ell}$ from the right, where having $\ket{-}$ in the second register, we interpret $q_{\ell-1}$ as a left endpoint, and having $\ket{+}$ in the second register, we interpret $q_{\ell}-1$ as a right endpoint; so we can uncompute the values obtained from $\O_L$ and $\O_R$ (that is, $q_{\ell-1}-1$ and $q_{\ell}-1$) by simply calling their inverses, which is what we do as  the final operation in the circuit.

	We claim that this mapping leaves $\overline{\H}_x$ and $\widetilde{\H}_x$ invariant. This is clear since $\Cal S_{t,\alpha}$ leaves $\overline{\H}_x$ and $\widetilde{\H}_x$ invariant, and all other operations do not matter as they are uncomputed.

	Moreover, we claim that this circuit implements a mapping $\Cal C$ that satisfies the conditions from \cref{def:construction_kernel}. As we are considering the third case, suppose that we start with the state $\ket{q_{\ell}-1}\ket{-}\ket{\psi}$, for some $\ket{\psi} \in \C^{[n] \times \mathcal{W}}$. Now, in the first iteration ($i=0$) we apply $\Cal S_{q_{\ell-1}-1,\alpha_0}$, to arrive at the state
	\[\frac{1}{\sqrt{1 + \frac{q_{\ell}-q_{\ell-1}-1}{M^2}}} \left[\frac{1}{\sqrt{2}}\ket{q_{\ell}-1}\ket{-}\ket{\psi} + \sqrt{\frac12 + \frac{q_{\ell}-q_{\ell-1}-1}{M^2}}\ket{q_{\ell}}\ket{0}\ket{\psi}\right].\]
	We easily check by induction that after the $i$th iteration with $1 \leq i < q_{\ell} - q_{\ell-1}$, we are in the state
	\[\frac{1}{\sqrt{1 + \frac{q_{\ell}-q_{\ell-1}-1}{M^2}}} \left[\begin{array}{l}
		\displaystyle \frac{1}{\sqrt{2}} \ket{q_{\ell-1}-1}\ket{-}\ket{\psi} + \frac{1}{M} \sum_{t=q_{\ell-1}}^{q_{\ell-1}+i-1} \ket{t}\ket{0}U_t \cdots U_{q_{\ell-1}+1}\ket{\psi} \\
		\displaystyle \;\;\;\;\; + \sqrt{\frac12 + \frac{q_{\ell} - q_{\ell-1} - (i+1)}{M^2}}\ket{q_{\ell-1}+i}\ket{0}U_{q_{\ell-1}+i} \cdots U_{q_{\ell-1}+1}\ket{\psi}
	\end{array}\right],\]
	which implies that after the iteration where $i = q_{\ell} - q_{\ell-1} - 1$, we are in the desired state. Whenever $i \geq q_{\ell} - q_{\ell-1}$, we don't do anything due to the condition that is checked in step 1. Hence, this circuit indeed implements a mapping $\Cal C$ that satisfies the conditions outlined in \cref{def:construction_kernel}.

	We observe that there are $\O(T/S)$ iterations, each of which uses $\O(1)$ calls to $\O_\A$ and $\O(\mathrm{polylog}(T,1/\delta))$ extra gates. In addition we do $\O(T/S)$ calls to $\O_\S$ and $\O(T/S \cdot \mathrm{polylog}(T))$ extra gates when we call $\O_L$ and $\O_R$ and their inverses.

	In order to implement concurrent access to $\{\Cal C^{(j)}\}_{j=1}^n$, we can run the loop a total of $\max_{j \in [n]} \lfloor 3T_j/S_j \rfloor$ iterations. The time register now has to be of size $T_{\max} = \max_{j \in [n]} T_j$, and hence the arithmetic operations on this register now take $\O(\mathrm{polylog}(T_{\max}))$ gates. We can now calculate the coefficients $\alpha_i$ with precision $\O(\delta \min_{j \in [n]} S_j/T_j)$. Now all the maps $\Cal C^{(j)}$ are implemented up to precision $\delta$, which implies that the concurrent access is also implemented up to precision $\delta$. This completes the proof.
\end{proof}

It now remains to round up the proof of the main lemma in this section.

\begin{proof}[Proof of \cref{lem:TAconcurrent}]
	We first focus on the case where we have just one algorithm $\A$. Using the characterization of the kernel of $A$ in \cref{lem:kernel}, we see that
	\[\Cal C : \underset{X}{\underbrace{\Span\{\ket{q_{\ell-1}-1}\ket{-} : \ell \in [S+1] \setminus \{1\}\}}} \otimes \C^{[n] \times \mathcal{W}} \mapsto \Ker(A)\]
	isometrically. Hence, we obtain that
	\[2\Pi_{\Ker(A)} - I = \Cal C\left[\left(2\Pi_X - I\right) \otimes I\right]\Cal C^{\dagger}.\]
	As we can implement the reflection around $X$ using $\O(\log(T))$ extra gates and a single controlled query to $\O_\S$, the cost of reflecting around $\Ker(A)$ essentially becomes twice the cost of implementing $\Cal C$, which is given in \cref{lem:constrkerA}.

	If we have multiple algorithms $\A^{(j)}$, we can use the exact same idea, but now we should use concurrent access to the $\Cal C^{(j)}$'s and a concurrent reflection around the spaces $X^{(j)}$'s. The cost of implementing concurrent access to $\{\Cal C^{(j)}\}_{j=1}^n$ is analyzed in \cref{lem:constrkerA}, and the concurrent reflection around the $X^{(j)}$'s can be implemented with $\O(\mathrm{polylog}(T_{\max}))$ gates and one controlled call to $\O_\S$. This completes the proof.
\end{proof}

As a final remark, we would like to point out that this is not the only possible construction of the reflection around the kernel of $A$. One could alternatively employ a more general method of constructing a block-encoding of $A$ and using phase estimation to separate the vectors in the kernel of $A$ from those that are orthogonal to it. Implementing this construction carefully yields the same time and query complexity, but requires a spectral analysis of $A$, which is possible but turns out to be quite involved.

\subsection{Proof of \cref{lem:TC}: Implementation of reflection around $\H(x)$}
\label{app:TC}

In this section, we prove \cref{lem:TC}, i.e., we provide an implementation of the circuit that reflects through $\H(x)$. We also elaborate on how one would do this concurrently. The results are summarized in the lemma below, of which \cref{lem:TC} is a special case.

\begin{lemma}
	\label{lem:TCconcurrent}
	\lemTC*

	Similarly, let $\{\A^{(j)}\}_{j=1}^n$ be a set of clean quantum query algorithms. For all $j \in [n]$, let $S_j$ and $T_j$ denote the query and time complexity of $\A^{(j)}$. Let $P^{(j)}$ be the span program of $\A^{(j)}$. Then we can implement concurrent access to $\{2\Pi_{\H^{(j)}(x^{(j)})} - \I\}_{j=1}^n$ with $\O(1)$ calls to $\O_x$ and $\O_\S$ and auxiliary qubits, and $\O(\mathrm{polylog}(T_{\max}))$ extra gates.
\end{lemma}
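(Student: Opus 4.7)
To implement $2\Pi_{\H(x)}-\I$, I first determine the phase it applies to each computational basis vector. By the definitions in \cref{eq:alg-span,eq:H}, a basis vector $\ket{t,b,i,j}$ lies in $\H(x)$ precisely when either $t+1\notin\S$ and $b=0$ (so it lies in $\Htrue$), or $t+1\in\S$ and $b=x_i$ (so it lies in $\H_{i,x_i}$). All remaining basis vectors lie either in $\H\setminus\H(x)$ or outside $\H$ altogether, and therefore in $\H(x)^\perp$. Consequently the reflection acts diagonally in the computational basis, multiplying $\ket{t,b,i,j}$ by
\[
(-1)^{b\oplus(x_i\cdot[t+1\in\S])}=(-1)^b\cdot(-1)^{x_i\cdot[t+1\in\S]}.
\]

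The first factor $(-1)^b$ is implemented by a single $Z$ on the $b$-qubit. For the second factor, the plan is to compute the bit $c:=[t+1\in\S]$ into a fresh ancilla, use it to control one application of $\O_x$, and then uncompute $c$. To obtain $c$, I first increment the time register in place from $\ket{t}$ to $\ket{t+1}$, which takes $\O(\log T)$ arithmetic gates and no oracle calls. I then convert the phase oracle $\O_\S$ into a bit oracle by the standard trick: prepare the ancilla in $\ket{+}$, apply a controlled call to $\O_\S$ (ancilla as control), and finish with a Hadamard on the ancilla. This writes $\ket{c}$ into the ancilla at the cost of one controlled $\O_\S$ call and $\O(1)$ extra gates. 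A single controlled call to $\O_x$ with control qubit $c$ then produces the desired phase $(-1)^{c\cdot x_i}=(-1)^{x_i\cdot[t+1\in\S]}$. Finally, I reverse the phase-to-bit trick (a second controlled $\O_\S$ call) to reset the ancilla to $\ket{0}$, decrement the time register back to $\ket{t}$, and apply the $Z$ on the $b$-qubit. The normalization register that embeds $\H$ as $\H\otimes\ket{2}$ inside the physical space is passively carried through, so the action outside $\H$ is irrelevant.

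The running tally is two controlled calls to $\O_\S$, one controlled call to $\O_x$, $\O(1)$ auxiliary qubits, and $\O(\log T)=\O(\mathrm{polylog}(T))$ extra gates, which is exactly what \cref{lem:TC} asks for. The concurrent version in \cref{lem:TCconcurrent} requires no new idea: the label register $\ket{j}$ is transported unchanged by the concurrent oracles $\O_\A$, $\O_\S$, $\O_x$, and the only scaling is that the time register must now accommodate every $T^{(j)}\leq T_{\max}$, pushing the in-place arithmetic to $\O(\mathrm{polylog}(T_{\max}))$ gates. The only real subtlety is ordering the time-register arithmetic, the phase-to-bit conversion, the conditional $\O_x$, and the uncomputation so that all ancillas and the time register are cleanly restored before the final $Z$ and return; once this bookkeeping is fixed the argument is immediate, and there is no approximation error to track because every step is exact.
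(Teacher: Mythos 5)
Your proposal is correct and follows essentially the same route as the paper: compute the flag $[t+1\in\S]$ via an increment of the time register and a (de)phased call to $\O_\S$, apply one flag-controlled $\O_x$, fix the remaining sign on the $b$-qubit, and uncompute. The only cosmetic difference is that you apply the $Z$ on the $b$-qubit unconditionally rather than controlled on the flag, which agrees with the paper's gate on all of $\H$ (and differs only on basis states outside $\H$, where the action is irrelevant); your phase bookkeeping $(-1)^{b\oplus(x_i\cdot[t+1\in\S])}$ is accurate.
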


\begin{proof}
	First, we consider the case where we just have one algorithm, $\A$. For all $x \in \{0,1\}^n$, recall that
	\begin{align*}
	\H(x) &= \bigoplus_{i=1}^n \H_{i,x_i} \oplus \H_{\text{true}}= \Span\{\ket{t,x_i,i,j} : t + 1 \in \mathcal{S}, i \in [n], j \in \mathcal{W}\}\\
	&\oplus \Span\{\ket{t,0,i,j} : t + 1 \in [T+1] \setminus \mathcal{S}, i \in [n], j \in \mathcal{W}\}.
	\end{align*}
	From this, and the definition of $\H$, it readily follows that the orthogonal complement of $\H(x)$ is given by
	\[\H(x)^{\perp} = \Span\{\ket{t,1-x_i,i,j} : t+1 \in \mathcal{S}, i \in [n], j \in \mathcal{W}\}.\]
	In order to reflect around $\H(x)$, all we have to do is put a minus phase if we are in $\H(x)^{\perp}$. To that end, call the oracle $\O_{\mathcal{S}}$ once to distinguish whether the time step in the first register is a state $\ket{t}$ for which $t + 1 \in \mathcal{S}$ (e.g. by first incrementing the first register, performing the call and then decrementing again). Store this bit in an auxiliary flag register. Next, conditioned on the flag qubit being $1$, perform one query to $\O_x$ to get a phase $(-1)^{x_i}$. Finally, apply $-Z$ to the second register, also controlled on the flag qubit, where $Z$ is the Pauli-$Z$ gate. Then if the second register is in the state $\ket{1-x_i}$, the overall phase will be $-(-1)^{x_i+1-x_i}=(-1)$, and if it is in the state $\ket{x_i}$, the overall phase will be $(+1)$, as desired. Finally, we need to uncompute the flag qubit, which again takes one call to $\O_{\mathcal{S}}$. All the other operations can be implemented in a number of elementary gates that is polylogarithmic in $T$.

	If we instead have multiple algorithms $\{\A^{(j)}\}_{j=1}^n$, then all that changes is the size of the time register. It is now of size $T_{\max} = \max_{j \in [n]} T_j$, and hence the arithmetic operations on it now require $\O(\mathrm{polylog}(T_{\max}))$ gates. This completes the proof.
\end{proof}

\subsection{Proof of \cref{lem:TB}: Construction of $\ket{w_0}$}
\label{app:TB}

The goal of this section is to prove \cref{lem:TB}, i.e., we provide an implementation of the circuit that constructs the minimal positive witness that is analytically calculated in \cref{lem:w0}. Additionally, we also elaborate on how one would do this concurrently, because we need this is in the proof of \cref{thm:variable-time}. The results are summarized in the following lemma, of which \cref{lem:TB} is a special case.

\begin{lemma}
	\label{lem:TBconcurrent}
	\lemTB*

	Similarly, let $\{\A^{(j)}\}_{j=1}^n$ be a set of clean quantum query algorithms. For all $j \in [n]$, let $S_j$ and $T_j$ denote the query and time complexity of $\A^{(j)}$, respectively, and let $\varepsilon_j$ be the error probability. Let $P^{(j)}$ be the span program of $\A^{(j)}$, and let $\ket{w_0^{(j)}}$ be the minimal positive witness of $P^{(j)}$. Then we can implement concurrent access $\mathcal{C}$ to $\{\Cal C_{\ket{w_0^{(j)}}}\}_{j=1}^n$ such that
	\[\sup_{\substack{\ket{h} \in \H_x \\ \norm{\ket{h}} = 1}} \norm{\Pi_{\H_x^{\perp}}\mathcal{C}\ket{h}} \leq \max_{j \in [n]} 2\sqrt{2\varepsilon_j},\]
	up to error $\delta > 0$ in operator norm with $\O(T/S)$ calls to $\O_\A$ and $\O_\S$, $\O(1)$ auxiliary qubits and a number of gates that satisfies
	\[\O\left(\max_{j \in [n]} \frac{T_j}{S_j} \mathrm{polylog}\left(T_{\max},\frac{1}{\delta}\right)\right).\]
\end{lemma}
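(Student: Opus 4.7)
My plan is to build $\Cal C_{\ket{w_0}}$ as the composition $\Cal S_{\text{back}}\cdot\Cal S_{\text{fwd}}\cdot W$, where a small ``seed'' unitary $W$ prepares a two-branch superposition and two sequences of splitting maps (\cref{def:splitting-map}) distribute each branch across the appropriate block of $\ket{w_0}$. Using \cref{lem:w0} I rewrite $\ket{w_0}/\sqrt{N}=\ket{w_0^{(F)}}+\ket{w_0^{(B)}}$, with the forward part $\ket{w_0^{(F)}}\in\overline{\H}_x$ supported on times $\{0,\dots,q_1-1\}$ and the backward part $\ket{w_0^{(B)}}\in\widetilde{\H}_x$ supported on times $\{q_S-1,\dots,T\}$ (using $\ket{\widetilde{\Psi}_T(x)}=\ket{\Psi_T}$ for the $t=T$ term). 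By query-uniformity each block has length $\O(T/S)$. I define $W$ as a two-level rotation in $\Span\{\ket{0}_{\text{time}}\ket{0}_{\text{ans}},\ket{T}_{\text{time}}\ket{1}_{\text{ans}}\}$ with amplitudes $\alpha=\|\ket{w_0^{(F)}}\|/\sqrt{N}$ and $\beta=\|\ket{w_0^{(B)}}\|/\sqrt{N}$; applied to $\ket{0}\ket{0}\ket{\Psi_0}$ and using the identity $\ket{\Psi_T}=(I\otimes X)\ket{\Psi_0}$, it produces the seed state $\alpha\ket{0}\ket{0}\ket{\Psi_0}+\beta\ket{T}\ket{0}\ket{\Psi_T}$ in $\O(\log T)$ elementary gates.

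After $W$, I will apply forward splitting maps $\Cal S_{q_1-2,\alpha_{q_1-2}}\cdots\Cal S_{0,\alpha_0}$ with amplitudes $\alpha_t$ chosen exactly as in the kernel-spreading loop of the proof of \cref{lem:constrkerA}, so that starting from $\ket{0}\ket{0}\ket{\Psi_0}$ they produce $\ket{w_0^{(F)}}/\|\ket{w_0^{(F)}}\|$; dually, backward splitting maps $\Cal S_{q_S-1,\alpha'_{q_S-1}}\cdots\Cal S_{T-1,\alpha'_{T-1}}$ will take $\ket{T}\ket{0}\ket{\Psi_T}$ to $\ket{w_0^{(B)}}/\|\ket{w_0^{(B)}}\|$. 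The two time ranges are separated by at least one query step, so the two sequences commute, and together they map the seed state to $\alpha\ket{w_0^{(F)}}/\|\ket{w_0^{(F)}}\|+\beta\ket{w_0^{(B)}}/\|\ket{w_0^{(B)}}\|=\ket{w_0}/\sqrt{N}$. The block endpoints $q_1-1,q_S-1$ come from $\O_L,\O_R$ via \cref{lem:block_oracles}, and all amplitudes are computed classically to precision $\Theta(\delta S/T)$ inside each iteration. By \cref{lem:splitting_map} each splitting map costs $\O(1)$ calls to $\O_\A,\O_\S$ and $\O(\mathrm{polylog}(T,1/\delta))$ gates, so the whole implementation spends $\O(T/S)$ oracle calls, $\O((T/S)\mathrm{polylog}(T,1/\delta))$ extra gates, and $\O(1)$ auxiliary qubits.

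For the leakage bound, the key structural fact (noted after \cref{def:splitting-map}) is that every splitting map leaves $\overline{\H}_x$ and $\widetilde{\H}_x$ invariant separately, and by unitarity it also preserves their orthogonal complements; hence $\Cal S_{\text{back}}\Cal S_{\text{fwd}}$ preserves $\|\Pi_{\H_x^\perp}\cdot\|$ of every vector and the uniform leakage bound for $\Cal C_{\ket{w_0}}$ reduces to bounding $\|\Pi_{\H_x^\perp}W\Pi_{\H_x}\|_{\text{op}}$. Since $W$ is the identity on all first-register values other than $\ket{0}$ and $\ket{T}$, its non-trivial action on $\H_x$ lives on a subspace of dimension at most four (two basis vectors for $f(x)=1$ and up to four for $f(x)=0$), located at the two time endpoints. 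Decomposing $\ket{\Psi_T(x)}=\ket{\Phi_0(x)}+\ket{\Phi_1(x)}$ and $\ket{\widetilde{\Psi}_0(x)}=\ket{\Theta_0}+\ket{\Theta_1}$ by the answer qubit, the consistency identities of \cref{def:QA} together with \cref{lem:almost-orthog} force $\|\ket{\Phi_{1-f(x)}(x)}\|,\|\ket{\Theta_{f(x)}}\|\leq\sqrt{2\varepsilon}$. Plugging these bounds into the explicit expressions for $W\ket{T}\ket{0}\ket{\Psi_T(x)}$ and $W\ket{0}\ket{0}\ket{\widetilde{\Psi}_0(x)}$ controls each direction's leakage by $\O(\sqrt{\varepsilon})$, and a careful Cauchy-Schwarz argument on the bad subspace, exploiting the partial cancellation between the two endpoint contributions, yields the claimed uniform bound $\|\Pi_{\H_x^\perp}W\Pi_{\H_x}\|_{\text{op}}\leq 2\sqrt{2\varepsilon}$.

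For the concurrent version, I will run the above construction in parallel across the $n$ span programs following the template of the second half of \cref{lem:constrkerA}: the splitting-map loop executes $\max_j\lfloor 3T_j/S_j\rfloor$ iterations on a time register of size $T_{\max}$, the seed angles $\alpha_j,\beta_j$ are loaded from a classical table indexed by $j$, and each iteration costs $\O(1)$ calls to $\O_\A,\O_\S$ and $\O(\mathrm{polylog}(T_{\max},1/\delta))$ gates. Because the leakage argument is entirely index-by-index, the concurrent circuit inherits the uniform $\max_j 2\sqrt{2\varepsilon_j}$ bound. The main obstacle will be the uniform leakage analysis in the $f(x)=1$ branch: since $W$ necessarily pushes some weight of $\ket{T}\ket{0}\ket{\Psi_T(x)}$ out of $\overline{\H}_x$ and analogously some weight of $\ket{0}\ket{0}\ket{\widetilde{\Psi}_0(x)}$ out of $\widetilde{\H}_x$, showing that the combined leakage across the bad subspace stays within the $2\sqrt{2\varepsilon}$ budget requires leveraging the symmetry between the forward and backward consistency conditions in \cref{def:QA}, which is what forces the leakage vectors at the two ends of the time axis to partially align rather than adding in quadrature.
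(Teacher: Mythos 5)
Your construction is essentially the paper's: your seed unitary $W$ is the paper's $\Cal C_1$, your two-way split of $\ket{w_0}$ into forward and backward parts matches the paper's $\ket{\psi}$ and $\ket{\chi}+\ket{\phi}$ (with your first backward splitting map playing the role of the paper's explicit $\Cal S_{T-1,\beta}^{\dagger}$), and the reduction of the leakage to the seed step via invariance of $\overline{\H}_x$ and $\widetilde{\H}_x$ under the splitting maps is exactly the paper's argument. One small correction: no cancellation between the two endpoint contributions is needed for the $2\sqrt{2\varepsilon}$ bound --- the paper simply replaces $\ket{\Psi_T(x)}$ by $\ket{\Psi_T}$ at cost $|b|\,\|\ket{\Psi_T}-\ket{\Psi_T(x)}\|\le\sqrt{2\varepsilon}$ and bounds the residual $\ket{T,0,\Psi_T}$ component by another $\sqrt{2\varepsilon}$, so the two error terms just add.
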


We first name the three parts that form $\ket{w_0}$. From the form of $\ket{w_0}$ in \cref{lem:w0} we have
\[\frac{\ket{w_0}}{\norm{\ket{w_0}}} = \frac{\ket{\psi}}{\sqrt{N}} + \frac{\ket{\chi}}{\sqrt{N}} + \frac{\ket{\phi}}{\sqrt{N}},\]
where
\begin{align*}
	\ket{\psi} &= \frac{1}{M}\sum_{t=0}^{q_1-2} \ket{t}\ket{0}U_t \cdots U_1\ket{\Psi_0} + \ket{q_1-1}\left(\frac12\ket{0} + \frac12\ket{1}\right)U_{q_1-1} \cdots U_1\ket{\Psi_0}, \\
	\ket{\chi} &= \frac{1}{Ca^2+1}\left[\ket{q_S-1}\left(\frac12\ket{0} - \frac12\ket{1}\right)U_{q_S+1}^{\dagger} \cdots U_T^{\dagger}\ket{\Psi_T} + \frac{1}{M}\sum_{t=q_S+1}^{T-1} \ket{t}\ket{0} U_{t+1}^{\dagger} \cdots U_T^{\dagger}\ket{\Psi_T}\right], \\
	\ket{\phi} &= -\frac{Ca}{Ca^2+1}\ket{T}\ket{0}\ket{\Psi_T}.
\end{align*}
We can easily calculate the norms of the respective vectors, which results in
\[\norm{\ket{\psi}}^2 = \frac{q_1-1}{M^2} + \frac12, \qquad \norm{\ket{\chi}}^2 = \frac{C}{(Ca^2+1)^2}, \qquad \text{and} \qquad \norm{\ket{\phi}}^2 = \frac{C^2a^2}{(Ca^2+1)^2}.\]

By assumption on the state $\ket{\Psi_T}$, the state $\ket{\phi}/\norm{\ket{\phi}}$ can be generated in $\O(1)$ gates. The generation of the other two states is somewhat harder and is the focus of the following lemma.

\begin{lemma}
	\label{lem:Cpsi}
	We can implement routines $\Cal C_{\ket{\psi}}$ and $\Cal C_{\ket{\chi}}$ that map $\ket{0}\ket{0}\ket{\Psi_0}$ to $\ket{\psi}/\norm{\ket{\psi}}$ and $\ket{T-1}\ket{0}U_T^{\dagger}\ket{\Psi_T}$ to $\ket{\chi}/\norm{\ket{\chi}}$, respectively, leave $\overline{\H}_x$ and $\widetilde{\H}_x$ invariant, and moreover leave all the states that have disjoint support from $\ket{\psi}$ resp.\ $\ket{\chi}$ invariant, up to error $\delta > 0$ in operator norm with $\O(T/S)$ calls to $\O_\A$, $\O(1)$ auxiliary qubits and a number of gates that satisfies
	\[\O\left(\frac{T}{S}\mathrm{polylog}\left(T,\frac{1}{\delta}\right)\right).\]

	Similarly, we can provide concurrent access to $\{\Cal C_{\ket{\psi}}^{(j)}\}_{j=1}^n$ and $\{\Cal C_{\ket{\chi}}^{(j)}\}_{j=1}^n$ up to precision $\delta > 0$ with a number of calls to $\O_\A$ and $\O_\S$ of $\O(\max_{j \in [n]} T_j/S_j)$, $\O(1)$ auxiliary qubits and a number of extra gates that scales as
	\[\O\left(\max_{j \in [n]} \frac{T_j}{S_j} \mathrm{polylog}\left(T_{\max},\frac{1}{\delta}\right)\right).\]
\end{lemma}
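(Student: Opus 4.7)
The plan is to build $\Cal C_{\ket{\psi}}$ and $\Cal C_{\ket{\chi}}$ by chaining together splitting maps $\Cal S_{t,\alpha_t}$ from Definition \ref{def:splitting-map} with carefully chosen angles, in close analogy with the construction in the proof of Lemma \ref{lem:constrkerA}. For $\Cal C_{\ket{\psi}}$, starting from the input $\ket{0}\ket{0}\ket{\Psi_0}$, I apply the sequence $\Cal S_{q_1-2,\alpha_{q_1-2}} \cdots \Cal S_{1,\alpha_1}\Cal S_{0,\alpha_0}$. For each $t<q_1-2$ the map falls in case 3 of Definition \ref{def:splitting-map} and peels off the fraction $\alpha_t R_t$ of amplitude onto $\ket{t}\ket{0}U_t\cdots U_1\ket{\Psi_0}$, where $R_t = \prod_{r<t}\sqrt{1-\alpha_r^2}$ is the remaining weight, while forwarding the rest to $\ket{t+1}\ket{0}U_{t+1}\cdots U_1\ket{\Psi_0}$. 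The final map $\Cal S_{q_1-2,\alpha_{q_1-2}}$ is in case 2 (since $t+2=q_1\in\S$) and thus produces precisely the $\ket{q_1-1}\ket{+}U_{q_1-1}\cdots U_1\ket{\Psi_0}$ component present in $\ket{\psi}$. Setting $\alpha_t^2 = 1/(M^2N_\psi R_t^2)$ with $N_\psi = (q_1-1)/M^2 + 1/2$ as given in Lemma \ref{lem:w0}, an induction shows $R_t^2 = 1 - t/(M^2N_\psi)$, and a direct comparison with the closed form of $\ket{\psi}$ verifies that the output is exactly $\ket{\psi}/\|\ket{\psi}\|$. These angles lie in $[0,1]$ since $t\le q_1-2\le M^2N_\psi$, and are efficiently computable classically.

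For $\Cal C_{\ket{\chi}}$ I run the symmetric backwards construction starting from $\ket{T-1}\ket{0}U_T^{\dagger}\ket{\Psi_T}$ and applying $\Cal S_{q_S-1,\alpha_{q_S-1}}\cdots \Cal S_{T-3,\alpha_{T-3}}\Cal S_{T-2,\alpha_{T-2}}$, using the backwards branch $\ket{t+1}\ket{0}\ket{\phi}\mapsto -\sqrt{1-\alpha^2}\ket{t}\ket{0}U_{t+1}^{\dagger}\ket{\phi}+\alpha\ket{t+1}\ket{0}\ket{\phi}$ of Definition \ref{def:splitting-map}. Each step injects one factor $U_{t+1}^{\dagger}$ into the state, building up the chain of inverse unitaries that appears in the summands of $\ket{\chi}$. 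The last step $\Cal S_{q_S-1,\alpha_{q_S-1}}$ falls in case 1 (since $q_S\in\S$) and produces the $\ket{q_S-1}\ket{-}$ component, matching $\tfrac{1}{2}\ket{0}-\tfrac{1}{2}\ket{1}=\tfrac{1}{\sqrt{2}}\ket{-}$. The coefficients are chosen analogously from the explicit expression for $\ket{\chi}$ and the value of $N_\chi = C/(Ca^2+1)$ extracted from Lemma \ref{lem:w0}.

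The circuit realization reuses the loop template of Lemma \ref{lem:constrkerA}. By the query-uniformity condition of Definition \ref{def:QA}, both $q_1-1$ and $T-q_S$ are at most $\lfloor 3T/S\rfloor$, so only $\O(T/S)$ iterations are needed. Each iteration (i) uses $\O_\S$ to test whether the current $t$ is still in range, (ii) classically computes $\alpha_t$ to precision $\Theta(\delta S/T)$ in an auxiliary register using $\O(\mathrm{polylog}(T,1/\delta))$ gates, (iii) invokes $\Cal S_{t,\alpha_t}$ at a cost of two controlled calls to $\O_\A$ and $\O(\mathrm{polylog}(T,1/\delta))$ extra gates by Lemma \ref{lem:splitting_map}, and (iv) uncomputes the angle register and range flag. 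The per-step errors add linearly, giving total operator-norm error $\O(\delta)$. Invariance of $\overline{\H}_x$ and $\widetilde{\H}_x$ is inherited directly from the splitting maps, as already established in the proof of Lemma \ref{lem:constrkerA}. The disjoint-support property follows because each splitting map in our loop only touches computational basis pairs $\bigl(\ket{t}\ket{0/-}\ket{\cdot},\ket{t+1}\ket{0/+}\ket{\cdot}\bigr)$ with $t\in\{0,\dots,q_1-2\}$ (resp.\ $t\in\{q_S-1,\dots,T-2\}$), and these basis states all lie in the support of $\ket{\psi}$ (resp.\ $\ket{\chi}$); inputs whose support avoids them are left untouched. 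The concurrent version runs the same loop for $\max_{j\in[n]}\lfloor 3T_j/S_j\rfloor$ iterations over time registers of size $\O(\log T_{\max})$, with each inner step invoking concurrent access to $\O_\A$ and $\O_\S$, so the per-algorithm behavior is inherited automatically.

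The main obstacle is the bookkeeping that verifies the coefficient sequence is well-defined and produces exactly the target state. One must check (a) that $\alpha_t\in[0,1]$ at every step, which reduces to the inequalities $t\le M^2N_\psi$ and their $\ket{\chi}$-analogues; and (b) that the induction matching the evolved state with the explicit forms of $\ket{\psi}$ and $\ket{\chi}$ from Lemma \ref{lem:w0} handles the case transitions correctly at the boundary step where $\Cal S_{t,\alpha_t}$ switches from case 3 to case 2 (for $\ket{\psi}$) or from case 3 to case 1 (for $\ket{\chi}$). Both are tedious but straightforward algebraic checks using the identity $M^2N_\psi = q_1-1+M^2/2$.
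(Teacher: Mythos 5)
Your construction is essentially the paper's own proof: the paper simply observes that $\Cal C_{\ket{\psi}}$ and $\Cal C_{\ket{\chi}}$ are built by the same chained-splitting-map technique as the map $\Cal C$ of \cref{def:construction_kernel} in \cref{lem:constrkerA}, and you have spelled out exactly that construction, with correct angle choices for $\ket{\psi}$ (your identity $R_t^2 = 1 - t/(M^2 N_\psi)$ and the boundary check $R_{q_1-1} = 1/\sqrt{2N_\psi}$ both verify) and the same loop/uncomputation template, query-uniformity bound, error accounting, and invariance argument.

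One detail to fix in the $\Cal C_{\ket{\chi}}$ half: the branch of $\Cal S_{t,\alpha}$ you invoke, $\ket{t+1}\ket{0}\ket{\phi}\mapsto -\sqrt{1-\alpha^2}\,\ket{t}\ket{0}U_{t+1}^{\dagger}\ket{\phi}+\alpha\ket{t+1}\ket{0}\ket{\phi}$, carries a minus sign on the propagated front, so chaining it from $\ket{T-1}\ket{0}U_T^{\dagger}\ket{\Psi_T}$ down to $q_S-1$ deposits amplitudes with alternating signs, whereas $\ket{\chi}/\norm{\ket{\chi}}$ has uniformly positive coefficients; you should instead apply the inverses $\Cal S_{t,\alpha}^{\dagger}$ (implementable at the same cost by \cref{lem:splitting_map}), whose action on $\ket{t+1}$-type states has $+\sqrt{1-\alpha^2}$, or equivalently absorb the signs into the rotation. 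Also note that the normalizer relevant for the $\ket{\chi}$ angles is $\norm{\ket{\chi}}^2 = C/(Ca^2+1)^2$, i.e.\ the angles are governed by $C$ alone, not by $C/(Ca^2+1)$ as written; this does not affect the construction since the overall prefactor drops out of the normalized target.
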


\begin{proof}
	Note that $\Cal C_{\ket{\psi}}$ and $\Cal C_{\ket{\chi}}$ are very similar to the circuit $\Cal C$ defined in \cref{def:construction_kernel}. We can use the exact same techniques as we used in implementing $\Cal C$ in \cref{lem:constrkerA} to implement $\Cal C_{\ket{\psi}}$ and $\Cal C_{\ket{\chi}}$. The cost of implementing these routines carries over from the proof of \cref{lem:constrkerA}. This completes the proof.
\end{proof}

\begin{proof}[Proof of \cref{lem:TBconcurrent}]
	We first restrict ourselves to the case where we just have one algorithm, $\A$, and we postpone the treatment of the case where we have a set of algorithms to the final paragraph of this proof.

	First of all, we implement a circuit $\Cal C_1$ whose action is
	\[\Cal C_1 : \left\{\begin{array}{rcl}
		\ket{0}\ket{0}\ket{\Psi_0} &\mapsto& \alpha\ket{0}\ket{0}\ket{\Psi_0} + \sqrt{1-\alpha^2}\ket{T}\ket{0}\ket{\Psi_T} \\
		\ket{T}\ket{0}\ket{\Psi_T} &\mapsto& -\sqrt{1-\alpha^2}\ket{0}\ket{0}\ket{\Psi_0} + \alpha\ket{T}\ket{0}\ket{\Psi_T}
	\end{array}\right., \qquad \text{with} \qquad \alpha = \frac{\norm{\ket{\psi}}}{\norm{\ket{w_0}}}.\]
	This circuit can be implemented in a similar way as we implemented the splitting map in \cref{lem:splitting_map}. Conditioned on the first register being in state $\ket{T}$, one first applies the map $\ket{\Psi_T} \mapsto \ket{\Psi_0}$ to the last register, which amounts to applying a controlled $X$ operation targeted to the answer register. Then, one applies a rotation on a two-dimensional subspace of the state space of the first register spanned by $\Span\{\ket{0},\ket{T}\}$, which can be implemented with $\O(\mathrm{polylog}(T))$ gates as it is a register on $\log(T)$ qubits. Finally, one applies the mapping $\ket{\Psi_0} \mapsto \ket{\Psi_T}$, again controlled on the first register being in state $\ket{T}$. Counting the auxiliary qubits and gates reveals that we can do this with $\O(1)$ auxiliary qubits and $\O(\mathrm{polylog}(T))$ gates.

	Next, one applies the mapping $\Cal S_{T-1,\beta}^{\dagger}$, with $\beta = \norm{\ket{\chi}}/\norm{\ket{\chi} + \ket{\phi}}$. The combined mapping now acts as
	\[\Cal S_{T-1,\beta}^{\dagger} \Cal C_1 : \ket{0}\ket{0}\ket{\Psi_0} \mapsto \frac{\norm{\ket{\psi}}}{\sqrt{N}}\ket{0}\ket{0}\ket{\Psi_0} + \frac{\norm{\ket{\chi}}}{\sqrt{N}}\ket{T-1}\ket{0}U_T^{\dagger}\ket{\Psi_T} + \frac{\norm{\ket{\phi}}}{\sqrt{N}}\ket{T}\ket{0}\ket{\Psi_T}.\]
	Thus, all that is left is mapping the first term to $\ket{\psi}$ and the second to $\ket{\chi}$ using the circuits that we already have for them, meaning that
	\[\Cal C_{\ket{w_0}} = \Cal C_{\ket{\psi}}\Cal C_{\ket{\chi}}\Cal S_{T_1,\beta}^{\dagger}\Cal C_1 : \ket{0}\ket{0}\ket{\Psi_0} \mapsto \frac{\ket{w_0}}{\sqrt{N}}.\]
	Moreover, all the four subroutines can be implemented with $\O(T/S)$ calls to $\O_\A$, polylogarithmically in $T$ many auxiliary qubits, and a number of gates that satisfies
	\[\O\left(\frac{T}{S}\mathrm{polylog}(T)\right).\]
	Now all that is left to check is that $\Cal C_{\ket{w_0}}$ leaves $\H_x$ approximately invariant. We already know from \cref{lem:splitting_map,lem:Cpsi} that $\Cal S_{T-1,\beta}^{\dagger}$, $\Cal C_{\ket{\psi}}$ and $\Cal C_{\ket{\chi}}$, respectively, leave both $\overline{\H}_x$ and $\widetilde{\H}_x$ invariant. Moreover, $\Cal C_1$ keeps $\overline{\H}_x + \widetilde{\H}_x$ invariant, and hence whenever $f(x) = 0$, then $\Cal C_{\ket{w_0}}$ leaves $\H_x$ invariant. On the other hand, when $f(x) = 1$, we have, for any $\ket{h} \in \H_x = \overline{\H}_x$ with $\norm{\ket{h}} = 1$:
	\[\norm{\Pi_{\H_x^{\perp}}\Cal C_{\ket{w_0}}\ket{h}} = \norm{\Pi_{\H_x^{\perp}}\Cal C_1\ket{h}}.\]
	As $\Cal C_1$ acts as the identity operation on all states that have disjoint support on the time register from $\{0,T\}$, we can without loss of generality assume that $\ket{h} = a\ket{0}\ket{0}\ket{\Psi_0} + b\ket{T}\ket{0}\ket{\Psi_T(x)}$ with $a,b \in \C$. Thus, we find that
	\begin{align*}
		\sup_{\substack{\ket{h} \in \H_x \\ \norm{\ket{h}} = 1}} \norm{\Pi_{\H_x^{\perp}}\Cal C_{\ket{w_0}}\ket{h}} &= \sup_{\substack{a,b \in \C \\ |a|^2 + |b|^2 = 1}} \norm{\Pi_{\H_x^{\perp}}\Cal C_1(a\ket{0}\ket{0}\ket{\Psi_0} + b\ket{T}\ket{0}\ket{\Psi_T(x)})} \\
		&\leq \sup_{\substack{a,b \in \C \\ |a|^2 + |b|^2 = 1}} \norm{\Pi_{\H_x^{\perp}}\left[\left(a\alpha - b\sqrt{1-\alpha^2}\right)\ket{0}\ket{0}\ket{\Psi_0} + \left(a\sqrt{1-\alpha^2} + b\alpha\right)\ket{T}\ket{0}\ket{\Psi_T}\right]} \\
		&\;\;\;\;\; + |b|\norm{\ket{\Psi_T} - \ket{\Psi_T(x)}} \\
		&\leq \sup_{\substack{a,b \in \C \\ |a|^2 + |b|^2 = 1}} \left|a\sqrt{1-\alpha^2} + b\alpha\right| \norm{\ket{\Psi_T} - \ket{\Psi_T(x)}} + |b|\norm{\ket{\Psi_T} - \ket{\Psi_T(x)}} \\
		&\leq \left[\sup_{\substack{a,b \in \C \\ |a|^2 + |b|^2 = 1}} \left(|a|^2 + |b|^2\right)\left(\alpha^2 + 1 - \alpha^2\right) + |b|\right] \cdot \norm{\ket{\Psi_T} - \ket{\Psi_T(x)}} \leq 2\sqrt{2\varepsilon},
	\end{align*}
	where we used Cauchy-Schwarz in the last line.

	If we have multiple algorithms $\{\A^{(j)}\}_{j=1}^n$, we simply run the concurrent versions of $\Cal C_{\ket{\psi}}$ and $\Cal C_{\ket{\chi}}$, and we supply a concurrent version of $\Cal C_1$ which we build using the same techniques as in \cref{lem:constrkerA}. With this, we can successfully implement concurrent access $\mathcal{C}$ to $\{\Cal C_{\ket{w_0^{(j)}}}\}_{j=1}^n$ with the desired complexities. As all the individual $\Cal C_{\ket{w_0^{(j)}}}$'s are implemented up to precision $\delta$, so is their concurrent access routine.

	It remains to check that the resulting concurrent access routine approximately leaves $\H_x$ invariant. To that end, take $\ket{h}$ in $\H_x$, such that $\norm{\ket{h}} = 1$. We can write
	\[\ket{h} = \sum_{j=1}^n \alpha_j\ket{j}\ket{h^{(j)}} \qquad \text{with} \qquad \norm{\alpha} = 1,\]
	where all the $\ket{h^{(j)}}$'s are unit vectors in $\H^{(j)}_{x^{(j)}}$. Now, we find that
	\[\norm{\Pi_{\H_x^{\perp}}\mathcal{C}\ket{h}}^2 = \sum_{j=1}^n |\alpha|^2\norm{\Pi_{(\H^{(j)}_{x^{(j)}})^{\perp}}\Cal C_{\ket{w_0^{(j)}}}\ket{h^{(j)}}}^2 \leq \sum_{j=1}^n |\alpha|^2 \cdot 8\varepsilon_j \leq \max_{j \in [n]} 8\varepsilon_j.\]
	This completes the proof.
\end{proof}

\subsection{Proof of \cref{lem:consistent-refl}: Implementation of reflection around $\ket{0}$}
\label{app:ConRef}

The goal of this final section in \cref{app:time-complexity} is to prove \cref{lem:consistent-refl}, i.e., we provide an implementation of a quantum circuit that reflects around the all-zeros state on the implementing subspace. The details can be found below.

\lemConRef*

\begin{proof}
	As in \cref{def:QA}, we assume the basis states of the workspace are labeled by $\W = \W'\times\{0,1\}$, so the last qubit is the answer register.
	The basis states of the overall space $\H$ are $\ket{t}\ket{b}\ket{i,j,a}$, where $i\in [n]$, $j\in \W'$, and $a\in\{0,1\}$ is the content of the answer register.
	The map $G$ reflects around states with $t=0$, $b=0$ and $a=0$:
	\begin{align*}
		G\ket{t}\ket{b}\ket{i,j,a} &= \left\{\begin{array}{ll}
		\ket{t}\ket{b}\ket{i,j,a} & \mbox{if }t=b=a=0\\
		-\ket{t}\ket{b}\ket{i,j,a} & \mbox{else.}
		\end{array}\right.
	\end{align*}
	To implement this reflection, we simply compute a bit $b_{\text{flag}}$ in a new register such that $b_{\text{flag}}=0$ if and only if $t=0$, $b=0$ and $a=0$. Then we can apply a $Z$-gate on this register, and then uncompute. Since $t\in [T]_0$, $a,b\in \{0,1\}$, this can be done in time $\O(\log T)$.

	Let $\ket{h} \in \H_x$. Any part of $\ket{h}$ supported on $\ket{t}$ in the first register for $t\neq 0$ will be reflected by $G$, which is the desired behavior. If $f(x) = 1$, we can without loss of generality assume that $\ket{h} = \ket{0}\ket{0}\ket{\Psi_0}$. Observe that
	\[\bigl(G - \left(2\ket{0,0,\Psi_0}\bra{0,0,\Psi_0} - \I\right)\bigr)\ket{0,0,\Psi_0(x)} = 0,\]
	so $G$ acts the same as $\Cal R_{\ket{0}}$ on $\H_x$.

	On the other hand, suppose that $f(x) = 0$. Now we can without loss of generality assume that $\ket{h}\in \Span\{\ket{0}\ket{0}\ket{\Psi_0(x)},\ket{0}\ket{0}\ket{\widetilde{\Psi}_0(x)}\}$. Thus, without loss of generality,
	\[\ket{h} \in \alpha_1\ket{0}\ket{0}\ket{\Psi_0(x)} + \alpha_2\ket{0}\ket{0}\ket{\widetilde{\Psi}_0(x)}.\]
	Hence,
	\begin{align*}
		&\sup_{\ket{h} \in \H_x} \frac{\norm{\bigl(G - \left(2\ket{0,0,\Psi_0}\bra{0,0,\Psi_0} - \I\right)\bigr)\ket{h}}}{\norm{\ket{h}}}\\
		={}& \sup_{\alpha_1,\alpha_2 \in \C} \frac{\norm{\bigl(G - \left(2\ket{0,0,\Psi_0}\bra{0,0,\Psi_0} - \I\right)\bigr)\bigl[\alpha_1\ket{0}\ket{0}\ket{\Psi_0(x)} + \alpha_2\ket{0}\ket{0}\ket{\widetilde{\Psi}_0(x)}\bigr]}}{\norm{\alpha_1\ket{0}\ket{0}\ket{\Psi_0(x)} + \alpha_2\ket{0}\ket{0}\ket{\widetilde{\Psi}_0(x)}}} \\
		={}&\sup_{\alpha \in \C} \frac{\norm{\bigl(G - \left(2\ket{0,0,\Psi_0}\bra{0,0,\Psi_0} - \I\right)\bigr)\ket{0}\ket{0}\ket{\widetilde{\Psi}_0(x)}}}{\norm{\alpha\ket{0}\ket{0}\ket{\Psi_0(x)} + \ket{0}\ket{0}\ket{\widetilde{\Psi}_0(x)}}}.
	\end{align*}
	For any $\alpha \in \C$, we can rewrite the denominator as
	\begin{align*}
		\norm{\alpha\ket{0}\ket{0}\ket{\Psi_0(x)} + \ket{0}\ket{0}\ket{\widetilde{\Psi}_0(x)}} &= \sqrt{1 + |\alpha|^2 + 2\Re\left(\alpha\braket{\Psi_0(x)}{\widetilde{\Psi}_0(x)}\right)} \geq \sqrt{1 + |\alpha|^2 - 2|\alpha|p_1(x)} \\
		&= \sqrt{\left(|\alpha| - p_1(x)\right)^2 + 1 - p_1(x)^2} \geq \sqrt{1 - p_1(x)^2} \geq \sqrt{1-\varepsilon^2}.
	\end{align*}
	For the numerator, observe by \cref{lem:almost-orthog} that
	\[\ket{\widetilde{\Psi}_0(x)} = p_1(x)\ket{\Psi_0(x)} + p_0(x)(I \otimes X)\ket{\Psi_0(x)} + \sqrt{2p_0(x)p_1(x)}\ket{\bot}\]
	for some normalized state $\ket{\bot}$ that is orthogonal to $\ket{\Psi_0(x)}$ and $(I \otimes X)\ket{\Psi_0(x)}$. Note that $\ket{\Psi_0(x)} = \ket{\Psi_0}=\ket{0,0,0}$ and hence it is handled correctly by $G$. Similarly, $(I \otimes X)\ket{\Psi_0(x)}=\ket{0,0,1}$ has answer bit $1$, so it is handled correctly by $G$ as well. Thus, we end up with
	\[\norm{\bigl(G - \left(2\ket{0,0,\Psi_0}\bra{0,0,\Psi_0} - \I\right)\bigr)\ket{0,0,\widetilde{\Psi}_0(x)}} \leq 2\sqrt{2p_0(x)p_1(x)} \leq 2\sqrt{2\varepsilon},\]
	which implies that
	\[\norm{G - \left(2\ket{0,0,\Psi_0}\bra{0,0,\Psi_0} - \I\right)} \leq \frac{2\sqrt{2\varepsilon}}{\sqrt{1-\varepsilon^2}} \leq 4\sqrt{2\varepsilon}.\]
	This completes the proof.
\end{proof}

\section{OR span program analysis}
\label{app:OR}

In this final appendix, we first prove the two lemmas about the structure of the minimal positive witness $\ket{w_0}$ and the kernel of the span program operator of $P$, defined in \cref{eq:or-span}. After that, we provide the proofs of two technical lemmas that formalize the procedure to produce states in superposition with certain amplitudes.

\ORw*

\begin{proof}
	Observe that for every choice of $\beta_j$'s that sum to $1$, we have
	\[A\left[\sum_{j=1}^n \frac{\beta_j}{\sqrt{W_+^{(j)}}} \ket{j} \otimes \ket{w_0^{(j)}}\right] = \sum_{j=1}^n \beta_j A^{(j)}\ket{w_0^{(j)}} = \sum_{j=1}^n \beta_j \ket{\tau} = \ket{\tau},\]
	and that the minimal positive witness has to be of this form. Moreover, for all such choices of $\beta_j$, we have
	\[\norm{\alpha} \cdot \norm{\sum_{j=1}^n \frac{\beta_j}{\sqrt{W_+^{(j)}}}\ket{j} \otimes \ket{w_0^{(j)}}} = \sqrt{\sum_{k=1}^n \alpha_k^2} \cdot \sqrt{\sum_{j=1}^n \frac{|\beta_j|^2\norm{\ket{w_0^{(j)}}}^2}{W_+^{(j)}}} \geq \sum_{j=1}^n \alpha_j \cdot \frac{\beta_j}{\alpha_j}  = \sum_{j=1}^n \beta_j = 1,\]
	where we used the Cauchy-Schwarz inequality. Thus, we find that for all choices of $\beta_j$,
	\[\norm{\sum_{j=1}^n \frac{\beta_j}{\sqrt{W_+^{(j)}}}\ket{j} \otimes \ket{w_0^{(j)}}}^2 \geq \frac{1}{\norm{\alpha}^2},\]
	and the tightness of this inequality is shown by picking $\beta_j = \alpha_j^2 / \norm{\alpha}^2$. Thus, the minimal witness is
	\[\ket{w_0} = \frac{1}{\norm{\alpha}^2} \sum_{j=1}^n \alpha_j \ket{j} \otimes \frac{\ket{w_0^{(j)}}}{\norm{\ket{w_0^{(j)}}}},\]
	completing the proof.
\end{proof}

\ORkernel*

\begin{proof}
	First, observe that $\bigoplus_{j=1}^n\Span\{\ket{j}\}\otimes \ker (A^{(j)})\subseteq \ker(A)$, since for any $\ket{h_j}\in \ker(A^{(j)})$, $A\ket{j}\ket{h_j}=\sqrt{W_+^{(j)}}A^{(j)}\ket{h_j} = 0$. Similarly, observe that $\ket{0,0}$ vanishes under $A$, so it is part of the kernel of $A$ as well.

	Thus, suppose $\ket{h}=\sum_{j=1}^n\ket{j}\ket{h_j}\in \H$ is in $\ker(A)$, and for all $j\in [n]$, $\ket{h_j}\in\mathrm{row}(A^{(j)})$. For all $k \in [n]$, we have
	\[0 = \Pi_{\overline{\V}_k}\left[A\sum_{j=1}^n\ket{j}\ket{h_j}\right] = \Pi_{\overline{\V}_k}\left[\sum_{j=1}^n\sqrt{W_+^{(j)}}A^{(j)}\ket{h_j}\right] = \Pi_{\overline{\V}_k}A^{(k)}\ket{h_k},\]
	where we use the fact that $\Pi_{\overline{\V}_k}A^{(j)} = 0$ whenever $j\neq k$. Thus, we have that for all $k\in [n]$, $A^{(k)}\ket{h_k}\in \Span\{\ket{0}\}$. Since we also have that $\ket{h_k}\in\mathrm{row}(A^{(k)})$, it must be the case that $\ket{h_k}\in \Span\{\ket{w_0^{(k)}}\}$, so let $\ket{h_k}=\beta_k\ket{w_0^{(k)}}$. Then, we have:
	\begin{align*}
		0 &= \sum_{j=1}^n \beta_j A^{(j)}\ket{w_0^{(j)}} = \sum_{j=1}^n \beta_j\sqrt{W_+^{(j)}} \ket{0} = \left[\sum_{j=1}^n \beta_j\alpha_j \cdot \frac{\braket{w_0^{(j)}}{w_0^{(j)}}}{\norm{\ket{w_0^{(j)}}}}\right]\ket{0} = N\braket{w_0}{h}\ket{0}
	\end{align*}
	Hence, $\ket{h}=\sum_{j=1}^n\alpha_j\ket{j}\ket{w_0^{(j)}}\in \Ker A$ if it is orthogonal to $\ket{w_0}$, which completes the proof.
\end{proof}

Now we prove \cref{lem:binning-technique} and \cref{lem:Calpha}.

\binningTech*

\begin{proof}
	We let $k' = \lceil\log(\gamma_{\max}/\gamma_{\min})\rceil$. Then, with every $j \in [n]$, we associate the unique integer $m_j$ such that $\gamma_j \in [\gamma_{\min} \cdot 2^{m_j-1}, \gamma_{\min} \cdot 2^{m_j})$. As
	\[\gamma_{\min} \cdot 2^0 \leq \gamma_{\min} \leq \gamma_j \leq \gamma_{\max} = \gamma_{\min} \cdot 2^{\log(\gamma_{\max}/\gamma_{\min})} \leq \gamma_{\min} \cdot 2^{k'},\]
	we find that $m_j \leq k'$, and hence $m_j \in [k']$. Moreover, as the $\gamma_j$'s are non-decreasing, so are the $m_j$'s. Now, for every $\ell \in [k']$, we define $j'_{\ell} = \max \{j \in [n] : m_j = \ell\}$ and we let $j'_0 = 0$. We find that for all $\ell \in [k']$ and $j \in [j'_{\ell-1}+1,j'_{\ell}]$,
	\[\frac{\gamma_{j'_{\ell}}}{2} \leq \frac{\gamma_{\min} \cdot 2^{\ell}}{2} = \gamma_{\min} \cdot 2^{\ell-1} \leq \gamma_j \leq \gamma_{j'_{\ell}},\]
	and hence the second condition is verified.

	Now, for every $\ell \in [k']$, we write $j'_{\ell} - j'_{\ell-1}$ in terms of its binary expansion, i.e., we write
	\[j'_{\ell} - j'_{\ell-1} = 2^{p_{\ell,1}} + \cdots + 2^{p_{\ell,k_{\ell}}},\]
	where $p_{\ell,1} > \cdots > p_{\ell,k_{\ell}}$. As $j'_{\ell} - j'_{\ell-1} \leq n$, we have that $k_{\ell} \leq \lceil\log(n)\rceil$. Finally, we let
	\[(j_{\ell})_{\ell=1}^k = (j'_0, j'_0 + 2^{p_{1,1}}, \dots, j'_0 + 2^{p_{1,1}} + \dots + 2^{p_{1,k_1-1}}, j'_1, \dots, j'_{k'})\]
	The difference between two consecutive terms is always a power of two by construction, and the length satisfies
	\[k = \sum_{\ell=1}^{k'} k_{\ell} \leq k' \cdot \lceil \log(n) \rceil \leq \lceil \log(\gamma_{\max}/\gamma_{\min}) \rceil \cdot \lceil \log(n) \rceil,\]
	completing the proof.
\end{proof}

\Calpha*

\begin{proof}
	For all $j \in [n]$, let
	\[\gamma_j = \frac{\sqrt{3(2S_j+1)}}{\norm{\ket{w_0^{(j)}}}}.\]
	Assume without loss of generality that the algorithms $\A^{(j)}$ are ordered such that $0 < \gamma_{\min} = \gamma_1 \leq \cdots \leq \gamma_n = \gamma_{\max}$. From \cref{lem:w0} we deduce that $1/\sqrt{2} \leq \norm{\ket{w_0^{(j)}}} \leq \sqrt{3}$, and hence
	\[\frac{\gamma_{\max}}{\gamma_{\min}} \leq \sqrt{\frac{3(2S_{\max}+1)}{9}} \cdot \sqrt{6} \leq \sqrt{6S_{\max}}.\]
	According to \cref{lem:binning-technique}, we can now find a sequence $0 = j_1 \leq \cdots \leq j_k = n$ with $k \leq \lceil\frac12\log(6S_{\max})\rceil \cdot \lceil\log(n)\rceil$, such that for all $\ell \in [k]$ we have that $j_{\ell} - j_{\ell-1}$ is a power of two and for all $j \in [j_{\ell-1}+1,j_{\ell}]$, we have
	\[\frac{\gamma_{j_{\ell}}}{2} \leq \gamma_j \leq \gamma_{j_{\ell}}.\]
	Given such a $j$, we define $W_+^{(j)} = \gamma_{j_{\ell}}^2 \cdot \norm{\ket{w_0^{(j)}}}^2$.
	Then we find
	\[W_+^{(j)} = \gamma_{j_{\ell}}^2 \cdot \norm{\ket{w_0^{(j)}}}^2 \leq 4\gamma_j^2 \cdot \norm{\ket{w_0^{(j)}}}^2 = 12(2S_j+1),\]
	and according to \cref{lem:w+},
	\[W_+(P^{(j)}) \leq 3(2S_j+1) = \gamma_j^2\norm{\ket{w_0^{(j)}}}^2 \leq \gamma_{j_{\ell}}^2 \cdot \norm{\ket{w_0^{(j)}}}^2 = W_+^{(j)}.\]
	Moreover, we have for all such $j$ that
	\[\alpha_j = \frac{\sqrt{W_+^{(j)}}}{\norm{\ket{w_0^{(j)}}}} = \gamma_{j_{\ell}} = \frac{\sqrt{W_+^{(j_{\ell})}}}{\norm{\ket{w_0^{(j_{\ell})}}}} = \alpha_{j_{\ell}}.\]

	Thus, it remains to show that we can implement $\Cal C_{\alpha}$ in $\O(\log(S_{\max})\log(n))$ gates. To that end, we first of all define the mapping $\Cal S$ that acts as the identity on $\ket{0}\ket{0}\ket{0}$ and that given a $j \in [j_{\ell-1}+1,j_{\ell}]$ implements
	\[\Cal S : \ket{j}\ket{0}\ket{0} \mapsto \ket{0}\ket{\ell}\ket{j - j_{\ell-1} - 1},\]
	where the registers are of size $\lceil \log(n+1) \rceil$, $\lceil \log(k+1) \rceil$ and $\lceil \log(n+1) \rceil$, respectively. Moreover, as the values of the $j_{\ell}$'s are known beforehand, this can be implemented with $\O(k)$ arithmetic circuits that all have $\O(\log(n))$ gates, so the number of gates needed to implement $\Cal S$ is $\O(\log(S_{\max})\log^2(n))$.

	We define the subspace
	\[\Cal X = \Span\{\ket{0}\ket{0}\ket{0}\} \oplus \Span\{\ket{0}\ket{\ell}\ket{j-j_{\ell-1}-1} : j \in [j_{\ell-1}+1,j_{\ell}]\}.\]
	Observe that $\Cal S$ maps any state $\ket{j}\ket{0}\ket{0}$ into $\Cal X$, and moreover that $\Cal S^{\dagger}$ will set the final two registers to $\ket{0}$ when it is applied to any state in $\Cal X$. Hence, as long as we stay in $\Cal X$, we can always uncompute the final two registers.

	Now, we implement $\Cal C_{\alpha}$ as follows, where we treat the final two registers as ancilla registers.
	\begin{enumerate}
		\setlength\itemsep{-.4em}
		\item We apply $\Cal S$. This maps our state into $\Cal X$, and will leave $\ket{0}\ket{0}\ket{0}$ unaltered.
		\item Controlled on the last register being in the state $\ket{0}$, we apply on the second register the map
		\[\Cal C : \ket{0} \mapsto \frac{1}{\norm{\alpha}} \sum_{\ell=1}^k \alpha_{\ell}\sqrt{j_{\ell}-j_{\ell-1}} \ket{\ell}.\]
		This leaves $\Cal X$ invariant as $\ket{0}\ket{\ell}\ket{0} \in \Cal X$ for every $\ell \in [k]_0$. As this is a map on $\O(\log(k))$ qubits, it can be implemented in $\O(k) = \O(\log(S_{\max})\log(n))$ gates.
		\item Next, controlled on the second register being in state $\ket{\ell}$, we perform the map $I \otimes H^{\otimes \log(j_{\ell}-j_{\ell-1})}$ to the final register. This only affects the basis states in $\Cal X$ and implements
		\[\ket{0}\ket{\ell}\ket{0} \mapsto \ket{0}\ket{\ell} \frac{1}{\sqrt{j_{\ell}-j_{\ell-1}}} \sum_{m=0}^{j_{\ell}-j_{\ell-1}-1} \ket{m}.\]
		This circuit can be built using $k$ times at most $\log(n)$ controlled Hadamards, plus some arithmetic circuits on $\log(k)$ qubits to set the controls. Hence, the number of gates needed to implement this step is $\O(k\log(n) + \log(k)) = \O(\log(S_{\max})\log^2(n))$.
		\item We implement $\Cal S^{\dagger}$. Since steps 2 and 3 left $\Cal X$ invariant, we can now uncompute the final two registers.
	\end{enumerate}

	The total time complexity of $\Cal C_{\alpha}$ hence now becomes the sum of the time complexities of the above steps, which is $\O(\log(S_{\max})\log^2(n))$, and it maps
	\[\ket{0}\ket{0}\ket{0} \mapsto \ket{0}\frac{1}{\norm{\alpha}} \sum_{\ell=1}^k \alpha_{\ell}\sqrt{j_{\ell} - j_{\ell-1}}\ket{\ell}\ket{0} \mapsto \ket{0} \frac{1}{\norm{\alpha}}\sum_{\ell=1}^k \alpha_{\ell}\ket{\ell} \sum_{m=0}^{j_{\ell}-j_{\ell-1}-1}\ket{m} \mapsto \frac{1}{\norm{\alpha}}\sum_{j=1}^n \alpha_{\ell}\ket{j}\ket{0}\ket{0}.\]
	This completes the proof.
\end{proof}
\end{document}